\newtheorem{theorem}{Theorem}
\newtheorem{definition}{Definition}
\newtheorem{lemma}{Lemma}
\newtheorem{corollary}{Corollary}
\newtheorem{fact}{Fact}
\newtheorem{property}{Property}
\newtheorem{lock}{Lock}
\newcommand{\termasm}[1]{\mathcal{A}_{\Box}[{#1}]}
\newcommand{\pathassembly}[1]{\mathrm{asm}{(#1)}}
\newcommand{\asm}[1]{\pathassembly{#1}}
\newcommand{\domain}[1]{dom(#1)}
\newcommand{\N}{\mathbb{N}}
\newcommand{\vect}{\protect\overrightarrow}
\newcommand{\prodasm}[1]{\mathcal{A}[{#1}]}
\newcommand{\prodpaths}[1]{{\bf{P}}[{#1}]}
\newcommand{\vu}{\vect{v}}
\newcommand{\pos}[1]{\mathrm{pos}(#1)}
\newcommand\type[1]{\mathrm{type}(#1)}
\newcommand{\uniterm}{\gamma}
\newcommand{\gluePi}{\glueP{i}{i+1}}
\newcommand{\glueP}[2]{\mathrm{glue}(P_{#1} P_{#2})}
\newcommand{\glueN}[2]{\mathrm{glue}(\Hole_{#1} \Hole_{#2})}
\newcommand{\glueQ}[2]{\mathrm{glue}(Q_{#1} Q_{#2})}
\newcommand{\glueR}[2]{\mathrm{glue}(R_{#1} R_{#2})}
\newcommand{\glueS}[2]{\mathrm{glue}(S_{#1} S_{#2})}
\newcommand{\glueSDeux}[2]{\mathrm{glue}(S'_{#1} S'_{#2})}
\newcommand{\cano}[1]{P^{(#1)}}
\newcommand{\last}{f}
\newcommand{\lastc}{\ell}
\newcommand{\main}[1]{m_{#1}}
\newcommand{\second}[1]{b_{#1}}
\newcommand{\bord}[1]{b^{#1}}
\newcommand{\inter}[1]{\mathcal{I}_{#1}}
\newcommand{\es}{\mathcal{E}}
\newcommand{\area}[1]{\mathcal{P}_{#1}}
\newcommand{\Hole}{H}
\newcommand{\cork}[1]{l^{#1}}
\newcommand{\prodHole}[1]{{\bf{P}}[{#1}]}
\newcommand{\rev}[1]{\text{rev}({#1})}
\newcommand{\posS}{a}
\newcommand{\proG}{p}
\newcommand{\rightF}{R}
\newcommand{\leftF}{L}
\newcommand{\enSh}{e}
\newcommand{\goB}{g}
\newcommand{\hid}{h}
\newcommand{\hidDeux}{h'}
\newcommand{\colUn}{c}
\newcommand{\lastUn}{f}
\newcommand{\mainUn}[1]{m_{#1}}
\newcommand{\secondUn}[1]{b_{#1}}
\newcommand{\bordUn}[1]{b^{#1}}
\newcommand{\esUn}{\mathcal{E}}
\newcommand{\areaRes}[1]{\mathcal{R}_{#1}}
\newcommand{\areaResUn}[1]{\mathcal{R}_{#1}}
\newcommand{\areaResDeux}[1]{\mathcal{R}'_{#1}}
\newcommand{\rightFUn}{R}
\newcommand{\goBUn}{g}
\newcommand{\decoUn}{t}
\newcommand{\colDeux}{c'}
\newcommand{\lastDeux}{f'}
\newcommand{\posSDeux}{a'}
\newcommand{\rightFDeux}{R'}
\newcommand{\leftFDeux}{L'}
\newcommand{\goBDeux}{g'}
\newcommand{\newA}{A'}
\newcommand{\newS}{S'}
\newcommand{\newProG}{a'}
\newcommand{\proS}{p}
\newcommand{\proNDeux}{p'}
\newcommand{\lastcDeux}{\ell'}
\newcommand{\shieldUn}{S}
\newcommand{\shieldDeux}{S'}
\newcommand{\proSLast}[1]{p}
\newcommand{\proNLast}[1]{{n_{#1}}}
\newcommand{\Bzero}{\mathcal{B}_0}
\newcommand{\BzeroValue}{w_{\sigma}-2|T|-1}
\newcommand{\Bone}{\mathcal{B}_1}
\newcommand{\BoneValue}{e_{\sigma+|T|+1.5}}
\newcommand{\Btwo}{\mathcal{B}_2}
\newcommand{\BtwoValue}{e_{\sigma+5|T|+1.5}}
\newcommand{\Bthree}{\mathcal{B}_3}
\newcommand{\Bfour}{\mathcal{B}_4}
\newcommand{\Bfinal}{\mathcal{B}_5}
\newcommand{\BfinalValue}{7|\sigma|+58|T|+30}
\newcommand{\BfinalExtremal}{3|\sigma|+29|T|+15}
\newcommand{\BfinalTheorem}{7|\sigma|+58|T|+30}
\definecolor{xgreen}{RGB}{0,128,0}
\definecolor{xblue}{RGB}{0,204,255}
\definecolor{xorange}{RGB}{255,165,0}
\definecolor{xred}{RGB}{255,0,0}
\newcommand{\draws}[2]{\draw[fill=black] (#1.16,#2.16) rectangle (#1.84,#2.84);}
\definecolor{Jvert}{RGB}{225, 221, 114}
\definecolor{Olive}{RGB}{168, 198, 108}
\definecolor{Foret}{RGB}{27, 101, 53}
\definecolor{Jvert}{RGB}{225, 221, 114}
\definecolor{R0B0}{RGB}{0, 0, 0}
\definecolor{R1B0}{RGB}{40, 0, 0}
\definecolor{R2B0}{RGB}{80, 0, 0}
\definecolor{R3B0}{RGB}{120, 0, 0}
\definecolor{R4B0}{RGB}{160, 0, 0}
\definecolor{R5B0}{RGB}{200, 0, 0}
 \definecolor{R0B1}{RGB}{0, 0, 40}
\definecolor{R1B1}{RGB}{40, 0, 40}
\definecolor{R2B1}{RGB}{80, 0, 40}
\definecolor{R3B1}{RGB}{120, 0, 40}
\definecolor{R4B1}{RGB}{160, 0, 40}
\definecolor{R5B1}{RGB}{200, 0, 40}
\definecolor{R0B2}{RGB}{0, 0, 80}
\definecolor{R1B2}{RGB}{40, 0, 80}
\definecolor{R2B2}{RGB}{80, 0, 80}
\definecolor{R3B2}{RGB}{120, 0, 80}
\definecolor{R4B2}{RGB}{160, 0, 80}
\definecolor{R5B2}{RGB}{200, 0, 80}
\definecolor{R0B3}{RGB}{0, 0, 120}
\definecolor{R1B3}{RGB}{40, 0, 120}
\definecolor{R2B3}{RGB}{80, 0, 120}
\definecolor{R3B3}{RGB}{120, 0, 120}
\definecolor{R4B3}{RGB}{160, 0, 120}
\definecolor{R5B3}{RGB}{200, 0, 120}
\definecolor{R0B4}{RGB}{0, 0, 160}
\definecolor{R1B4}{RGB}{40, 0, 160}
\definecolor{R2B4}{RGB}{80, 0, 160}
\definecolor{R3B4}{RGB}{120, 0, 160}
\definecolor{R4B4}{RGB}{160, 0, 160}
\definecolor{R5B4}{RGB}{200, 0, 160}
\definecolor{R0B5}{RGB}{0, 0, 200}
\definecolor{R1B5}{RGB}{40, 0, 200}
\definecolor{R2B5}{RGB}{80, 0, 200}
\definecolor{R3B5}{RGB}{120, 0, 200}
\definecolor{R4B5}{RGB}{160, 0, 200}
\definecolor{R5B5}{RGB}{200, 0, 200}
\definecolor{R5B5}{RGB}{200, 0, 200}
\definecolor{lightblue}{rgb}{0.68, 0.85, 0.9}
\definecolor{medblue}{RGB}{0,191,255}
\newcommand{\tile}[2]{\draw[fill=white] (#1.12,#2.12) rectangle (#1.88,#2.88);}
\newcommand{\tileg}[3]{\draw[fill=green!#3!black] (#1.12,#2.12) rectangle (#1.88,#2.88);}
\newcommand{\tiler}[2]{\draw[fill=red!80!gray] (#1.12,#2.12) rectangle (#1.88,#2.88);}
\newcommand{\tileb}[2]{\draw[fill=blue!80!gray] (#1.12,#2.12) rectangle (#1.88,#2.88);}
\newcommand{\tiles}[2]{\draw[fill=black] (#1.12,#2.12) rectangle (#1.88,#2.88);}
\newcommand{\tilem}[3]{\draw[fill=magenta!#3!white] (#1.12,#2.12) rectangle (#1.88,#2.88);}
\newcommand{\tilec}[3]{\draw[fill=cyan!#3!white] (#1.12,#2.12) rectangle (#1.88,#2.88);}
\newcommand{\tiley}[2]{\draw[fill=yellow!80!black] (#1.12,#2.12) rectangle (#1.88,#2.88);}
\newcommand{\tileo}[2]{\draw[fill=Olive] (#1.12,#2.12) rectangle (#1.88,#2.88);}
\newcommand{\tilef}[2]{\draw[fill=Foret] (#1.12,#2.12) rectangle (#1.88,#2.88);}
\newcommand{\tileor}[2]{\draw[fill=orange] (#1.12,#2.12) rectangle (#1.88,#2.88);}
\newcommand{\tilelb}[2]{\draw[fill=lightblue] (#1.12,#2.12) rectangle (#1.88,#2.88);}
\newcommand{\tilemb}[2]{\draw[fill=medblue] (#1.12,#2.12) rectangle (#1.88,#2.88);}
\newcommand{\dotor}[2]{\fill[color=black] (#1.5,#2.5) circle (0.29); \fill[color=orange] (#1.5,#2.5) circle (0.22);}
\newcommand{\dotred}[2]{\fill[color=black] (#1.5,#2.5) circle (0.29); \fill[color=red!80!gray] (#1.5,#2.5) circle (0.22);}
\newcommand{\dotlb}[2]{\fill[color=black] (#1.5,#2.5) circle (0.29); \fill[color=lightblue] (#1.5,#2.5) circle (0.22);}
\newcommand{\dotmb}[2]{\fill[color=black] (#1.5,#2.5) circle (0.29); \fill[color=medblue] (#1.5,#2.5) circle (0.22);}
\newcommand{\dotb}[2]{\fill[color=black] (#1.5,#2.5) circle (0.29); \fill[color=blue] (#1.5,#2.5) circle (0.22);}
\newcommand{\dotw}[2]{\fill[color=black] (#1.5,#2.5) circle (0.29); \fill[color=white] (#1.5,#2.5) circle (0.22);}
\newcommand{\doty}[2]{\fill[color=black] (#1.5,#2.5) circle (0.29); \fill[color=yellow] (#1.5,#2.5) circle (0.22);}
\newcommand{\diay}[2]{\node [draw, scale=0.25, diamond,fill=yellow!50] at (#1.5,#2.5) {}; }
\newcommand{\diaor}[2]{\node [draw, scale=0.25, diamond,fill=orange] at (#1.5,#2.5) {}; }
\newcommand{\diar}[2]{\node [draw, scale=0.25, diamond,fill=red] at (#1.5,#2.5) {}; }
\newcommand{\dialb}[2]{\node [draw, scale=0.25, diamond,fill=lightblue] at (#1.5,#2.5) {}; }
\newcommand{\diamb}[2]{\node [draw, scale=0.25, diamond,fill=medblue] at (#1.5,#2.5) {}; }
\newcommand{\diab}[2]{\node [draw, scale=0.25, diamond,fill=blue] at (#1.5,#2.5) {}; }
\newcommand{\diag}[2]{\node [draw, scale=0.25, diamond,fill=green] at (#1.5,#2.5) {}; }
\title{A linear bound for the size of the finite terminal assembly of a directed non-cooperative tile assembly system}
\begin{document}

\author[1]{Sergiu Ivanov\thanks{sergiu.ivanov@univ-evry.fr}}
\author[1]{Damien Regnault\thanks{damien.regnault@univ-evry.fr}}
\affil[1]{IBISC Laboratory, Université Evry, Université Paris-Saclay, 23 Boulevard de France, Evry Courcouronnnes, 91037, Essonne, France}

\maketitle

\begin{abstract}
Introduced in \cite{Winf98}, the abstract tile assembly model (aTAM) is a model  of \textsc{DNA} self-assembly. Most of the studies focus on cooperative aTAM where a form of synchronization between the tiles is possible. Simulating Turing machines is achievable in this context. Few results and constructions are known for the non-cooperative case (a variant of Wang tilings \cite{Wang61} where assemblies do not need to cover the whole plane and some mismatches may occur). For example, assembly of a square of width $n$ is done with $2n-1$ tiles types whereas only $\Theta(\frac{\log(n)}{\log(\log(n))})$ are required for the cooperative case \cite{AdChGoHu01}. 

Introduced by P.-É. Meunier in \cite{meunier2015}, \emph{efficient} paths are a non-trivial construction for non-cooperative aTAM designed with $n$ different tile types and reaching a distance linearly greater than $n$. Improved in \cite{Meunier2019}, efficient paths were shown to be able to reach a distance of $n\log(n)$. Assembling them relies heavily on a form of ``non-determinism''. Indeed, the set of tiles may produce different finite terminal assemblies but they all contain the same efficient path. 
In this paper, we prove that this non-determinism is strictly necessary for assembling the efficient paths of \cite{Meunier2019}. 
More formally, we show that if the terminal assembly of a \emph{directed} non-cooperative tile assembly system (a model where only one terminal assembly is produced) is finite then its width and length are linear in the number of tiles. 
This result also implies that the construction of a square of width $n$ using $2n-1$ tiles types is asymptotically optimal. Moreover, we hope that the techniques introduced here will lead to a better comprehension of the non-directed case.

 \end{abstract}

\section{Introduction}


Modern processors align several hundred of billions of transistors on small chips. Independently, each transistor only simulates a simple Boolean function, but together they can achieve far more complex computations.
Developing a formalism to describe complex mathematical functions by iterating simple building blocks originates from the seminal work by Turing on the abstract Turing machine---the very first formal model of computing.  Studying both this abstract model and concrete computers derived from it became what is commonly known as computer science.


More recently, the DNA molecule emerged as a possible elementary component to carry computation instead of the conventional transistor. Indeed, Adleman \cite{Adleman} showed that algorithms can be implemented by mixing together cleverly designed DNA strands, creating a massively parallel model of computing. Several approaches relying on self-assembly were later implemented: in~\cite{RothOrigami} Rothemund developed a technique of \emph{DNA origami} to fold a long DNA strand (scaffold) using hundreds of shorter DNA strands (staples), resulting in specific shapes of size on the order of hundreds of nanometers. In~\cite{RoPaWi04} fractals were assembled out of DNA. In \cite{qian2011neural} artificial neural networks were simulated. In \cite{yurke2000dna} cyclic machines were experimentally implemented, using DNA both as machine material and as fuel.

Over time, several possible applications have emerged. On the one hand, DNA can be an efficient way to store large amounts of information for millions of years~\cite{pmid31068682}. On the other hand, DNA can be used to construct chemical seeds allowing to detect small concentrations of a target molecular compound~\cite{Minev2019.12.11.873349}. In \cite{Dietz2024} a DNA cage was built around the hepatitis B virus to neutralize it in vitro. The \emph{crisscross slats} approach in \cite{CrissCross2023} allows the assembly of large structures several microns in size, opening up the possibility of interaction with biological cells.


Abstractly, some approaches to computing with DNA can be seen as a population of tiles assembling into the target structure by attaching to a seed, which represents the input.  The corresponding theoretical model of computation called the \emph{abstract Tile Assembly Model} (aTAM) was introduced by Winfree in \cite{RotWin00,Winf98}.  The aTAM relies on square tiles with glues on each of the four sides. When the glues on two tiles match, the tiles bind together forming an assembly. This assembly grows until it becomes terminal, \emph{i.e.} until no more tiles can bind to it. This model can be implemented experimentally by representing tiles as small DNA complexes binding to a seed (a DNA origami) to carry out the computation, as shown in~\cite{WoodExp}. Furthermore, aTAM is able to simulate crisscross slats \cite{drake_et_al:LIPIcs.DNA.30.3}. Lots of variants of aTAM exist, and notably cooperative aTAM (or temperature $2$ aTAM) including a mechanism by which different parts of the assembly may synchronize. In this case, simulating a Turing machine is easily achieved and cooperative aTAM was shown to be intrinsically universal \cite{IUSA}. Due to the possible applications to DNA computing, studies also focus on the shapes which can be assembled. For example, \cite{SolWin07} focuses on the Kolmogorov complexity of the shapes and shows a strong connection between the complexity of a shape and the number of distinct tile types necessary to assemble it. The special case of the square was studied in deep detail in \cite{AdChGoHu01}, this shape being a kind of benchmark for computational expressiveness.


\subsection{Non-cooperative aTAM}

From this point on, we focus on non-cooperative aTAM. In this variant, there is no mechanism to synchronise different parts of the assembly beyond concurrency: the first path to grow blocks and prevents the growth of others. Non-cooperative aTAM can be seen as a variant of Wang tiles \cite{Wang61} where the computation starts from a seed and mismatches are allowed. Surprisingly, this model proved very hard to study, and very few results were known for a long time. Concerning decidability seen as the limit of what can be achieved, slight modifications are enough to simulate the Turing machine. In~\cite{Cook-2011} simulation of Turing machines was achieved in $3D$ and was done almost surely by introducing probabilities. It was also achieved by introducing negative glue weights \cite{Patitz-2011}, by considering polyomino tiles \cite{Fekete2014}, polygonal tiles~\cite{gilbert2015continuous}, detachable glues \cite{Jonoska2014}, or by performing the assembly in several steps~\cite{BMS-DNA2012a}. Nevertheless, the question is still open for non-cooperative aTAM. Several recent studies hint that it is likely decidable: non-cooperative aTAM is not intrinsically universal \cite{SODA2014,STOC2017}, it is not able to simulate the Turing machine in the classical way \cite{STOC2017}, and a $2D$ pumping lemma was published in \cite{STOC2020}. A weaker version of non-cooperative aTAM called \emph{directed} where only one final terminal assembly is allowed was shown to be decidable by combining \cite{Doty-2011} and \cite{STOC2020}.

Although all these results point towards a negative answer, if non-co\-op\-er\-a\-tive aTAM were undecidable, it would be a surprising breakthrough and could lead to surprising results. It is worth remembering that such a scenario had already occurred for Wang tiles: the domino problem was initially conjectured to be undecidable, and proving this conjecture wrong required exhibiting aperiodic tilings. In this context, aperiodicity is an unusual way of initializing computation. Subsequently, aperiodic tilings were extensively studied and became a model of crystallography.


Concerning the complexity of non-cooperative aTAM, assembling rectangles and squares in $3D$ was achieved in \cite{furcy2018optimal,furcy2019newBounds}, and the upper and lower bounds on the number of tile types almost match. In $2D$, lower bounds are known for the case in which mismatches are disallowed \cite{ManuchTemp1}. Concerning the main $2D$ model, the existence of efficient paths of size $n\log n$ was shown in \cite{Meunier2019} (improving on a previous construction in \cite{meunier2015}). An efficient path appears in every terminal assembly and only requires a tile set of size~$n$, whereas an auxiliary path may appear somewhere else in the terminal assembly. A limitation of this approach was also obtained in \cite{STOC2020}: it is impossible to construct efficient paths of arbitrary lengths.
Concerning the directed non-cooperative aTAM, the work \cite{meunier_et_al:LIPIcs.DNA.27.6}---an update of \cite{Doty-2011}---shows that only four kinds of terminal assemblies exist, one of them being a bi-periodic grid which must be hardcoded in the tile set. 

\subsection{Our contribution}

In this paper, we consider another one of this four kinds of assemblies: the finite terminal assembly. Our main result is a linear bound (in the number of tile types) on the size of finite terminal assemblies. Besides advancing the understanding of directed non-cooperative aTAM, this result has several major implications. Firstly, it partially solves a conjecture stated in \cite{RotWin00,Winf98} about the asymptotic bound on assembling a square which is now proven to be linear. Note that finding the exact multiplicative constant is still an open question. Secondly, it is another hint that directed non-cooperative aTAM is not able to perform any complex computation. Thirdly, our result highlights a fundamental difference with non-cooperative aTAM: concurrency is required to assemble the efficient paths exhibited in \cite{Meunier2019}.  Fourthly, our bound opens new perspectives for improving several previous results such as the pumping lemma in \cite{STOC2020}. Indeed, multiple major results rely on understanding the ``information flow'' in an assembly, like for example the window movie lemma stated in \cite{SODA2014}. Intuitively, a window delimits an area of the plane and a window movie is the order of placement, position, and glue types appearing along the window. The main theorem in~\cite{STOC2020} was obtained by focusing on a particular window movie and two of its critical glues. In this paper, we improve on the previous reasoning by showing that some information flowing through the window cannot be new and is therefore redundant. A similar approach has also recently been used to show that some fractals cannot be assembled in the cooperative case \cite{DBLP:conf/soda/BeckerHP25}. Using such improved tools may have consequences for other model and results.


This paper is structured as follows.
Section \ref{sec:def} introduces the definitions. Due to the length of the paper, we present in Section~\ref{sec:roadmap} the key ideas arranged in a roadmap, and explain how they articulate in the proof of the main result. 
The four following sections implement these ideas, solving many technical problems which commonly appear when studying temperature $1$. Sections \ref{sec:canon}, \ref{sec:U-turn}, and \ref{sec:decompo} are rather independent.  They start by a presentation of the new tools, then the related results are shown, and each section is concluded by explaining the place of the presented tools in the main proof.  The last Section \ref{sec:analysis} puts all arguments together thereby providing the definitive proof.

\section{Abstract tile assembly model}\label{sec:aTAM}
\label{sec:def}

As usual, let $\mathbb{R}$ be the set of real numbers, let $\mathbb{Z}$ be the set of all integers, let $\mathbb{N}$ be the set of all natural numbers including 0, and let $\N^*$ be the set of all natural numbers excluding 0. The domain of a function $f$ is denoted $\domain{f}$, and its range (or image) is denoted $f(\domain{f})$.

\subsection{General definitions and main theorem}

A \emph{tile type} is a unit square with four sides,
each consisting of a glue \emph{type} and a nonnegative integer \emph{strength}. Let  $T$  be a finite set of tile types.
The sides of a tile type are respectively called  north, east, south, and west. 

An \emph{assembly} is a partial function $\alpha:\mathbb{Z}^2\dashrightarrow T$ where $T$ is a set of tile types and the domain of $\alpha$ (denoted $\domain{\alpha}$) is connected.\footnote{Intuitively, an assembly is a positioning of unit-sized tiles, each from some set of tile types $T$, so that their centers are placed on (some of) the elements of the discrete plane $\mathbb{Z}^2$ and such that those elements of $\mathbb{Z}^2$ form a connected set of points.} 
Two tile types in an assembly are said to  {\em bind} (or \emph{interact}, or to be
\emph{stably attached}) if the glue types on their abutting sides are
equal, and have strength $\geq 1$.  An assembly $\alpha$ induces an undirected
weighted \emph{binding graph} $G_\alpha=(V,E)$, where $V=\domain{\alpha}$, and
there is an edge $\{ a,b \} \in E$ if and only if the tiles at positions $a$ and $b$ interact, and
this edge is weighted by the glue strength of that interaction.  The
assembly is said to be $\tau$-stable if every cut of $G_{\alpha}$ has weight at
least $\tau$.

Given two $\tau$-stable assemblies $\alpha$ and $\beta$, the union of $\alpha$ and $\beta$, written $\alpha\cup\beta$,  is an assembly defined if and only if and for all $p\in \domain{\alpha}\cap\domain{\beta}$, $\alpha(p)=\beta(p)$ and either at least one tile of $\alpha$ binds with a tile of $\beta$ or $\domain{\alpha}\cap\domain{\beta}\neq \emptyset$. Then, for all $p\in \domain{\alpha}$, we have $(\alpha\cup \beta)(p)=\alpha(p)$ and for all $p\in \domain{\beta}$, we have $(\alpha \cup \beta)(p)=\beta(p)$.

A \emph{tile assembly system} is a triple $\mathcal{T}=(T,\sigma,\tau)$,
where $T$ is a finite set of tile types, $\sigma$ is a $\tau$-stable assembly called the \emph{seed}, and
$\tau \in \mathbb{N}$ is the \emph{temperature}.
Throughout this article,  $\tau=1$.

Given two $\tau$-stable assemblies $\alpha$ and $\beta$, we say that $\alpha$ is a
\emph{subassembly} of $\beta$, and write $\alpha\sqsubseteq\beta$, if
$\domain{\alpha}\subseteq \domain{\beta}$ and for all $p\in \domain{\alpha}$,
$\alpha(p)=\beta(p)$.
We also write
$\alpha\rightarrow_1^{\mathcal{T}}\beta$ if we can obtain $\beta$ from
$\alpha$ by the binding of a single tile type, that is:  $\alpha\sqsubseteq \beta$, $|\domain{\beta}\setminus\domain{\alpha}|=1$ and the tile type at the position $\domain{\beta}\setminus\domain{\alpha}$ binds to $\alpha$ at that position.  We say that~$\gamma$ is
\emph{producible} from $\alpha$ if there is a (possibly empty)
sequence $\alpha_1,\alpha_2,\ldots,\alpha_n$ where $n \in \N \cup \{ \infty \} $, $\alpha= \alpha_1$ and $\alpha_n =\gamma$, such that
$\alpha_1\rightarrow_1^{\mathcal{T}}\alpha_2\rightarrow_1^{\mathcal{T}}\ldots\rightarrow_1^{\mathcal{T}}\alpha_n$.

The set of \emph{productions}, or \emph{producible assemblies}, of a tile assembly system $\mathcal{T}=(T,\sigma,\tau)$ is the set of all assemblies producible
from the seed assembly $\sigma$ and is written~$\prodasm{\mathcal{T}}$. An assembly $\alpha$ is called \emph{terminal} if there is no $\beta$ such that $\alpha\rightarrow_1^{\mathcal{T}}\beta$. The set of all terminal assemblies of $\mathcal{T}$ is denoted~$\termasm{\mathcal{T}}$. 
If there is a unique terminal assembly, \emph{i.e.} $|\termasm{\mathcal{T}}|=1$, then $\mathcal{T}$ is \emph{directed} and its unique terminal assembly is called $\uniterm$ along this paper. 

The easternmost column of a finite assembly $\alpha$ is designated by $e_\alpha=\max\{x: \exists y, (x,y) \in dom\{\alpha\}\}$, $w_\alpha$, $n_\alpha$, $s_\alpha$ are defined similarly for the westernmost, northernmost and southernmost columns of the assembly respectively. The \emph{horizontal width} of an assembly $\alpha$ is $e_\alpha-w_\alpha$ and its \emph{vertical height} $n_\alpha-s_\alpha$. The following theorem is the main contribution of this paper.

\begin{theorem}
\label{main:theorem}
Consider a directed tile assembly system $(T,\sigma,1)$, if its terminal assembly $\uniterm$ is finite then its horizontal width and vertical height are bounded by $\BfinalTheorem$.
\end{theorem}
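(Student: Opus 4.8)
## Proof Proposal

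The plan is to bound the two dimensions separately, and by the left-right and up-down symmetry of the model it suffices to bound, say, the horizontal width; the vertical height follows by the same argument applied to a rotated copy of the system. So the real goal is to show that if $\uniterm$ is finite then $e_\uniterm - w_\uniterm \le \BfinalTheorem$ (up to the split of the budget between the two directions). The fundamental obstruction to a large width is \emph{pumping}: in a temperature-$1$ system, a producible path that travels a long horizontal distance must, by pigeonhole on the (tile type, incoming direction) pairs along it, revisit the same local state after a bounded horizontal displacement, and in the non-cooperative setting one is tempted to ``pump'' the segment between the two occurrences to produce an infinite assembly. The subtlety — and the reason this is hard and genuinely uses directedness — is that in plain aTam such a pumped path may collide with earlier-placed tiles, so it need not actually be producible; the efficient-path construction of \cite{Meunier2019} exploits exactly this. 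Directedness is what we leverage to rule that escape hatch out.

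First I would set up the combinatorial backbone: fix the terminal assembly $\uniterm$, and for any producing sequence consider the \emph{path} traced by a sequence of tiles each binding to the previous one (a ``producible path'' in the sense of \cite{Meunier2019}). The key structural notion is that of a path whose horizontal extent is large: if the width exceeds a threshold of the form $c_1|T|$ for a suitable constant $c_1$, then some producible path $P$ inside $\uniterm$ must cross $2|T|+1$ consecutive columns monotonically (else the assembly, being connected, would be horizontally short), and hence contains two indices $i<j$ with the same tile type, same entry side, and with $P_j$ strictly east of $P_i$. Call the segment between them a \emph{translatable block}. Second, I would analyze what happens when we attempt to iterate this block: either (a) every translate of the block can be appended without conflict, giving an infinite producible assembly, contradicting finiteness of $\uniterm$; or (b) some translate of the block is blocked by a tile already present. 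The heart of the argument is case (b): the blocking tile is part of $\uniterm$, hence was itself placed by some producible path; by choosing $i,j$ carefully (taking the ``innermost'' or ``last'' repetition, and arguing about which path was assembled first) one shows that in the \emph{directed} system the blocker and the pumped path must agree — the two would-be-different terminal assemblies collapse to one — which forces a contradiction with how $i$ and $j$ were selected. This is where I expect to spend most of the effort, and it is the main obstacle: controlling the geometry of the collision and showing that directedness genuinely forbids the \cite{Meunier2019}-style evasion. The explicit constants ($2|T|+1$, $|\sigma|$ terms, the large multiples of $|T|$, culminating in $\BfinalTheorem$) come from being careful about: the seed's own width $|\sigma|$, the pigeonhole constant $2|T|$ (factor $2$ for the two horizontal entry directions, or more if one also tracks vertical structure), and the number of times one must apply such a pumping-and-collision argument before trapping the assembly — I would expect a constant number of such rounds, each contributing an additive $O(|T|)$ or $O(|\sigma|)$ term.

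Concretely, the steps in order: (1) reduce to bounding horizontal width; (2) show that a wide assembly contains a long monotone producible path, and extract a translatable block from it via pigeonhole; (3) define the pumped assembly obtained by repeatedly translating the block and argue it is $\tau$-stable and, \emph{absent collisions}, producible — so finiteness forces a collision; (4) use directedness to derive a contradiction from the collision, by showing the collided configuration would yield two distinct terminal assemblies, or by showing the collision itself propagates backward to an earlier collision, contradicting minimality; (5) bookkeep the constants through each stage to land at $\BfinalTheorem = 7|\sigma| + 58|T| + 30$. I anticipate that step (4) will require a delicate case split according to whether the pumped path collides with the seed, with an earlier portion of the same path, or with an unrelated path, and that handling the seed and the ``unrelated path'' cases is where the non-trivial use of directedness (and most of the length of the real proof) will lie.
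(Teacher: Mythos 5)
Your overall plan --- reduce to bounding the horizontal width, extract a repeated tile type by pigeonhole, attempt to pump, and use directedness when the pump is blocked --- correctly identifies the general shape of the problem, but the mechanism you propose for the crucial step (4) does not work, and this is precisely the gap the paper spends fifty pages filling. In a \emph{directed} system, when a translated copy of the block collides with tiles already in $\uniterm$, the colliding tiles must \emph{agree} (same type at the same position): there is never a conflict, hence never ``two would-be-different terminal assemblies'' to collapse. The only consequence of a collision is that the pumped path fails to be producible as a path; no contradiction follows from that alone, and no contradiction follows from ``how $i$ and $j$ were selected'' either, because the blocking tile can perfectly well belong to an earlier portion of the very same path $P$, placed long before indices $i$ and $j$. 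This is why the paper does not derive a contradiction from a single pump. Instead, its Directed Shield Lemma (Lemma \ref{weak:uturn}) turns the blocked pump into a \emph{geometric} conclusion --- the path must U-turn and travel back westward by at least the horizontal gap between the two repeated glues --- and that conclusion only bounds the westward excursion (Lemmas \ref{cor:longeast}, \ref{cor:rightside}, \ref{lem:banana:gen}); it does not yet bound $e_\uniterm$. To get the final contradiction the paper must introduce an entirely separate layer of structure (canonical paths for $4|T|+1$ distinct columns, decompositions into spans and dominant arcs, full and half shields, and ``protected'' glues), and the contradiction in Theorem \ref{the:end} is a second pigeonhole over \emph{columns}: two canonical paths for different columns whose last protected glues share a type are spliced together to manufacture a protected glue strictly later than the last one, which is absurd. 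None of this second layer appears in your proposal.

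A secondary but real issue: your step (2) asserts that a wide assembly contains a path crossing $2|T|+1$ consecutive columns \emph{monotonically}, which is false in general and is not what the paper proves; the paper only controls the directions of \emph{visible} glues column by column (Lemma \ref{cor:longeast}), and the repeated-glue pigeonhole is applied to visible glues of a carefully prepared (minimal, visible) cut, not to an arbitrary monotone subpath. Without the minimality and visibility hypotheses, the shield argument in Lemma \ref{weak:uturn} (which needs Lemma \ref{fact:rightPri} to control where a right-priority branch can escape the workspace) breaks down. So the proposal as written identifies the right difficulty but does not contain the ideas --- visible/minimal cuts, right-priority branches, the U-turn bound, and the protected-glue bookkeeping across multiple columns --- that actually resolve it.
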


For the rest of the article, we consider a directed tile assembly system $\mathcal{T}=(T,\sigma,1)$ and we suppose that its terminal assembly $\uniterm$ is finite. Also, we assume that the assembly grows the furthest away from the seed to the east, \emph{i.e.}:
$$e_\alpha-e_\sigma=\max\{e_\alpha-e_\sigma,n_\alpha-n_\sigma,|w_\alpha-w_\sigma|,|s_\alpha-s_\sigma|\}.$$
With this assumption it is sufficient to bound $e_\alpha-e_\sigma$ by $\BfinalExtremal$ to prove the main theorem. The proof of this result is constructive: we design an algorithm which takes as input an assembly $\uniterm$ such that $e_\uniterm>\BfinalExtremal$ and it will either:
\begin{itemize}
\item output a certificate that the tile assembly system is not directed;
\item bind a new tile to the assembly, showing that $\uniterm$ is not terminal.
\end{itemize}

By iterating this algorithm on a large enough finite assembly, we are able to make it grow arbitrarily large. Note that, by using the pumping lemma of \cite{STOC2020}, an infinite periodic path can be built when the assembly is large enough. 
We do not explicitly write the algorithm in this paper, but it can be directly derived from the constructive proofs, which are concluded by exhibiting a contradiction: either $\gamma$ is not terminal, or the tile assembly system is not directed.

\section{State of the art, roadmap and main obstacles}
\label{sec:roadmap}

Our proof starts by following the same steps as the ones done in \cite{STOC2020}. In Subsections \ref{sub:paths}, \ref{sub:vis}, \ref{sub:cut} and \ref{road:right}, we present the reasoning and the tools necessary to construct a specific assembly called the \emph{shield}. Our proof then diverges from previous approaches, as shields were complex to assemble, despite being useful in proving the pumping lemma in \cite{STOC2020}. Nevertheless, in Subsection \ref{road:newIdea}, we show that shields take a simple form when the terminal assembly of a directed tile assembly system is finite. In Subsection \ref{road:pseudo-visibility}, we explain the key idea on how to use shields in an original way to obtain the main result of this paper. Note that this section is only a roadmap: we identify several technical difficulties, called \emph{locks}, which are solved in the following sections.

\subsection{Paths}
\label{sub:paths}
Since there is no possible synchronization between the tiles in the non-cooperative case, we focus on \emph{paths} instead of assemblies. 
Formally, a path $P$ is a finite sequence of tiles $(P_i)_{0\leq i \leq |P|-1}$ such that each tile $P_i$ interacts with the next one $P_{i+1}$ of the sequence. Moreover, a path does not intersect with the seed and is simple: two different tiles of $P$ do not occupy the same position. The length of $P$ is the size $|P|-1$ of the sequence. If the first tile of the path interacts with a tile of the seed $\sigma$ then the path is \emph{producible}. The last tile of a producible path is the \emph{easternmost one} if and only if it is the only tile of $P$ on column $e_P$ with $e_P>e_{\sigma}$; moreover, a producible path is \emph{extremal} if  $e_P=e_\uniterm$ (we remind that $\uniterm$ is the unique terminal assembly). We say that a path $P$ is a \emph{subassembly of $\uniterm$} if and only if each tile of $P$ belongs to $\uniterm$. Of course, producible and extremal paths are subassemblies of $\uniterm$.

We now introduce some basic operations on paths. The \emph{concatenation} of a finite path $P$ with a path $Q$ is the concatenation $PQ$ of these two paths as sequences. $PQ$ is a path if and only if (1) the last tile of $P$ interacts with the first tile of $Q$ and (2) $P$ and $Q$ do not intersect. For a path $P = P_0 \ldots  P_i P_{i+1} \ldots P_j  \ldots $, we define the \emph{subpath} of P between indices $i$ and $j$, both included, in the following way: $P_{i,i+1,\ldots,j} = P_i P_{i+1} \ldots P_j$.
In the special case of a subpath where $i=0$ (resp. $j=|P|-1$), we say that $P_{0,1,\ldots,j}$ (resp. $P_{j,\ldots,|P|-1}$) is a \emph{prefix} (resp. \emph{suffix}) of $P$. Consider a vector $\overrightarrow{v} \in \mathbb{Z}^2$ and a tile $A$, the translation of $A$ by $\overrightarrow{v}$ is the tile $A+\overrightarrow{v}=(\pos{A}+\vect{v},\type{A})$. The translation of a path $P$ by $\vect{v}$ is $P+\vect{v}=(P_0+\vect{v})(P_1+\vect{v})\ldots(P_{|P|-1}+\vect{v})$. Consider two indices $0\leq i \leq j \leq |P|-1$, by abuse of notation we designate by $\vect{P_iP_j}$ the vector $\vect{\pos{P_i}\pos{P_j}}$. Also, we denote by $\glueP{i}{j}$ the glue used to bind tile $P_i$ with $P_{i+1}$. It is situated on a \emph{glue column} $c$ between the two tiles (and thus between two columns $c-0.5 < c <c+0.5$). The path $\rev{P}$ designates the path $P$ in reverse: for all $0\leq i \leq |P|-1$, we have $\rev{P}_i=P_{|P|-1-i}$. The assembly $\asm{P}$ is the representation of the path $P$ as an assembly (the order of the tiles is not taken into account anymore). This notation allows to merge two paths $P$ and $Q$ as $\asm{P} \cup \asm{Q}$. This operation can be done even if $P$ and $Q$ intersect and thus is more general than the concatenation.

\begin{figure}
\center
\begin{tikzpicture}[x=0.22cm,y=0.22cm]
\input{./tikz/AssPumpEx2}
\end{tikzpicture}
\caption{An extremal path $P$ and the different bounds used during the analysis. Along the different figures, the seed is always shown in black tiles. This path $P$ crosses column $\mathcal{B}_5$, leading to the contradiction implying that the terminal assembly is in fact infinite.}
\label{fig:extremal}
\end{figure}

The previous results of \cite{STOC2020} and the ones of this paper are obtained by showing that the farther away a path grows from the seed, the more it is constrained. Figure \ref{fig:extremal} shows an extremal path and several bounds. The main one $\Bfinal$ is the bound used to prove the main Theorem \ref{main:theorem}: if a path crosses this column then the terminal assembly is infinite. The bound $\Bzero$ shows that an extremal path cannot grow too far to the west of the seed, i.e. on the opposite side of the seed from its last tile.  
\begin{lock}
\label{lock:boundZero}
If $P$ is an extremal path, then $w_P\geq \Bzero =w_{\sigma}-2|T|-1$. 
\end{lock}

This result was also proven in \cite{pumpabilityLargeBound}, in a slightly different context.
We give a simplified proof of this result in Section \ref{sec:U-turn}.

The path $P$ will be cut in three parts. The first one is its beginning (in dark green in Figure~\ref{fig:extremal}) which is too close to the seed to be useful\footnote{That is because the \emph{visible glues} between the columns $B_0$ and $B_1$ will point east.  The notion of a visible glue is formally introduced in the next subsection.}. The middle of the path $P$ (in green in Figure \ref{fig:extremal}) will start on a column with index at least $\Bone=\BoneValue$ and will end on a column with index at most $\Btwo=\BtwoValue$ (the exact beginning and ending of this subpath of $P$ are explained in Figure~\ref{fig:visible} and Subsection \ref{road:right}). In this subpath of $P$, we aim to locate two indices $0\leq i \leq j \leq |P|-1$ such that $\glueP{i}{i+1}$ and $\glueP{j}{j+1}$ have the same type, allowing us to try to assemble $P_{0,\ldots,i}(P_{j+1,\ldots,|P|-1}+\vect{P_jP_i})$ or $P_{0,\ldots,j}(P_{i+1,\ldots,|P|-1}+\vect{P_iP_j})$. Note that copying and pasting the end of $P$ is not always possible due to possible intersections between the beginning of $P$ and its translated suffix. Thus, one key step explained in Subsections \ref{road:right} and \ref{road:newIdea} is to select $i$ and $j$ such that $P_{0,\ldots,i}(P_{j+1,\ldots,|P|-1}+\vect{P_jP_i})$ is always a producible path. Finally, the end of the path $P$  (in pale green in Figure~\ref{fig:extremal}) has to be large enough such that when it is copied and pasted, it will create a large obstacle, called a \emph{shield}. This shield will be used to find other indices where different parts of $P$ can be copied and pasted. Repeating this reasoning will  eventually lead to a contradiction. The role of bounds $\Bthree$ and $\Bfour$ will be explained in Subsection \ref{road:pseudo-visibility}, we do not give here their explicit definition since their values depend on the positions of $P_i$ and $P_j$.

\subsection{Visible glues of an extremal path}
\label{sub:vis}

All results of this paper rely on the notion of \emph{visible} glues introduced in \cite{STOC2017}. Consider an index $0\leq i \leq |P|-1$, $\glueP{i}{i+1}$ is \emph{horizontal} if both tiles are on the same line or \emph{vertical} if both tiles are on the same column. A horizontal glue points east if $P_i$ is west of $P_{i+1}$ otherwise it points west. The glue is visible from the south (resp. north) in $P$ if and only it is the southernmost (northernmost) glue of $P$ on its column and if there are no tiles of the seed on both adjacent columns. We designate by $l^i$ the \emph{glue ray} associated to $\gluePi$: $l^i$ is a ray starting at the position of $\gluePi$ going south (resp. north) if $\gluePi$ is visible from the south (resp. north). If the last tile of a producible path $P$ is the easternmost one (which is the case for an extremal path) then we have the following properties (see Figure~\ref{fig:visible}):

\begin{figure}
\center
\begin{tikzpicture}[x=0.22cm,y=0.22cm]
\input{./tikz/Visible}
\end{tikzpicture}
\caption{The path $P$ of Figure \ref{fig:extremal} with some visible glues: all glues visible from the north are pointing east (by Lemma \ref{lem:glue:prop1}) and $\glueP{i_1}{i_1+1}$ is an example of a glue visible from the south and pointing west. The $\glueP{i_3}{i_3+1}$ is the westernmost glue visible from the south and pointing east. This glue is on a glue column $c=\Bone-1$ in this example (by Lock \ref{lem:glue:prop3}, $c\leq \Bone$). The last glue of $P$, $\glueP{i_7}{i_7+1}$, is always visible from the south and from the north. The $\glueP{i_4}{i_4+1}$ and the $\glueP{i_5}{i_5+1}$ are visible from the south and north respectively and form an upward span, see Figure \ref{fig:span}.
The limit between the dark green and green subpaths is delimited by $\glueP{i_2}{i_2+1}$ which is the first glue of the span of $P$ on glue column $\Bone$. The limit between the green and pale green subpaths is delimited by the last glue of $P$ on glue column $\Btwo$ which happens to be $\glueP{i_6}{i_6+1}$ in this simple example.}
\label{fig:visible}
\end{figure}

\begin{fact}
\label{lem:glue:prop1}
Consider a producible path $P$ whose last tile is the easternmost one. Then up to some symmetries\footnote{Swapping the north and south glues of all tile types leads to horizontal symmetry of the producible assemblies.}, we can consider that all the glues of $P$ which are visible from the north point east.
\end{fact}

\begin{lemma}
\label{lem:glue:prop2}
Consider a producible path $P$ whose last glue is visible from the north. Then there exists a glue column $c$ such that all glues visible from the south on glue column $c'<c$ point west and all glues visible from the south on glue column $c'\geq c$ point east.
\end{lemma}

\begin{lock}
\label{lem:glue:prop3}
Consider a producible path $P$ whose last tile is the easternmost one and $0\leq i \leq |P|-1$ such that $\gluePi$ is visible from the south on glue column $\Bone$. Then $\gluePi$ points east.
\end{lock}

\begin{lemma}
\label{lem:glue:prop4}
Consider a producible path $P$ whose last glue is visible from the north and indices $0\leq i < j < |P|-1$ such that $\glueP{i}{i+1}$ and $\glueP{j}{j+1}$ are both visible from the south on glue columns $c$ and $c'$ respectively. If both glues points east (resp. west) then $c<c'$ (resp $c>c'$).
\end{lemma}

If $P$ is an extremal path, then these three lemmas and Lock \ref{lem:glue:prop3} imply that all glues which are visible from the north or south in $P$ on a glue column $c \geq \Bone$ point east. The proofs of these lemmas are, now, classic results available in \cite{STOC2020,STOC2017} and will not be proven here again. The proof of Lock \ref{lem:glue:prop3} is available in \cite{STOC2020} and relies on a complex intermediate lemma that will be simplified in Section \ref{sec:U-turn}.

\subsection{Cut, span and workspace}
\label{sub:cut}

Combining two visible glues of a producible path $P$ allows us to delimit a part of $\mathbb{Z}^2$ which is an entire vertical half-plane, plus some other parts depending on the exact shape of $P$ between the two visibles glues. When the visible glues are chosen appropriately, this area does not contain any tile of $\sigma$ or any tile of $P$ before the first visible glue. Then, we are free to edit $P$ in this area without fearing a conflict with one of its previous parts or the seed. More formally, consider two indices $0\leq s \leq |P|-2$ and $0\leq n \leq |P|-2$ such that $\glueP{s}{s+1}$ and $\glueP{n}{n+1}$ are visible from the south and from the north respectively in $P$. If $s\leq n$ (resp. $n\leq s$), $(s,n)$ (resp. $(n,s)$) is called an \emph{upward (resp. downward) cut} of $P$ because the bi-infinite curve made of  $l^s$, $P_{s,\ldots,n+1}$ and $l^n$ cuts the space in two areas, see Figure \ref{fig:span}. If $\glueP{s}{s+1}$ (resp. $\glueP{n}{n+1}$) points east then the seed and the beginning of $P$ are in the west side of the cut and cannot block the growth of new paths in the east side of the cut. The east side of the cut is called the \emph{workspace} of the cut $(s,n)$ (resp. $(n,s)$) of $P$, denoted by $\mathcal{C}$:

\begin{fact}
\label{fact:workspace}
Consider a producible path $P$, a cut $(s,n)$ of $P$ and a path $Q$ such that $Q_0$ interacts with $P_s$  and $Q$ is in the workspace $\mathcal{C}$ of the cut $(s,n)$ of $P$\footnote{Note that tiles are placed on the discrete plane $\mathbb{Z}^2$ but a workspace is a part of $\mathbb{R}^2$. For a path $Q$, this implies that $Q$ is in $\mathcal{C}$ if and only if the tiles of $Q$ are in the area but also the segments linking the different tiles (including the glues).}. Then $P_{0,\ldots,s}Q$ is producible.
\end{fact}

\begin{figure}
\center
\begin{minipage}{0.47\linewidth}
\begin{tikzpicture}[x=0.22cm,y=0.22cm]

\fill[fill=blue!30!white, draw opacity=0.8] (15,0) |- (16.5,3.5) |- (12.5,5.5) |- (16.5,9.5) |- (12.5,11.5) |- (15,15.5) |- (26,21) |- (15,0);

\draw[very thick] (4.5,9.5) -| (8.5,7.5) -| (3.5,3.5) -| (1.5,11.5) -| (10.5,5.5) -| (6.5,3.5) -| (16.5,5.5) -| (12.5,9.5) -| (16.5,11.5) -| (12.5,15.5) -| (15,15.5);

\draws{4}{9}

\tileg{5}{9}{85}
\tileg{6}{9}{85}
\tileg{7}{9}{85}
\tileg{8}{9}{85}
\tileg{8}{8}{85}
\tileg{8}{7}{85}
\tileg{7}{7}{85}
\tileg{6}{7}{85}
\tileg{5}{7}{85}
\tileg{4}{7}{85}
\tileg{3}{7}{85}
\tileg{3}{6}{85}
\tileg{3}{5}{85}
\tileg{3}{4}{85}
\tileg{3}{3}{85}
\tileg{2}{3}{85}
\tileg{1}{3}{85}
\tileg{1}{4}{85}
\tileg{1}{5}{85}
\tileg{1}{6}{85}
\tileg{1}{7}{85}
\tileg{1}{8}{85}
\tileg{1}{9}{85}
\tileg{1}{10}{85}
\tileg{1}{11}{85}
\tileg{2}{11}{85}
\tileg{3}{11}{85}
\tileg{4}{11}{85}
\tileg{5}{11}{85}
\tileg{6}{11}{85}
\tileg{7}{11}{85}
\tileg{8}{11}{85}
\tileg{9}{11}{85}
\tileg{10}{11}{85}
\tileg{10}{10}{85}
\tileg{10}{9}{85}
\tileg{10}{8}{85}
\tileg{10}{7}{85}
\tileg{10}{6}{85}
\tileg{10}{5}{85}
\tileg{9}{5}{85}
\tileg{8}{5}{85}
\tileg{7}{5}{85}
\tileg{6}{5}{85}
\tileg{6}{4}{85}
\tileg{6}{3}{85}
\tileg{7}{3}{85}
\tileg{8}{3}{85}
\tileg{9}{3}{85}
\tileg{10}{3}{85}
\tileg{11}{3}{85}
\tileg{12}{3}{85}
\tileg{13}{3}{85}
\tileg{14}{3}{85}

\tiley{15}{3}{48}
\tiley{16}{3}{48}
\tiley{16}{4}{48}
\tiley{16}{5}{48}
\tiley{15}{5}{48}
\tiley{14}{5}{48}
\tiley{13}{5}{48}
\tiley{12}{5}{48}
\tiley{12}{6}{48}
\tiley{12}{7}{48}
\tiley{12}{8}{48}
\tiley{12}{9}{48}
\tiley{13}{9}{48}
\tiley{14}{9}{48}
\tiley{15}{9}{48}
\tiley{16}{9}{48}
\tiley{16}{10}{48}
\tiley{16}{11}{48}
\tiley{15}{11}{48}
\tiley{14}{11}{48}
\tiley{13}{11}{48}
\tiley{12}{11}{48}
\tiley{12}{12}{48}
\tiley{12}{13}{48}
\tiley{12}{14}{48}
\tiley{12}{15}{48}
\tiley{13}{15}{48}
\tiley{14}{15}{48}



\path [dotted, draw, thin] (0,0) grid[step=0.22cm] (26,21);
\draw[dashed] (8,11.5) -| (8,21);

\fill (8,11.5) circle (0.16);
\node (D) at (8,22.2) {$\Bone$};
\draw[dashed] (15,15.5) -| (15,21);
\fill (15,15.5) circle (0.16);
\draw[dashed] (15,0) -| (15,3.5);
\fill (15,3.5) circle (0.16);
\node (D) at (14,2.1) {$s$};
\node (D) at (16,1) {$l^s$};
\node (D) at (16,20) {$l^n$};
\node (D) at (14,16.7) {$n$};
\node (D) at (23,18.7) {$\mathcal{C}$};
\end{tikzpicture}

\center (a) 
\end{minipage}
\begin{minipage}{0.47\linewidth}
\begin{tikzpicture}[x=0.22cm,y=0.22cm]
\fill[fill=blue!30!white, draw opacity=0.8] (15,0) |- (16.5,3.5) |- (12.5,5.5) |- (16.5,9.5) |- (12.5,11.5) |- (15,15.5) |- (26,21) |- (15,0);

\draw[very thick] (4.5,9.5) -| (8.5,7.5) -| (3.5,3.5) -| (1.5,11.5) -| (10.5,5.5) -| (6.5,3.5) -| (15,3.5);
\draw[dashed] (15,3.5) -| (16.5,5.5) -| (12.5,9.5) -| (16.5,11.5) -| (12.5,15.5) -| (15,15.5);
\draw[very thick] (15,3.5) -| (20.5,7.5) -| (19.5,10.5) -| (17.5, 7.5) -| (14.5,7.5);

\draws{4}{9}

\tileg{5}{9}{85}
\tileg{6}{9}{85}
\tileg{7}{9}{85}
\tileg{8}{9}{85}
\tileg{8}{8}{85}
\tileg{8}{7}{85}
\tileg{7}{7}{85}
\tileg{6}{7}{85}
\tileg{5}{7}{85}
\tileg{4}{7}{85}
\tileg{3}{7}{85}
\tileg{3}{6}{85}
\tileg{3}{5}{85}
\tileg{3}{4}{85}
\tileg{3}{3}{85}
\tileg{2}{3}{85}
\tileg{1}{3}{85}
\tileg{1}{4}{85}
\tileg{1}{5}{85}
\tileg{1}{6}{85}
\tileg{1}{7}{85}
\tileg{1}{8}{85}
\tileg{1}{9}{85}
\tileg{1}{10}{85}
\tileg{1}{11}{85}
\tileg{2}{11}{85}
\tileg{3}{11}{85}
\tileg{4}{11}{85}
\tileg{5}{11}{85}
\tileg{6}{11}{85}
\tileg{7}{11}{85}
\tileg{8}{11}{85}
\tileg{9}{11}{85}
\tileg{10}{11}{85}
\tileg{10}{10}{85}
\tileg{10}{9}{85}
\tileg{10}{8}{85}
\tileg{10}{7}{85}
\tileg{10}{6}{85}
\tileg{10}{5}{85}
\tileg{9}{5}{85}
\tileg{8}{5}{85}
\tileg{7}{5}{85}
\tileg{6}{5}{85}
\tileg{6}{4}{85}
\tileg{6}{3}{85}
\tileg{7}{3}{85}
\tileg{8}{3}{85}
\tileg{9}{3}{85}
\tileg{10}{3}{85}
\tileg{11}{3}{85}
\tileg{12}{3}{85}
\tileg{13}{3}{85}
\tileg{14}{3}{85}

\tilem{15}{3}{80}
\tilem{16}{3}{80}
\tilem{17}{3}{80}
\tilem{18}{3}{80}
\tilem{19}{3}{80}
\tilem{20}{3}{80}
\tilem{20}{4}{80}
\tilem{20}{5}{80}
\tilem{20}{6}{80}
\tilem{20}{7}{80}
\tilem{19}{7}{80}
\tilem{19}{8}{80}
\tilem{19}{9}{80}
\tilem{19}{10}{80}
\tilem{18}{10}{80}
\tilem{17}{10}{80}
\tilem{17}{9}{80}
\tilem{17}{8}{80}
\tilem{17}{7}{80}
\tilem{16}{7}{80}
\tilem{15}{7}{80}
\tilem{14}{7}{80}
\path [dotted, draw, thin] (0,0) grid[step=0.22cm] (26,21);
\draw[dashed] (8,11.5) -| (8,21);

\fill (8,11.5) circle (0.16);
\node (D) at (8,22.2) {$\Bone$};
\draw[dashed] (15,15.5) -| (15,21);
\draw[dashed] (15,0) -| (15,3.5);
\fill (15,3.5) circle (0.16);
\node (D) at (14,2.1) {$s$};
\node (D) at (23,18.7) {$\mathcal{C}$};

\end{tikzpicture}

\center (b) 
\end{minipage}

\caption{(a) The upward span $(s,n)$ of path $P$ from Figure \ref{fig:visible}, with $s = i_4$ and $n = i_5$. The subpath $P_{0, \dots, s}$ is shown in green, and $P_{s+1, \dots, n}$ in yellow. This span points east.
The workspace $\mathcal{C}$ of this span is drawn in blue and is delimited by $l^s$ (the glue ray of $\glueP{s}{s+1}$), $P_{s,\ldots,n+1}$ and $l^n$ (the glue ray of $\glueP{n}{n+1}$).
(b) A path $Q$ (in magenta) which is in $\mathcal{C}$ and such that $Q_0$ interacts with $P_s$. According to Fact \ref{fact:workspace}, $P_{0,\ldots,s}Q$ is producible.}
\label{fig:span}
\end{figure}

Along the paper, we will often consider the special case where the two visible glues $s$ and $n$ are on the same glue column. In this case, the cut is called the \emph{span} of $P$ on glue column $c$. The type of a span is the type of $\glueP{s}{s+1}$ (resp. $\glueP{n}{n+1}$), it points in the same direction as $\glueP{s}{s+1}$ (resp. $\glueP{n}{n+1}$), and its width is $y_{P_{n}}-y_{P_s}$.

\subsection{Classical reasoning, right-priority and shield}
\label{road:right}

The result of \cite{STOC2020} is obtained by combining two spans of the same type and direction. Indeed, consider an extremal path $P$ and two glue columns $\Bone\leq c < c' \leq \Btwo$ such that the spans $(s,n)$ and $(s',n')$ of $P$ on glue columns $c$ and $c'$ respectively have the same characteristics (they are both upward or downward spans, point east and have the same type). Then we can try to assemble $P_{0,\ldots,s}(P_{s'+1,\ldots, |P|-1}-\vect{P_sP_s'})$ or $P_{0,\ldots,s'}(P_{s+1,\ldots, |P|-1}+\vect{P_sP_s'})$. To explain what may occur when trying to assemble such paths, we use the extremal path $P$ of Figure \ref{fig:shield}(a) as an example. 

\input{./tikz/shield}

In Figure \ref{fig:shield}(a), the spans $(s,n)$ and $(s',n')$ are upward spans and $\mathcal{D}$ is the workspace of $(s',n')$. In Figure \ref{fig:shield}(b), we try to assemble the path $Q=P_{s+1,\ldots, |P|-1}+\vect{P_sP_s'}$ at the end of $P_{0,\ldots,s'}$. While doing so, the following important events occur in this example. Firstly, there exists an index $0\leq i \leq |Q|-1$ such that $Q_{0,\ldots,i}$ grows along $P_{s'+1,\ldots,|P|-1}$ by putting its tiles in the same positions, \emph{i.e.} for all $0\leq t \leq i$, we have $\pos{Q_t}=\pos{P_{s'+1+t}}$ (the tiles in orange in Figure \ref{fig:shield}(b)). Secondly, $Q$ turns right of $P_{s',\ldots,|P|-1}$ at $\pos{Q_i}$  (for a formal definition of this notion see \cite{STOC2020,STOC2017}) and stays inside $\mathcal{D}$ until it intersects with $P$ at $\pos{P_k}=\pos{Q_j}$. Thirdly, $Q$ leaves $\mathcal{D}$ after $Q_j$, \emph{i.e.} no tiles of $Q_{j+1,\ldots,|Q|-1}$ are inside $\mathcal{D}$. Fourthly, $Q$ intersects with $P_{0,\ldots,s'}$. Thus $P_{0,\ldots,s'}Q$ is not self-avoiding and is not a path but $P_{0,\ldots,s'}Q_{0,\ldots,j}$ is a path and it is producible by Fact \ref{fact:workspace}. Moreover, since the tile assembly system is directed then we have  $P_k=Q_j$ and $Q_{0,\ldots,i}=P_{s'+1,\ldots,s'+1+i}$ is the largest common prefix of $Q$ and $P_{s'+1,\ldots,|P|-1}$.

Now, we focus on the fact that $Q$ turns right of $P_{s'+1,\ldots, |P|-1}$. This event implies that $P_{s'+1,\ldots,|P|-1}-\vect{P_sP_{s'}}$ turns left of $P_{s,\ldots,|P|-1}$, see Figure \ref{fig:shield}(d). By doing so, $P_{s'+1,\ldots,|P|-1}-\vect{P_sP_{s'}}$ immediately leaves the workspace $\mathcal{C}$ of the span $(s,n)$ of $P$. In this example, $P_{s'+1,\ldots,|P|-1}-\vect{P_sP_{s'}}$ quickly collides with $P_{0,\ldots,s}$. 
 Then, in the general case we have three possibilities:

\begin{fact}
\label{fact:turn}
Consider an extremal path $P$ and two glue columns $\Bone\leq c < c' \leq \Btwo$ such that the spans $(s,n)$ and $(s',n')$ of $P$ on glue columns $c$ and $c'$ respectively have the same characteristics. Then one of the following holds:
\begin{enumerate}
\item $P_s(P_{s'+1,\ldots,|P|-1}-\vect{P_sP_{s'}})$ turns right of $P_{s,\ldots,n+1}$;
\item or $P_{s'}(P_{s+1,\ldots,|P|-1}+\vect{P_sP_{s'}})$ turns right of $P_{s',\ldots,|P|-1}$;
\item or the width of span $(s',n')$ is less or equal to the width of span $(s,n)$. 
\end{enumerate}
\end{fact} 

When the first case of Fact \ref{fact:turn} occurs, a suffix of $(P_{s'+1,\ldots,|P|-1}-\vect{P_sP_{s'}})$ will be used as a shield. 
For the third case of Fact \ref{fact:turn}, it was proven in \cite{STOC2020} that in this case the path $P$ is pumpable or fragile which means that either the tile assembly system is not directed or that the terminal assembly is infinite. These events cannot occur in our setting. We do not explain this argument in the roadmap. 

In Subsection \ref{road:newIdea}, we will consider a specific setting where the second case of Fact \ref{fact:turn} also leads to a contradiction. To do so, consider the tile $P_k=Q_j$ of Figure \ref{fig:shield}:  it is the only intersection between $Q_{i+1,\ldots,|Q|-1}$ and $P_{s'+1+i,\ldots,|P|-1}$ in this simple example. Since the tile assembly system is directed, it is possible to ``merge'' these two paths together. Indeed, $R=P_{0,\ldots,s'}Q_{0,\ldots,j}P_{k+1,\ldots,|P|-1}$ is a producible extremal path, see Figure \ref{fig:shield}(c). In \cite{STOC2020}, this operation is repeated several times to build a path called a \emph{shield}. From this point on, our reasoning differs from \cite{STOC2020} and we introduce a setting where shields take a very simple form. Indeed, $P$ and $R$ are both extremal paths. Moreover, they share the common prefix $P_{0,\ldots, s'+1+i}$. Nevertheless the path $R$ turns right of $P$ at tile $P_i$. Then, we say that $R$ is more \emph{right-priority} than $P$. This notion of right-priority implies an order on extremal paths. Moreover, since we consider a finite terminal assembly, there is a finite number of extremal paths. Thus, it is possible to choose a \emph{rightmost-priority} one. For such a path, it is not possible to be in a case like in Figure~\ref{fig:shield}(c) where there is an intersection between $P_{s'+i+1,\ldots,|P|-1}$ and $Q_{i+1,\ldots,|Q|-1}$. 

\subsection{Shield in the directed case with a finite terminal assembly}
\label{road:newIdea}

Our aim is to work in a setting where only the first case of Fact \ref{fact:turn} is possible. To achieve this goal, we consider an extremal path $P$, two glue columns $\Bone \leq c < c' \leq \Btwo$, and we let $(s,n)$ be the span of $P$ on glue column $c$ and $(s',n')$ the span of $P$ on glue column $c'$, for an example see Figure \ref{fig:TwoSpan}. 

\begin{figure}
\center
\begin{tikzpicture}[x=0.22cm,y=0.22cm]

\draw[very thick] (4.5,9.5) -| (8.5,7.5) -| (3.5,3.5) -| (1.5,11.5) -| (10.5,5.5) -| (6.5,3.5) -| (16.5,5.5) -| (12.5,9.5) -| (16.5,11.5) -| (12.5,15.5) -| (18.5,17.5) -| (16.5,19.5) -| (29.5,17.5) -| (20.5,15.5) -| (59.5,15.5);

\draws{4}{9}

\tileg{5}{9}{85}
\tileg{6}{9}{85}
\tileg{7}{9}{85}
\tileg{8}{9}{85}
\tileg{8}{8}{85}
\tileg{8}{7}{85}
\tileg{7}{7}{85}
\tileg{6}{7}{85}
\tileg{5}{7}{85}
\tileg{4}{7}{85}
\tileg{3}{7}{85}
\tileg{3}{6}{85}
\tileg{3}{5}{85}
\tileg{3}{4}{85}
\tileg{3}{3}{85}
\tileg{2}{3}{85}
\tileg{1}{3}{85}
\tileg{1}{4}{85}
\tileg{1}{5}{85}
\tileg{1}{6}{85}
\tileg{1}{7}{85}
\tileg{1}{8}{85}
\tileg{1}{9}{85}
\tileg{1}{10}{85}
\tileg{1}{11}{85}
\tileg{2}{11}{85}
\tileg{3}{11}{85}
\tileg{4}{11}{85}
\tileg{5}{11}{85}
\tileg{6}{11}{85}
\tileg{7}{11}{85}
\tileg{8}{11}{80}
\tileg{9}{11}{85}
\tileg{10}{11}{85}
\tileg{10}{10}{85}
\tileg{10}{9}{85}
\tileg{10}{8}{85}
\tileg{10}{7}{85}
\tileg{10}{6}{85}

\tileg{10}{5}{85}
\tileg{9}{5}{85}
\tileg{8}{5}{85}
\tileg{7}{5}{85}
\tileg{6}{5}{85}
\tileg{6}{4}{85}
\tileg{6}{3}{85}
\tileg{7}{3}{85}
\tileg{8}{3}{85}
\tileg{9}{3}{85}
\tileg{10}{3}{85}
\tileg{11}{3}{85}
\tileg{12}{3}{85}
\tileg{13}{3}{85}
\tileg{14}{3}{85}

\tiley{15}{3}{48}
\tiley{16}{3}{48}
\tiley{16}{4}{48}
\tiley{16}{5}{48}
\tiley{15}{5}{48}
\tiley{14}{5}{48}
\tiley{13}{5}{48}
\tiley{12}{5}{48}
\tiley{12}{6}{48}
\tiley{12}{7}{48}
\tiley{12}{8}{48}
\tiley{12}{9}{48}
\tiley{13}{9}{48}
\tiley{14}{9}{48}
\tiley{15}{9}{48}
\tiley{16}{9}{48}
\tiley{16}{10}{48}
\tiley{16}{11}{48}
\tiley{15}{11}{48}
\tiley{14}{11}{48}
\tiley{13}{11}{48}
\tiley{12}{11}{48}
\tiley{12}{12}{48}
\tiley{12}{13}{48}
\tiley{12}{14}{48}
\tiley{12}{15}{48}
\tiley{13}{15}{48}
\tiley{14}{15}{48}
\tiley{15}{15}{48}
\tiley{16}{15}{48}
\tiley{17}{15}{48}
\tiley{18}{15}{48}
\tiley{18}{16}{48}
\tiley{18}{17}{48}
\tiley{17}{17}{48}
\tiley{16}{17}{48}
\tiley{16}{18}{48}

\tiley{16}{18}{48}
\tiley{16}{19}{48}
\tileb{17}{19}{48}
\tileb{18}{19}{48}
\tileb{19}{19}{48}
\tileb{20}{19}{48}
\tileb{20}{19}{48}
\tileb{21}{19}{48}
\tileb{22}{19}{48}
\tileb{23}{19}{48}
\tileb{24}{19}{48}
\tileb{25}{19}{48}
\tileb{26}{19}{48}
\tileb{27}{19}{48}
\tileb{28}{19}{48}
\tileb{29}{19}{48}
\tileb{29}{18}{48}
\tileb{29}{17}{48}
\tileb{28}{17}{48}
\tileb{27}{17}{48}
\tileb{26}{17}{48}
\tileb{25}{17}{48}
\tileb{24}{17}{48}
\tileb{23}{17}{48}
\tileb{22}{17}{48}
\tileb{21}{17}{48}
\tileb{20}{17}{48}
\tileb{20}{16}{48}
\tileb{20}{15}{48}
\tileb{21}{15}{48}
\tileb{22}{15}{48}
\tileb{23}{15}{48}
\tileb{24}{15}{48}
\tileb{25}{15}{48}
\tileb{26}{15}{48}
\tileb{27}{15}{48}
\tileb{28}{15}{48}
\tileb{28}{15}{48}
\tileb{29}{15}{48}
\tileb{30}{15}{48}
\tileb{31}{15}{48}
\tileb{32}{15}{48}
\tileb{33}{15}{48}
\tileb{34}{15}{48}
\tileb{35}{15}{48}
\tileb{36}{15}{48}
\tileb{37}{15}{48}
\tileb{38}{15}{48}
\tileb{39}{15}{48}
\tileb{40}{15}{48}
\tileb{41}{15}{48}
\tileb{42}{15}{48}
\tileb{43}{15}{48}
\tileb{44}{15}{48}
\tileb{45}{15}{48}
\tileb{46}{15}{48}
\tileb{47}{15}{48}
\tileb{48}{15}{48}
\tileb{49}{15}{48}
\tileb{50}{15}{48}
\tileb{51}{15}{48}
\tileb{52}{15}{48}
\tileb{53}{15}{48}
\tileb{54}{15}{48}
\tileb{55}{15}{48}
\tileb{56}{15}{48}
\tileb{57}{15}{48}
\tileb{58}{15}{48}
\tileb{59}{15}{78}

\path [dotted, draw, thin] (0,0) grid[step=0.22cm] (60,21);

\draw [dashed] (15,0) -| (15,3.5);
\draw [dashed] (15,15.5) -| (15,21);

\node (D) at (15,-2) {$c$};

\fill (14.5,3.5) circle (0.16);
\node (D) at (14,2) {$P_{s}$};
\fill (14.5,15.5) circle (0.16);
\node (D) at (14,17) {$P_{n}$};

\draw [dashed] (17,0) -| (17,15.5);
\draw [dashed] (17,19.5) -| (17,21);
\node (D) at (17,-1.8) {$c'$};

\fill (16.5,15.5) circle (0.16);
\node (D) at (16,14) {$P_{s'}$};
\fill (16.5,19.5) circle (0.16);
\node (D) at (16,21) {$P_{n'}$};
\end{tikzpicture}
\caption{The path $P$ of Figure \ref{fig:extremal}. We consider two upward spans $(s,n)$ and $(s',n')$ of $P$ on columns $\Bone\leq c <c' \leq \Btwo$ respectively. Both spans point east and we suppose that $\glueP{s}{s+1}=\glueP{s'}{s'+1}$.}
\label{fig:TwoSpan}
\end{figure}

To eliminate the second case of Fact \ref{fact:turn} where $P_{s'}(P_{s+1,\ldots,|P|-1}+\vect{P_sP_{s'}})$ turns right of $P_{s',\ldots,|P|-1}$, we consider that $P$ is not an arbitrary chosen path but that $P$ is chosen wisely to have the following property.

\begin{definition}
\label{prop:canonicalPath}
Consider an extremal path $P$, a glue column $\Bone\leq c \leq \Btwo$ and the span $(s,n)$ of $P$ on glue column $c$. If $(s,n)$ is an upward (resp. downward) span then consider the set of extremal paths $\mathcal{P}$ such that a path $Q$ belongs to $\mathcal{P}$ if and only if $P_{0,\ldots,s+1}$ is a prefix of $Q$ and $\glueQ{s}{s+1}$ is visible from the south (resp. north) in $Q$. Then $P$ is a \emph{good} path for glue column $c$ if and only if it is the rightmost (resp. leftmost) priority path of~$\mathcal{P}$.
\end{definition}

\begin{figure}
\center
\begin{tikzpicture}[x=0.22cm,y=0.22cm]

\fill[fill=yellow!85!black, draw opacity=0.5] (17,0) |- (18.5,15.5) |- (16.5,17.5) |- (17,19.5) |- (62,21) |- (16,0);
\fill[fill=yellow!90!black, draw opacity=0.5] (17,0) |- (18.5,15.5) |- (16.5,17.5) |- (29.5,19.5) |- (20.5,17.5) |- (59.5,15.5) |- (17,0);

\draw[very thick] (16.5,15.5) -| (18.5,17.5) -| (16.5,19.5) -| (29.5,17.5) -| (20.5,15.5) -| (59.5,15.5);
\draw[very thick] (16.5,15.5) -| (18.5,10.5) -| (27.5,7.5) -| (34.5,11.5) -| (39,11.5);

%
%

\tileg{16}{15}{85}
\tileor{17}{15}{80}
\tileor{18}{15}{80}
\tiley{18}{16}{48}
\tiley{18}{17}{48}
\tiley{17}{17}{48}
\tiley{16}{17}{48}
\tiley{16}{18}{48}

\tiley{16}{18}{48}
\tiley{16}{19}{48}
\tileb{17}{19}{48}
\tileb{18}{19}{48}
\tileb{19}{19}{48}
\tileb{20}{19}{48}
\tileb{20}{19}{48}
\tileb{21}{19}{48}
\tileb{22}{19}{48}
\tileb{23}{19}{48}
\tileb{24}{19}{48}
\tileb{25}{19}{48}
\tileb{26}{19}{48}
\tileb{27}{19}{48}
\tileb{28}{19}{48}
\tileb{29}{19}{48}
\tileb{29}{18}{48}
\tileb{29}{17}{48}
\tileb{28}{17}{48}
\tileb{27}{17}{48}
\tileb{26}{17}{48}
\tileb{25}{17}{48}
\tileb{24}{17}{48}
\tileb{23}{17}{48}
\tileb{22}{17}{48}
\tileb{21}{17}{48}
\tileb{20}{17}{48}
\tileb{20}{16}{48}
\tileb{20}{15}{48}
\tileb{21}{15}{48}
\tileb{22}{15}{48}
\tileb{23}{15}{48}
\tileb{24}{15}{48}
\tileb{25}{15}{48}
\tileb{26}{15}{48}
\tileb{27}{15}{48}
\tileb{28}{15}{48}
\tileb{28}{15}{48}
\tileb{29}{15}{48}
\tileb{30}{15}{48}
\tileb{31}{15}{48}
\tileb{32}{15}{48}
\tileb{33}{15}{48}
\tileb{34}{15}{48}
\tileb{35}{15}{48}
\tileb{36}{15}{48}
\tileb{37}{15}{48}
\tileb{38}{15}{48}
\tileb{39}{15}{48}
\tileb{40}{15}{48}
\tileb{41}{15}{48}
\tileb{42}{15}{48}
\tileb{43}{15}{48}
\tileb{44}{15}{48}
\tileb{45}{15}{48}
\tileb{46}{15}{48}
\tileb{47}{15}{48}
\tileb{48}{15}{48}
\tileb{49}{15}{48}
\tileb{50}{15}{48}
\tileb{51}{15}{48}
\tileb{52}{15}{48}
\tileb{53}{15}{48}
\tileb{54}{15}{48}
\tileb{55}{15}{48}
\tileb{56}{15}{48}
\tileb{57}{15}{48}
\tileb{58}{15}{48}
\tileb{59}{15}{78}

\tilem{18}{14}{80}
\tilem{18}{13}{80}
\tilem{18}{12}{80}
\tilem{18}{11}{80}
\tilem{18}{10}{80}
\tilem{19}{10}{80}
\tilem{20}{10}{80}
\tilem{21}{10}{80}
\tilem{22}{10}{80}
\tilem{23}{10}{80}
\tilem{24}{10}{80}
\tilem{25}{10}{80}
\tilem{26}{10}{80}
\tilem{27}{10}{80}
\tilem{27}{9}{80}
\tilem{27}{8}{80}
\tilem{27}{7}{80}
\tilem{28}{7}{80}
\tilem{29}{7}{80}
\tilem{30}{7}{80}
\tilem{31}{7}{80}
\tilem{32}{7}{80}
\tilem{33}{7}{80}
\tilem{34}{7}{80}
\tilem{34}{8}{80}
\tilem{34}{9}{80}
\tilem{34}{10}{80}
\tilem{34}{11}{80}
\tilem{35}{11}{80}
\tilem{36}{11}{80}
\tilem{37}{11}{80}
\tilem{38}{11}{80}

\path [dotted, draw, thin] (14,0) grid[step=0.22cm] (62,21);

\draw [dashed] (17,0) -| (17,15.5);
\draw [dashed] (17,19.5) -| (17,21);

\fill (16.5,15.5) circle (0.16);
\node (D) at (16,14) {$P_{s'}$};
\fill (16.5,19.5) circle (0.16);
\node (D) at (16,21) {$P_{n'}$};

\draw [dashed] (59.5,0) -| (59.5,15.5);
\fill (59.5,15.5) circle (0.16);
\node (D) at (59,17) {$P_{|P|-1}$};

\node (D) at (50,19) {$\mathcal{D}^-$};
\node (D) at (50,3) {$\mathcal{D}^+$};

\node (D) at (40,11.5) {\huge ?};

\fill (18.5,15.5) circle (0.16);

\end{tikzpicture}
\caption{Only the end $P_{s',\ldots,|P|-1}$ of the path $P$ of Figure \ref{fig:TwoSpan} is drawn. The workspace $\mathcal{D}$ of the span $(s',n')$ is in yellow. This workspace is cut in two areas $\mathcal{D}^-$ and $\mathcal{D}^+$ by using the ray starting in $\pos{P_{|P|-1}}$ and going south. A path $Q$ starting at tile $P_{s'+1}$ is shown in orange and magenta. This path turns right of $P_{s',\ldots,|P|-1}$ and it cannot leave $\mathcal{D}^+$ without intersecting the column $e_P+1$, $l^{s'}$ or $P_{s'+1,\ldots, |P|-1}$. The tiles in orange are shared between $Q$ and $P_{s+1,\ldots,|P|-1}$.}
\label{fig:TwoSpanPush}
\end{figure}

After the roadmap, we will use a slightly weaker version of this property and we will consider \emph{canonical} paths instead of good paths. Nevertheless, we do not enter into this level of details yet. In particular, we explain how to obtain a good/canonical path for a given glue column $c$ only in Section~\ref{sec:canon}. 

\begin{lock}
\label{lock:can}
When the span $(s,n)$ of a path $P$ on glue column $\Bone\leq c \leq \Btwo$ is considered, the path $P$ must be a good/canonical path for glue column $c$.
\end{lock}

For the moment, assume that the extremal path $P$ is a good path for glue column $c'$. If the second case of Fact \ref{fact:turn} occurs, $P_{s'}(P_{s+1,\ldots,|P|-1}+\vect{P_sP_{s'}})$ turns right of $P_{s',\ldots,|P|-1}$ and a contradiction can be found. Indeed, let $Q=P_{s+1,\ldots,|P|-1}+\vect{P_sP_{s'}}$ and suppose that $P_{0,\ldots,s'}Q$ is a producible path. Since $c < c'$ then $x_{P_s}<x_{P_{s'}}$ and $e_Q>e_P=e_\uniterm$. This is a contradiction with the fact that $P$ is extremal, since $P_{0,\dots,s'} Q$ goes further east than $P$ by $c' - c$ columns. Then by Fact \ref{fact:workspace}, $Q$ must leave the workspace $\mathcal{D}$ of the span $(s',n')$ of $P$ but if $Q$ turns right of $P_{s',\ldots,|P|-1}$, all cases lead to contradictions (see Figure \ref{fig:TwoSpanPush}):
\begin{itemize}
\item if $Q$ crosses $l^{s'}$: then $Q-\vect{P_sP_{s'}}=P_{s+1,\ldots,|P|-1}$ intersects $l^{s'}-\vect{P_sP_{s'}}=l^s$ contradicting the fact that $\glueP{s}{s+1}$ is visible from the south;
\item if $Q$ intersects $P_{s'+1,\ldots,|P|-1}$: then either $P$ intersects itself or $P$ is not a good path for glue column $c'$ because a more right-priority path exists (by a reasoning similar to the one of Figure \ref{fig:shield}(c));
\item if $Q$ crosses $l^{n'}$: then to reach this ray requires either to cross $P_{s'+1,\ldots,|P|-1}$ which is the previous case or to put a tile on column $x_{|P|-1}+1=e_\uniterm+1$ which contradicts that $P$ is extremal. 
\end{itemize}
Then the second case of Fact \ref{fact:turn} cannot occur if $P$ is a good path for column $c'$, meaning that only the first case is possible, in which $P_{s',\dots,|P|-1} - \vect{P_s P_{s'}}$ must turn right of $P_{s,\ldots,n+1}$.
Similarly, assume that $P$ is also a good path for glue column $c$ and consider the workspace $\mathcal{C}$ of the span $(s,n)$ of $P$ (see Figure~\ref{fig:SpanPull}). This workspace can be cut in two areas $\mathcal{C}^+$ and $\mathcal{C}^-$ by the ray starting at $\pos{P_{|P|-1}}$ and going south. The path $P_{s'+1,\ldots,|P|-1}-\vect{P_sP_{s'}}$ can be written as $CS$ where $C$ is the largest common prefix between  $P_{s'+1,\ldots,|P|-1}-\vect{P_sP_{s'}}$ and $P_{s+1,\ldots,n}$, and S is the rest of the path $P_{s'+1,\ldots,|P|-1}-\vect{P_sP_{s'}}$. The path $S$ is a called a \emph{shield} of span $(s,n)$ and by a similar reasoning $S$ is inside the workspace $\mathcal{C}^+$ and does not insect with $P$. Indeed, the path $S$ cannot intersect with $P$ (since $P$ is simple and a good path for glue column~$c$), cannot reach column $x_{P_{|P|-1}}+1$ (since $P$ is extremal) or cross $l^s$ (since $P_{s',\ldots,|P|-1}$ cannot intersect $l^{s'}$). Note that the exact definition of a shield is more involved and is given in Section \ref{sec:analysis}. Nevertheless, the definition of $S$ as a shield is correct in the context of good paths.

\begin{figure}
\center
\begin{tikzpicture}[x=0.22cm,y=0.22cm]

\fill[fill=blue!30!white, draw opacity=0.8] (15,0) |- (16.5,3.5) |- (12.5,5.5) |- (16.5,9.5) |- (12.5,11.5) |- (15,15.5) |- (60,21) |- (15,0);
\fill[fill=blue!40!white, draw opacity=0.8] (15,0) |- (16.5,3.5) |- (12.5,5.5) |- (16.5,9.5) |- (12.5,11.5) |- (18.5,15.5) |- (16.5,17.5) |- (29.5,19.5) |- (20.5,17.5) |- (59.5,15.5) |- (15,0);

\draw[very thick] (4.5,9.5) -| (8.5,7.5) -| (3.5,3.5) -| (1.5,11.5) -| (10.5,5.5) -| (6.5,3.5) -| (16.5,5.5) -| (12.5,9.5) -| (16.5,11.5) -| (12.5,15.5) -| (18.5,17.5) -| (16.5,19.5) -| (29.5,17.5) -| (20.5,15.5) -| (59.5,15.5);
\draw[very thick] (16.5,5.5) -| (14.5,7.5) -| (27.5,5.5) -| (18.5,3.5) -| (57.5,3.5);

\draw [dashed] (59.5,15.5) -| (59.5,0);

\draws{4}{9}

\tileg{5}{9}{85}
\tileg{6}{9}{85}
\tileg{7}{9}{85}
\tileg{8}{9}{85}
\tileg{8}{8}{85}
\tileg{8}{7}{85}
\tileg{7}{7}{85}
\tileg{6}{7}{85}
\tileg{5}{7}{85}
\tileg{4}{7}{85}
\tileg{3}{7}{85}
\tileg{3}{6}{85}
\tileg{3}{5}{85}
\tileg{3}{4}{85}
\tileg{3}{3}{85}
\tileg{2}{3}{85}
\tileg{1}{3}{85}
\tileg{1}{4}{85}
\tileg{1}{5}{85}
\tileg{1}{6}{85}
\tileg{1}{7}{85}
\tileg{1}{8}{85}
\tileg{1}{9}{85}
\tileg{1}{10}{85}
\tileg{1}{11}{85}
\tileg{2}{11}{85}
\tileg{3}{11}{85}
\tileg{4}{11}{85}
\tileg{5}{11}{85}
\tileg{6}{11}{85}
\tileg{7}{11}{85}
\tileg{8}{11}{80}
\tileg{9}{11}{85}
\tileg{10}{11}{85}
\tileg{10}{10}{85}
\tileg{10}{9}{85}
\tileg{10}{8}{85}
\tileg{10}{7}{85}
\tileg{10}{6}{85}

\tileg{10}{5}{85}
\tileg{9}{5}{85}
\tileg{8}{5}{85}
\tileg{7}{5}{85}
\tileg{6}{5}{85}
\tileg{6}{4}{85}
\tileg{6}{3}{85}
\tileg{7}{3}{85}
\tileg{8}{3}{85}
\tileg{9}{3}{85}
\tileg{10}{3}{85}
\tileg{11}{3}{85}
\tileg{12}{3}{85}
\tileg{13}{3}{85}

\tileg{14}{3}{85}
\tileor{15}{3}{48}
\tileor{16}{3}{48}
\tileor{16}{4}{48}
\tileor{16}{5}{48}
\tileor{15}{5}{48}
\tileor{14}{5}{48}

\tiley{13}{5}{48}
\tiley{12}{5}{48}
\tiley{12}{6}{48}
\tiley{12}{7}{48}
\tiley{12}{8}{48}
\tiley{12}{9}{48}
\tiley{13}{9}{48}
\tiley{14}{9}{48}
\tiley{15}{9}{48}
\tiley{16}{9}{48}
\tiley{16}{10}{48}
\tiley{16}{11}{48}
\tiley{15}{11}{48}
\tiley{14}{11}{48}
\tiley{13}{11}{48}
\tiley{12}{11}{48}
\tiley{12}{12}{48}
\tiley{12}{13}{48}
\tiley{12}{14}{48}
\tiley{12}{15}{48}
\tiley{13}{15}{48}
\tiley{14}{15}{48}
\tiley{15}{15}{48}
\tiley{16}{15}{48}
\tiley{17}{15}{48}
\tiley{18}{15}{48}
\tiley{18}{16}{48}
\tiley{18}{17}{48}
\tiley{17}{17}{48}
\tiley{16}{17}{48}
\tiley{16}{18}{48}

\tiley{16}{18}{48}
\tiley{16}{19}{48}
\tileb{17}{19}{48}
\tileb{18}{19}{48}
\tileb{19}{19}{48}
\tileb{20}{19}{48}
\tileb{20}{19}{48}
\tileb{21}{19}{48}
\tileb{22}{19}{48}
\tileb{23}{19}{48}
\tileb{24}{19}{48}
\tileb{25}{19}{48}
\tileb{26}{19}{48}
\tileb{27}{19}{48}
\tileb{28}{19}{48}
\tileb{29}{19}{48}
\tileb{29}{18}{48}
\tileb{29}{17}{48}
\tileb{28}{17}{48}
\tileb{27}{17}{48}
\tileb{26}{17}{48}
\tileb{25}{17}{48}
\tileb{24}{17}{48}
\tileb{23}{17}{48}
\tileb{22}{17}{48}
\tileb{21}{17}{48}
\tileb{20}{17}{48}
\tileb{20}{16}{48}
\tileb{20}{15}{48}
\tileb{21}{15}{48}
\tileb{22}{15}{48}
\tileb{23}{15}{48}
\tileb{24}{15}{48}
\tileb{25}{15}{48}
\tileb{26}{15}{48}
\tileb{27}{15}{48}
\tileb{28}{15}{48}
\tileb{28}{15}{48}
\tileb{29}{15}{48}
\tileb{30}{15}{48}
\tileb{31}{15}{48}
\tileb{32}{15}{48}
\tileb{33}{15}{48}
\tileb{34}{15}{48}
\tileb{35}{15}{48}
\tileb{36}{15}{48}
\tileb{37}{15}{48}
\tileb{38}{15}{48}
\tileb{39}{15}{48}
\tileb{40}{15}{48}
\tileb{41}{15}{48}
\tileb{42}{15}{48}
\tileb{43}{15}{48}
\tileb{44}{15}{48}
\tileb{45}{15}{48}
\tileb{46}{15}{48}
\tileb{47}{15}{48}
\tileb{48}{15}{48}
\tileb{49}{15}{48}
\tileb{50}{15}{48}
\tileb{51}{15}{48}
\tileb{52}{15}{48}
\tileb{53}{15}{48}
\tileb{54}{15}{48}
\tileb{55}{15}{48}
\tileb{56}{15}{48}
\tileb{57}{15}{48}
\tileb{58}{15}{48}
\tileb{59}{15}{78}

\tilec{14}{6}{48}
\tilec{14}{7}{48}
\tilec{15}{7}{48}
\tilec{16}{7}{48}
\tilec{17}{7}{48}
\tilec{18}{7}{48}
\tilec{18}{7}{48}
\tilec{19}{7}{48}
\tilec{20}{7}{48}
\tilec{21}{7}{48}
\tilec{22}{7}{48}
\tilec{23}{7}{48}
\tilec{24}{7}{48}
\tilec{25}{7}{48}
\tilec{26}{7}{48}
\tilec{27}{7}{48}
\tilec{27}{6}{48}
\tilec{27}{5}{48}
\tilec{26}{5}{48}
\tilec{25}{5}{48}
\tilec{24}{5}{48}
\tilec{23}{5}{48}
\tilec{22}{5}{48}
\tilec{21}{5}{48}
\tilec{20}{5}{48}
\tilec{19}{5}{48}
\tilec{18}{5}{48}
\tilec{18}{4}{48}
\tilec{18}{3}{48}
\tilec{19}{3}{48}
\tilec{20}{3}{48}
\tilec{21}{3}{48}
\tilec{22}{3}{48}
\tilec{23}{3}{48}
\tilec{24}{3}{48}
\tilec{25}{3}{48}
\tilec{26}{3}{48}
\tilec{26}{3}{48}
\tilec{27}{3}{48}
\tilec{28}{3}{48}
\tilec{29}{3}{48}
\tilec{30}{3}{48}
\tilec{31}{3}{48}
\tilec{32}{3}{48}
\tilec{33}{3}{48}
\tilec{34}{3}{48}
\tilec{35}{3}{48}
\tilec{36}{3}{48}
\tilec{37}{3}{48}
\tilec{38}{3}{48}
\tilec{39}{3}{48}
\tilec{40}{3}{48}
\tilec{41}{3}{48}
\tilec{42}{3}{48}
\tilec{43}{3}{48}
\tilec{44}{3}{48}
\tilec{45}{3}{48}
\tilec{46}{3}{48}
\tilec{47}{3}{48}
\tilec{48}{3}{48}
\tilec{49}{3}{48}
\tilec{50}{3}{48}
\tilec{51}{3}{48}
\tilec{52}{3}{48}
\tilec{53}{3}{48}
\tilec{54}{3}{48}
\tilec{55}{3}{48}
\tilec{56}{3}{48}
\tilec{57}{3}{48}

\path [dotted, draw, thin] (0,0) grid[step=0.22cm] (60,21);

\draw [dashed] (15,0) -| (15,3.5);
\draw [dashed] (15,15.5) -| (15,21);

\fill (14.5,3.5) circle (0.16);
\node (D) at (14,2) {$P_{s}$};
\fill (14.5,15.5) circle (0.16);
\node (D) at (14,17) {$P_{n}$};

\node (D) at (50,19) {$\mathcal{C}^-$};
\node (D) at (35,10.5) {$\mathcal{C}^+$};

\fill (16.5,15.5) circle (0.16);
\node (D) at (17,14) {$P_{s'}$};
\end{tikzpicture}
\caption{We consider the path $P$ of the Figure \ref{fig:span}, the workspace $\mathcal{C}$ of its span $(s,n)$ is in blue and is cut in two areas $\mathcal{C}^+$ and $\mathcal{C}^-$ by the ray starting at $\pos{P_{|P|-1}}$ and going south. The path $P_{s'+1,\ldots,|P|-1}-\vect{P_sP_{s'}}$ can be written as $CS$: the tiles of $C$ are in common between $P_{s+1,\ldots, |P|-1}$ and $P_{s'+1,\ldots,|P|-1}-\vect{P_sP_{s'}}$ and they are drawn in orange while the tiles of $S$ are drawn in cyan. The path $S$ is called a shield, it does not intersect $P$ and it is in $\mathcal{C}^+$.}
\label{fig:SpanPull}
\end{figure}

\subsection{Pseudo-visibility}
\label{road:pseudo-visibility}

We consider the span $(s,n)$ of a path $P$ on a glue column $\Bone\leq c \leq \Btwo$. In this subsection, we explain how a shield $S$ of this span can be used to find a \emph{protected} and {pseudo-visible} glue on glue column $c$, and we list the assumptions done to obtain this result. Such a glue will have properties similar to those of a visible glue. Then, when Fact \ref{fact:turn} is used on two spans with the same characteristics, either a shield or a contradiction is obtained. If a shield is obtained then a new protected and pseudo-visible glue can be found. This glue behaves as a visible glue and will replace the one consumed by using Fact \ref{fact:turn}. We start by defining a {pseudo-visible} glue, see Figure \ref{fig:PseudoVisible}.  

\begin{definition}
\label{def:pseudo}
Consider an extremal path $P$, a glue column $\Bone\leq c \leq \Btwo$ and an index $0 \leq d \leq |P|-1$. We say that $\glueP{d}{d+1}$ is pseudo-visible from the south (resp. north) if and only if $\glueP{d}{d+1}$ is the southernmost (resp. northernmost) glue of $P_{d,\ldots,|P|-1}$ on glue column $c$ and points east.

\end{definition}

\begin{figure}
\center
\begin{tikzpicture}[x=0.22cm,y=0.22cm]


\draw[very thick] (4.5,9.5) -| (8.5,7.5) -| (3.5,3.5) -| (1.5,11.5) -| (10.5,5.5) -| (6.5,3.5) -| (16.5,5.5) -| (12.5,9.5) -| (16.5,11.5) -| (12.5,15.5) -| (18.5,17.5) -| (16.5,19.5) -| (29.5,17.5) -| (20.5,15.5) -| (59.5,15.5);
\draw[very thick] (16.5,5.5) -| (14.5,7.5) -| (27.5,5.5) -| (18.5,3.5) -| (57.5,3.5);


\draws{4}{9}

\tileg{5}{9}{85}
\tileg{6}{9}{85}
\tileg{7}{9}{85}
\tileg{8}{9}{85}
\tileg{8}{8}{85}
\tileg{8}{7}{85}
\tileg{7}{7}{85}
\tileg{6}{7}{85}
\tileg{5}{7}{85}
\tileg{4}{7}{85}
\tileg{3}{7}{85}
\tileg{3}{6}{85}
\tileg{3}{5}{85}
\tileg{3}{4}{85}
\tileg{3}{3}{85}
\tileg{2}{3}{85}
\tileg{1}{3}{85}
\tileg{1}{4}{85}
\tileg{1}{5}{85}
\tileg{1}{6}{85}
\tileg{1}{7}{85}
\tileg{1}{8}{85}
\tileg{1}{9}{85}
\tileg{1}{10}{85}
\tileg{1}{11}{85}
\tileg{2}{11}{85}
\tileg{3}{11}{85}
\tileg{4}{11}{85}
\tileg{5}{11}{85}
\tileg{6}{11}{85}
\tileg{7}{11}{85}
\tileg{8}{11}{80}
\tileg{9}{11}{85}
\tileg{10}{11}{85}
\tileg{10}{10}{85}
\tileg{10}{9}{85}
\tileg{10}{8}{85}
\tileg{10}{7}{85}
\tileg{10}{6}{85}

\tileg{10}{5}{85}
\tileg{9}{5}{85}
\tileg{8}{5}{85}
\tileg{7}{5}{85}
\tileg{6}{5}{85}
\tileg{6}{4}{85}
\tileg{6}{3}{85}
\tileg{7}{3}{85}
\tileg{8}{3}{85}
\tileg{9}{3}{85}
\tileg{10}{3}{85}
\tileg{11}{3}{85}
\tileg{12}{3}{85}
\tileg{13}{3}{85}

\tileg{14}{3}{85}
\tiley{15}{3}{48}
\tiley{16}{3}{48}
\tiley{16}{4}{48}
\tiley{16}{5}{48}
\tiley{15}{5}{48}
\tiley{14}{5}{48}

\tiley{13}{5}{48}
\tiley{12}{5}{48}
\tiley{12}{6}{48}
\tiley{12}{7}{48}
\tiley{12}{8}{48}
\tiley{12}{9}{48}
\tiley{13}{9}{48}
\tiley{14}{9}{48}
\tiley{15}{9}{48}
\tiley{16}{9}{48}
\tiley{16}{10}{48}
\tiley{16}{11}{48}
\tiley{15}{11}{48}
\tiley{14}{11}{48}
\tiley{13}{11}{48}
\tiley{12}{11}{48}
\tiley{12}{12}{48}
\tiley{12}{13}{48}
\tiley{12}{14}{48}
\tiley{12}{15}{48}
\tiley{13}{15}{48}
\tiley{14}{15}{48}

\tileb{15}{15}{48}
\tileb{16}{15}{48}
\tileb{17}{15}{48}
\tileb{18}{15}{48}
\tileb{18}{16}{48}
\tileb{18}{17}{48}
\tileb{17}{17}{48}
\tileb{16}{17}{48}
\tileb{16}{18}{48}
\tileb{16}{18}{48}
\tileb{16}{19}{48}
\tileb{17}{19}{48}
\tileb{18}{19}{48}
\tileb{19}{19}{48}
\tileb{20}{19}{48}
\tileb{20}{19}{48}
\tileb{21}{19}{48}
\tileb{22}{19}{48}
\tileb{23}{19}{48}
\tileb{24}{19}{48}
\tileb{25}{19}{48}
\tileb{26}{19}{48}
\tileb{27}{19}{48}
\tileb{28}{19}{48}
\tileb{29}{19}{48}
\tileb{29}{18}{48}
\tileb{29}{17}{48}
\tileb{28}{17}{48}
\tileb{27}{17}{48}
\tileb{26}{17}{48}
\tileb{25}{17}{48}
\tileb{24}{17}{48}
\tileb{23}{17}{48}
\tileb{22}{17}{48}
\tileb{21}{17}{48}
\tileb{20}{17}{48}
\tileb{20}{16}{48}
\tileb{20}{15}{48}
\tileb{21}{15}{48}
\tileb{22}{15}{48}
\tileb{23}{15}{48}
\tileb{24}{15}{48}
\tileb{25}{15}{48}
\tileb{26}{15}{48}
\tileb{27}{15}{48}
\tileb{28}{15}{48}
\tileb{28}{15}{48}
\tileb{29}{15}{48}
\tileb{30}{15}{48}
\tileb{31}{15}{48}
\tileb{32}{15}{48}
\tileb{33}{15}{48}
\tileb{34}{15}{48}
\tileb{35}{15}{48}
\tileb{36}{15}{48}
\tileb{37}{15}{48}
\tileb{38}{15}{48}
\tileb{39}{15}{48}
\tileb{40}{15}{48}
\tileb{41}{15}{48}
\tileb{42}{15}{48}
\tileb{43}{15}{48}
\tileb{44}{15}{48}
\tileb{45}{15}{48}
\tileb{46}{15}{48}
\tileb{47}{15}{48}
\tileb{48}{15}{48}
\tileb{49}{15}{48}
\tileb{50}{15}{48}
\tileb{51}{15}{48}
\tileb{52}{15}{48}
\tileb{53}{15}{48}
\tileb{54}{15}{48}
\tileb{55}{15}{48}
\tileb{56}{15}{48}
\tileb{57}{15}{48}
\tileb{58}{15}{48}
\tileb{59}{15}{78}

\tilec{14}{6}{48}
\tilec{14}{7}{48}
\tilec{15}{7}{48}
\tilec{16}{7}{48}
\tilec{17}{7}{48}
\tilec{18}{7}{48}
\tilec{18}{7}{48}
\tilec{19}{7}{48}
\tilec{20}{7}{48}
\tilec{21}{7}{48}
\tilec{22}{7}{48}
\tilec{23}{7}{48}
\tilec{24}{7}{48}
\tilec{25}{7}{48}
\tilec{26}{7}{48}
\tilec{27}{7}{48}
\tilec{27}{6}{48}
\tilec{27}{5}{48}
\tilec{26}{5}{48}
\tilec{25}{5}{48}
\tilec{24}{5}{48}
\tilec{23}{5}{48}
\tilec{22}{5}{48}
\tilec{21}{5}{48}
\tilec{20}{5}{48}
\tilec{19}{5}{48}
\tilec{18}{5}{48}
\tilec{18}{4}{48}
\tilec{18}{3}{48}
\tilec{19}{3}{48}
\tilec{20}{3}{48}
\tilec{21}{3}{48}
\tilec{22}{3}{48}
\tilec{23}{3}{48}
\tilec{24}{3}{48}
\tilec{25}{3}{48}
\tilec{26}{3}{48}
\tilec{26}{3}{48}
\tilec{27}{3}{48}
\tilec{28}{3}{48}
\tilec{29}{3}{48}
\tilec{30}{3}{48}
\tilec{31}{3}{48}
\tilec{32}{3}{48}
\tilec{33}{3}{48}
\tilec{34}{3}{48}
\tilec{35}{3}{48}
\tilec{36}{3}{48}
\tilec{37}{3}{48}
\tilec{38}{3}{48}
\tilec{39}{3}{48}
\tilec{40}{3}{48}
\tilec{41}{3}{48}
\tilec{42}{3}{48}
\tilec{43}{3}{48}
\tilec{44}{3}{48}
\tilec{45}{3}{48}
\tilec{46}{3}{48}
\tilec{47}{3}{48}
\tilec{48}{3}{48}
\tilec{49}{3}{48}
\tilec{50}{3}{48}
\tilec{51}{3}{48}
\tilec{52}{3}{48}
\tilec{53}{3}{48}
\tilec{54}{3}{48}
\tilec{55}{3}{48}
\tilec{56}{3}{48}
\tilec{57}{3}{48}

\path [dotted, draw, thin] (0,0) grid[step=0.22cm] (60,21);

\draw [dashed] (15,0) -| (15,3.5);
\draw [dashed] (15,15.5) -| (15,21);
\draw [thick,color=cyan] (15,7.5) -| (15,9.5);

\fill (14.5,3.5) circle (0.16);
\node (D) at (14,2) {$P_{s}$};
\fill (14.5,15.5) circle (0.16);
\node (D) at (14,17) {$P_{n}$};
\fill (14.5,9.5) circle (0.16);
\node (D) at (14,10.8) {$P_{d}$};
\fill (14.5,7.5) circle (0.16);
\node (D) at (13.2,7.5) {$S_{i}$};

\node (D) at (15,22) {$c$};


\end{tikzpicture}
\caption{We consider the path $P$ of the Figure \ref{fig:SpanPull} and its shield $S$ (in cyan). Here, $\glueP{d}{d+1}$ is the southernmost glue of $P_{d,\ldots,|P|-1}$ on glue column $c$, and it is pseudo-visible from the south in $P$. Remark that $\glueS{i}{i+1}$ is on column $c$ and the path $P$ does not intersect the segment between $\glueS{i}{i+1}$ and $\glueP{d}{d+1}$. In this case, the shield $S$ protects $\glueP{d}{d+1}$.}
\label{fig:PseudoVisible}
\end{figure}

Consider $s\leq d \leq u \leq |P|-1$ such that $\glueP{d}{d+1}$ is pseudo-visible from the south (resp. north) on glue column $c$ and $\glueP{u}{u+1}$ is visible from the north (resp. south) in $P_{d,\ldots, |P|+1}$ on glue column $c$ then $(u,d)$ is an upward (resp. downward) \emph{pseudo-span}. Consider, the workspace $\mathcal{C}'$ associated to the cut $(d,u)$ of $P_{d,\ldots, |P|-1}$. The pseudo-span $(d,u)$ is \emph{visible} if and only if, for any path $Q$ such that $Q$ is in $\mathcal{C}'$ and $Q_0$ interacts with $P_d$, $P_{0,\ldots,d}Q$ is producible. Note that any span is a visible pseudo-span but a pseudo-span is not necessarily visible, see Figures \ref{fig:PseudoSpan} and \ref{fig:NotVisible}. Intuitively, a visible pseudo-span behaves like a span. Shields will be used to find new visible pseudo-spans (in particular those which are not spans).

\begin{figure}
\center
\begin{tikzpicture}[x=0.22cm,y=0.22cm]

\fill[fill=cyan!50!white, draw opacity=0.8] (15,0) |- (16.5,9.5) |- (12.5,11.5) |- (15,15.5) |- (60,21) |- (15,0);

\draw[very thick] (14.5,9.5) -| (16.5,11.5) -| (12.5,15.5) -| (15.5,15.5);
\draw[very thick] (16.5,10.5) -| (29.5,6.5) -| (19.5,4.5) -| (16.5,4.5);


%

%

\tileg{14}{9}{85}
\tileor{15}{9}{48}
\tileor{16}{9}{48}
\tileor{16}{10}{48}
\tiley{16}{11}{48}
\tiley{15}{11}{48}
\tiley{14}{11}{48}
\tiley{13}{11}{48}
\tiley{12}{11}{48}
\tiley{12}{12}{48}
\tiley{12}{13}{48}
\tiley{12}{14}{48}
\tiley{12}{15}{48}
\tiley{13}{15}{48}
\tiley{14}{15}{48}
\tileb{15}{15}{48}

\tiler{17}{10}{48}
\tiler{18}{10}{48}
\tiler{19}{10}{48}
\tiler{20}{10}{48}
\tiler{21}{10}{48}
\tiler{22}{10}{48}
\tiler{23}{10}{48}
\tiler{24}{10}{48}
\tiler{25}{10}{48}
\tiler{26}{10}{48}
\tiler{27}{10}{48}
\tiler{28}{10}{48}
\tiler{29}{10}{48}
\tiler{29}{9}{48}
\tiler{29}{8}{48}
\tiler{29}{7}{48}
\tiler{29}{6}{48}
\tiler{28}{6}{48}
\tiler{27}{6}{48}
\tiler{26}{6}{48}
\tiler{25}{6}{48}
\tiler{24}{6}{48}
\tiler{23}{6}{48}
\tiler{22}{6}{48}
\tiler{21}{6}{48}
\tiler{20}{6}{48}
\tiler{19}{6}{48}
\tiler{19}{5}{48}
\tiler{19}{4}{48}
\tiler{18}{4}{48}
\tiler{17}{4}{48}
\tiler{16}{4}{48}

\path [dotted, draw, thin] (0,0) grid[step=0.22cm] (60,21);

\draw [dashed] (15,0) -| (15,9.5);
\draw [dashed] (15,15.5) -| (15,21);

\fill (14.5,15.5) circle (0.16);
\node (D) at (14,17) {$P_{n}$};
\fill (14.5,9.5) circle (0.16);
\node (D) at (13,9.5) {$P_{d}$};
\node (D) at (50,13.5) {$\mathcal{C}'$};
\fill (15.5,9.5) circle (0.16);
\node (D) at (15.5,8) {$Q_{0}$};
\fill (16.5,4.5) circle (0.16);
\node (D) at (16.5,3) {$Q_{|Q|-1}$};



\end{tikzpicture}
\caption{Following Figure \ref{fig:PseudoVisible}, we represent here the pseudo-span $(d,n)$ of the path $P$ with its workspace $\mathcal{C}'$ in cyan. The beginning of the path $P$ is not shown. Consider a path $Q$ such that $Q_0$ binds with $P_d$ and $Q$ is in $\mathcal{C'}$. The orange tiles are common between $Q$ and $P_{d,\ldots,n}$ while the red tiles belong only to $Q$.}
\label{fig:PseudoSpan}
\end{figure}

\begin{figure}
\center
\begin{tikzpicture}[x=0.22cm,y=0.22cm]

\fill[fill=cyan!50!white, draw opacity=0.8] (15,0) |- (16.5,9.5) |- (12.5,11.5) |- (15,15.5) |- (60,21) |- (15,0);

\draw[very thick] (4.5,9.5) -| (8.5,7.5) -| (3.5,3.5) -| (1.5,11.5) -| (10.5,5.5) -| (6.5,3.5) -| (16.5,5.5) -| (12.5,9.5) -| (16.5,10.5);
\draw[dashed] (4.5,9.5) -| (8.5,7.5) -| (3.5,3.5) -| (1.5,11.5) -| (10.5,5.5) -| (6.5,3.5) -| (16.5,5.5) -| (12.5,9.5) -| (16.5,11.5) -| (12.5,15.5) -| (15,15.5);
\draw[very thick] (16.5,10.5) -| (29.5,6.5) -| (19.5,4.5) -| (17,4.5);


\draws{4}{9}

\tileg{5}{9}{85}
\tileg{6}{9}{85}
\tileg{7}{9}{85}
\tileg{8}{9}{85}
\tileg{8}{8}{85}
\tileg{8}{7}{85}
\tileg{7}{7}{85}
\tileg{6}{7}{85}
\tileg{5}{7}{85}
\tileg{4}{7}{85}
\tileg{3}{7}{85}
\tileg{3}{6}{85}
\tileg{3}{5}{85}
\tileg{3}{4}{85}
\tileg{3}{3}{85}
\tileg{2}{3}{85}
\tileg{1}{3}{85}
\tileg{1}{4}{85}
\tileg{1}{5}{85}
\tileg{1}{6}{85}
\tileg{1}{7}{85}
\tileg{1}{8}{85}
\tileg{1}{9}{85}
\tileg{1}{10}{85}
\tileg{1}{11}{85}
\tileg{2}{11}{85}
\tileg{3}{11}{85}
\tileg{4}{11}{85}
\tileg{5}{11}{85}
\tileg{6}{11}{85}
\tileg{7}{11}{85}
\tileg{8}{11}{80}
\tileg{9}{11}{85}
\tileg{10}{11}{85}
\tileg{10}{10}{85}
\tileg{10}{9}{85}
\tileg{10}{8}{85}
\tileg{10}{7}{85}
\tileg{10}{6}{85}

\tileg{10}{5}{85}
\tileg{9}{5}{85}
\tileg{8}{5}{85}
\tileg{7}{5}{85}
\tileg{6}{5}{85}
\tileg{6}{4}{85}
\tileg{6}{3}{85}
\tileg{7}{3}{85}
\tileg{8}{3}{85}
\tileg{9}{3}{85}
\tileg{10}{3}{85}
\tileg{11}{3}{85}
\tileg{12}{3}{85}
\tileg{13}{3}{85}

\tileg{14}{3}{85}
\tiley{15}{3}{48}
\tiley{16}{3}{48}
\tiley{16}{4}{48}
\tiley{16}{5}{48}
\tiley{15}{5}{48}
\tiley{14}{5}{48}

\tiley{13}{5}{48}
\tiley{12}{5}{48}
\tiley{12}{6}{48}
\tiley{12}{7}{48}
\tiley{12}{8}{48}
\tiley{12}{9}{48}
\tiley{13}{9}{48}
\tiley{14}{9}{48}
\tileor{15}{9}{48}
\tileor{16}{9}{48}
\tileor{16}{10}{48}

\tiler{17}{10}{48}
\tiler{18}{10}{48}
\tiler{19}{10}{48}
\tiler{20}{10}{48}
\tiler{21}{10}{48}
\tiler{22}{10}{48}
\tiler{23}{10}{48}
\tiler{24}{10}{48}
\tiler{25}{10}{48}
\tiler{26}{10}{48}
\tiler{27}{10}{48}
\tiler{28}{10}{48}
\tiler{29}{10}{48}
\tiler{29}{9}{48}
\tiler{29}{8}{48}
\tiler{29}{7}{48}
\tiler{29}{6}{48}
\tiler{28}{6}{48}
\tiler{27}{6}{48}
\tiler{26}{6}{48}
\tiler{25}{6}{48}
\tiler{24}{6}{48}
\tiler{23}{6}{48}
\tiler{22}{6}{48}
\tiler{21}{6}{48}
\tiler{20}{6}{48}
\tiler{19}{6}{48}
\tiler{19}{5}{48}
\tiler{19}{4}{48}
\tiler{18}{4}{48}
\tiler{17}{4}{48}

\path [dotted, draw, thin] (0,0) grid[step=0.22cm] (60,21);

\draw [dashed] (15,0) -| (15,9.5);
\draw [dashed] (15,15.5) -| (15,21);

\fill (14.5,9.5) circle (0.16);
\node (D) at (14.5,7.9) {$P_{d}$};
\node (D) at (50,13.5) {$\mathcal{C}'$};

\draw [->] (15.5,13) -| (15.5,10.2);
\fill (15.5,9.5) circle (0.16);
\node (D) at (15.5,14) {$Q_{0}$};
\fill (17.5,4.5) circle (0.16);
\node (D) at (19.5,3) {$Q_{|Q|-2}$};

\fill (14.5,3.5) circle (0.16);
\node (D) at (14.5,1.9) {$P_{s}$};



\end{tikzpicture}
\caption{Following Figure \ref{fig:PseudoSpan}, we try to assemble $P_{0,\ldots,d}Q$ but the last tile of $Q$ conflits with $P_{0,\ldots,d}$. Hence, the pseudo-span $(d,n)$ is not visible. In this figure, only the beginning $P_{0,\dots,d}$ is shown.}
\label{fig:NotVisible}
\end{figure}

The proof of the main Theorem \ref{main:theorem} is done by showing that the larger the path is, the more constrained it is. The first proven bound, in \cite{STOC2020}, uses only two glues per glue column: the glue visible from the north and the glue visible from the south. When the proof requires more glues to find a contradiction, a new glue column is considered. Here, shields will be used to locate visible pseudo-span, allowing to work with more than two glues per glue column. Now, we explain how to locate a visible pseudo-span on a simple case where several hypotheses on the shield are assumed. These hypotheses will allow a shield to \emph{protect} a glue of $P$. The general case is far more complex and these locks will be solved later. 

Let $S$ be a shield of the span $(s,n)$ of $P$ and consider $s< d \leq |P|-1$ of $P$ such that $\glueP{d}{d+1}$ is on glue column $c$ and pointing east. If there exists $0\leq i \leq |S|-1$ such that $\glueS{i}{i+1}$ is on glue column $c$ and no glue of $P$ is on glue column $c$ between $\glueP{d}{d+1}$ and $\glueS{i}{i+1}$, then $\glueP{d}{d+1}$ is \emph{protected} by shield $S$, see Figure~\ref{fig:PseudoVisible}. Note that, we propose here a simplified definition of a protected glue which is sufficient for the roadmap. The formal definition will be given in Section~\ref{sec:analysis}. First, remark that shield $S$ must intersect glue column $c$ to be useful according to this definition.

\begin{lock}
\label{lock:shieldone}
Consider an extremal path $P$ such that $P$ is a good path on glue column $\Bone\leq c \leq \Btwo$. Let $S$ be a shield of the span $(s,n)$ of $P$ on glue column~$c$. Then, a glue of $S$ is on glue column $c$.
\end{lock}

We explain in Section \ref{sec:analysis} how to find a protected glue when a shield satisfies the hypothesis of Lock \ref{lock:shieldone}. In this roadmap, we assume that this protected glue is pseudo-visible.

\begin{lock}
\label{lock:pseudo}
Consider an extremal path $P$ and a span $(s,n)$ like in Lock \ref{lock:shieldone}. There exists $s\leq d \leq |P|-1$ such that $\glueP{d}{d+1}$ is pseudo-visible and protected by shield $S$.
\end{lock}

Moreover, we also add this condition to avoid complex cases which are discussed later.

\begin{lock}
\label{lock:shieldtwo}
Consider an extremal path $P$, a span $(s,n)$ as in Lock 4 and an index $s \leq d \leq |P| - 1$ as in Lock 5. Then $d \leq n$.
\end{lock}

It follows from Locks \ref{lock:pseudo} and \ref{lock:shieldtwo} that $(d,n)$ is a pseudo-span. According to Lock~\ref{lock:shieldone}, shield $S$ protects $\glueP{d}{d+1}$. These hypotheses are sufficient to show that the pseudo-span $(s,n)$ is visible.

\begin{lock}
\label{lock:uturn}
Consider an extremal path $P$ with a span $(s,n)$ like in Lock 4, and an index $s \leq d \leq |P| - 1$ satisfying the property in Lock 6. Then the pseudo-span $(d,n)$ of $P$ is visible. 
\end{lock}

To obtain this result, consider a path $Q$ such that $Q_0$ interacts with $P_d$ and $Q$ is inside the workspace $\mathcal{C}'$ of the pseudo-span $(d,n)$ of $P$ as shown in Figures \ref{fig:PseudoSpan} and \ref{fig:NotVisible}. In this example, $P_{0,\ldots,d}Q$ is not a path since the last tile of $Q$ intersects with $P_{0,\ldots,d}$. Note furthermore that $Q$ also intersects shield~$S$ (Figure \ref{fig:ShieldFuse}). Then, by merging parts of $P$, $Q$, and $S$, we obtain a more right-priority path as shown in Figure \ref{fig:ShieldFinal}, which contradicts the assumption that $P$ is a good path for glue column $c$. 

Another way for $Q$ to collide with $P_{0,\dots,d}$ is to avoid $S$ by turning around it. To do so, $Q$ must grow to the east and then go back to the west, see Figure \ref{fig:ShieldFinal}. This pattern is called a \emph{U-turn} and a key point of the proof of the pumping lemma \cite{STOC2020} is to prove that there is a contradiction if the U-turn is long enough. A quadratic bound was proven in \cite{STOC2020}. To obtain the main result of this work, we must improve this bound to a linear one. Solving Lock \ref{lock:uturn} is done in two steps, see Figure \ref{fig:ShieldFinal} for an illustration.

\begin{figure}
\center
\begin{tikzpicture}[x=0.22cm,y=0.22cm]

\fill[fill=cyan!50!white, draw opacity=0.8] (15,0) |- (16.5,9.5) |- (12.5,11.5) |- (15,15.5) |- (60,21) |- (15,0);

\draw[very thick] (4.5,9.5) -| (8.5,7.5) -| (3.5,3.5) -| (1.5,11.5) -| (10.5,5.5) -| (6.5,3.5) -| (16.5,5.5) -| (12.5,9.5) -| (16.5,11.5) -| (12.5,15.5) -| (18.5,17.5) -| (16.5,19.5) -| (29.5,17.5) -| (20.5,15.5) -| (59.5,15.5);
\draw[dashed] (4.5,9.5) -| (8.5,7.5) -| (3.5,3.5) -| (1.5,11.5) -| (10.5,5.5) -| (6.5,3.5) -| (16.5,5.5) -| (12.5,9.5) -| (16.5,11.5) -| (12.5,15.5) -| (15,15.5);
\draw[very thick] (16.5,10.5) -| (29.5,6.5) -| (27.5,7.5)  -| (14.5,5.5);


\draws{4}{9}

\tileg{5}{9}{85}
\tileg{6}{9}{85}
\tileg{7}{9}{85}
\tileg{8}{9}{85}
\tileg{8}{8}{85}
\tileg{8}{7}{85}
\tileg{7}{7}{85}
\tileg{6}{7}{85}
\tileg{5}{7}{85}
\tileg{4}{7}{85}
\tileg{3}{7}{85}
\tileg{3}{6}{85}
\tileg{3}{5}{85}
\tileg{3}{4}{85}
\tileg{3}{3}{85}
\tileg{2}{3}{85}
\tileg{1}{3}{85}
\tileg{1}{4}{85}
\tileg{1}{5}{85}
\tileg{1}{6}{85}
\tileg{1}{7}{85}
\tileg{1}{8}{85}
\tileg{1}{9}{85}
\tileg{1}{10}{85}
\tileg{1}{11}{85}
\tileg{2}{11}{85}
\tileg{3}{11}{85}
\tileg{4}{11}{85}
\tileg{5}{11}{85}
\tileg{6}{11}{85}
\tileg{7}{11}{85}
\tileg{8}{11}{80}
\tileg{9}{11}{85}
\tileg{10}{11}{85}
\tileg{10}{10}{85}
\tileg{10}{9}{85}
\tileg{10}{8}{85}
\tileg{10}{7}{85}
\tileg{10}{6}{85}

\tileg{10}{5}{85}
\tileg{9}{5}{85}
\tileg{8}{5}{85}
\tileg{7}{5}{85}
\tileg{6}{5}{85}
\tileg{6}{4}{85}
\tileg{6}{3}{85}
\tileg{7}{3}{85}
\tileg{8}{3}{85}
\tileg{9}{3}{85}
\tileg{10}{3}{85}
\tileg{11}{3}{85}
\tileg{12}{3}{85}
\tileg{13}{3}{85}

\tiley{14}{3}{85}
\tiley{15}{3}{48}
\tiley{16}{3}{48}
\tiley{16}{4}{48}
\tiley{16}{5}{48}
\tiley{15}{5}{48}
\tiley{14}{5}{48}

\tiley{13}{5}{48}
\tiley{12}{5}{48}
\tiley{12}{6}{48}
\tiley{12}{7}{48}
\tiley{12}{8}{48}
\tiley{12}{9}{48}
\tiley{13}{9}{48}
\tiley{14}{9}{48}
\tiley{15}{9}{48}
\tiley{16}{9}{48}
\tiley{16}{10}{48}
\tiley{16}{11}{48}
\tiley{15}{11}{48}
\tiley{14}{11}{48}
\tiley{13}{11}{48}
\tiley{12}{11}{48}
\tiley{12}{12}{48}
\tiley{12}{13}{48}
\tiley{12}{14}{48}
\tiley{12}{15}{48}
\tiley{13}{15}{48}
\tiley{14}{15}{48}
\tiley{15}{15}{48}
\tiley{16}{15}{48}
\tiley{17}{15}{48}
\tiley{18}{15}{48}
\tiley{18}{16}{48}
\tiley{18}{17}{48}
\tiley{17}{17}{48}
\tiley{16}{17}{48}
\tiley{16}{18}{48}

\tiley{16}{18}{48}
\tiley{16}{19}{48}
\tileb{17}{19}{48}
\tileb{18}{19}{48}
\tileb{19}{19}{48}
\tileb{20}{19}{48}
\tileb{20}{19}{48}
\tileb{21}{19}{48}
\tileb{22}{19}{48}
\tileb{23}{19}{48}
\tileb{24}{19}{48}
\tileb{25}{19}{48}
\tileb{26}{19}{48}
\tileb{27}{19}{48}
\tileb{28}{19}{48}
\tileb{29}{19}{48}
\tileb{29}{18}{48}
\tileb{29}{17}{48}
\tileb{28}{17}{48}
\tileb{27}{17}{48}
\tileb{26}{17}{48}
\tileb{25}{17}{48}
\tileb{24}{17}{48}
\tileb{23}{17}{48}
\tileb{22}{17}{48}
\tileb{21}{17}{48}
\tileb{20}{17}{48}
\tileb{20}{16}{48}
\tileb{20}{15}{48}
\tileb{21}{15}{48}
\tileb{22}{15}{48}
\tileb{23}{15}{48}
\tileb{24}{15}{48}
\tileb{25}{15}{48}
\tileb{26}{15}{48}
\tileb{27}{15}{48}
\tileb{28}{15}{48}
\tileb{28}{15}{48}
\tileb{29}{15}{48}
\tileb{30}{15}{48}
\tileb{31}{15}{48}
\tileb{32}{15}{48}
\tileb{33}{15}{48}
\tileb{34}{15}{48}
\tileb{35}{15}{48}
\tileb{36}{15}{48}
\tileb{37}{15}{48}
\tileb{38}{15}{48}
\tileb{39}{15}{48}
\tileb{40}{15}{48}
\tileb{41}{15}{48}
\tileb{42}{15}{48}
\tileb{43}{15}{48}
\tileb{44}{15}{48}
\tileb{45}{15}{48}
\tileb{46}{15}{48}
\tileb{47}{15}{48}
\tileb{48}{15}{48}
\tileb{49}{15}{48}
\tileb{50}{15}{48}
\tileb{51}{15}{48}
\tileb{52}{15}{48}
\tileb{53}{15}{48}
\tileb{54}{15}{48}
\tileb{55}{15}{48}
\tileb{56}{15}{48}
\tileb{57}{15}{48}
\tileb{58}{15}{48}
\tileb{59}{15}{78}

\tilec{14}{6}{48}
\tilec{14}{7}{48}
\tilec{15}{7}{48}
\tilec{16}{7}{48}
\tilec{17}{7}{48}
\tilec{18}{7}{48}
\tilec{18}{7}{48}
\tilec{19}{7}{48}
\tilec{20}{7}{48}
\tilec{21}{7}{48}
\tilec{22}{7}{48}
\tilec{23}{7}{48}
\tilec{24}{7}{48}
\tilec{25}{7}{48}
\tilec{26}{7}{48}
\tilec{27}{7}{48}
\tilec{27}{6}{48}

\tiler{17}{10}{48}
\tiler{18}{10}{48}
\tiler{19}{10}{48}
\tiler{20}{10}{48}
\tiler{21}{10}{48}
\tiler{22}{10}{48}
\tiler{23}{10}{48}
\tiler{24}{10}{48}
\tiler{25}{10}{48}
\tiler{26}{10}{48}
\tiler{27}{10}{48}
\tiler{28}{10}{48}
\tiler{29}{10}{48}
\tiler{29}{9}{48}
\tiler{29}{8}{48}
\tiler{29}{7}{48}
\tiler{29}{6}{48}
\tiler{28}{6}{48}
\tileor{27}{6}{48}

\path [dotted, draw, thin] (0,0) grid[step=0.22cm] (60,21);

\draw [dashed] (15,0) -| (15,9.5);
\draw [dashed] (15,15.5) -| (15,21);

\node (D) at (50,3.5) {$\mathcal{C}'$};

\fill (14.5,3.5) circle (0.16);
\node (D) at (14.5,1.9) {$P_{s}$};

\node (D) at (31.2,9) {$Q$};
\node (D) at (22.5,6) {$S$};

\fill (14.5,9.5) circle (0.16);
\node (D) at (14.8,10.5) {$P_{d}$};

\fill (14.5,15.5) circle (0.16);
\node (D) at (14.5,17) {$P_{n}$};


\end{tikzpicture}
\caption{We merge parts of paths $P$ and $Q$, and of shield $S$ from Figures~\ref{fig:PseudoVisible}, \ref{fig:PseudoSpan}, and \ref{fig:NotVisible}, to obtain a new path which is more right-priority than $P$, contradicting that $P$ is a good path for glue column $c$. Such merging is possible since the tile assembly system is directed. The tile common to $Q$ and $S$ is shown in orange.}
\label{fig:ShieldFuse}
\end{figure}

\begin{figure}
\center
\begin{tikzpicture}[x=0.22cm,y=0.22cm]

\fill[fill=cyan!50!white, draw opacity=0.8] (15,0) |- (16.5,9.5) |- (12.5,11.5) |- (15,15.5) |- (60,21) |- (15,0);

\draw[very thick] (4.5,9.5) -| (8.5,7.5) -| (3.5,3.5) -| (1.5,11.5) -| (10.5,5.5) -| (6.5,3.5) -| (16.5,5.5) -| (12.5,9.5) -| (16.5,11.5) -| (12.5,15.5) -| (18.5,17.5) -| (16.5,19.5) -| (29.5,17.5) -| (20.5,15.5) -| (59.5,15.5);
\draw[dashed] (4.5,9.5) -| (8.5,7.5) -| (3.5,3.5) -| (1.5,11.5) -| (10.5,5.5) -| (6.5,3.5) -| (16.5,5.5) -| (12.5,9.5) -| (16.5,11.5) -| (12.5,15.5) -| (15,15.5);
\draw[very thick] (16.5,5.5) -| (14.5,7.5) -| (27.5,5.5) -| (18.5,3.5) -| (57.5,3.5);
\draw[very thick] (16.5,10.5) -| (36.5,6.5) -| (58.5,1.5)  -| (29.5,1.5);


\draws{4}{9}

\tileg{5}{9}{85}
\tileg{6}{9}{85}
\tileg{7}{9}{85}
\tileg{8}{9}{85}
\tileg{8}{8}{85}
\tileg{8}{7}{85}
\tileg{7}{7}{85}
\tileg{6}{7}{85}
\tileg{5}{7}{85}
\tileg{4}{7}{85}
\tileg{3}{7}{85}
\tileg{3}{6}{85}
\tileg{3}{5}{85}
\tileg{3}{4}{85}
\tileg{3}{3}{85}
\tileg{2}{3}{85}
\tileg{1}{3}{85}
\tileg{1}{4}{85}
\tileg{1}{5}{85}
\tileg{1}{6}{85}
\tileg{1}{7}{85}
\tileg{1}{8}{85}
\tileg{1}{9}{85}
\tileg{1}{10}{85}
\tileg{1}{11}{85}
\tileg{2}{11}{85}
\tileg{3}{11}{85}
\tileg{4}{11}{85}
\tileg{5}{11}{85}
\tileg{6}{11}{85}
\tileg{7}{11}{85}
\tileg{8}{11}{80}
\tileg{9}{11}{85}
\tileg{10}{11}{85}
\tileg{10}{10}{85}
\tileg{10}{9}{85}
\tileg{10}{8}{85}
\tileg{10}{7}{85}
\tileg{10}{6}{85}

\tileg{10}{5}{85}
\tileg{9}{5}{85}
\tileg{8}{5}{85}
\tileg{7}{5}{85}
\tileg{6}{5}{85}
\tileg{6}{4}{85}
\tileg{6}{3}{85}
\tileg{7}{3}{85}
\tileg{8}{3}{85}
\tileg{9}{3}{85}
\tileg{10}{3}{85}
\tileg{11}{3}{85}
\tileg{12}{3}{85}
\tileg{13}{3}{85}

\tiley{14}{3}{85}
\tiley{15}{3}{48}
\tiley{16}{3}{48}
\tiley{16}{4}{48}
\tiley{16}{5}{48}
\tiley{15}{5}{48}
\tiley{14}{5}{48}

\tiley{13}{5}{48}
\tiley{12}{5}{48}
\tiley{12}{6}{48}
\tiley{12}{7}{48}
\tiley{12}{8}{48}
\tiley{12}{9}{48}
\tiley{13}{9}{48}
\tiley{14}{9}{48}
\tiley{15}{9}{48}
\tiley{16}{9}{48}
\tiley{16}{10}{48}
\tiley{16}{11}{48}
\tiley{15}{11}{48}
\tiley{14}{11}{48}
\tiley{13}{11}{48}
\tiley{12}{11}{48}
\tiley{12}{12}{48}
\tiley{12}{13}{48}
\tiley{12}{14}{48}
\tiley{12}{15}{48}
\tiley{13}{15}{48}
\tiley{14}{15}{48}
\tiley{15}{15}{48}
\tiley{16}{15}{48}
\tiley{17}{15}{48}
\tiley{18}{15}{48}
\tiley{18}{16}{48}
\tiley{18}{17}{48}
\tiley{17}{17}{48}
\tiley{16}{17}{48}
\tiley{16}{18}{48}

\tiley{16}{18}{48}
\tiley{16}{19}{48}
\tileb{17}{19}{48}
\tileb{18}{19}{48}
\tileb{19}{19}{48}
\tileb{20}{19}{48}
\tileb{20}{19}{48}
\tileb{21}{19}{48}
\tileb{22}{19}{48}
\tileb{23}{19}{48}
\tileb{24}{19}{48}
\tileb{25}{19}{48}
\tileb{26}{19}{48}
\tileb{27}{19}{48}
\tileb{28}{19}{48}
\tileb{29}{19}{48}
\tileb{29}{18}{48}
\tileb{29}{17}{48}
\tileb{28}{17}{48}
\tileb{27}{17}{48}
\tileb{26}{17}{48}
\tileb{25}{17}{48}
\tileb{24}{17}{48}
\tileb{23}{17}{48}
\tileb{22}{17}{48}
\tileb{21}{17}{48}
\tileb{20}{17}{48}
\tileb{20}{16}{48}
\tileb{20}{15}{48}
\tileb{21}{15}{48}
\tileb{22}{15}{48}
\tileb{23}{15}{48}
\tileb{24}{15}{48}
\tileb{25}{15}{48}
\tileb{26}{15}{48}
\tileb{27}{15}{48}
\tileb{28}{15}{48}
\tileb{28}{15}{48}
\tileb{29}{15}{48}
\tileb{30}{15}{48}
\tileb{31}{15}{48}
\tileb{32}{15}{48}
\tileb{33}{15}{48}
\tileb{34}{15}{48}
\tileb{35}{15}{48}
\tileb{36}{15}{48}
\tileb{37}{15}{48}
\tileb{38}{15}{48}
\tileb{39}{15}{48}
\tileb{40}{15}{48}
\tileb{41}{15}{48}
\tileb{42}{15}{48}
\tileb{43}{15}{48}
\tileb{44}{15}{48}
\tileb{45}{15}{48}
\tileb{46}{15}{48}
\tileb{47}{15}{48}
\tileb{48}{15}{48}
\tileb{49}{15}{48}
\tileb{50}{15}{48}
\tileb{51}{15}{48}
\tileb{52}{15}{48}
\tileb{53}{15}{48}
\tileb{54}{15}{48}
\tileb{55}{15}{48}
\tileb{56}{15}{48}
\tileb{57}{15}{48}
\tileb{58}{15}{48}
\tileb{59}{15}{78}

\tilec{14}{6}{48}
\tilec{14}{7}{48}
\tilec{15}{7}{48}
\tilec{16}{7}{48}
\tilec{17}{7}{48}
\tilec{18}{7}{48}
\tilec{18}{7}{48}
\tilec{19}{7}{48}
\tilec{20}{7}{48}
\tilec{21}{7}{48}
\tilec{22}{7}{48}
\tilec{23}{7}{48}
\tilec{24}{7}{48}
\tilec{25}{7}{48}
\tilec{26}{7}{48}
\tilec{27}{7}{48}
\tilec{27}{6}{48}
\tilec{27}{5}{48}
\tilec{26}{5}{48}
\tilec{25}{5}{48}
\tilec{24}{5}{48}
\tilec{23}{5}{48}
\tilec{22}{5}{48}
\tilec{21}{5}{48}
\tilec{20}{5}{48}
\tilec{19}{5}{48}
\tilec{18}{5}{48}
\tilec{18}{4}{48}
\tilec{18}{3}{48}
\tilec{19}{3}{48}
\tilec{20}{3}{48}
\tilec{21}{3}{48}
\tilec{22}{3}{48}
\tilec{23}{3}{48}
\tilec{24}{3}{48}
\tilec{25}{3}{48}
\tilec{26}{3}{48}
\tilec{26}{3}{48}
\tilec{27}{3}{48}
\tilec{28}{3}{48}
\tilec{29}{3}{48}
\tilec{30}{3}{48}
\tilec{31}{3}{48}
\tilec{32}{3}{48}
\tilec{33}{3}{48}
\tilec{34}{3}{48}
\tilec{35}{3}{48}
\tilec{36}{3}{48}
\tilec{37}{3}{48}
\tilec{38}{3}{48}
\tilec{39}{3}{48}
\tilec{40}{3}{48}
\tilec{41}{3}{48}
\tilec{42}{3}{48}
\tilec{43}{3}{48}
\tilec{44}{3}{48}
\tilec{45}{3}{48}
\tilec{46}{3}{48}
\tilec{47}{3}{48}
\tilec{48}{3}{48}
\tilec{49}{3}{48}
\tilec{50}{3}{48}
\tilec{51}{3}{48}
\tilec{52}{3}{48}
\tilec{53}{3}{48}
\tilec{54}{3}{48}
\tilec{55}{3}{48}
\tilec{56}{3}{48}
\tilec{57}{3}{48}

\tiler{17}{10}{48}
\tiler{18}{10}{48}
\tiler{19}{10}{48}
\tiler{20}{10}{48}
\tiler{21}{10}{48}
\tiler{22}{10}{48}
\tiler{23}{10}{48}
\tiler{24}{10}{48}
\tiler{25}{10}{48}
\tiler{26}{10}{48}
\tiler{27}{10}{48}
\tiler{28}{10}{48}
\tiler{29}{10}{48}
\tiler{30}{10}{48}
\tiler{31}{10}{48}
\tiler{32}{10}{48}
\tiler{33}{10}{48}
\tiler{34}{10}{48}
\tiler{35}{10}{48}
\tiler{36}{10}{48}
\tiler{36}{9}{48}
\tiler{36}{8}{48}
\tiler{36}{7}{48}
\tiler{36}{6}{48}
\tiler{37}{6}{48}
\tiler{38}{6}{48}
\tiler{39}{6}{48}
\tiler{40}{6}{48}
\tiler{41}{6}{48}
\tiler{42}{6}{48}
\tiler{43}{6}{48}
\tiler{44}{6}{48}
\tiler{45}{6}{48}
\tiler{46}{6}{48}
\tiler{47}{6}{48}
\tiler{48}{6}{48}
\tiler{49}{6}{48}
\tiler{50}{6}{48}
\tiler{51}{6}{48}
\tiler{52}{6}{48}
\tiler{53}{6}{48}
\tiler{54}{6}{48}
\tiler{55}{6}{48}
\tiler{56}{6}{48}
\tiler{57}{6}{48}
\tiler{58}{6}{48}
\tiler{58}{5}{48}
\tiler{58}{4}{48}
\tiler{58}{3}{48}
\tiler{58}{2}{48}
\tiler{58}{1}{48}
\tiler{57}{1}{48}
\tiler{56}{1}{48}
\tiler{55}{1}{48}
\tiler{54}{1}{48}
\tiler{53}{1}{48}
\tiler{52}{1}{48}
\tiler{51}{1}{48}
\tiler{50}{1}{48}
\tiler{49}{1}{48}
\tiler{48}{1}{48}
\tiler{47}{1}{48}
\tiler{46}{1}{48}
\tiler{45}{1}{48}
\tiler{44}{1}{48}
\tiler{43}{1}{48}
\tiler{42}{1}{48}
\tiler{41}{1}{48}
\tiler{40}{1}{48}
\tiler{39}{1}{48}
\tiler{38}{1}{48}
\tiler{37}{1}{48}
\tiler{36}{1}{48}
\tiler{35}{1}{48}
\tiler{34}{1}{48}
\tiler{33}{1}{48}
\tiler{32}{1}{48}
\tiler{31}{1}{48}
\tiler{30}{1}{48}
\tiler{29}{1}{48}


\path [dotted, draw, thin] (0,0) grid[step=0.22cm] (60,21);

\draw [dashed] (15,0) -| (15,9.5);
\draw [dashed] (15,15.5) -| (15,21);

\draw (30,0) -| (30,21);
\node (D) at (30,22) {$\Bthree$};
\draw (46,0) -| (46,21);
\node (D) at (46,22) {$\Bfour$};

\node (D) at (54,11.5) {$\mathcal{C}'$};



\fill (14.5,3.5) circle (0.16);
\node (D) at (14.5,1.9) {$P_{s}$};

\node (D) at (38.2,17) {$P$};

\node (D) at (38.2,9) {$Q$};
\node (D) at (22.5,2) {$S$};

\fill (14.5,9.5) circle (0.16);
\node (D) at (14.8,10.5) {$P_{d}$};

\fill (14.5,15.5) circle (0.16);
\node (D) at (14.5,17) {$P_{n}$};

\end{tikzpicture}
\caption{Since the case shown in Figure \ref{fig:ShieldFuse} leads to a contradiction, we consider another case. Here, the path $Q$ tries to collide with $P_{0,\ldots,d}$ by avoiding the shield $S$ by turning around it. To do so, $Q$ must first reach glue column $\Bfour$ and come back to glue column $\Bthree$. }
\label{fig:ShieldFinal}
\end{figure}

\begin{lock}
\label{lock:uturnOne}
$e_{P_{0,\ldots,d}}$ is bounded by $\Bthree$ whose value depends on glue column $c$, as described in Definition \ref{def:bound:threeandfour} of Subsection \ref{sec:U-turn:linear}.
\end{lock}

\begin{lock}
\label{lock:uturnTwo}
Consider a path $Q$ inside the workspace $\mathcal{C}'$ of the pseudo-span $(d,n)$ of $P$, such that $Q_0$ interacts with $P_d$. If there exists $0\leq i \leq |Q|-1$ such that $\glueQ{i}{i+1}$ is on glue column $\Bfour$ (whose value depends on glue column $c$ as detailed in Definition \ref{def:bound:threeandfour} of Subsection \ref{sec:U-turn:linear}), then there is no index $i<j \leq |Q|-1$ such that $\glueQ{j}{j+1}$ is visible in $Q$ on glue column~$\Bthree$.
\end{lock}

Informally, these two locks capture the intuition that shield $S$ is an obstacle preventing any collision between $Q$ and $P_{0,\dots,d}$, and ensures that pseudo-span $(d,n)$ is visible.

If all these arguments hold then we can consider $2|T|+1$ spans of $P$ which are located on different glue columns between $\Bone=\BoneValue$ and $\Btwo=\BtwoValue$. By the pigeonhole principle, two of these spans have the same direction and type. According to the reasoning presented so far, either there is a contradiction ($P$ is not extremal or $P$ is not a good path for a given column) or a shield can be assembled which can be used to find a visible pseudo-span. In the considered collection of $2|T|+1$ spans, one of them can be replaced by this new visible pseudo-span. Note that the new pseudo-span is narrower than the replaced one. By iterating this reasoning, we eventually find a contradiction.

\subsection{Summary of the main obstacles}
\label{road:obstacles}

To solve the different locks, we start by explaining how to build a canonical path for a glue column $\Bone\leq c \leq \Btwo$ in Section \ref{sec:canon}, solving Lock \ref{lock:can}. A canonical path is a slightly modified good path (see Definition \ref{prop:canonicalPath}). These modifications are done to deal with Lock \ref{lock:shieldtwo}. In Section \ref{sec:U-turn}, we give a linear bound on the length of U-turns, solving Locks~\ref{lock:uturnOne} and~\ref{lock:uturnTwo} (and thus Lock~\ref{lock:uturn}). Alongside the tools developed in this section, we remind the arguments of \cite{pumpabilityLargeBound} to solve Lock \ref{lock:boundZero} and the arguments of \cite{STOC2020} to solve Lock \ref{lem:glue:prop3}. Thirdly, we introduce in Section \ref{sec:decompo} the decomposition of a canonical path $P$ on glue column $c$ into arcs\footnote{This decomposition was first introduced in \cite{pumpabilityLargeBound} but was later abandoned in the final version of the paper after some simplifications.} to solve Lock \ref{lock:pseudo}. This decomposition allows to find protected glues which are not pseudo-visible. In Section \ref{sec:analysis}, we give the formal definition of a shield and put all arguments together to prove the main Theorem \ref{main:theorem} by following the reasoning of this roadmap. Finally, we describe how to deal with the special case of Lock~\ref{lock:shieldone}.

\section{Canonical paths}
\label{sec:canon}


In the roadmap, the reasoning relies on good paths (see Definition \ref{prop:canonicalPath}) but we stated that canonical paths are required in the main proof. These canonical paths are obtained by modifying a good path.  Then, we start by explaining how to find good paths (Lemma \ref{exists:good}) to solve Lock \ref{lock:can}. Afterwards, we give the definition of canonical path (Definition \ref{def:cano}) and how to find them (Lemma \ref{lem:exists:canoP}).

\begin{lemma}
\label{exists:good}
Consider a glue column $\Bone\leq c \leq \Btwo$, there exists a good path $P$ for glue column $c$.
\end{lemma}

\begin{proof}
Consider $\Bone\leq c \leq \Btwo$, the set $\mathcal{Q}$ of extremal paths and let $$w=\min\{\text{width of the span of $Q$ on glue column $c$}:Q\in \mathcal{Q}\}.$$ We consider $Q\in \mathcal{Q}$ such that the span $(s,n)$ of $Q$ on glue column $c$ is of width $w$ (note that several paths of $\mathcal{Q}$ may be of width $w$, any of them can be chosen as $Q$. Depending of this choice different good paths for column $c$ can be found) and suppose that this span is an upward (resp. downward) span. Let  $P$ be the rightmost (resp. leftmost) priority extremal path such $Q_{0,\ldots,s+1}$ is a prefix of $P$ and $\glueP{s}{s+1}$ is visible from the south (resp. north) in $P$. We claim that $P$ is a good path for glue column $c$.


\begin{figure}
\center
\begin{minipage}{0.47\linewidth}
\begin{tikzpicture}[x=0.22cm,y=0.22cm]


\draw[very thick] (1.5,10.5) -| (12.5,6.5) -| (8.5,2.5) -| (16.5,14.5) -| (8.5,18.5) -| (24.5,18.5);

\draws{1}{10}

\tileg{2}{10}{85}
\tileg{3}{10}{85}
\tileg{4}{10}{85}
\tileg{5}{10}{85}
\tileg{6}{10}{85}
\tileg{7}{10}{85}
\tileg{8}{10}{85}
\tileg{9}{10}{85}
\tileg{10}{10}{85}
\tileg{11}{10}{85}
\tileg{12}{10}{85}
\tileg{12}{9}{85}
\tileg{12}{8}{85}
\tileg{12}{7}{85}
\tileg{12}{6}{85}
\tileg{11}{6}{85}
\tileg{10}{6}{85}
\tileg{9}{6}{85}
\tileg{8}{6}{85}
\tileg{8}{5}{85}
\tileg{8}{4}{85}
\tileg{8}{3}{85}
\tileg{8}{2}{85}
\tileg{9}{2}{85}
\tileg{10}{2}{85}
\tiley{11}{2}{48}
\tiley{12}{2}{48}
\tiley{13}{2}{48}
\tiley{14}{2}{48}
\tiley{15}{2}{48}
\tiley{16}{2}{48}
\tiley{16}{3}{48}
\tiley{16}{4}{48}
\tiley{16}{5}{48}
\tiley{16}{6}{48}
\tiley{16}{7}{48}
\tiley{16}{8}{48}
\tiley{16}{9}{48}
\tiley{16}{10}{48}
\tiley{16}{11}{48}
\tiley{16}{12}{48}
\tiley{16}{13}{48}
\tiley{16}{14}{48}
\tiley{15}{14}{48}
\tiley{14}{14}{48}
\tiley{13}{14}{48}
\tiley{12}{14}{48}
\tiley{11}{14}{48}
\tiley{10}{14}{48}
\tiley{9}{14}{48}
\tiley{8}{14}{48}
\tiley{8}{15}{48}
\tiley{8}{16}{48}
\tiley{8}{17}{48}
\tiley{8}{18}{48}
\tiley{9}{18}{48}
\tiley{10}{18}{78}
\tileb{11}{18}{78}
\tileb{12}{18}{78}
\tileb{13}{18}{78}
\tileb{14}{18}{78}
\tileb{15}{18}{78}
\tileb{16}{18}{78}
\tileb{17}{18}{78}
\tileb{18}{18}{78}
\tileb{19}{18}{78}
\tileb{20}{18}{78}
\tileb{21}{18}{78}
\tileb{22}{18}{78}
\tileb{23}{18}{78}
\tileb{24}{18}{78}



\path [dotted, draw, thin] (0,0) grid[step=0.22cm] (26,21);

\draw[dashed] (11,0) -| (11,2.5);
\fill (10.5,2.5) circle (0.16);
\node (D) at (10.5,3.7) {$s$};
\node (D) at (12.2,1) {$l^s$};
\draw[dashed] (11,18.5) -| (11,21);
\fill (10.5,18.5) circle (0.16);
\node (D) at (10.5,17) {$n$};
\end{tikzpicture}

\center (a) 
\end{minipage}
\begin{minipage}{0.47\linewidth}
\begin{tikzpicture}[x=0.22cm,y=0.22cm]

\draw[very thick] (1.5,10.5) -| (12.5,6.5) -| (8.5,2.5) -| (16.5,7.5) -| (24.5,7.5);

\draws{1}{10}

\tileg{2}{10}{85}
\tileg{3}{10}{85}
\tileg{4}{10}{85}
\tileg{5}{10}{85}
\tileg{6}{10}{85}
\tileg{7}{10}{85}
\tileg{8}{10}{85}
\tileg{9}{10}{85}
\tileg{10}{10}{85}
\tiley{11}{10}{85}
\tiley{12}{10}{85}
\tiley{12}{9}{85}
\tiley{12}{8}{85}
\tiley{12}{7}{85}
\tiley{12}{6}{85}
\tiley{11}{6}{85}
\tiley{10}{6}{85}
\tiley{9}{6}{85}
\tiley{8}{6}{85}
\tiley{8}{5}{85}
\tiley{8}{4}{85}
\tiley{8}{3}{85}
\tiley{8}{2}{85}
\tiley{9}{2}{85}
\tiley{10}{2}{85}
\tileb{11}{2}{78}
\tileb{12}{2}{78}
\tileb{13}{2}{78}
\tileb{14}{2}{78}
\tileb{15}{2}{78}
\tileb{16}{2}{78}
\tileb{16}{3}{78}
\tileb{16}{4}{78}
\tileb{16}{5}{78}
\tileb{16}{6}{78}
\tileb{16}{7}{78}
\tileb{17}{7}{78}
\tileb{18}{7}{78}
\tileb{19}{7}{78}
\tileb{20}{7}{78}
\tileb{21}{7}{78}
\tileb{22}{7}{78}
\tileb{23}{7}{78}
\tileb{24}{7}{78}



\path [dotted, draw, thin] (0,0) grid[step=0.22cm] (26,21);

\draw[dashed] (11,0) -| (11,2.5);
\fill (10.5,2.5) circle (0.16);
\node (D) at (10.5,3.7) {$s$};
\node (D) at (12.2,1) {$l^s$};
\draw[dashed] (11,10.5) -| (11,21);
\fill (10.5,10.5) circle (0.16);
\node (D) at (10.2,12) {$n'$};

\end{tikzpicture}

\center (b) 
\end{minipage}

\caption{(a) A path $Q$ and its upward span $(s,n)$ on column $c$. To transform this path into a good one for column $c$, we consider the rightmost  priority path $P$ (represented in Figure \ref{fig:cano:good}(b)) which admits $Q_{0,\ldots,s+1}$ as a prefix and which does not cross~$l^s$.
(b) In this case, the span $(n',s)$ of $P$ on column $c$ is a downward span. Nevertheless, the width of this span is strictly less than the width of $Q$ on column $c$, contradicting the definition of $Q$.}
\label{fig:cano:good}
\end{figure}

By definition and since $\glueP{s}{s+1}=\glueQ{s}{s+1}$, $\glueP{s}{s+1}$ is visible from the south (resp. north) in $P$ on glue column $c$. Let $n'$ such that $\glueP{n'}{n'+1}$ is visible from the north (resp. south) on glue column $c$. 
\begin{itemize}
\item If $s\leq n'$ then $(s,n')$ is the upward (resp. downward) span of $P$ on glue column $c$ and $P$ is a good path for glue column $c$.
\item Otherwise, $s>n'$ then $(n',s)$ is a downward (resp. upward) span of $P$ on glue column~$c$, see Figure \ref{fig:cano:good}. In this case $P$ is not a good path for glue column $c$. Nevertheless, since $s>n'$ then $P_{n'}=Q_{n'}$ (since $Q_{0,\ldots,s}$ is a prefix of $P$ by definition of~$P$). Moreover, since $\glueQ{n}{n+1}$ if visible from the north (resp. south) in $Q$ then $y_{P_{n'}}<y_{Q_n}$ (resp. $y_{P_{n'}}>y_{Q_n}$). This means that the width of the span of $P$ on glue column $c$ is strictly less than the width $w$ of the span of $Q$ on glue column $c$ which contradicts the definition of $w$. 
\end{itemize}
Thus, only the first case can occur and $P$ is a good path.
\end{proof}


Solving Lock \ref{lock:can} requires only one last remark. In the example of the roadmap (Section \ref{sec:roadmap}), we considered one path which was a good path for two glue columns $\Bone\leq c <c '\leq \Btwo$. Nevertheless, the same reasoning still holds by using one good path for glue column $c$ and another good path for glue column $c'$. Thus, in the final Section \ref{sec:analysis}, we will not build a path which is good for all glue columns  $\Bone\leq c \leq \Btwo$ but we will consider a set of $4|T|+1$ paths, one for each glue column $\Bone\leq c \leq \Btwo$.

Now, the problem of Lock~\ref{lock:shieldtwo} is similar to the one occurring in the proof of Lemma \ref{exists:good} and Figure \ref{fig:cano:good}. Doing some modification on a path may switch the direction of a span. Indeed, consider a glue column $\Bone\leq c \leq \Btwo$ and a good path $P$ for glue column $c$. Consider the setting of Figure \ref{fig:cano:lock6} where $(s,n)$ is the upward span of $P$ on glue column $c$ and where a shield $S$ of the span $(s,n)$ protects a glue $n < d \leq |P|-1$. This $\glueP{d}{d+1}$ is the glue visible from the south in $P_{n,\ldots, |P|-1}$. In this case, the reasoning of the roadmap should continue with the pseudo-span $(n,d)$. This is problematic since  $(n,d)$ is a downward span and $P$ was obtained by selecting a rightmost priority path: the direction of the span and the priority of $P$ do not match anymore. To solve this problem, we introduce \emph{canonical} path.

\begin{figure}
\center
\begin{minipage}{0.47\linewidth}
\begin{tikzpicture}[x=0.22cm,y=0.22cm]


\draw[very thick] (1.5,3.5) -| (13.5,7.5) -| (6.5,18.5) -| (13.5,16.5) -| (8.5,14.5) -| (24.5,14.5);

\draws{1}{3}

\tileg{2}{3}{85}
\tileg{3}{3}{85}
\tileg{4}{3}{85}
\tileg{5}{3}{85}
\tileg{6}{3}{85}
\tileg{7}{3}{85}
\tileg{8}{3}{85}
\tileg{9}{3}{85}
\tileg{10}{3}{85}

\tiley{11}{3}{48}
\tiley{12}{3}{48}
\tiley{13}{3}{48}
\tiley{13}{4}{48}
\tiley{13}{5}{48}
\tiley{13}{6}{48}
\tiley{13}{7}{48}
\tiley{12}{7}{48}
\tiley{11}{7}{48}
\tiley{10}{7}{48}
\tiley{9}{7}{48}
\tiley{8}{7}{48}
\tiley{7}{7}{48}
\tiley{6}{7}{48}
\tiley{6}{8}{48}
\tiley{6}{9}{48}
\tiley{6}{10}{48}
\tiley{6}{11}{48}
\tiley{6}{12}{48}
\tiley{6}{13}{48}
\tiley{6}{14}{48}
\tiley{6}{15}{48}
\tiley{6}{16}{48}
\tiley{6}{17}{48}
\tiley{6}{18}{48}
\tiley{7}{18}{48}
\tiley{8}{18}{48}
\tiley{8}{18}{48}
\tiley{9}{18}{48}
\tiley{10}{18}{48}

\tileb{11}{18}{78}
\tileb{12}{18}{78}
\tileb{13}{18}{78}
\tileb{13}{17}{78}
\tileb{13}{16}{78}
\tileb{12}{16}{78}
\tileb{11}{16}{78}
\tileb{10}{16}{78}
\tileb{9}{16}{78}
\tileb{8}{16}{78}
\tileb{8}{15}{78}
\tileb{8}{14}{78}
\tileb{9}{14}{78}
\tileb{10}{14}{78}
\tileb{11}{14}{78}
\tileb{12}{14}{78}
\tileb{13}{14}{78}
\tileb{14}{14}{78}
\tileb{15}{14}{78}
\tileb{16}{14}{78}
\tileb{17}{14}{78}
\tileb{18}{14}{78}
\tileb{19}{14}{78}
\tileb{20}{14}{78}
\tileb{21}{14}{78}
\tileb{22}{14}{78}
\tileb{23}{14}{78}
\tileb{24}{14}{78}



\path [dotted, draw, thin] (0,0) grid[step=0.22cm] (26,21);

\draw[dashed] (11,0) -| (11,2.5);
\fill (10.5,3.5) circle (0.16);
\node (D) at (10.5,4.7) {$s$};
\draw[dashed] (11,18.5) -| (11,21);
\fill (10.5,18.5) circle (0.16);
\node (D) at (10.3,19.7) {$n$};
\end{tikzpicture}

\center (a) 
\end{minipage}
\begin{minipage}{0.47\linewidth}
\begin{tikzpicture}[x=0.22cm,y=0.22cm]


\draw[very thick] (1.5,3.5) -| (13.5,7.5) -| (6.5,18.5) -| (13.5,16.5) -| (8.5,14.5) -| (24.5,14.5);
\draw[very thick] (6.5,9.5) -| (20.5,9.5);

\draws{1}{3}

\tileg{2}{3}{85}
\tileg{3}{3}{85}
\tileg{4}{3}{85}
\tileg{5}{3}{85}
\tileg{6}{3}{85}
\tileg{7}{3}{85}
\tileg{8}{3}{85}
\tileg{9}{3}{85}
\tileg{10}{3}{85}

\tiley{11}{3}{48}
\tiley{12}{3}{48}
\tiley{13}{3}{48}
\tiley{13}{4}{48}
\tiley{13}{5}{48}
\tiley{13}{6}{48}
\tiley{13}{7}{48}
\tiley{12}{7}{48}
\tiley{11}{7}{48}
\tiley{10}{7}{48}
\tiley{9}{7}{48}
\tiley{8}{7}{48}
\tiley{7}{7}{48}
\tiley{6}{7}{48}
\tiley{6}{8}{48}
\tiley{6}{9}{48}
\tiley{6}{10}{48}
\tiley{6}{11}{48}
\tiley{6}{12}{48}
\tiley{6}{13}{48}
\tiley{6}{14}{48}
\tiley{6}{15}{48}
\tiley{6}{16}{48}
\tiley{6}{17}{48}
\tiley{6}{18}{48}
\tiley{7}{18}{48}
\tiley{8}{18}{48}
\tiley{8}{18}{48}
\tiley{9}{18}{48}
\tiley{10}{18}{48}

\tiler{11}{18}{78}
\tiler{12}{18}{78}
\tiler{13}{18}{78}
\tiler{13}{17}{78}
\tiler{13}{16}{78}
\tiler{12}{16}{78}
\tiler{11}{16}{78}
\tiler{10}{16}{78}
\tiler{9}{16}{78}
\tiler{8}{16}{78}
\tiler{8}{15}{78}
\tiler{8}{14}{78}
\tiler{9}{14}{78}
\tiler{10}{14}{78}
\tileb{11}{14}{78}
\tileb{12}{14}{78}
\tileb{13}{14}{78}
\tileb{14}{14}{78}
\tileb{15}{14}{78}
\tileb{16}{14}{78}
\tileb{17}{14}{78}
\tileb{18}{14}{78}
\tileb{19}{14}{78}
\tileb{20}{14}{78}
\tileb{21}{14}{78}
\tileb{22}{14}{78}
\tileb{23}{14}{78}
\tileb{24}{14}{78}



\tilec{7}{9}{48}
\tilec{8}{9}{48}
\tilec{9}{9}{48}
\tilec{10}{9}{48}
\tilec{11}{9}{48}
\tilec{12}{9}{48}
\tilec{13}{9}{48}
\tilec{14}{9}{48}
\tilec{15}{9}{48}
\tilec{16}{9}{48}
\tilec{17}{9}{48}
\tilec{18}{9}{48}
\tilec{19}{9}{48}
\tilec{20}{9}{48}

\path [dotted, draw, thin] (0,0) grid[step=0.22cm] (26,21);

\draw[dashed] (11,0) -| (11,2.5);
\fill (10.5,3.5) circle (0.16);
\node (D) at (10.5,4.7) {$s$};
\draw[color=cyan,thick] (11,9.5) -| (11,14.5);
\fill (10.5,14.5) circle (0.16);
\node (D) at (10.3,12.7) {$d$};
\draw[dashed] (11,18.5) -| (11,21);
\fill (10.5,18.5) circle (0.16);
\node (D) at (10.3,19.7) {$n$};
\end{tikzpicture}

\center (b) 
\end{minipage}

\caption{(a) A path $P$ which is a good path for a glue column $c$. Its beginning is in green, the upward span $(s,n)$ of $P$ on glue column $c$ is in yellow and the end of the path is in blue.
(b) A shield $S$ (in cyan) of the span $(s,n)$ protects a glue $d$ with $d>n$. In this case, $(n,d)$ is a downward visible pseudo-span (in red). Nevertheless, the direction of $(n,d)$ does not match the priority of $P_{n,\ldots, |P|-1}$ anymore.}
\label{fig:cano:lock6}
\end{figure}


Here as some preliminaries definitions. The index of the last glue of an extremal path $P$ on glue column~$\Bone\leq c \leq \Btwo$ is: $$ \lastc=\max\{0 \leq i \leq |P|-1: \glueP{i}{i+1} \text{ is on glue column } c\}.$$ Note that this glue must points east since $P$ is extremal. Consider an upward (resp. downward) pseudo-span $(s,n)$ of $P$ on a glue column $\Bone\leq c \leq \Btwo$, the \emph{next pseudo-visible glue} of this span is the glue visible from the south (resp. north) in $P_{n,\ldots,|P|-1}$ on glue column $c$, see Figure \ref{fig:cano:defdcompo}.

\begin{figure}
\center
\begin{tikzpicture}[x=0.22cm,y=0.22cm]


\draw[very thick] (1.5,3.5) -| (13.5,6.5) -| (2.5,21.5) -| (13.5,18.5) -| (5.5,9.5)-| (13.5,12.5) -| (8.5,15.5) -| (24.5,15.5);

\draws{1}{3}

\tileg{2}{3}{85}
\tileg{3}{3}{85}
\tileg{4}{3}{85}
\tileg{5}{3}{85}
\tileg{6}{3}{85}
\tileg{7}{3}{85}
\tileg{8}{3}{85}
\tileg{9}{3}{85}
\tileg{10}{3}{85}

\tiley{11}{3}{48}
\tiley{12}{3}{48}
\tiley{13}{3}{48}
\tiley{13}{4}{48}
\tiley{13}{5}{48}
\tiley{13}{6}{48}
\tiley{12}{6}{48}
\tiley{11}{6}{48}
\tiley{10}{6}{48}
\tiley{9}{6}{48}
\tiley{8}{6}{48}
\tiley{7}{6}{48}
\tiley{6}{6}{48}
\tiley{5}{6}{48}
\tiley{4}{6}{48}
\tiley{3}{6}{48}
\tiley{2}{6}{48}
\tiley{2}{7}{48}
\tiley{2}{8}{48}
\tiley{2}{9}{48}
\tiley{2}{10}{48}
\tiley{2}{11}{48}
\tiley{2}{12}{48}
\tiley{2}{13}{48}
\tiley{2}{14}{48}
\tiley{2}{15}{48}
\tiley{2}{16}{48}
\tiley{2}{17}{48}
\tiley{2}{18}{48}
\tiley{2}{19}{48}
\tiley{2}{20}{48}
\tiley{2}{21}{48}
\tiley{3}{21}{48}
\tiley{4}{21}{48}
\tiley{5}{21}{48}
\tiley{6}{21}{48}
\tiley{7}{21}{48}
\tiley{8}{21}{48}
\tiley{9}{21}{48}
\tiley{10}{21}{48}

\tileor{11}{21}{48}
\tileor{12}{21}{48}
\tileor{13}{21}{48}
\tileor{13}{20}{48}
\tileor{13}{19}{48}
\tileor{13}{18}{48}
\tileor{12}{18}{48}
\tileor{11}{18}{48}
\tileor{10}{18}{48}
\tileor{9}{18}{48}
\tileor{8}{18}{48}
\tileor{7}{18}{48}
\tileor{6}{18}{48}
\tileor{5}{18}{48}
\tileor{5}{17}{48}
\tileor{5}{16}{48}
\tileor{5}{15}{48}
\tileor{5}{14}{48}
\tileor{5}{13}{48}
\tileor{5}{12}{48}
\tileor{5}{11}{48}
\tileor{5}{10}{48}
\tileor{5}{9}{48}
\tileor{6}{9}{48}
\tileor{7}{9}{48}
\tileor{8}{9}{48}
\tileor{9}{9}{48}
\tileor{10}{9}{48}

\tiler{11}{9}{48}
\tiler{12}{9}{48}
\tiler{13}{9}{48}
\tiler{13}{10}{48}
\tiler{13}{11}{48}
\tiler{13}{12}{48}
\tiler{12}{12}{48}
\tiler{11}{12}{48}
\tiler{10}{12}{48}
\tiler{9}{12}{48}
\tiler{8}{12}{48}
\tiler{8}{13}{48}
\tiler{8}{14}{48}
\tiler{8}{15}{48}
\tiler{9}{15}{48}
\tiler{10}{15}{48}

\tileb{11}{15}{78}
\tileb{12}{15}{78}
\tileb{13}{15}{78}
\tileb{14}{15}{78}
\tileb{15}{15}{78}
\tileb{16}{15}{78}
\tileb{17}{15}{78}
\tileb{18}{15}{78}
\tileb{19}{15}{78}
\tileb{20}{15}{78}
\tileb{21}{15}{78}
\tileb{22}{15}{78}
\tileb{23}{15}{78}
\tileb{24}{15}{78}



\path [dotted, draw, thin] (0,0) grid[step=0.22cm] (26,24);

\draw[dashed] (11,0) -| (11,2.5);
\fill (10.5,3.5) circle (0.16);
\node (D) at (10.5,4.7) {$s$};

\fill (10.5,9.5) circle (0.16);
\node (D) at (10.5,8.2) {$d$};

\fill (10.5,15.5) circle (0.16);
\node (D) at (10.5,16.9) {$\ell$};
\node (D) at (12.2,1) {$l^s$};
\draw[dashed] (11,21.5) -| (11,24);
\fill (10.5,21.5) circle (0.16);
\node (D) at (10.3,20.2) {$n$};
\node (D) at (12.2,23) {$l^n$};
\end{tikzpicture}

\caption{An extremal path $P$ and its decomposition $(s,n,d,\ell)$ into pseudo-visible glues on glue column $c$. The index $\ell$ is the index of the last glue of $P$ on glue column $c$. $(s,n)$ (in yellow) and $(d,\ell)$ (in red) are upward pseudo-span while $(n,d)$ (in orange) is a downward pseudo-span. We have $y_{P_s}<y_{P_d}<y_{P_\ell}$ and  $y_{P_n}>y_{P_\ell}$. Moreover, the widths of the spans $(s,n)$, $(n,d)$ and $(d,\ell)$ are strictly decreasing. To be canonical for glue column $c$, $P_{0,\ldots,n}$ must be a prefix of the rightmost priority extremal path which does not cross $l^s$ and admits $P_{0,\ldots,s+1}$ as a prefix. Also, $P_{0,\ldots,\ell}$ must have a similar property with $P_{0,\ldots,d}$ and $l^d$, while  $P_{0,\ldots,d}$ must be a prefix of the leftmost priority extremal path which does not cross $l^n$ and admits $P_{0,\ldots,n+1}$ as a prefix.}
\label{fig:cano:defdcompo}
\end{figure}

\begin{definition}
\label{def:decompo}
Consider an extremal path $P$, a glue column $\Bone\leq c \leq \Btwo$. The decomposition of $P$ into pseudo-visible glues on glue column $c$ is the sequence $(u_i)_{0\leq i \leq t}$ such that:
\begin{itemize}
\item $(u_0,u_1)$ is the span of $P$ on column $c$;
\item for $1\leq i<t$, $\glueP{u_{i+1}}{u_{i+1}+1}$ is the next pseudo-visible glue of the pseudo-span $(u_{i-1},u_i)$ of $P$;
\item the sequence ends with $u_t=\lastc$ where $\lastc$ is the index of the last glue of $P$ on glue column $c$.
\end{itemize}
\end{definition}

Intuitively, if $(s,n)$ is the span of glue column $c$ of an extremal $P$  and $\ell$ is the index of the last glue of $P$ on column $c$ then the subpath $P_{s,\ldots,\ell}$ of $P$ is decomposed into a sequence of pseudo-spans $(u_i,u_{i+1})_{0\leq i <t}$, see Figure~\ref{fig:cano:defdcompo}. Consider one of these pseudo-spans $(u_i,u_{i+1})$ then it follows from the definition of the decomposition that the direction of $(u_i,u_{i+1})$ is the inverse of the previous pseudo-span $(u_{i-1},u_i)$. Also,  consider the workspace $\mathcal{C}$ of the pseudo-span $(u_i,u_{i+1})$, then $P_{u_{i+1}+1,\ldots |P|-1}$ does not leave this workspace, see Figure \ref{fig:cano:property}. This remark implies the following properties \ref{prop:decompospan:one} and \ref{prop:decompospan:two}. First, since $P_{u_{i+1}+1,\ldots, |P|-1}$ is in the east area of this cut then the westernmost tile of $P_{u_{i},\ldots, |P|-1}$ belongs to $P_{u_{i},\ldots, u_{i+1}}$.

\begin{figure}
\center
\begin{tikzpicture}[x=0.22cm,y=0.22cm]

\fill[fill=blue!30!white, draw opacity=0.8] (11,0) |- (5.5,9.5) |- (13.5,18.5) |- (11,21.5) |- (26,24) |- (11,0);

\draw[very thick] (10.5,21.5) -| (13.5,18.5) -| (5.5,9.5)-| (13.5,12.5) -| (8.5,15.5) -| (24.5,15.5);


%
\tiley{10}{21}{48}

\tileor{11}{21}{48}
\tileor{12}{21}{48}
\tileor{13}{21}{48}
\tileor{13}{20}{48}
\tileor{13}{19}{48}
\tileor{13}{18}{48}
\tileor{12}{18}{48}
\tileor{11}{18}{48}
\tileor{10}{18}{48}
\tileor{9}{18}{48}
\tileor{8}{18}{48}
\tileor{7}{18}{48}
\tileor{6}{18}{48}
\tileor{5}{18}{48}
\tileor{5}{17}{48}
\tileor{5}{16}{48}
\tileor{5}{15}{48}
\tileor{5}{14}{48}
\tileor{5}{13}{48}
\tileor{5}{12}{48}
\tileor{5}{11}{48}
\tileor{5}{10}{48}
\tileor{5}{9}{48}
\tileor{6}{9}{48}
\tileor{7}{9}{48}
\tileor{8}{9}{48}
\tileor{9}{9}{48}
\tileor{10}{9}{48}

\tiler{11}{9}{48}
\tiler{12}{9}{48}
\tiler{13}{9}{48}
\tiler{13}{10}{48}
\tiler{13}{11}{48}
\tiler{13}{12}{48}
\tiler{12}{12}{48}
\tiler{11}{12}{48}
\tiler{10}{12}{48}
\tiler{9}{12}{48}
\tiler{8}{12}{48}
\tiler{8}{13}{48}
\tiler{8}{14}{48}
\tiler{8}{15}{48}
\tiler{9}{15}{48}
\tiler{10}{15}{48}

\tileb{11}{15}{78}
\tileb{12}{15}{78}
\tileb{13}{15}{78}
\tileb{14}{15}{78}
\tileb{15}{15}{78}
\tileb{16}{15}{78}
\tileb{17}{15}{78}
\tileb{18}{15}{78}
\tileb{19}{15}{78}
\tileb{20}{15}{78}
\tileb{21}{15}{78}
\tileb{22}{15}{78}
\tileb{23}{15}{78}
\tileb{24}{15}{78}



\path [dotted, draw, thin] (0,0) grid[step=0.22cm] (26,24);


\draw[dashed] (11,0) -| (11,9.5);
\fill (10.5,9.5) circle (0.16);
\node (D) at (10.5,8.2) {$d$};

\fill (10.5,15.5) circle (0.16);
\node (D) at (10.5,16.9) {$\ell$};
\draw[dashed] (11,21.5) -| (11,24);
\fill (10.5,21.5) circle (0.16);
\node (D) at (10.3,22.7) {$n$};
\node (D) at (20.3,6.7) {$\mathcal{C}$};
\end{tikzpicture}

\caption{Following Figure \ref{fig:cano:defdcompo}, the workspace $\mathcal{C}$ of the downward span $(n,d)$ is in blue. Since this workspace is the east part of the $2D$ plane then $w_{P_{n,\ldots, d}}<w_{P_{d+1,\ldots, |P|-1}}$. Moreover, since $P_{d+1,\ldots, |P|-1}$ is inside this workspace, we have $y_{P_d}<y_{P_\ell}<y_{P_n}$ implying that the width of the span $(n,d)$ is strictly less than the width of $(d,\ell)$.}
\label{fig:cano:property}
\end{figure}

\begin{property}
\label{prop:decompospan:one}
Consider an extremal path $P$, a glue column $\Bone\leq c \leq \Btwo$ and its decomposition $(u_i)_{0\leq i \leq t}$ into pseudo-visible glues on glue column $c$. Then, for any $0\leq i <t$, we have $w_{P_{u_{i},\ldots, u_{i+1}}}<w_{P_{u_{i+1}+1,\ldots, |P|-1}}$.
\end{property}


Secondly, if $(u_i,u_{i+1})$ is an upward (resp. downward) span then $y_{P_{u_i}}<y_{P_{u_{i+2}}}<y_{P_{u_{i+1}}}$ (resp. $y_{P_{u_{i+1}}}<y_{P_{u_{i+2}}}<y_{P_{u_{i}}}$). 
This means that if $\glueP{u_0}{u_0+1}$ is visible from the south then $(u_{2i})_{0\leq i \leq \lfloor t/2 \rfloor}$ and $(u_{2i+1})_{0\leq i \leq \lfloor t/2 \rfloor}$ are increasing and decreasing sequences respectively\footnote{If $\glueP{u_0}{u_0+1}$ is visible from the north then $(u_{2i})_{0\leq i \leq \lfloor t/2 \rfloor}$ and $(u_{2i+1})_{0\leq i \leq \lfloor t/2 \rfloor}$ are  decreasing and increasing sequences respectively. In both cases, these two sequences are bounded by $y_{P_{\ell}}$.}, implying that:

\begin{property}
\label{prop:decompospan:two}
Consider an extremal path $P$, a glue column $\Bone\leq c \leq \Btwo$ and its decomposition $(u_i)_{0\leq i \leq t}$ into pseudo-visible glues on glue column $c$. Then the widths of the pseudo-spans $(u_i,u_{i+1})_{0\leq i <t}$ are strictly decreasing.
\end{property}


The definition of a good path (Definition \ref{prop:canonicalPath}) links the direction of the span and the priority of the end of the path. Now we want a similar property for each pseudo-span of the decomposition, see Figure \ref{fig:cano:defdcompo}.

\begin{definition}
\label{def:cano}
Consider an extremal path $P$, a glue column $\Bone\leq c \leq \Btwo$ and the decomposition $(u_i)_{0\leq i \leq t}$ of $P$ into pseudo-visible glues on glue column $c$. The path $P$ is \emph{canonical} for glue column $c$ if and only if for all $0\leq i <t$, if $(u_i,u_{i+1})$ is an upward (resp. downward) span then $P_{0,\ldots,u_{i+1}}$ is a prefix of the rightmost (resp. leftmost) priority extremal path with $P_{0,\ldots,u_i+1}$ as a prefix and with $\glueP{u_i}{u_i+1}$ visible from the south (resp. north).

\end{definition}

With this definition, if we need to consider a new pseudo-span in the case of Lock \ref{lock:shieldtwo}, then the direction of this pseudo-span and the priority of $P$ on this subpath will match. To finish, this section, we show that canonical paths exist.


\begin{lemma}
\label{lem:exists:canoP}
Consider a glue column $\Bone\leq c \leq \Btwo$, there exists a canonical path $P$ for glue column $c$ .
\end{lemma}

\begin{proof}
By Lemma \ref{exists:good}, there is a good path $P$ for glue column $c$. To create a canonical path, we proceed similarly as in the proof of Lemma \ref{exists:good} for each pseudo-span $(u_i,u_{i+1})$ generated by the decomposition $(u_i)_{0\leq i \leq t}$ of $P$ into pseudo-visible glues on column~$c$. 

More formally, if $P$ is a canonical path for column~$c$ then the proof ends here. Otherwise, let $1 \leq j <t$ be the smallest integer such that if $(u_j,u_{j+1})$ is an upward (resp. downward) pseudo-span then $P_{0,\ldots,u_{j+1}}$ is not a prefix of the rightmost (resp. leftmost) priority extremal path with $P_{0,\ldots,u_j+1}$ as a prefix and with $\glueP{u_j}{u_j+1}$ visible from the south (resp. north). Then, consider the rightmost (resp. leftmost) priority extremal path $Q$ such that $P_{0,\ldots,u_j+1}$ is a prefix of $Q$ and $\glueQ{u_j}{u_j+1}$ is pseudo-visible from the south (resp. north). From now on, we suppose that $(u_j,u_{j+1})$ is an upward span (the other case is symmetric). Consider the decomposition $(u'_i)_{0\leq i \leq t'}$ of $Q$ into pseudo-visible glues on glue column $c$. The next step is to prove that $u_i=u'_i$ for all $0\leq i \leq j$. To do so, since $P_{0,\ldots,u_j+1}$ is a prefix of $Q$, we must prove that all glues of $Q_{j+1,\ldots,|Q|-1}$ which are on glue column $c$ have a $y$-axis between $y_{P_{u_j}}$ and $y_{P_{u_{j+1}}}$. Consider the area $\mathcal{C}^+$ delimited by the visibility ray of $\glueP{u_j}{u_j+1}$, $P_{u_j,\ldots,|P|-1}$ and the ray starting at $\pos{|P|}-1$ and going south, see Figure \ref{fig:cano:proof}. If $Q_{u_j+1,\ldots,|Q|-1}$ is inside this area then our claim holds. By definition, $Q_{u_j+1,\ldots,|Q|-1}$ cannot leave this area by crossing through $P_{u_j+1,\ldots,|P|-1}$ or the visibility ray of $\glueP{u_j}{u_j+1}$ and it cannot reach column $e_P+1=e_\uniterm+1$. 
Now, we continue the same reasoning on path $Q$: either $Q$ is canonical for column $c$ or there is an index $j'>j$ such that the required property is not satisfied. Eventually, we found a canonical path after a number of steps bounded by the width of $(u_0,u_1)$ (since the widths of the pseudo-spans are strictly decreasing by property \ref{prop:decompospan:two}).

\begin{figure}
\center
\begin{tikzpicture}[x=0.22cm,y=0.22cm]

\fill[fill=blue!30!white, draw opacity=0.8] (11,0) |- (5.5,9.5) |- (13.5,18.5) |- (11,21.5) |- (26,24) |- (11,0);
\fill[fill=blue!40!white, draw opacity=0.8] (11,24) |- (13.5,21.5) |- (5.5,18.5) |- (13.5,9.5) |- (8.5,12.5) |- (24.5,15.5) |- (11,24);

\draw[dashed] (24.5,15.5) -| (24.5,24);

\draw[very thick] (10.5,21.5) -| (13.5,18.5) -| (5.5,9.5)-| (13.5,12.5) -| (8.5,15.5) -| (24.5,15.5);
\draw[very thick] (21,18.5) -| (17.5,19.5) -| (13.5,19.5);


%
\tiley{10}{21}{48}

\tileor{11}{21}{48}
\tileor{12}{21}{48}
\tileor{13}{21}{48}
\tileor{13}{20}{48}
\tileor{13}{19}{48}
\tileor{13}{18}{48}
\tileor{12}{18}{48}
\tileor{11}{18}{48}
\tileor{10}{18}{48}
\tileor{9}{18}{48}
\tileor{8}{18}{48}
\tileor{7}{18}{48}
\tileor{6}{18}{48}
\tileor{5}{18}{48}
\tileor{5}{17}{48}
\tileor{5}{16}{48}
\tileor{5}{15}{48}
\tileor{5}{14}{48}
\tileor{5}{13}{48}
\tileor{5}{12}{48}
\tileor{5}{11}{48}
\tileor{5}{10}{48}
\tileor{5}{9}{48}
\tileor{6}{9}{48}
\tileor{7}{9}{48}
\tileor{8}{9}{48}
\tileor{9}{9}{48}
\tileor{10}{9}{48}

\tileb{11}{9}{48}
\tileb{12}{9}{48}
\tileb{13}{9}{48}
\tileb{13}{10}{48}
\tileb{13}{11}{48}
\tileb{13}{12}{48}
\tileb{12}{12}{48}
\tileb{11}{12}{48}
\tileb{10}{12}{48}
\tileb{9}{12}{48}
\tileb{8}{12}{48}
\tileb{8}{13}{48}
\tileb{8}{14}{48}
\tileb{8}{15}{48}
\tileb{9}{15}{48}
\tileb{10}{15}{48}

\tilem{14}{19}{80}
\tilem{15}{19}{80}
\tilem{16}{19}{80}
\tilem{17}{19}{80}
\tilem{17}{18}{80}
\tilem{18}{18}{80}
\tilem{19}{18}{80}
\tilem{20}{18}{80}

\tileb{11}{15}{78}
\tileb{12}{15}{78}
\tileb{13}{15}{78}
\tileb{14}{15}{78}
\tileb{15}{15}{78}
\tileb{16}{15}{78}
\tileb{17}{15}{78}
\tileb{18}{15}{78}
\tileb{19}{15}{78}
\tileb{20}{15}{78}
\tileb{21}{15}{78}
\tileb{22}{15}{78}
\tileb{23}{15}{78}
\tileb{24}{15}{78}



\path [dotted, draw, thin] (0,0) grid[step=0.22cm] (26,24);


\draw[dashed] (11,0) -| (11,9.5);
\fill (10.5,9.5) circle (0.16);
\node (D) at (10.5,8.2) {$d$};

\node (D) at (12.3,23) {$l^n$};
\draw[dashed] (11,21.5) -| (11,24);
\fill (10.5,21.5) circle (0.16);
\node (D) at (10.3,20.2) {$n$};
\node (D) at (20.3,6.7) {$\mathcal{C^-}$};
\node (D) at (19.3,22.2) {$\mathcal{C^+}$};

\node (D) at (22,18.5) {\huge ?};

\end{tikzpicture}

\caption{Following Figure \ref{fig:cano:property}, we focus on the span $(n,d)$ of $P$. If there exists $0\leq i < n$ such that $\glueP{i}{i+1}$ is pseudo-visible from the south (resp. north) in $P$ on glue column $c$ then it is below (resp. over) $\glueP{d}{d+1}$ (resp. $\glueP{n}{n+1}$). Consider the rightmost priority extremal path $Q$ (in magenta) such that $P_{0,\ldots,n+1}$ is a prefix of $Q$ and $\glueQ{n}{n+1}$ is pseudo-visible from the north. Then, $Q_{n+1,\ldots,|Q|-1}$ is inside the area $\mathcal{C}^+$: by definition, this path cannot leave $\mathcal{C}^+$ by crossing $l^n$ and $P_{n+1,\ldots,|P|-1}$ and since $P$ and $Q$ are extremal, the last tile of $Q$ is the only one on column $e_P$.}
\label{fig:cano:proof}
\end{figure}

\end{proof}

\section{U-turn}
\label{sec:U-turn}

In this part, we develop a toolbox to explain how shields are able to prevent different parts of the path to interact with each other (see Section \ref{road:pseudo-visibility} of the roadmap). To do so, we prove first a technical Lemma \ref{Uturn:main} in Subsection~\ref{sec:sub:Uturn}. This lemma is a formal proof that only the first case of Fact \ref{fact:turn} can occur. Note that, this lemma is also a rewriting of the Shield Lemma of~\cite{STOC2020}. The proof is far simpler here, since we are working in the directed case with a finite terminal assembly and on paths with specific properties. Thus, the statement of this lemma is different from the one of \cite{STOC2020} but all the arguments are taken from this previous paper. Indeed, we use this result to solve Lock~\ref{lem:glue:prop3} (Lemma \ref{Uturn:glueWest}) following the reasoning done in \cite{STOC2020}. Similarly, we use this result to solve Lock \ref{lock:boundZero} (Lemma \ref{Uturn:boundWest}) following the unpublished reasoning done in \cite{pumpabilityLargeBound}. Then, we proceed to improve the quadratic bound of \cite{STOC2020} for U-turn into a linear one to solve Lock \ref{lock:uturnTwo} (Lemma \ref{Uturn:LinearBound} and Corollary \ref{fact:Bthree}) in Subsection~\ref{sec:U-turn:linear}. Solving Lock \ref{lock:uturnOne} (Lemma \ref{Uturn:LinearCano}) in Subsection \ref{sec:U-turn:linearCano} unfortunately requires to do reasonings similar to the ones in Lemma \ref{Uturn:main} and Lemma \ref{Uturn:LinearBound}. We think that there should exist more general versions of these Lemmas. Such a result would slightly simplify this proof and would be a powerful tool for further studies. Unfortunately, we were not able to obtain such a result here. Finally, we conclude this section by solving Lock \ref{lock:uturn} by putting all the arguments together in Lemma \ref{Uturn:Conc}. In this section only, we reason about \emph{visible tiles} and not just visible glues.

\subsection{The U-turn lemma and its corollaries}
\label{sec:sub:Uturn}

\begin{lemma}
\label{Uturn:main}
Consider a producible path $P$ such that its last tile is visible from the north, we denote by $l^P$ the ray starting at $\pos{P_{|P|-1}}$ and going north. Suppose that there exist $0\leq i \leq j \leq |P|-1$ such that $\glueP{i}{i+1}$ and $\glueP{j}{j+1}$ are both pointing east, have the same type and are visible from the south on glue columns $c$ and $c'$ respectively. Also, assume that $P$ is the rightmost-priority path with $P_{0,\ldots,i+1}$ as a prefix, with $\glueP{i}{i+1}$ visible from the south and with its last tile on tile ray $l^P$. Then, we have: $$w_{P_{i+1,\ldots,|P|-1}}\leq w_{P_{j+1,\ldots,|P|-1}}+(c-c').$$
\end{lemma}

\begin{proof}
Consider the workspace $\mathcal{C}$ which is the right side of the cut done by $l^i$, $l^P$ and $P_{i,\ldots,|P|-1}$. By Lemma \ref{lem:glue:prop4}, we have $c \leq c'$ and then the workspace $\mathcal{C}$ is cut into two parts  $\mathcal{C^-}$ and  $\mathcal{C^+}$ by the glue ray $l^j$ as shown in Figure~\ref{fig:Uturn:techincal} (a). Remark that if $(P_{j+1,\ldots,|P|-1}-\vect{P_iP_j})$ is inside $\mathcal{C}$ then $P_{0,\ldots,i}(P_{j+1,\ldots,|P|-1}-\vect{P_iP_j})$ is producible. In this case, we must have $w_{P_{i+1,\ldots,|P|-1}}\leq w_{P_{j+1,\ldots,|P|-1}}+(c-c')$ in order to have enough space inside $\mathcal{C}$ for $(P_{j+1,\ldots,|P|-1}-\vect{P_iP_j})$ to grow inside it, see Figure \ref{fig:Uturn:techincal} (b).

\begin{figure}
\center
\begin{minipage}{0.47\linewidth}
\begin{tikzpicture}[x=0.22cm,y=0.22cm]

\fill[fill=blue!30!white, draw opacity=0.8] (6,5.5) -| (9.5,9.5) -| (2.5,15.5) -| (13.5,12.5) -| (18.5,14.5) -| (20.5,19.5) -| (13.5,23) -| (26,0) -| (6,5.5) ;
\fill[fill=blue!50!white, draw opacity=0.8] (6,5.5) -| (9.5,9.5) -| (2.5,15.5) -| (13.5,12.5) -| (15,0) -| (6,5.5);

\draw[very thick] (5.5,5.5) -| (9.5,9.5) -| (2.5,15.5) -| (13.5,12.5) -| (18.5,14.5) -| (20.5,19.5) -| (13.5,19.5);


\tileg{5}{5}{85}

\tiley{6}{5}{48}
\tiley{7}{5}{48}
\tiley{8}{5}{48}
\tiley{9}{5}{48}
\tiley{9}{6}{48}
\tiley{9}{7}{48}
\tiley{9}{8}{48}
\tiley{9}{9}{48}
\tiley{8}{9}{48}
\tiley{7}{9}{48}
\tiley{6}{9}{48}
\tiley{5}{9}{48}
\tiley{4}{9}{48}
\tiley{3}{9}{48}
\tiley{2}{9}{48}
\tiley{2}{10}{48}
\tiley{2}{11}{48}
\tiley{2}{12}{48}
\tiley{2}{13}{48}
\tiley{2}{14}{48}
\tiley{2}{15}{48}
\tiley{3}{15}{48}
\tiley{4}{15}{48}
\tiley{5}{15}{48}
\tiley{6}{15}{48}
\tiley{7}{15}{48}
\tiley{8}{15}{48}
\tiley{9}{15}{48}
\tiley{10}{15}{48}
\tiley{11}{15}{48}
\tiley{12}{15}{48}
\tiley{13}{15}{48}
\tiley{13}{14}{48}
\tiley{13}{13}{48}
\tiley{13}{12}{48}
\tiley{14}{12}{48}
\tileb{15}{12}{48}
\tileb{16}{12}{48}
\tileb{17}{12}{48}
\tileb{18}{12}{48}
\tileb{18}{13}{48}
\tileb{18}{14}{48}
\tileb{19}{14}{48}
\tileb{20}{14}{48}
\tileb{20}{15}{48}
\tileb{20}{16}{48}
\tileb{20}{17}{48}
\tileb{20}{18}{48}
\tileb{20}{19}{48}
\tileb{19}{19}{48}
\tileb{18}{19}{48}
\tileb{17}{19}{48}
\tileb{16}{19}{48}
\tileb{15}{19}{48}
\tileb{14}{19}{48}
\tileb{13}{19}{48}

\path [dotted, draw, thin] (0,0) grid[step=0.22cm] (26,23);

\draw[dashed] (6,0) -| (6,5.5);
\fill (5.5,5.5) circle (0.16);
\node (D) at (5.5,6.9) {$i$};
\node (D) at (4.8,1.5) {$l^i$};
\draw[dashed] (13.5,19.5) -| (13.5,23);
\fill (13.5,19.5) circle (0.16);
\node (D) at (13.5,18) {\tiny $|P|-1$};
\draw[dashed] (15,0) -| (15,12.5);
\fill (14.5,12.5) circle (0.16);
\node (D) at (14.5,11.2) {$j$};
\node (D) at (14,1.5) {$l^j$};
\node (D) at (12.2,21.5) {$l^P$};

\node (D) at (10,1.7) {$\mathcal{C^+}$};
\node (D) at (24,1.7) {$\mathcal{C^-}$};
\end{tikzpicture}

\center (a) 
\end{minipage}
\begin{minipage}{0.47\linewidth}
\begin{tikzpicture}[x=0.22cm,y=0.22cm]

\fill[fill=blue!30!white, draw opacity=0.8] (6,5.5) -| (9.5,9.5) -| (2.5,15.5) -| (13.5,12.5) -| (18.5,14.5) -| (20.5,19.5) -| (13.5,23) -| (26,0) -| (6,5.5) ;
\fill[fill=blue!50!white, draw opacity=0.8] (6,5.5) -| (9.5,9.5) -| (2.5,15.5) -| (13.5,12.5) -| (15,0) -| (6,5.5);

\draw[very thick] (5.5,5.5) -| (9.5,9.5) -| (2.5,15.5) -| (13.5,12.5) -| (18.5,14.5) -| (20.5,19.5) -| (13.5,19.5);
\draw[very thick] (9.5,7.5) -| (11.5,12.5) -| (4.5,12.5) ;


\tileg{5}{5}{85}

\tileor{6}{5}{48}
\tileor{7}{5}{48}
\tileor{8}{5}{48}
\tileor{9}{5}{48}
\tileor{9}{6}{48}
\tileor{9}{7}{48}
\tiley{9}{8}{48}
\tiley{9}{9}{48}
\tiley{8}{9}{48}
\tiley{7}{9}{48}
\tiley{6}{9}{48}
\tiley{5}{9}{48}
\tiley{4}{9}{48}
\tiley{3}{9}{48}
\tiley{2}{9}{48}
\tiley{2}{10}{48}
\tiley{2}{11}{48}
\tiley{2}{12}{48}
\tiley{2}{13}{48}
\tiley{2}{14}{48}
\tiley{2}{15}{48}
\tiley{3}{15}{48}
\tiley{4}{15}{48}
\tiley{5}{15}{48}
\tiley{6}{15}{48}
\tiley{7}{15}{48}
\tiley{8}{15}{48}
\tiley{9}{15}{48}
\tiley{10}{15}{48}
\tiley{11}{15}{48}
\tiley{12}{15}{48}
\tiley{13}{15}{48}
\tiley{13}{14}{48}
\tiley{13}{13}{48}
\tiley{13}{12}{48}
\tiley{14}{12}{48}
\tileb{15}{12}{48}
\tileb{16}{12}{48}
\tileb{17}{12}{48}
\tileb{18}{12}{48}
\tileb{18}{13}{48}
\tileb{18}{14}{48}
\tileb{19}{14}{48}
\tileb{20}{14}{48}
\tileb{20}{15}{48}
\tileb{20}{16}{48}
\tileb{20}{17}{48}
\tileb{20}{18}{48}
\tileb{20}{19}{48}
\tileb{19}{19}{48}
\tileb{18}{19}{48}
\tileb{17}{19}{48}
\tileb{16}{19}{48}
\tileb{15}{19}{48}
\tileb{14}{19}{48}
\tileb{13}{19}{48}

\tiler{10}{7}{48}
\tiler{11}{7}{48}
\tiler{11}{8}{48}
\tiler{11}{9}{48}
\tiler{11}{10}{48}
\tiler{11}{11}{48}
\tiler{11}{12}{48}
\tiler{10}{12}{48}
\tiler{9}{12}{48}
\tiler{8}{12}{48}
\tiler{7}{12}{48}
\tiler{6}{12}{48}
\tiler{5}{12}{48}
\tiler{4}{12}{48}

\path [dotted, draw, thin] (0,0) grid[step=0.22cm] (26,23);

\draw[dashed] (6,0) -| (6,5.5);
\fill (5.5,5.5) circle (0.16);
\node (D) at (5.5,6.9) {$i$};
\node (D) at (4.8,1.5) {$l^i$};
\draw[dashed] (13.5,19.5) -| (13.5,23);
\fill (13.5,19.5) circle (0.16);
\node (D) at (13.5,18) {\tiny $|P|-1$};
\node (D) at (12.2,21.5) {$l^P$};

\draw[dashed] (15,0) -| (15,12.5);
\fill (14.5,12.5) circle (0.16);
\node (D) at (14.5,11.2) {$j$};
\node (D) at (14,1.5) {$l^j$};

\draw[->,thick] (13.5,19.5) -- (4.5,12.5);

\node (D) at (10,1.7) {$\mathcal{C^+}$};
\node (D) at (24,1.7) {$\mathcal{C^-}$};

\end{tikzpicture}

\center (b) 
\end{minipage}

\caption{Proof of Lemma \ref{Uturn:main} Part (1/2) (a) a path $P$ satisfying the hypothesis of Lemma \ref{Uturn:main}, only $P_{i,\ldots,|P|-1}$ is represented, $P_{i+1,\ldots,j}$ is in yellow while $P_{j+1,\ldots,|P|-1}$ is in blue. The workspace $\mathcal{C}$ (in blue) is cut in two areas  $\mathcal{C^-}$ and  $\mathcal{C^+}$ by $l^j$. (b) $P_{j+1,\ldots,|P|-1}-\vect{P_iP_j}$ (in orange and red) is assembled at the end of $P_{0,\ldots,i}$. The orange tiles  are in common between $P_{j+1,\ldots,|P|-1}-\vect{P_iP_j}$ and $P_{i+1,\ldots,|P|-1}$.  $P_{j+1,\ldots,|P|-1}-\vect{P_iP_j}$ is inside $\mathcal{C}$ and thus its westernmost tile (the last one in this example) is also inside $\mathcal{C}$.}
\label{fig:Uturn:techincal}
\end{figure}

Indeed, if $P_{j+1,\ldots,|P|-1}-\vect{P_iP_j}$ is a prefix of $P_{i+1,\ldots,|P|-1}$, then the result is true. Otherwise, if $P_{j+1,\ldots,|P|-1}-\vect{P_iP_j}$ turns right of $P_{i+1,\ldots,|P|-1}$ then the end of $P_{j+1,\ldots,|P|-1}-\vect{P_iP_j}$ cannot intersect with $l^i$ (since $P_{j+1,\ldots,|P|-1}$ does not intersect with $l^j$ due to the visibility of $\glueP{j}{j+1}$), $P_{i+1,\ldots,|P|-1}$ or $l^P$ (because of the priority of $P$), see Figure \ref{fig:Uturn:techincal}(b). Thus, $P_{j+1,\ldots,|P|-1}-\vect{P_iP_j}$ is inside $\mathcal{C}$ and the result is true. 

For the sake of contradiction, the last remaining case is $P_{i+1,\ldots,|P|-1}+\vect{P_iP_j}$ turning right of $P_{j+1,\ldots,|P|-1}$, see Figure \ref{fig:Uturn:techincalPart2}. By the same arguments, $P_{i+1,\ldots,|P|-1}+\vect{P_iP_j}$ is inside $\mathcal{C}^-$ since this path cannot intersect $l^j$ (since $P_{i+1,\ldots,|P|-1}$ does not intersect with $l^i$ due to the visibility of $\glueP{i}{i+1}$ )  and $P_{j+1,\ldots,|P|-1}$ or $l^P$ (because of the priority of $P$).  Then $P_{i+1,\ldots,|P|-1}+\vect{P_iP_j}$ is inside $\mathcal{C}^-$
(and in particular, $P_{i+1,\ldots,j}+\vect{P_iP_j}$ is inside $\mathcal{C}^-$). By definition $P_{i+1,\ldots,j}$ is inside $\mathcal{C}^+$. Then $P_{i+1,\ldots,j}$ and $P_{i+1,\ldots,j}+\vect{P_iP_j}$ do not intersect. By a folklore lemma (see \cite{STOC2020}, for example), $Q=\bigcup_{k\geq0} (P_{i+1,\ldots,j}+k\vect{P_iP_j})$ is an infinite self-avoiding path. If $Q$ is inside $\mathcal{C}$ then $P_{0,\ldots,i}Q$ is producible which contradicts the fact that the terminal assembly $\uniterm$ is finite. Then, the path $Q$ must leave $\mathcal{C}$. Nevertheless, $Q$ cannot cross trough $P_{j+1,\ldots,|P|-1}$ and $l^P$ (due to the priority of $P$) and if it crosses $l^j$ then consider the smallest $k > 0$ such that $P_{i+1,\ldots,j+1}+k\vect{P_iP_j}$ crosses $l^j$. If $k=1$ then $P_{i+1,\ldots,j+1}$ intersects $l^i$ which is a contradiction (since $\glueP{i}{i+1}$ is visible in $P$) otherwise $P_{i+1,\ldots,j+1}+(k-1)\vect{P_iP_j}$ intersects $l^i$. Nevertheless, reaching $\mathcal{C}^+$ from $\mathcal{C}^-$ requires to cross $l^j$ which contradicts the definition of $k$.

\begin{figure}
\center
\begin{minipage}{0.47\linewidth}
\begin{tikzpicture}[x=0.22cm,y=0.22cm]

\fill[fill=blue!30!white, draw opacity=0.8] (6,5.5) -| (9.5,9.5) -| (13.5,12.5) -| (18.5,19.5) -| (13.5,23) -| (26,0) -| (6,5.5) ;
\fill[fill=blue!50!white, draw opacity=0.8] (6,5.5) -| (9.5,9.5) -| (13.5,12.5) -| (15,0) -| (6,5.5);

\draw[very thick] (5.5,5.5) -| (9.5,9.5) -| (13.5,12.5) -| (18.5,19.5) -| (13.5,19.5);
\draw[very thick] (9.5,9.5) |- (4.5,12.5);


\tileg{5}{5}{85}

\tileor{6}{5}{48}
\tileor{7}{5}{48}
\tileor{8}{5}{48}
\tileor{9}{5}{48}
\tileor{9}{6}{48}
\tileor{9}{7}{48}
\tileor{9}{8}{48}
\tileor{9}{9}{48}
\tiley{10}{9}{48}
\tiley{11}{9}{48}
\tiley{12}{9}{48}
\tiley{13}{9}{48}
\tiley{13}{10}{48}
\tiley{13}{11}{48}
\tiley{13}{12}{48}
\tiley{14}{12}{48}
\tileb{15}{12}{48}
\tileb{16}{12}{48}
\tileb{17}{12}{48}
\tileb{18}{12}{48}
\tileb{18}{13}{48}
\tileb{18}{14}{48}
\tileb{18}{15}{48}
\tileb{18}{16}{48}
\tileb{18}{17}{48}
\tileb{18}{18}{48}
\tileb{18}{19}{48}
\tileb{17}{19}{48}
\tileb{16}{19}{48}
\tileb{15}{19}{48}
\tileb{14}{19}{48}
\tileb{13}{19}{48}

\tiler{9}{10}{48}
\tiler{9}{11}{48}
\tiler{9}{12}{48}
\tiler{8}{12}{48}
\tiler{7}{12}{48}
\tiler{6}{12}{48}
\tiler{5}{12}{48}
\tiler{4}{12}{48}

\path [dotted, draw, thin] (0,0) grid[step=0.22cm] (26,23);

\draw[dashed] (6,0) -| (6,5.5);
\fill (5.5,5.5) circle (0.16);
\node (D) at (5.5,6.9) {$i$};
\node (D) at (4.8,1.5) {$l^i$};
\draw[dashed] (13.5,19.5) -| (13.5,23);
\fill (13.5,19.5) circle (0.16);
\node (D) at (13.5,18) {\tiny $|P|-1$};
\draw[dashed] (15,0) -| (15,12.5);
\fill (14.5,12.5) circle (0.16);
\node (D) at (14.5,14) {$j$};
\node (D) at (14,1.5) {$l^j$};
\node (D) at (12.2,21.5) {$l^P$};

\node (D) at (10,1.7) {$\mathcal{C^+}$};
\node (D) at (24,1.7) {$\mathcal{C^-}$};

\draw[->, thick] (13.5,19.5) -- (4.5,12.5);

\end{tikzpicture}
\center (a) 
\end{minipage}
\begin{minipage}{0.47\linewidth}
\begin{tikzpicture}[x=0.22cm,y=0.22cm]

\fill[fill=blue!30!white, draw opacity=0.8] (6,5.5) -| (9.5,9.5) -| (13.5,12.5) -| (18.5,19.5) -| (13.5,23) -| (26,0) -| (6,5.5) ;
\fill[fill=blue!50!white, draw opacity=0.8] (6,5.5) -| (9.5,9.5) -| (13.5,12.5) -| (15,0) -| (6,5.5);

\draw[very thick] (5.5,5.5) -| (9.5,9.5) -| (13.5,12.5) -| (18.5,19.5) -| (13.5,19.5);
\draw[very thick] (18.5,16.5) -| (22.5,19.5) -| (23.5,19.5);


\tileg{5}{5}{85}

\tiley{6}{5}{48}
\tiley{7}{5}{48}
\tiley{8}{5}{48}
\tiley{9}{5}{48}
\tiley{9}{6}{48}
\tiley{9}{7}{48}
\tiley{9}{8}{48}
\tiley{9}{9}{48}
\tiley{10}{9}{48}
\tiley{11}{9}{48}
\tiley{12}{9}{48}
\tiley{13}{9}{48}
\tiley{13}{10}{48}
\tiley{13}{11}{48}
\tiley{13}{12}{48}
\tiley{14}{12}{48}
\tileor{15}{12}{48}
\tileor{16}{12}{48}
\tileor{17}{12}{48}
\tileor{18}{12}{48}
\tileor{18}{13}{48}
\tileor{18}{14}{48}
\tileor{18}{15}{48}
\tileor{18}{16}{48}
\tileb{18}{17}{48}
\tileb{18}{18}{48}
\tileb{18}{19}{48}
\tileb{17}{19}{48}
\tileb{16}{19}{48}
\tileb{15}{19}{48}
\tileb{14}{19}{48}
\tileb{13}{19}{48}

\tiler{19}{16}{48}
\tiler{20}{16}{48}
\tiler{21}{16}{48}
\tiler{22}{16}{48}
\tiler{22}{17}{48}
\tiler{22}{18}{48}
\tiler{22}{19}{48}
\tiler{23}{19}{48}

\path [dotted, draw, thin] (0,0) grid[step=0.22cm] (26,23);

\draw[dashed] (6,0) -| (6,5.5);
\fill (5.5,5.5) circle (0.16);
\node (D) at (5.5,6.9) {$i$};
\node (D) at (4.8,1.5) {$l^i$};
\draw[dashed] (13.5,19.5) -| (13.5,23);
\fill (13.5,19.5) circle (0.16);
\node (D) at (13.5,18) {\tiny $|P|-1$};
\draw[dashed] (15,0) -| (15,12.5);
\fill (14.5,12.5) circle (0.16);
\node (D) at (14.5,14) {$j$};
\node (D) at (14,1.5) {$l^j$};
\node (D) at (12.2,21.5) {$l^P$};

\node (D) at (10,1.7) {$\mathcal{C^+}$};
\node (D) at (24,1.7) {$\mathcal{C^-}$};

\draw[->, thick] (14.5,12.5) -- (23.5,19.5);

\end{tikzpicture}
\center (b) 
\end{minipage}

\vspace{+0.4em}
\begin{tikzpicture}[x=0.22cm,y=0.22cm]

\fill[fill=blue!30!white, draw opacity=0.8] (6,5.5) -| (9.5,9.5) -| (13.5,12.5) -| (18.5,19.5) -| (13.5,25) -| (31,0) -| (6,5.5) ;
\fill[fill=blue!50!white, draw opacity=0.8] (6,5.5) -| (9.5,9.5) -| (13.5,12.5) -| (15,0) -| (6,5.5);

\draw[very thick] (5.5,5.5) -| (9.5,9.5) -| (13.5,12.5) -| (18.5,19.5) -| (13.5,19.5);
\draw[very thick] (18.5,16.5) -| (22.5,19.5) -| (27.5,23.5) -| (31,23.5);


\tileg{5}{5}{85}

\tiley{6}{5}{48}
\tiley{7}{5}{48}
\tiley{8}{5}{48}
\tiley{9}{5}{48}
\tiley{9}{6}{48}
\tiley{9}{7}{48}
\tiley{9}{8}{48}
\tiley{9}{9}{48}
\tiley{10}{9}{48}
\tiley{11}{9}{48}
\tiley{12}{9}{48}
\tiley{13}{9}{48}
\tiley{13}{10}{48}
\tiley{13}{11}{48}
\tiley{13}{12}{48}
\tiley{14}{12}{48}
\tileor{15}{12}{48}
\tileor{16}{12}{48}
\tileor{17}{12}{48}
\tileor{18}{12}{48}
\tileor{18}{13}{48}
\tileor{18}{14}{48}
\tileor{18}{15}{48}
\tileor{18}{16}{48}
\tileb{18}{17}{48}
\tileb{18}{18}{48}
\tileb{18}{19}{48}
\tileb{17}{19}{48}
\tileb{16}{19}{48}
\tileb{15}{19}{48}
\tileb{14}{19}{48}
\tileb{13}{19}{48}

\tiler{19}{16}{48}
\tiler{20}{16}{48}
\tiler{21}{16}{48}
\tiler{22}{16}{48}
\tiler{22}{17}{48}
\tiler{22}{18}{48}
\tiler{22}{19}{48}
\tiler{23}{19}{48}

\tiler{24}{19}{48}
\tiler{25}{19}{48}
\tiler{26}{19}{48}
\tiler{27}{19}{48}
\tiler{27}{20}{48}
\tiler{27}{21}{48}
\tiler{27}{22}{48}
\tiler{27}{23}{48}
\tiler{28}{23}{48}
\tiler{29}{23}{48}
\tiler{30}{23}{48}

\path [dotted, draw, thin] (0,0) grid[step=0.22cm] (31,25);

\draw[dashed] (6,0) -| (6,5.5);
\fill (5.5,5.5) circle (0.16);
\node (D) at (5.5,6.9) {$i$};
\node (D) at (4.8,1.5) {$l^i$};
\draw[dashed] (13.5,19.5) -| (13.5,25);
\fill (13.5,19.5) circle (0.16);
\node (D) at (13.5,18) {\tiny $|P|-1$};
\draw[dashed] (15,0) -| (15,12.5);
\fill (14.5,12.5) circle (0.16);
\node (D) at (14.5,14) {$j$};
\node (D) at (14,1.5) {$l^j$};
\node (D) at (12.2,21.5) {$l^P$};

\node (D) at (10,1.7) {$\mathcal{C^+}$};
\node (D) at (28,1.7) {$\mathcal{C^-}$};

\draw[->, thick] (14.5,12.5) -- (23.5,19.5);

\end{tikzpicture}

(c)

\caption{Proof of Lemma \ref{Uturn:main} Part (2/2). (a) The case where $P_{j+1,\ldots,|P|-1}-\vect{P_iP_j}$ turns left of $P_{i+1,\ldots,|P|-1}$. (b) Then, $P_{i+1,\ldots,j}+\vect{P_iP_j}$ is in $\mathcal{C^-}$ and does not intersect $P_{i+1,\ldots,j}$ which is inside $\mathcal{C^+}$. (c) It is possible to assemble an infinite path which is a contradiction.}
\label{fig:Uturn:techincalPart2}
\end{figure}

\end{proof}


When the hypotheses of Lemma \ref{Uturn:main} are met, a path must create an area  near its beginning in order to allow a copy of its end to grow inside this area. In fact this lemma will be used to show that U-turn are impossible because there is not enough space to create this area, see Lemma \ref{Uturn:LinearBound}. We illustrate how to use Lemma \ref{Uturn:main} in a simple example to solve Lock \ref{lem:glue:prop3}  (see Lemma 4.2 of \cite{STOC2020}). Also note that there exists variant of Lemma \ref{Uturn:main} when the glues are visible from the north or when the last tile of $P$ is visible from the south.

\begin{lemma}
\label{Uturn:glueWest}
Consider a producible path $P$ whose last tile is the easternmost one and $0\leq i \leq |P|-1$ such that $\gluePi$ is visible from the south on glue column $\Bone$, then  $\gluePi$ points east.
\end{lemma}

\begin{proof}
For the sake of contradiction, suppose that there exists a path $P$ such that its last tile is the easternmost one and there exists $0\leq i \leq |P|-1$ such that $\gluePi$ is visible from the south on glue column $\Bone$ and points west, see Figure \ref{fig:Proof:CorollaryOne}. Then consider the path $Q$ such that $Q$ is the leftmost priority path such that $P_{0,\ldots,i+1}$ is a prefix of $Q$, $\glueQ{i}{i+1}$ is visible from the south (and thus points west) and $Q_{|Q|-1}$ is the easternmost tile of $Q$ and is on column $e_P$. Then by the pigeonhole principle, there exist $0\leq j \leq |Q|-1$ and $0\leq k \leq |Q|-1$ such that $\glueQ{j}{j+1}$ and $\glueQ{k}{k+1}$ have the same type and they both are visible from the south on glue columns $c'$ and $c$ respectively with $e_\sigma+0.5 < c < c' \leq \Bone$. Since $\glueQ{i}{i+1}$ points west then both of these glues points west by Lemma \ref{lem:glue:prop2}, and we have $i\leq j < k$ by Lemma \ref{lem:glue:prop4}. Then we can apply Lemma \ref{Uturn:main} on $Q$ with the indices $j$ and $k$ to obtain $e_{Q_{j,\ldots,|Q|-1}}\geq e_{Q_{k,\ldots,|Q|-1}}+(c'-c)$. Since $e_Q=x_{Q_{|Q|-1}}$ and $c'>c$, there is a contradiction since the previous equation implies $e_{Q_{j,\ldots,|Q|-1}}>e_Q$.

\begin{figure}
\center
\begin{minipage}{0.47\linewidth}
\begin{tikzpicture}[x=0.22cm,y=0.22cm]

\fill[fill=blue!30!white, draw opacity=0.8] (14,7.5) -| (4.5,13.5) -| (22.5,21) -| (0,0) -| (14,7.5);

\draw[very thick] (7.5,10.5) -| (15.5,7.5) -| (4.5,13.5) -| (22.5,13.5);

\tiles{7}{10}{85}
\tileg{8}{10}{85}
\tileg{9}{10}{85}
\tileg{10}{10}{85}
\tileg{11}{10}{85}
\tileg{11}{10}{85}
\tileg{12}{10}{85}
\tileg{13}{10}{85}
\tileg{14}{10}{85}
\tileg{15}{10}{85}
\tileg{15}{9}{85}
\tileg{15}{8}{85}
\tileg{15}{7}{85}
\tileg{14}{7}{85}
\tiley{13}{7}{85}
\tiley{12}{7}{85}
\tiley{11}{7}{85}
\tiley{10}{7}{85}
\tiley{9}{7}{85}
\tiley{8}{7}{85}
\tiley{7}{7}{85}
\tiley{6}{7}{85}
\tiley{5}{7}{85}
\tiley{4}{7}{85}
\tiley{4}{8}{85}
\tiley{4}{9}{85}
\tiley{4}{10}{85}
\tiley{4}{11}{85}
\tiley{4}{12}{85}
\tiley{4}{13}{85}
\tiley{5}{13}{85}
\tiley{6}{13}{85}
\tiley{7}{13}{85}
\tiley{8}{13}{85}
\tiley{9}{13}{85}
\tiley{10}{13}{85}
\tiley{11}{13}{85}
\tiley{12}{13}{85}
\tiley{13}{13}{85}
\tiley{14}{13}{85}
\tiley{15}{13}{85}
\tiley{16}{13}{85}
\tiley{17}{13}{85}
\tiley{18}{13}{85}
\tiley{19}{13}{85}
\tiley{20}{13}{85}
\tiley{21}{13}{85}
\tiley{22}{13}{85}


\path [dotted, draw, thin] (0,0) grid[step=0.22cm] (26,21);

\draw[dashed] (14,0) -| (14,7.5);
\fill (14.5,7.5) circle (0.16);
\node (D) at (14.5,6.2) {$i$};
\node (D) at (14.8,1.5) {$l^i$};

\draw[dashed] (22.5,13.5) -| (22.5,21);
\fill (22.5,13.5) circle (0.16);
\node (D) at (22.5,12) {\tiny $|P|-1$};
\node (D) at (24.2,15.5) {$l^P$};

\node (D) at (14,-1) {$\Bone$};
\node (D) at (22.5,22) {$e_\sigma$};

\node (D) at (2,19.7) {$\mathcal{C}$};
\end{tikzpicture}

\center (a) 
\end{minipage}
\begin{minipage}{0.47\linewidth}
\begin{tikzpicture}[x=0.22cm,y=0.22cm]

\fill[fill=blue!30!white, draw opacity=0.8] (14,7.5) -| (4.5,13.5) -| (22.5,21) -| (0,0) -| (14,7.5);

\draw[very thick] (7.5,10.5) -| (15.5,7.5) -| (4.5,13.5) -| (22.5,13.5);
\draw[very thick] (10.5,7.5) |- (2.5,4.5) |- (4.5,12.5);
\draw[very thick] (10.5,13.5) |- (15.5,15.5) |- (22.5,16.5);

\tiles{7}{10}{85}
\tileg{8}{10}{85}
\tileg{9}{10}{85}
\tileg{10}{10}{85}
\tileg{11}{10}{85}
\tileg{11}{10}{85}
\tileg{12}{10}{85}
\tileg{13}{10}{85}
\tileg{14}{10}{85}
\tileg{15}{10}{85}
\tileg{15}{9}{85}
\tileg{15}{8}{85}
\tileg{15}{7}{85}
\tileg{14}{7}{85}

\tileor{13}{7}{85}
\tileor{12}{7}{85}
\tileor{11}{7}{85}
\tileor{10}{7}{85}
\tiley{9}{7}{85}
\tiley{8}{7}{85}
\tiley{7}{7}{85}
\tiley{6}{7}{85}
\tiley{5}{7}{85}
\tiley{4}{7}{85}
\tiley{4}{8}{85}
\tiley{4}{9}{85}
\tiley{4}{10}{85}
\tiley{4}{11}{85}
\tileor{4}{12}{85}
\tileor{4}{13}{85}
\tileor{5}{13}{85}
\tileor{6}{13}{85}
\tileor{7}{13}{85}
\tileor{8}{13}{85}
\tileor{9}{13}{85}
\tileor{10}{13}{85}
\tiley{11}{13}{85}
\tiley{12}{13}{85}
\tiley{13}{13}{85}
\tiley{14}{13}{85}
\tiley{15}{13}{85}
\tiley{16}{13}{85}
\tiley{17}{13}{85}
\tiley{18}{13}{85}
\tiley{19}{13}{85}
\tiley{20}{13}{85}
\tiley{21}{13}{85}
\tiley{22}{13}{85}

\tiler{10}{6}{85}
\tiler{10}{5}{85}
\tiler{10}{4}{85}
\tiler{9}{4}{85}
\tiler{8}{4}{85}
\tiler{7}{4}{85}
\tiler{6}{4}{85}
\tiler{5}{4}{85}
\tiler{4}{4}{85}
\tiler{3}{4}{85}
\tiler{2}{4}{85}
\tiler{2}{5}{85}
\tiler{2}{6}{85}
\tiler{2}{7}{85}
\tiler{2}{8}{85}
\tiler{2}{9}{85}
\tiler{2}{10}{85}
\tiler{2}{11}{85}
\tiler{2}{12}{85}
\tiler{3}{12}{85}
\tiler{10}{14}{85}
\tiler{10}{15}{85}
\tiler{11}{15}{85}
\tiler{12}{15}{85}
\tiler{13}{15}{85}
\tiler{14}{15}{85}
\tiler{15}{15}{85}
\tiler{15}{16}{85}
\tiler{16}{16}{85}
\tiler{17}{16}{85}
\tiler{18}{16}{85}
\tiler{19}{16}{85}
\tiler{20}{16}{85}
\tiler{21}{16}{85}
\tiler{22}{16}{85}


\path [dotted, draw, thin] (0,0) grid[step=0.22cm] (26,21);

\draw[dashed] (9,0) -| (9,4.5);
\draw[<->, thick] (9,2.5) -- (14,2.5);
\node (D) at (11.5,1.5) {\tiny $|T|+1$};

\draw[dashed] (14,0) -| (14,7.5);
\fill (14.5,7.5) circle (0.16);
\node (D) at (14.5,6.2) {$i$};
\node (D) at (14.8,1.5) {$l^i$};

\fill (11.5,7.5) circle (0.16);
\node (D) at (11.5,8.9) {$j$};

\fill (9.5,4.5) circle (0.16);
\node (D) at (9.5,5.9) {$k$};

\draw[->, thick] (22.5,16.5) -- (24.5,19.5);
\fill (24.5,19.5) circle (0.16);

\draw[dashed] (22.5,13.5) -| (22.5,21);
\fill (22.5,13.5) circle (0.16);
\node (D) at (22.5,12) {\tiny $|P|-1$};
\node (D) at (24.2,15.5) {$l^P$};

\node (D) at (2,19.7) {$\mathcal{C}$};

\node (D) at (14,-1) {$\Bone$};
\node (D) at (22.5,22) {$e_\sigma$};

\end{tikzpicture}

\center (b) 
\end{minipage}

\caption{Proof of Lemma \ref{Uturn:glueWest} (a) A path whose $\glueP{i}{i+1}$ is visible from the south on column $\Bone$ and pointing west. The workspace $\mathcal{C}$ is in blue. (b) $P_{i+1,\ldots,|P|-1}$ is replaced to obtain the leftmost priority path $Q$ (in green, red and orange). Between columns $e_\sigma+1.5$ and $\Bone$, we look for two glues of the same type, $\glueQ{j}{j+1}$ and $\glueQ{k}{k+1}$, which are both visible from the south and pointing west. Applying Lemma \ref{Uturn:main} on $j$ and $k$ leads to a contradiction: the tile $Q_{|Q|-1}-\vect{P_jP_k}$ should be a tile of $\uniterm$ but it is on a column east of $e_\sigma$.}
\label{fig:Proof:CorollaryOne}
\end{figure}

\end{proof}


Lock \ref{lock:boundZero} is solved by the following result where Lemma \ref{Uturn:main} is applied two times.

\begin{lemma}
\label{Uturn:boundWest}
If $P$ is an extremal path, we have $w_P\geq w_{\sigma}-2|T|-1$. 
\end{lemma}

\begin{proof}
For the sake of contradiction, suppose that there exists an extremal path $P$ such that $w_P< w_{\sigma}-2|T|-1$, see Figure \ref{fig:Proof:CorollaryTwo}. Then consider $j=\min\{0\leq i \leq |P|-1 : x_{P_i}=w_P\}$ and let $Q=P_{0,\ldots,j}$ be the shortest prefix of $P$ with a tile on column $w_P$. Now consider $0 \leq s \leq |Q|-1$ and $0 \leq n \leq |Q|-1$ such that $\glueQ{s}{s+1}$ and $\glueQ{n}{n+1}$ are visible from the south and from the north respectively in $Q$ on glue column $w_\sigma-1.5-|T|$. With the same reasoning as the one of proof of Lemma \ref{Uturn:glueWest}, both of these glue points east, see Figure \ref{fig:Proof:CorollaryTwo}(a). Now consider $k=\min\{j<i\leq |P|-1 : e_P=\max\{e_Q+1,e_\sigma+1\}\}$ and let $R=P_{0,\ldots,k}$ be the shortest prefix of $P$ with $Q$ as a prefix and whose last tile is the easternmost one. In this case, if $y_{R_{|Q|}}=y_{R_{|Q|-1}}+1$ (resp. $y_{R_{|Q|}}=y_{R_{|Q|-1}}-1$) then $\glueR{s}{s+1}$ (resp. $\glueR{n}{n+1}$) is visible from the south (resp. north) in $R$ and is still pointing east. Again, we obtain a contradiction with the same reasoning, see Figure \ref{fig:Proof:CorollaryTwo}(b).

\begin{figure}
\center
\begin{minipage}{0.47\linewidth}
\begin{tikzpicture}[x=0.22cm,y=0.22cm]


\draw[very thick] (15.5,10.5) -| (20.5,14.5) -| (13.5,9.5) -| (8.5,11.5) -| (11.5,13.5) -| (3.5,13.5);

\tiles{15}{10}{85}

\tileg{16}{10}{85}
\tileg{17}{10}{85}
\tileg{18}{10}{85}
\tileg{19}{10}{85}
\tileg{20}{10}{85}
\tileg{20}{11}{85}
\tileg{20}{12}{85}
\tileg{20}{13}{85}
\tileg{20}{14}{85}
\tileg{19}{14}{85}
\tileg{18}{14}{85}
\tileg{17}{14}{85}
\tileg{16}{14}{85}
\tileg{15}{14}{85}
\tileg{14}{14}{85}
\tileg{13}{14}{85}
\tileg{13}{13}{85}
\tileg{13}{12}{85}
\tileg{13}{11}{85}
\tileg{13}{10}{85}
\tileg{13}{9}{85}
\tileg{12}{9}{85}
\tileg{11}{9}{85}
\tileg{10}{9}{85}
\tileg{9}{9}{85}
\tileg{8}{9}{85}
\tileg{8}{10}{85}
\tileg{8}{11}{85}
\tileg{9}{11}{85}
\tileg{10}{11}{85}
\tileg{11}{11}{85}
\tileg{11}{12}{85}
\tileg{11}{13}{85}

\tileg{10}{13}{85}
\tileg{9}{13}{85}
\tileg{8}{13}{85}
\tileg{7}{13}{85}
\tileg{6}{13}{85}
\tileg{5}{13}{85}
\tileg{4}{13}{85}
\tileg{3}{13}{85}



\path [dotted, draw, thin] (0,0) grid[step=0.22cm] (26,21);

\fill (3.5,13.5) circle (0.16);
\node (D) at (3.5,14.7) {$j$};

\draw[dashed] (10,0) -| (10,9.5);
\fill (10.5,9.5) circle (0.16);
\node (D) at (10.5,8.2) {$s$};
\node (D) at (10.8,1.5) {$l^s$};

\draw[dashed] (10,13.5) -| (10,21);
\fill (10.5,13.5) circle (0.16);
\node (D) at (10.7,14.7) {$n$};
\node (D) at (11,19.5) {$l^n$};

\draw[dashed] (5.5,0) -| (5.5,21);
\node (D) at (5.5,-1) {$\Bzero$};

\draw[dashed] (14,0) -| (14,21);
\node (D) at (15,-1) {$w_\sigma-1.5$};

\draw[<->, thick] (10,4.5) -- (14,4.5);
\node (D) at (12,5) {\tiny $|T|+1$};

\end{tikzpicture}

\center (a) 
\end{minipage}
\begin{minipage}{0.47\linewidth}
\begin{tikzpicture}[x=0.22cm,y=0.22cm]


\draw[very thick] (15.5,10.5) -| (20.5,14.5) -| (13.5,9.5) -| (8.5,11.5) -| (11.5,13.5) -| (3.5,6.5) -| (21.5,6.5);

\tiles{15}{10}{85}

\tileg{16}{10}{85}
\tileg{17}{10}{85}
\tileg{18}{10}{85}
\tileg{19}{10}{85}
\tileg{20}{10}{85}
\tileg{20}{11}{85}
\tileg{20}{12}{85}
\tileg{20}{13}{85}
\tileg{20}{14}{85}
\tileg{19}{14}{85}
\tileg{18}{14}{85}
\tileg{17}{14}{85}
\tileg{16}{14}{85}
\tileg{15}{14}{85}
\tileg{14}{14}{85}
\tileg{13}{14}{85}
\tileg{13}{13}{85}
\tileg{13}{12}{85}
\tileg{13}{11}{85}
\tileg{13}{10}{85}
\tileg{13}{9}{85}
\tileg{12}{9}{85}
\tileg{11}{9}{85}
\tileg{10}{9}{85}
\tileg{9}{9}{85}
\tileg{8}{9}{85}
\tileg{8}{10}{85}
\tileg{8}{11}{85}
\tileg{9}{11}{85}
\tileg{10}{11}{85}
\tileg{11}{11}{85}
\tileg{11}{12}{85}
\tileg{11}{13}{85}

\tileg{10}{13}{85}
\tileg{9}{13}{85}
\tileg{8}{13}{85}
\tileg{7}{13}{85}
\tileg{6}{13}{85}
\tileg{5}{13}{85}
\tileg{4}{13}{85}
\tileg{3}{13}{85}

\tiley{3}{12}{85}
\tiley{3}{11}{85}
\tiley{3}{10}{85}
\tiley{3}{9}{85}
\tiley{3}{8}{85}
\tiley{3}{7}{85}
\tiley{3}{6}{85}
\tiley{4}{6}{85}
\tiley{5}{6}{85}
\tiley{6}{6}{85}
\tiley{7}{6}{85}
\tiley{8}{6}{85}
\tiley{9}{6}{85}
\tiley{10}{6}{85}
\tiley{11}{6}{85}
\tiley{12}{6}{85}
\tiley{13}{6}{85}
\tiley{14}{6}{85}
\tiley{15}{6}{85}
\tiley{16}{6}{85}
\tiley{17}{6}{85}
\tiley{18}{6}{85}
\tiley{19}{6}{85}
\tiley{20}{6}{85}
\tiley{21}{6}{85}


\path [dotted, draw, thin] (0,0) grid[step=0.22cm] (26,21);

\fill (3.5,13.5) circle (0.16);
\node (D) at (3.5,14.7) {$j$};

\draw[dashed] (21.5,0) -| (21.5,6.5);
\fill (21.5,6.5) circle (0.16);
\node (D) at (22.5,6.5) {$k$};

\fill (10.5,9.5) circle (0.16);
\node (D) at (10.5,8.2) {$s$};

\draw[dashed] (10,13.5) -| (10,21);
\fill (10.5,13.5) circle (0.16);
\node (D) at (10.7,14.7) {$n$};
\node (D) at (11,19.5) {$l^n$};

\draw[dashed] (5.5,0) -| (5.5,21);
\node (D) at (5.5,-1) {$\Bzero$};

\draw[<->, thick] (6,16.5) -- (10,16.5);
\node (D) at (8,17) {\tiny $|T|+1$};

\end{tikzpicture}

\center (b) 
\end{minipage}

\caption{Proof of Lemma \ref{Uturn:boundWest} (a) The beginning $Q=P_{0,\ldots,j}$ of an extremal path $P$ in green. The tile $P_j$ is the first westernmost tile of $P$ and is on column $c < \Bzero$. The $\glueP{s}{s+1}$ and $\glueP{n}{n+1}$ are visible from the south and north respectively on glue column $w_\sigma-1,5-|T|$. Both of these glues must point west by the same reasoning as in the proof of Lemma \ref{Uturn:glueWest}, see Figure \ref{fig:Proof:CorollaryOne}. (b) The tile $P_k$ is the first tile of $P$ on column $c'=\max\{e_\sigma,e_{P_{0,\ldots,j}}\}+1$. We consider the case where $P_{j+1}$ is below $P_i$. Then, $P_{j+1,\ldots,k}$ crosses $l^s$ to reach column $c'$ but $\glueP{n}{n+1}$ is still visible from the north in $P_{0,\ldots,k}$. By the same reasoning we obtain a contradiction.}
\label{fig:Proof:CorollaryTwo}
\end{figure}

\end{proof}

\subsection{Linear bound for U-turn and values of bounds $\Bthree$ and $\Bfour$}
\label{sec:U-turn:linear}


In order to explain the properties of a shield, we give the definition of the bounds $\Bthree$ and $\Bfour$. Since these bounds will be used to analyze an extremal path which is canonical for a glue column $\Bone \leq c \leq \Btwo$, these two bounds depends of the value of $c$. Then, we explain how to improve the quadratic bound of \cite{STOC2020} for U-turn into a linear one.

\begin{definition}
\label{def:bound:threeandfour}
Consider a glue column $\Bone \leq c \leq \Btwo$, we define the functions: $$\Bthree(c)=2c-w_{\sigma}+3|T|+2 $$ and $$\Bfour(c)=c+3|\sigma|+24|T|+14$$ 
\end{definition}

\begin{fact}
\label{fact:true:Bthree}
Note that $\Bfinal$ is defined such that the maximum of $\Bfour$ is reached for $$\Bfour(\Btwo)=\Bfinal-0.5$$  
Also, for any $\Bone\leq c \leq c' \leq \Btwo$, we have $\Bfour(c)=\Bfour(c')+(c'-c)$.
\end{fact}

In the following result, we consider an area delimited by a visible glue and a visible tile. We supposed that the path does a U-turn between this glue and this tile: it grows far to the east before coming back to the west. Then, Lemma \ref{Uturn:main} is applied $|T|$ times to bring this U-turn close to the beginning the area.  

\begin{lemma}
\label{Uturn:LinearBound}
Consider a producible path $P$, a glue column $c_i$ and a column $c$. Suppose that there exists a $\glueP{i}{i+1}$ which is visible from the south (resp. north) and pointing east on glue column $c_i$ and that the last tile of $P$ is visible from the north (resp. south) on column $c$. Moreover, $P$ is the rightmost (resp. leftmost) priority path with $P_{0,\ldots,i+1}$ as a prefix, such that $\glueP{i}{i+1}$ is visible from the south (resp. north) and with its last tile visible from the north (resp. south) on column~$c$. Then, we have  $$e_P-c \leq  (c_i-w_{P_{i+1,\ldots,|P|-1}})+|T|+0.5.$$

\end{lemma}

\begin{proof}

Consider $P$, $0\leq i \leq |P|-1$, $c_i$ and $c$ which satisfies the hypothesis of the lemma, see Figure \ref{fig:Uturn:Linear}(a). 
Consider $0\leq k \leq |P|-1$ such that $\glueP{k}{k+1}$ is visible from the south on glue column $e_P-0.5$. Let $0 \leq j \leq |P|-1$ such that $\glueP{j}{j+1}$ has the same type as $\glueP{k}{k+1}$ and is the westernmost glue visible from the south in $P$ on glue column $c_1$ with $c \leq c_1 \leq e_P-0.5$, see Figure \ref{fig:Uturn:Linear}(b). By Lemmas \ref{lem:glue:prop2} and \ref{lem:glue:prop4}, we have $i \leq j \leq k $ and $\glueP{j}{j+1}$ and $\glueP{k}{k+1}$ are both pointing east. Then Lemma \ref{Uturn:main} can be applied on indices $j$ and $k$ (see Figure \ref{fig:Uturn:Linear2}) to obtain 
\begin{align}
w_{P_{j+1,\ldots,|P|-1}} & \leq w_{P_{k+1,\ldots,|P|-1}}+(c_1-(e_P-0.5))\\
& \leq c+(c_1-(e_P-0.5))
\end{align}

\begin{figure}
\center
\begin{minipage}{0.47\linewidth}
\begin{tikzpicture}[x=0.22cm,y=0.22cm]

\fill[fill=blue!30!white, draw opacity=0.8] (6,5.5) -| (11.5,7.5) -| (2.5,15.5) -| (17.5,17.5) -| (8.5,21.5) -| (19.5,19.5) -| (21.5,23.5) -| (14.5,26) -| (26,0) -|  (6,5.5);

\draw[very thick] (5.5,5.5) -| (11.5,7.5) -| (2.5,15.5) -| (17.5,17.5) -| (8.5,21.5) -| (19.5,19.5) -| (21.5,23.5) -| (14.5,23.5);



\tilef{5}{5}{85}

\tileo{14}{23}{48}
\tileo{15}{23}{48}
\tileo{16}{23}{48}
\tileo{17}{23}{48}
\tileo{18}{23}{48}
\tileo{19}{23}{48}
\tileo{20}{23}{48}
\tileo{21}{23}{48}
\tileo{21}{22}{48}
\tileo{21}{21}{48}
\tileo{21}{20}{48}
\tileo{21}{19}{48}
\tileg{20}{19}{70}
\tileg{19}{19}{70}
\tileg{19}{20}{70}
\tileg{19}{21}{70}
\tileg{18}{21}{70}
\tileg{17}{21}{70}
\tileg{16}{21}{70}
\tileg{15}{21}{70}
\tileg{14}{21}{70}
\tileg{13}{21}{70}
\tileg{12}{21}{70}
\tileg{11}{21}{70}
\tileg{10}{21}{70}
\tileg{9}{21}{70}
\tileg{8}{21}{70}
\tileg{8}{20}{70}
\tileg{8}{19}{70}
\tileg{8}{18}{70}
\tileg{8}{17}{70}
\tileg{9}{17}{70}
\tileg{10}{17}{70}
\tileg{11}{17}{70}
\tileg{12}{17}{70}
\tileg{13}{17}{70}
\tileg{14}{17}{70}
\tileg{15}{17}{70}
\tileg{16}{17}{70}
\tileg{17}{17}{70}
\tileg{17}{16}{70}
\tileg{17}{15}{70}
\tileg{16}{15}{70}
\tileg{15}{15}{70}
\tileg{14}{15}{70}
\tileg{13}{15}{70}
\tileg{12}{15}{70}
\tileg{11}{15}{70}
\tileg{10}{15}{70}
\tileg{9}{15}{70}
\tileg{8}{15}{70}
\tileg{7}{15}{70}
\tileg{6}{15}{70}
\tileg{5}{15}{70}
\tileg{4}{15}{70}
\tileg{3}{15}{70}
\tileg{2}{15}{70}
\tileg{2}{14}{70}
\tileg{2}{13}{70}
\tileg{2}{12}{70}
\tileg{2}{11}{70}
\tileg{2}{10}{70}
\tileg{2}{9}{70}
\tileg{2}{8}{70}
\tileg{2}{7}{70}
\tileg{3}{7}{70}
\tileg{4}{7}{70}
\tileg{5}{7}{70}
\tileg{6}{7}{70}
\tileg{7}{7}{70}
\tileg{8}{7}{70}
\tileg{9}{7}{70}
\tileg{10}{7}{70}
\tileg{11}{7}{70}
\tileg{11}{6}{70}
\tileg{11}{5}{70}
\tileg{10}{5}{70}
\tileg{9}{5}{70}
\tileg{8}{5}{70}
\tileg{7}{5}{70}
\tileg{6}{5}{70}



\path [dotted, draw, thin] (0,0) grid[step=0.22cm] (26,26);

\draw[dashed] (6,0) -| (6,5.5);
\fill (5.5,5.5) circle (0.16);
\node (D) at (4,5.5) {$i$};
\node (D) at (4.8,1.5) {$l^i$};

\draw[dashed] (14.5,23.5) -| (14.5,26);
\fill (14.5,23.5) circle (0.16);
\node (D) at (11.5,23.5) {\tiny $|P|-1$};
\node (D) at (16,25) {$l^P$};

\node (D) at (14.5,27) {$c$};

\draw[dashed] (21,0) -| (21,19.5);
\fill (20.5,19.5) circle (0.16);
\fill[color=Olive] (21,19.5) circle (0.16);
\node (D) at (20.5,18.2) {$k$};
\node (D) at (20,4.5) {$l^k$};

\node (D) at (21,-1.5) {\tiny $e^P-0.5$};
\node (D) at (5.5,-1.5) {$c_i$};

\node (D) at (24,3.7) {$\mathcal{C}$};
\end{tikzpicture}

\center (a) 
\end{minipage}
\begin{minipage}{0.47\linewidth}
\begin{tikzpicture}[x=0.22cm,y=0.22cm]

\fill[fill=blue!30!white, draw opacity=0.8] (6,5.5) -| (11.5,7.5) -| (2.5,15.5) -| (17.5,17.5) -| (8.5,21.5) -| (19.5,19.5) -| (21.5,23.5) -| (14.5,26) -| (26,0) -|  (6,5.5);

\draw[very thick] (5.5,5.5) -| (11.5,7.5) -| (2.5,15.5) -| (17.5,17.5) -| (8.5,21.5) -| (19.5,19.5) -| (21.5,23.5) -| (14.5,23.5);



\tilef{5}{5}{85}

\tileb{14}{23}{48}
\tileb{15}{23}{48}
\tileb{16}{23}{48}
\tileb{17}{23}{48}
\tileb{18}{23}{48}
\tileb{19}{23}{48}
\tileb{20}{23}{48}
\tileb{21}{23}{48}
\tileb{21}{22}{48}
\tileb{21}{21}{48}
\tileb{21}{20}{48}
\tileb{21}{19}{48}
\tiley{20}{19}{70}
\tiley{19}{19}{70}
\tiley{19}{20}{70}
\tiley{19}{21}{70}
\tiley{18}{21}{70}
\tiley{17}{21}{70}
\tiley{16}{21}{70}
\tiley{15}{21}{70}
\tiley{14}{21}{70}
\tiley{13}{21}{70}
\tiley{12}{21}{70}
\tiley{11}{21}{70}
\tiley{10}{21}{70}
\tiley{9}{21}{70}
\tiley{8}{21}{70}
\tiley{8}{20}{70}
\tiley{8}{19}{70}
\tiley{8}{18}{70}
\tiley{8}{17}{70}
\tiley{9}{17}{70}
\tiley{10}{17}{70}
\tiley{11}{17}{70}
\tiley{12}{17}{70}
\tiley{13}{17}{70}
\tiley{14}{17}{70}
\tiley{15}{17}{70}
\tiley{16}{17}{70}
\tiley{17}{17}{70}
\tiley{17}{16}{70}
\tiley{17}{15}{70}
\tileg{16}{15}{70}
\tileg{15}{15}{70}
\tileg{14}{15}{70}
\tileg{13}{15}{70}
\tileg{12}{15}{70}
\tileg{11}{15}{70}
\tileg{10}{15}{70}
\tileg{9}{15}{70}
\tileg{8}{15}{70}
\tileg{7}{15}{70}
\tileg{6}{15}{70}
\tileg{5}{15}{70}
\tileg{4}{15}{70}
\tileg{3}{15}{70}
\tileg{2}{15}{70}
\tileg{2}{14}{70}
\tileg{2}{13}{70}
\tileg{2}{12}{70}
\tileg{2}{11}{70}
\tileg{2}{10}{70}
\tileg{2}{9}{70}
\tileg{2}{8}{70}
\tileg{2}{7}{70}
\tileg{3}{7}{70}
\tileg{4}{7}{70}
\tileg{5}{7}{70}
\tileg{6}{7}{70}
\tileg{7}{7}{70}
\tileg{8}{7}{70}
\tileg{9}{7}{70}
\tileg{10}{7}{70}
\tileg{11}{7}{70}
\tileg{11}{6}{70}
\tileg{11}{5}{70}
\tileg{10}{5}{70}
\tileg{9}{5}{70}
\tileg{8}{5}{70}
\tileg{7}{5}{70}
\tileg{6}{5}{70}



\path [dotted, draw, thin] (0,0) grid[step=0.22cm] (26,26);

\draw[dashed] (6,0) -| (6,5.5);
\fill (5.5,5.5) circle (0.16);
\node (D) at (4,5.5) {$i$};
\node (D) at (4.8,1.5) {$l^i$};

\draw[dashed] (14.5,23.5) -| (14.5,26);
\fill (14.5,23.5) circle (0.16);
\node (D) at (11.5,23.5) {\tiny $|P|-1$};
\node (D) at (16,25) {$l^P$};

\node (D) at (14.5,27) {$c$};

\draw[dashed] (21,0) -| (21,19.5);
\fill (20.5,19.5) circle (0.16);
\fill[color=Olive] (21,19.5) circle (0.16);
\node (D) at (20.5,18.2) {$k$};
\node (D) at (20,4.5) {$l^k$};

\draw[dashed] (17,0) -| (17,15.5);
\fill (16.5,15.5) circle (0.16);
\fill[color=Olive] (17,15.5) circle (0.16);
\node (D) at (16.5,14.2) {$j$};
\node (D) at (18,7.5) {$l^j$};

\node (D) at (21,-1.5) {\tiny $e^P-0.5$};
\node (D) at (17,-1.5) {$c_1$};
\node (D) at (5.5,-1.5) {$c_i$};

\node (D) at (24,3.7) {$\mathcal{C}$};

\draw[->,thick] (21.5,19.5) -- (17.5,15.5);

\end{tikzpicture}

\center (b) 
\end{minipage}
\caption{Proof of Lemma \ref{Uturn:LinearBound}, Part(1/3) (a) A producible path $P$ with its last tile visible from the north on column $c$, $\glueP{i}{i+1}$ visible from the south on glue column $c_i$ and $\glueP{k}{k+1}$ visible from the south on glue column $e_P-0.5$. The proof will use the area $\mathcal{C}$ in blue. Only $P_{i,\ldots, |P|-1}$ is drawn, $P_{i,\ldots,k}$ is in green while $P_{k+1,\ldots,|P|-1}$ is in light green. Note that after $\glueP{k}{k+1}$, the path $P$ does a U-turn to go back to column $c$ (the length of this U-turn is 6.5 tiles in this example). (b) The $\glueP{j}{j+1}$ is the westernmost glue visible from the south in the blue area with the same type as $\glueP{k}{k+1}$. This is the setting of Lemma \ref{Uturn:main} for $\glueP{j}{j+1}$ and $\glueP{k}{k+1}$: $P_{i+1,\ldots,j}$ is green, $P_{j+1,\ldots,k}$ is yellow and $P_{k+1,\ldots,|P|-1}$ is in blue.}
\label{fig:Uturn:Linear}
\end{figure}
\begin{figure}
\center

\begin{minipage}{0.47\linewidth}
\begin{tikzpicture}[x=0.22cm,y=0.22cm]

\fill[fill=yellow!90!black, draw opacity=0.8] (17.5,15.5) -| (17.5,17.5) -| (8.5,21.5) -| (19.5,19.5) -| (21.5,23.5) -| (14.5,26) -| (26,0) -|  (17,15.5);

\draw[very thick] (5.5,5.5) -| (11.5,7.5) -| (2.5,15.5) -| (17.5,17.5) -| (8.5,21.5) -| (19.5,19.5) -| (21.5,23.5) -| (14.5,23.5);
\draw[very thick] (10.5,19.5) -| (17.5,15.5);



\tilef{5}{5}{85}

\tileb{14}{23}{48}
\tileb{15}{23}{48}
\tileb{16}{23}{48}
\tileb{17}{23}{48}
\tileb{18}{23}{48}
\tileb{19}{23}{48}
\tileb{20}{23}{48}
\tileb{21}{23}{48}
\tileb{21}{22}{48}
\tileb{21}{21}{48}
\tileb{21}{20}{48}
\tileb{21}{19}{48}
\tiley{20}{19}{70}
\tiley{19}{19}{70}
\tiley{19}{20}{70}
\tiley{19}{21}{70}
\tiley{18}{21}{70}
\tiley{17}{21}{70}
\tiley{16}{21}{70}
\tiley{15}{21}{70}
\tiley{14}{21}{70}
\tiley{13}{21}{70}
\tiley{12}{21}{70}
\tiley{11}{21}{70}
\tiley{10}{21}{70}
\tiley{9}{21}{70}
\tiley{8}{21}{70}
\tiley{8}{20}{70}
\tiley{8}{19}{70}
\tiley{8}{18}{70}
\tiley{8}{17}{70}
\tiley{9}{17}{70}
\tiley{10}{17}{70}
\tiley{11}{17}{70}
\tiley{12}{17}{70}
\tiley{13}{17}{70}
\tiley{14}{17}{70}
\tiley{15}{17}{70}
\tiley{16}{17}{70}
\tiley{17}{17}{70}
\tiley{17}{16}{70}
\tiley{17}{15}{70}
\tileg{16}{15}{70}
\tileg{15}{15}{70}
\tileg{14}{15}{70}
\tileg{13}{15}{70}
\tileg{12}{15}{70}
\tileg{11}{15}{70}
\tileg{10}{15}{70}
\tileg{9}{15}{70}
\tileg{8}{15}{70}
\tileg{7}{15}{70}
\tileg{6}{15}{70}
\tileg{5}{15}{70}
\tileg{4}{15}{70}
\tileg{3}{15}{70}
\tileg{2}{15}{70}
\tileg{2}{14}{70}
\tileg{2}{13}{70}
\tileg{2}{12}{70}
\tileg{2}{11}{70}
\tileg{2}{10}{70}
\tileg{2}{9}{70}
\tileg{2}{8}{70}
\tileg{2}{7}{70}
\tileg{3}{7}{70}
\tileg{4}{7}{70}
\tileg{5}{7}{70}
\tileg{6}{7}{70}
\tileg{7}{7}{70}
\tileg{8}{7}{70}
\tileg{9}{7}{70}
\tileg{10}{7}{70}
\tileg{11}{7}{70}
\tileg{11}{6}{70}
\tileg{11}{5}{70}
\tileg{10}{5}{70}
\tileg{9}{5}{70}
\tileg{8}{5}{70}
\tileg{7}{5}{70}
\tileg{6}{5}{70}

\tiler{10}{19}{48}
\tiler{11}{19}{48}
\tiler{12}{19}{48}
\tiler{13}{19}{48}
\tiler{14}{19}{48}
\tiler{15}{19}{48}
\tiler{16}{19}{48}
\tiler{17}{19}{48}
\tiler{17}{18}{48}
\tileor{17}{17}{48}
\tileor{17}{16}{48}
\tileor{17}{15}{48}


\path [dotted, draw, thin] (0,0) grid[step=0.22cm] (26,26);

\draw[dashed] (6,0) -| (6,5.5);
\fill (5.5,5.5) circle (0.16);
\node (D) at (4,5.5) {$i$};
\node (D) at (4.8,1.5) {$l^i$};

\draw[dashed] (14.5,23.5) -| (14.5,26);
\fill (14.5,23.5) circle (0.16);
\node (D) at (11.5,23.5) {\tiny $|P|-1$};
\node (D) at (16,25) {$l^P$};

\node (D) at (14.5,27) {$c$};

\draw[dashed] (17,0) -| (17,15.5);
\fill (16.5,15.5) circle (0.16);
\fill[color=Olive] (17,15.5) circle (0.16);
\node (D) at (16.5,14.2) {$j$};
\node (D) at (18,7.5) {$l^j$};
\node (D) at (17,-1.5) {$c_1$};

\draw[dashed] (21,0) -| (21,19.5);
\fill (20.5,19.5) circle (0.16);
\fill[color=Olive] (21,19.5) circle (0.16);
\node (D) at (20.5,18.2) {$k$};
\node (D) at (20,4.5) {$l^k$};

\node (D) at (21,-1.5) {\tiny $e^P-0.5$};
\node (D) at (5.5,-1.5) {$c_i$};

\draw[->,thick] (14.5,23.5) -- (10.5,19.5);

\node (D) at (24,3.7) {$\mathcal{C}_1$};
\end{tikzpicture}

\center (a) 
\end{minipage}
\begin{minipage}{0.47\linewidth}
\begin{tikzpicture}[x=0.22cm,y=0.22cm]

\fill[fill=blue!30!white, draw opacity=0.8] (6,5.5) -| (11.5,7.5) -| (2.5,15.5) -| (17.5,17.5) -| (8.5,21.5) -| (19.5,19.5) -| (21.5,23.5) -| (14.5,26) -| (26,0) -|  (6,5.5);

\draw[very thick] (5.5,5.5) -| (11.5,7.5) -| (2.5,15.5) -| (17.5,17.5) -| (8.5,21.5) -| (19.5,19.5) -| (21.5,23.5) -| (14.5,23.5);



\tilef{5}{5}{85}

\tileo{14}{23}{48}
\tileo{15}{23}{48}
\tileo{16}{23}{48}
\tileo{17}{23}{48}
\tileo{18}{23}{48}
\tileo{19}{23}{48}
\tileo{20}{23}{48}
\tileo{21}{23}{48}
\tileo{21}{22}{48}
\tileo{21}{21}{48}
\tileo{21}{20}{48}
\tileo{21}{19}{48}
\tileo{20}{19}{70}
\tileo{19}{19}{70}
\tileo{19}{20}{70}
\tileo{19}{21}{70}
\tileo{18}{21}{70}
\tileo{17}{21}{70}
\tileo{16}{21}{70}
\tileo{15}{21}{70}
\tileo{14}{21}{70}
\tileo{13}{21}{70}
\tileo{12}{21}{70}
\tileo{11}{21}{70}
\tileo{10}{21}{70}
\tileo{9}{21}{70}
\tileo{8}{21}{70}
\tileo{8}{20}{70}
\tileo{8}{19}{70}
\tileo{8}{18}{70}
\tileo{8}{17}{70}
\tileo{9}{17}{70}
\tileo{10}{17}{70}
\tileo{11}{17}{70}
\tileo{12}{17}{70}
\tileo{13}{17}{70}
\tileo{14}{17}{70}
\tileo{15}{17}{70}
\tileo{16}{17}{70}
\tileo{17}{17}{70}
\tileo{17}{16}{70}
\tileo{17}{15}{70}
\tileg{16}{15}{70}
\tileg{15}{15}{70}
\tileg{14}{15}{70}
\tileg{13}{15}{70}
\tileg{12}{15}{70}
\tileg{11}{15}{70}
\tileg{10}{15}{70}
\tileg{9}{15}{70}
\tileg{8}{15}{70}
\tileg{7}{15}{70}
\tileg{6}{15}{70}
\tileg{5}{15}{70}
\tileg{4}{15}{70}
\tileg{3}{15}{70}
\tileg{2}{15}{70}
\tileg{2}{14}{70}
\tileg{2}{13}{70}
\tileg{2}{12}{70}
\tileg{2}{11}{70}
\tileg{2}{10}{70}
\tileg{2}{9}{70}
\tileg{2}{8}{70}
\tileg{2}{7}{70}
\tileg{3}{7}{70}
\tileg{4}{7}{70}
\tileg{5}{7}{70}
\tileg{6}{7}{70}
\tileg{7}{7}{70}
\tileg{8}{7}{70}
\tileg{9}{7}{70}
\tileg{10}{7}{70}
\tileg{11}{7}{70}
\tileg{11}{6}{70}
\tileg{11}{5}{70}
\tileg{10}{5}{70}
\tileg{9}{5}{70}
\tileg{8}{5}{70}
\tileg{7}{5}{70}
\tileg{6}{5}{70}



\path [dotted, draw, thin] (0,0) grid[step=0.22cm] (26,26);

\draw[dashed] (6,0) -| (6,5.5);
\fill (5.5,5.5) circle (0.16);
\node (D) at (4,5.5) {$i$};
\node (D) at (4.8,1.5) {$l^i$};

\draw[dashed] (10,0) -| (10,5.5);
\fill (9.5,5.5) circle (0.16);
\node (D) at (9,3.8) {$j'$};
\fill[color=red] (10,5.5) circle (0.16);
\node (D) at (11.3,1.5) {$l^{j'}$};

\draw[dashed] (14.5,23.5) -| (14.5,26);
\fill (14.5,23.5) circle (0.16);
\node (D) at (11.5,23.5) {\tiny $|P|-1$};
\node (D) at (16,25) {$l^P$};

\node (D) at (14.5,27) {$c$};

\fill[color=Olive] (21,19.5) circle (0.16);

\fill (16.5,15.5) circle (0.16);
\fill[color=Olive] (17,15.5) circle (0.16);
\node (D) at (16.5,16.5) {$j$};

\draw[dashed] (16,0) -| (16,15.5);
\fill (15.5,15.5) circle (0.16);
\fill[color=red] (16,15.5) circle (0.16);
\node (D) at (15,14.2) {$k'$};
\node (D) at (15,7.5) {$l^{k'}$};

\node (D) at (16,-1.5) {$c_1-1$};

\node (D) at (10,-1.5) {$c_2$};
\node (D) at (5.5,-1.5) {$c_i$};

\node (D) at (24,3.7) {$\mathcal{C}$};

\draw[->,thick] (16.5,15.5) -- (10.5,5.5);

\end{tikzpicture}

\center (b) 
\end{minipage}

\caption{Proof of Lemma \ref{Uturn:LinearBound}, Part(2/3) (a) By Lemma \ref{Uturn:main}, $P_{k+1,\ldots,|P|-1}-\vect{P_jP_k}$ is a producible path and is inside the area $\mathcal{C}_1$ in yellow. (b) This result implies that, after $\glueP{j}{j+1}$ the path $P$ (this suffix of $P$ is in light green) must do a U-turn of at least the same size than the one of after $\glueP{k}{k+1}$ (6.5 tiles in this example) to allow $P_{k+1,\ldots,|P|-1}-\vect{P_jP_k}$ to assemble. Then, we keep going on with the same reasoning, we consider $\glueP{k'}{k'+1}$ which is visible from the south on glue column $c_1-1$ (and the length of the U-turn after $\glueP{k'}{k'+1}$ is at least the length of the U-turn after $\glueP{j}{j+1}$ minus~$1$). Let $\glueP{j'}{j'+1}$ the westernmost glue visible from the south on glue column~$c_2$, inside the blue area and with the same type as  $\glueP{k'}{k'+1}$. }
\label{fig:Uturn:Linear2}
\end{figure}

Now, either $k=i$ and $c_1=c_i$ or we can keep going on with the same reasoning. Indeed if $c_1>c$, consider $0\leq k' \leq |P|-1$ such that $\glueP{k'}{k'+1}$ is visible from the south in $P$ on glue column $c_1-1$. Note that the type of this glue is different from the type of $\glueP{j}{j+1}$. Indeed, by definition of $j$, if a glue is visible from the south on a glue column $c_i \leq c'\leq c_1-1$ then its type cannot be the type of $\glueP{j}{j+1}$. Let $0 \leq j' \leq |P|-1$ such that $\glueP{j'}{j'+1}$ has the same type as $\glueP{k'}{k'+1}$ and is the westernmost glue visible from the south in $P$ on glue column $c_2$ with $c_i \leq c_2 \leq c_1-1$, see Figures \ref{fig:Uturn:Linear2}(b) and \ref{fig:Uturn:Linear3}. By Lemmas \ref{lem:glue:prop2} and \ref{lem:glue:prop4}, we have $i \leq j' \leq k' < j$ and $\glueP{j'}{j'+1}$ and $\glueP{k'}{k'+1}$ are both pointing east. Then Lemma \ref{Uturn:main} can be applied again on indices $j'$ and $k'$  to obtain:

\begin{align*}
 w_{P_{j'+1,\ldots,|P|-1}} & \leq w_{P_{k'+1,\ldots,|P|-1}}+(c_2-(c_1-1)) \\
& \leq w_{P_{j+1,\ldots,|P|-1}}+(c_2-(c_1-1)) \text{ (since $k'<j'$)} \\
& \leq c+(c_1-e_P+0.5)+(c_2-(c_1-1)) \text{ (by equation (2))}\\
& \leq c+(c_2-e_P+1.5). 
\end{align*}

\input{./tikz/Uturn/ProofLinearPart3}

This reasoning can be done at most $|T|$ time (since each tile type is considered at most one time) until it is applied on the type of the $\glueP{i}{i+1}$, then we obtain: $$w_{P_{i+1,\ldots,|P|-1}}\leq  c+(c_i-e_P+0.5+|T|).$$
\end{proof}

Finally, solving Lock \ref{lock:uturnTwo} requires only to add that $w_P$ is bounded by $w_{\sigma}-2|T|-1$ by Lemma \ref{Uturn:boundWest}.

\begin{corollary}
\label{fact:Bthree}
Consider a producible path $P$, a glue column $\Bone \leq c \leq \Btwo$. Suppose that there exists a $\glueP{i}{i+1}$ which is visible from the south (resp. north) in $P$ and pointing east on glue column $\Bthree(c)+1.5$. If the last tile of $P$ is visible from the north (resp. south) on column $\Bthree(c)+1$ then $e_P< \Bfour(c)$.
\end{corollary}

\begin{proof}
Consider a producible path $P$, a glue column $\Bone \leq c \leq \Btwo$. Suppose that there exists a $\glueP{i}{i+1}$ which is visible from the south in $P$ (the other case is symmetric) and pointing east on glue column $\Bthree(c)+1.5$. If the last tile of $P$ is visible from the north on column $\Bthree(c)+1$ then, by Lemma \ref{Uturn:LinearBound}, we have $$e_P - (\Bthree(c)+1) \leq  (\Bthree(c)+1.5- w_{P_{i+1,\ldots,|P|-1}})+|T|+0.5$$ Thus,
$$w_{P_{i+1,\ldots,|P|-1}} \leq 2\Bthree(c)-e_P +|T|+3$$
 By Lemma \ref{Uturn:boundWest}, we have  $w_P\geq w_{\sigma}-2|T|-1$ which leads to: $$w_{\sigma}-2|T|-1 \leq 2\Bthree(c)-e_P+|T|+3$$ And then, 
 \begin{align*}
 e_P & \leq  2\Bthree(c)-w_{\sigma}+3|T|+4\\
 & \leq 4c-3w_{\sigma}+9|T|+8\\
 & \leq c+3(c-w_{\sigma})+9|T|+8\\
 & \leq c+3(|\sigma|+5|T|+1.5)+9|T|+8\\
 & \leq c+3|\sigma|+24|T|+12.5\\
 & <\Bfour(c)\\
 \end{align*}
 Note that the inequality $c-w_{\sigma}\leq 5|T|+1.5+|\sigma|$ is due the fact that $c\leq \Btwo$, $\Btwo=e_\sigma+5|T|+1.5$ and $e_\sigma-w_\sigma \leq |\sigma|$
\end{proof}

\subsection{U-turn for a canonical path}
\label{sec:U-turn:linearCano}

Consider a glue column $\Bone \leq c \leq \Btwo$ and a path $P$ which is canonical for glue column $c$, the span $(s,n)$ of $P$ on $c$ and the index of the last glue $\lastc$ of $P$ on $c$. The objective of this subsection is to bound $e_{P_{0,\ldots,\lastc}}$ by $\Bthree(c)$ (see Lemma \ref{Uturn:LinearCano}) to solve Lock \ref{lock:uturnOne}. Using Lemma \ref{Uturn:LinearBound} directly and doing some calculus may allow to bound only  $e_{P_{0,\ldots,n+1}}$ which is not sufficient for our objective. To prove this result, we will follow the same steps as in Subsection \ref{sec:U-turn:linear} but we will adapt the proofs to canonical paths. 

The first problem is that the last tile of $P_{0,\ldots,\lastc}$ may not be the easternmost one of $P_{0,\ldots,\lastc}$. The second one is that a glue which is visible in $P_{0,\ldots,\lastc}$ may not be visible in $P$, see Figure \ref{fig:Uturn:useful}. To solve this problem, we will mainly work on the \emph{useful prefix} $Q$ of the canonical path $P$ which is is defined as the shortest prefix of $P$ to reach column $e_{P_{0,\ldots,\lastc}}+1$, see Figure \ref{fig:Uturn:useful}(b). Note that this notion is used only in this Subsection. Now, we prove some useful results on the visible glues of $Q$ to obtain equivalences of Lemmas \ref{lem:glue:prop2} and \ref{lem:glue:prop4}.

\input{./tikz/Uturn/Useful}

\begin{lemma}
\label{lem:glue:prop:cano}
Consider a glue column $\Bone \leq c \leq \Btwo$ and a path $P$ which is canonical for $c$. Let $Q$ be the useful prefix of $P$ and let $(u_i)_{0\leq i \leq t}$ be the decomposition of $P$ into pseudo-visible glue on glue column $c$. Then, there exists $u_0\leq k \leq u_t$ such that $\glueQ{k}{k+1}$ is visible from either the south or north on glue column $e_Q-1.5$. If $\glueQ{k}{k+1}$ is visible from the south (resp. north), then consider $c\leq c_i \leq e_{Q}-1.5$ and $0\leq i \leq |Q|-1$ such that $\glueQ{i}{i+1}$ is visible from the south (resp. north) in $Q$ on glue column $c_i$, then $\glueQ{i}{i+1}$ points east and there exists $0\leq i' < t$ such that $u_{i'}\leq i <u_{i'+1}$ and $(u_{i'},u_{i'+1})$ is an upward (resp. downward) pseudo-span of $Q$ on glue column $c$. Moreover, if $\glueQ{j}{j+1}$ is visible from the south (resp. north) in $Q$ on a glue column $c_i\leq c_j \leq e_{Q}-1.5$, then we have $i\leq j$.
\end{lemma}

\begin{proof}
Let $R$ be the shortest prefix of $Q$ to reach column $e_{P_{0,\ldots,u_t+1}}$ (in orange in Figure \ref{fig:Uturn:useful}). Since $(u_0,u_1)$ is the span of $Q$ on column $c$ we have $e_{P_{0,\ldots,u_0-1}}<e_{P_{u_0,\ldots,u_1}}$. Then, $u_0 \leq |R|-1$ and $\glueR{u_0}{u_0+1}$ is a visible glue of $R$. Without loss of generality, we suppose that $\glueR{u_0}{u_0+1}$ is a visible glue from the north (\emph{i.e.} $(u_0,u_1)$ is a downward span) and $y_{Q_{|R|}}=y_{Q_{|R|-1}}+1$ (the other cases are symmetric) and we call $l^R$ the ray starting at $\pos{R_{|R|-1}}$ and going east. Let $\mathcal{C}$ be the north-east side of the cut delimited by $l^{u_0}$, $R_{u_0,\ldots,|R|-1}$ and $l^R$, see Figure \ref{fig:Uturn:useful}(b).

Remark that the tile $Q_{|R|}$ is inside $\mathcal{C}$ since $y_{Q_{|R|}}=y_{Q_{|R|-1}}+1$. Moreover, $Q_{|R|,\ldots,|Q|-2}$ cannot cross $l^R$ (it cannot reach column $e_R+1$),  $R_{u_0,\ldots,|R|-1}$ (since $P$ is simple) and  $l^{u_0}$ (since  $\glueR{u_0}{u_0+1}$ is visible in $P$). Finally since $\glueQ{|Q|-2}{|Q|-1}$ is on column $e_R+0.5$ then $Q_{|R|,\ldots,|Q|-2}$ is inside $\mathcal{C}$. Thus, consider glue column $c \leq c_i \leq e_Q-1.5$ and $0\leq i \leq |R|-1$ such that $\glueR{i}{i+1}$ is visible from the south in $R$,  since $l^i$ is not inside $\mathcal{C}$ (apart from its starting point) then  $\glueQ{i}{i+1}$ is visible from the south in $Q$. Since the last tile of $Q$ is the easternmost one then Lemmas \ref{lem:glue:prop2} and \ref{lem:glue:prop4} show that $\glueP{i}{i+1}$ point east and that $u_0 \leq i$. Since $i \leq |R|-1 \leq u_t$, there exists $0\leq i' \leq t$ such that $u_{i'} \leq i <u_{i'+1}$. If $(u_{i'},u_{i'+1})$ is an upward pseudo-span then the lemma is true. 
Otherwise, consider the east side of the cut delimited by $l^{u_{i'}}$, $Q_{u_{i'},\ldots,i}$ and $l^i$, note that $l^{u_{i'+1}}$ is not inside this area and since $\glueQ{i}{i+1}$ points east, we obtain a contradiction, see Figure \ref{fig:Uturn:usefulProp}. 
Finally, consider a glue column $c_i \leq c_j \leq e_Q-1.5$ and $0 \leq j \leq |Q|-1$ such that $\glueQ{j}{j+1}$ is visible from the south in $Q$ on glue column $c_j$. Then, $i\leq j$ by Lemma \ref{lem:glue:prop4}. 

\begin{figure}
\center
\begin{tikzpicture}[x=0.2cm,y=0.2cm]

\fill[fill=yellow!30!white, draw opacity=0.8] (16,21.5) -| (17.5,18.5) -| (8.5,9.5) -| (23,9.5) |- (26,0) |- (16,27);

\draw[very thick] (15.5,21.5) -| (17.5,18.5) -| (8.5,9.5) -| (23.5,9.5);

\tileor{15}{21}{48}
\tileor{16}{21}{48}
\tileor{17}{21}{48}
\tileor{17}{20}{48}
\tileor{17}{19}{48}
\tileor{17}{18}{48}
\tileor{16}{18}{48}
\tileor{15}{18}{48}
\tileor{14}{18}{48}
\tileor{13}{18}{48}
\tileor{12}{18}{48}
\tileor{11}{18}{48}
\tileor{10}{18}{48}
\tileor{9}{18}{48}
\tileor{8}{18}{48}
\tileor{8}{17}{48}
\tileor{8}{16}{48}
\tileor{8}{15}{48}
\tileor{8}{14}{48}
\tileor{8}{13}{48}
\tileor{8}{12}{48}
\tileor{8}{11}{48}
\tileor{8}{10}{48}
\tileor{8}{9}{48}
\tileor{9}{9}{48}
\tileor{10}{9}{48}
\tileor{11}{9}{48}
\tileor{12}{9}{48}
\tileor{13}{9}{48}
\tileor{14}{9}{48}
\tileor{15}{9}{48}
\tileor{16}{9}{48}
\tileor{17}{9}{48}
\tileor{18}{9}{48}
\tileor{19}{9}{48}
\tileor{20}{9}{48}
\tileor{21}{9}{48}
\tileor{22}{9}{48}
\tileor{23}{9}{48}

\path [dotted, draw, thin] (0,0) grid[step=0.2cm] (26,27);

\draw[dashed] (23,0) -| (23,9.5);
\fill (22.5,9.5) circle (0.16);
\node (D) at (22.5,11) {$i$};
\node (D) at (24.2,2.5) {$l^{i}$};


\draw[dashed] (16,0) -| (16,8.5);
\fill (15.5,9.5) circle (0.16);
\node (D) at (15.5,10.7) {$s$};
\node (D) at (17.2,2.5) {$l^{s}$};

\fill (15.5,21.5) circle (0.16);
\node (D) at (15.5,20.2) {$u_2$};
\draw[dashed] (16,21.5) -| (16,27);
\node (D) at (17.4,26) {$l^{u_2}$};

\end{tikzpicture} 

\caption{Proof Lemma \ref{lem:glue:prop:cano}. Consider the area in yellow delimited by $l^{u_2}$, $P_{u_2,\ldots,i+1}$ and $l^i$ from Figure \ref{fig:Uturn:useful}. Since $\glueP{i}{i+1}$ points east and is visible in $P_{0,\ldots,\lastc}$ then $P_{i+1,\ldots,u_{3}}$ is inside the yellow area. Consider $u_2 \leq s \leq i$ such that $\glueP{s}{s+1}$ is visible from the south in $P_{u_2,\ldots,i+1}$ on glue column~$c$. If $(u_2,u_3)$ is a downward span then $P_{i+1,\ldots,u_3}$ must reach $l^s$ but it cannot leave the yellow area. Then $(u_2,u_3)$ is an upward span.}
\label{fig:Uturn:usefulProp}
\end{figure}

\end{proof}

\begin{lemma}
\label{Uturn:mainCano}
Consider a glue column $\Bone \leq c \leq \Btwo$ and a path $P$ which is canonical for $c$. Let $Q$ be the useful prefix of $P$ and let $(u_i)_{0\leq i \leq t}$ be the decomposition of $P$ into pseudo-visible glue on column $c$. If there exist $u_{i'} \leq i \leq u_{i'}$ and $u_{j'} \leq j \leq u_{j'}$ such that $\glueQ{i}{i+1}$ and $\glueQ{j}{j+1}$ are visible from the south on glue column $c'$ and $c''$ respectively with $c \leq c' \leq c'' \leq e_Q-1.5$ and both glues have the same type then we have:
 $$w_{Q_{i+1,\ldots,u_{i'+1}}} \leq w_{Q_{j+1,\ldots, u_{j'+1}}}+(c'-c'').$$
\end{lemma}

\begin{proof}
First of all, by Lemma \ref{lem:glue:prop:cano}, we have $i\leq j$ and thus $i'\leq j'$. Consider $l^Q$ the ray starting at $\pos{Q_{|Q|-1}}$ and going south and let $\mathcal{C}$ be the area delimited by $l^i$, $Q_{i,\ldots,|Q|-1}$ and $l^Q$, see Figure \ref{fig:Uturn:LinearCano}(b). Note that $w_{Q_{i,\ldots,|Q|-1}}=w_{Q_{i,\ldots,u_{i'+1}}}$ and if $Q_{j+1,\ldots,u_{j'+1}}-\vect{Q_iQ_j}$ is inside $\mathcal{C}$ then $$w_{Q_{i+1,\ldots,u_{i'+1}}} \leq w_{Q_{j+1,\ldots, u_{j'+1}}}+(c'-c'')$$ and the lemma is true \footnote{One could argue that the Figure \ref{fig:Uturn:LinearCano}(b) shows a case where there is a contradiction. Indeed, $Q_{j+1,\ldots,u_4}-\vect{Q_iQ_j}$ turns right of $Q_{i+1,\ldots,u_{2}}$ and then intersects with $Q$. This a contradiction of the priority of the pseudo-span $(u_1,u_2)$. Nevertheless, we will not use this argument in the proof. Indeed, it is possible that $Q_{j+1,\ldots,u_{4}}-\vect{Q_iQ_j}$ turns right of $Q_{i+1,\ldots, u_3}$ and that $Q_{i+1,\ldots,u_{2}}$ is a prefix of $Q_{j+1,\ldots,u_{4}}-\vect{Q_iQ_j}$. In this case, we cannot use the priority of $(u_2,u_3)$ since it is in the wrong orientation. This remark explains the complexity of this proof.}.

\input{./tikz/Uturn/ProofLinearCano}

For the sake of contradiction assume that $Q_{j+1,\ldots,u_{j'+1}}-\vect{Q_iQ_j}$ is not inside~$\mathcal{C}$. Then $Q_{j+1,\ldots,u_{j'+1}}-\vect{P_iP_j}$ cannot intersect $l^i$ (since $Q_{j+1,\ldots,u_{j'+1}}$ does not intersect $l^j$) and $l^Q$ (since $e_{Q_{j+1,\ldots,u_{j'+1}}-\vect{Q_iQ_j}} < e_Q$). Thus $Q_{j+1,\ldots,u_{j'+1}}-\vect{Q_iQ_j}$ must cross through $Q_{i+1,\ldots, |Q|-1}$. We consider now two sub-cases.

In the first case, $P_{|Q|-1,\ldots,|P|-1}$ does not cross $l^j$, see Figure \ref{fig:Uturn:LinearCanoPart2}. Firstly, remark that $P_{|Q|-1,\ldots,|P|-1}$ does not cross $l^i$. Indeed, since $\glueP{j}{j+1}$ is visible from the south in $P$ on column $c''$ then by Lemma \ref{lem:glue:prop4} and since $c''<c'$, the glue visible from the south on column $c'$ in $P$ cannot belong to $P_{|Q|-1,\ldots,|P|-1}$. Thus $\glueP{i}{i+1}$ is also visible from the south in $P$ and we can defined the area $\mathcal{D}$ delimited by $l^i$, $P_{i,\ldots,|P|-1}$ and $l^P$ (the ray starting at $\pos{P_{|P|-1}}$ and going south).
Secondly, consider $S$ the largest prefix of $Q_{j,\ldots,u_{j'}}-\vect{Q_iQ_j}$ which is entirely inside $\mathcal{D}$ (if $S=Q_{j,\ldots,u_{j'}}-\vect{Q_iQ_j}$ then we are in the previous case and the lemma is true). Let $a$ such that $S_{0,\ldots,a}$ is the largest prefix between $S$ and $P_{i+1,\ldots,|P|-1}$. Let $i+1\leq a' \leq |P|-1$ such that $P_{a'}=S_a$. If $S$ turns left of $P_{i,\ldots,|P|-1}$ then let $b=a$ and $b'=b$. In this case, $R=S_{0,\ldots,b}P_{b'+1,\ldots,|P|-1}=P_{i+1,\ldots,|P|-1}$ is a path and turns right of $S$. Otherwise, let $a< b \leq |S|-1$ such that: $$b = \min \{a< k:S_k \text{ is a tile of } P\}$$ and let $i+a < b < |P|-1$ such that $P_{b'}=S_b$. By definition of $b$, $R=S_{0,\ldots,b}P_{b'+1,\ldots,|P|-1}$ is a path. Remark that $S_{a,\ldots,b}$ and $P_{a',\ldots,b'}$ delimit a finite area, see Figure \ref{fig:Uturn:LinearCanoPart2}(a). This area cannot contains the last tile of $R$ which is the easternmost one. Since $P_{b'+1,\ldots,|P|-1}$ cannot intersect $S_{a,\ldots,b}$(by definition of $b$) and $P_{a',\ldots,b'}$ then $R$ is not inside this area and turns right of $Q_{j+1,\ldots,u_{j'+1}}-\vect{Q_iQ_j}$. 
Thirdly, since $P_{|Q|-1,\ldots,|P|-1}$ does not cross $l^j$, we can define the area $\mathcal{E}$ delimited by $l^j$, $P_{j,\ldots,|P|-1}$ and $l^P$ (the ray starting at $\pos{P_{|P|-1}}$ and going south), see Figure \ref{fig:Uturn:LinearCanoPart2}(b). Remark that $x_{P_{|P|-1}}+(c''-c')>e_P$ and thus $P_{|P|-1}+\vect{P_iP_j}$ is not inside $\mathcal{E}$. Thus $R+\vect{Q_iQ_j}$ must leave $\mathcal{E}$ after turning right of $P_{j,\ldots, u_{j'+1}}$ but it cannot cross $l^j$ (since $P_{b',\ldots,|P|-1}$ does not cross $l^i$) and it cannot intersect with $P$ (by priority of $(u_{j'}, u_{j'+1})$. This is a contradiction.


\input{./tikz/Uturn/ProofLinearCanoPart2}

In the second case, $P_{|Q|-1,\ldots,|P|-1}$ crosses trough cross $l^j$, see Figure \ref{fig:Uturn:LinearCanoPart3}. Firstly, Consider $j<g \leq |P|-1$ the smallest index such that $\glueP{g}{g+1}$ is on $l^j$. Note that $g>|Q|$ since $\glueQ{j}{j+1}$ is visible in $Q$. Consider $|Q|-1\leq d <g$ such that $P_{j,\ldots,d}$ is the shortest prefix of $P_{j,\ldots,|P|-1}$ to reach column $e_{P_{j,\ldots,g}}$. Consider the area $\mathcal{D}$ delimited by $l^i$, $Q_{i,\ldots,|Q|-1|}$ and $l^Q$ (the ray starting at $\pos{Q_{|Q|-1}}$ and going south). Secondly, consider $S$ the largest prefix of $Q_{j,\ldots,u_{j'}}-\vect{Q_iQ_j}$ which is entirely inside $\mathcal{D}$ (if $S=Q_{j,\ldots,u_{j'}}-\vect{Q_iQ_j}$ then we are in the previous case and the lemma is true). Let $a$ such that $S_{0,\ldots,a}$ is the largest prefix between $S$ and $P_{i+1,\ldots,|P|-1}$. Let $i+1\leq a' \leq |P|-1$ such that $P_{a'}=S_a$. If $S$ turns left of $P_{i,\ldots,|P|-1}$ then let $b=a$ and $b'=b$. In this case, $R=S_{0,\ldots,b}P_{b'+1,\ldots,d}=P_{i+1,\ldots,d}$ is a path and turns right of $S$. Otherwise, let $a< b \leq |S|-1$ such that: $$b = \min \{a< k:S_k \text{ is a tile of } P_{i,\ldots,d}\}$$ and let $i+a < b < d$ such that $P_{b'}=S_b$. By definition of $b$, $R=S_{0,\ldots,b}P_{b'+1,\ldots,d}$ is a path. Remark that $S_{a,\ldots,b}$ and $P_{a',\ldots,b'}$ delimit a finite area, see Figure \ref{fig:Uturn:LinearCanoPart2}(a). This area cannot contains the last tile of $R$ which is the easternmost one. Since $P_{b'+1,\ldots,d}$ cannot intersect $S_{a,\ldots,b}$(by definition of $b$) and $P_{a',\ldots,b'}$ then $R$ is not inside this area and turns right of $Q_{j+1,\ldots,u_{j'+1}}-\vect{Q_iQ_j}$. Thirdly, let $A=P_{j,\ldots,g+1}$. The path $A$ and the ray $l^j$ delimit a finite area of the plane called $\inter{A}$ (we develop this argument in the following Section \ref{sec:decompo}), see Figure \ref{fig:Uturn:LinearCanoPart3}(b). Remark that $x_{P_{d}}+(c''-c')>e_{P_{j,\ldots,g}}$ and thus $P_{|P|-1}+\vect{P_iP_j}$ is not inside $\inter{A}$. Thus $R+\vect{Q_iQ_j}$ must leave $\inter{A}$ after turning right of $P_{j,\ldots, u_{j'+1}}$ but it cannot cross $l^j$ (since $d<g$ then $P_{b',\ldots,d}$ does not cross $l^i$), cannot intersect with $P$ (by priority of $(u_{j'}, u_{j'+1})$ or crosses trough $l^P$ (since $P$ is extremal). This is a contradiction.

\input{./tikz/Uturn/ProofLinearCanoPart3}

\end{proof}


\begin{lemma}
\label{Uturn:LinearCano}
Consider a column $\Bone \leq c \leq \Btwo$, a canonical path $P$ for column $c$ and its last glue $\lastc$ on column $c$. Then, $$e_{P_{0,\ldots,\lastc}} \leq \Bthree(c).$$
\end{lemma}

\begin{proof}
Consider the useful prefix $Q$ of $P$ and the decomposition $(u_i)_{0\leq i \leq t}$ into pseudo-visible glue on glue column $c$. We assume that $t\geq 2$: the case $t=0$ is trivial. The case $t=1$ can be dealt with Lemma \ref{Uturn:LinearBound} and the calculus are identical the ones done in this proof. Then by Lemma \ref{lem:glue:prop:cano}, we can assume that there exist $u_{k'}\leq k <{u_{k'+1}}$ such that $\glueQ{k}{k+1}$ is visible from the south on glue column $e_{Q}-1.5$. Let $0 \leq s \leq |Q|-1$ such that $\glueQ{s}{s+1}$ is visible from the south on glue column $c$ in $Q$, we have either $s=u_0$ or $s=u_1$. If $s=u_0$, let $n=u_1$ otherwise let $n=u_2.$ By the same reasoning as in the proof of Lemma \ref{Uturn:LinearBound}, let $u_{j'}\leq j < u_{j'+1}$ such that $\glueP{j}{j+1}$ has the same type as $\glueP{k}{k+1}$ and is the westernmost glue visible from the south in $Q$ on glue column $c_1$ with $c \leq c_1 \leq e_Q-1.5$. Then by Lemma \ref{Uturn:mainCano} we have 
 \begin{align*}
w_{Q_{j+1,\ldots,u_{j'+1}}} & \leq w_{Q_{k+1,\ldots, u_{k'+1}}}+(c_1-(e_Q-1.5)) \\
 \end{align*}
By iterating this reasoning at most $|T|$ times, we obtain: 
 \begin{align*}
w_{Q_{s,\ldots,n}} & \leq w_{Q_{k+1,\ldots, u_{k'+1}}}+(c-(e_Q-1.5)+|T| \\
 \end{align*}
Since $\glueQ{u_{k'+1}}{u_{k'+1}+1}$ is on glue column $c$ then $Q_{u_{k'+1}}$ is on column $c-0.5$. Thus, we have $w_{Q_{k+1,\ldots, u_{k'+1}}}\leq c-0.5$:
 \begin{align*}
w_{Q_{s,\ldots,n}} & \leq 2c-e_Q+|T|+1 \\
 \end{align*}
  By Lemma \ref{Uturn:boundWest}, we have  $w_P\geq w_{\sigma}-2|T|-1$ which leads to: $$w_{\sigma}-2|T|-1\leq 2c-e_Q+|T|+1$$ And then, 
 \begin{align*}
 e_Q & \leq 2c-w_{\sigma}+3|T|+2\\
 \end{align*}
 \end{proof}


\subsection{Conclusion}

Now, we can combine Corollary \ref{fact:Bthree} and Lemma \ref{Uturn:LinearCano} together to obtain the main result of this section, Lemma \ref{Uturn:Conc}, which solves Lock \ref{lock:uturn} and prove the efficiency of the shield. 

Note that this result is stronger than the one described in the roadmap (Section \ref{sec:roadmap}). Indeed, Figure \ref{fig:ShieldFinal} shows a path $Q$ growing on a tile of $P_{0,\ldots,\lastc}$ and turning around the shield. The following lemma claims that if the path $Q$ grows far away from $P_{0,\ldots,\lastc}$ then it cannot come near it anymore. Thus, we are able to give this result even if a shield is not formally defined yet.

\begin{lemma}
\label{Uturn:Conc}
Consider a path $P$ which is canonical for glue column $\Bone \leq c \leq \Btwo$. Consider the index $\lastc$ of the last glue of $P$ on column $c$. Consider a path $Q=Q'Q''$ such that:
\begin{itemize}
\item  $Q_0$ binds with a tile $P_i$ of $P_{0,\ldots,\lastc}$;
\item  the last tile of $Q'$ is the easternmost one and $e_{Q'}\geq \Bfour(c)$;
\item  $P_{0,\ldots,i}Q'$ is a producible path. 
\end{itemize}
Then $P_{0,\ldots,i}Q$ is a producible path.
\end{lemma}

\begin{proof}
By Lemma \ref{Uturn:LinearCano}, we have $e_{P_{0,\ldots,\lastc}} \leq \Bthree(c)$. Then $w_{Q'}\leq \Bthree(c)+1$ and thus $Q'$ crosses the glue column $\Bthree(c)+1.5$, see Figure \ref{fig:Uturn:Conc}. Remark that since $e_{P_{0,\ldots,\lastc}} \leq \Bthree(c)$, the glues of $Q'$ which are visible on glue column $\Bthree(c)+1.5$ are also visible in $R=P_{0,\ldots,i}Q'$. Thus, we can define $(s,n)$ the span of $R$ on glue column $\Bthree(c)+1.5$. Without loss of generality, we suppose that this span is an upward span and let $\mathcal{C}$ be the workspace of the span $(s,n)$ of $R$ on glue column $\Bthree(c)+1.5$. Since $e_{P_{0,\ldots,\lastc}} \leq \Bthree(c)$ and the last tile of $Q$ is the easternmost one then $\glueR{s+1}{s+1}$ and $\glueR{n+1}{n+1}$ both point east and no tile of $P_{0,\ldots,\lastc}$ is inside $\mathcal{C}$. Now if $Q''$ is inside $\mathcal{C}$ then $P_{0,\ldots,i}Q$ is producible and the lemma is true. 

Otherwise let $S$ be the shortest prefix of $Q''$ which is not entirely inside $\mathcal{C}$. Since $Q'Q''$ is a path, then $S$ must cross $l^n$ or $l^s$. Both cases are symmetric thus assume that $S$ crosses $l^n$. Then $x_{S_{|S|-1}}=\Bthree(c)+1$ and since $e_{P_{0,\ldots,\lastc}} \leq \Bthree(c)$ then $S$ does not intersect with $P_{0,\ldots,\lastc}$ and $RS$ is a producible path. By hypothesis, we have $e_{RS}\geq \Bfour(c)$. Moreover, the tile $S_{|S|-1}$ is visible from the south on column $\Bthree(c)+1$ and $\glueR{s}{s+1}$ is visible from the south in $RS$ on glue column $\Bthree(c)+1.5$. All these hypotheses contradict Corollary~\ref{fact:Bthree}.

\begin{figure}
\center
\begin{tikzpicture}[x=0.22cm,y=0.22cm]

\fill[fill=blue!30!white, draw opacity=0.8] (21,0) |- (16.5,6.5) |- (22.5,13.5) |- (21,16.5) |- (38,24)  |- (21,0);

\draw[very thick] (1.5,3.5) -| (13.5,6.5) -| (2.5,21.5) -| (13.5,18.5) -| (5.5,9.5)-| (13.5,12.5) -| (8.5,15.5) -| (10.5,15.5);
\draw[very thick] (12.5,18.5) |- (22.5,16.5) |- (16.5,13.5) |- (35.5,6.5) |- (19.5,9.5) |- (25.5,11.5) |- (20.5,19.5);

\draws{1}{3}

\tileg{2}{3}{48}
\tileg{3}{3}{48}
\tileg{4}{3}{48}
\tileg{5}{3}{48}
\tileg{6}{3}{48}
\tileg{7}{3}{48}
\tileg{8}{3}{48}
\tileg{9}{3}{48}
\tileg{10}{3}{48}

\tileg{11}{3}{48}
\tileg{12}{3}{48}
\tileg{13}{3}{48}
\tileg{13}{4}{48}
\tileg{13}{5}{48}
\tileg{13}{6}{48}
\tileg{12}{6}{48}
\tileg{11}{6}{48}
\tileg{10}{6}{48}
\tileg{9}{6}{48}
\tileg{8}{6}{48}
\tileg{7}{6}{48}
\tileg{6}{6}{48}
\tileg{5}{6}{48}
\tileg{4}{6}{48}
\tileg{3}{6}{48}
\tileg{2}{6}{48}
\tileg{2}{7}{48}
\tileg{2}{8}{48}
\tileg{2}{9}{48}
\tileg{2}{10}{48}
\tileg{2}{11}{48}
\tileg{2}{12}{48}
\tileg{2}{13}{48}
\tileg{2}{14}{48}
\tileg{2}{15}{48}
\tileg{2}{16}{48}
\tileg{2}{17}{48}
\tileg{2}{18}{48}
\tileg{2}{19}{48}
\tileg{2}{20}{48}
\tileg{2}{21}{48}
\tileg{3}{21}{48}
\tileg{4}{21}{48}
\tileg{5}{21}{48}
\tileg{6}{21}{48}
\tileg{7}{21}{48}
\tileg{8}{21}{48}
\tileg{9}{21}{48}
\tileg{10}{21}{48}

\tileg{11}{21}{48}
\tileg{12}{21}{48}
\tileg{13}{21}{48}
\tileg{13}{20}{48}
\tileg{13}{19}{48}
\tileg{13}{18}{48}
\tileg{12}{18}{48}
\tileg{11}{18}{48}
\tileg{10}{18}{48}
\tileg{9}{18}{48}
\tileg{8}{18}{48}
\tileg{7}{18}{48}
\tileg{6}{18}{48}
\tileg{5}{18}{48}
\tileg{5}{17}{48}
\tileg{5}{16}{48}
\tileg{5}{15}{48}
\tileg{5}{14}{48}
\tileg{5}{13}{48}
\tileg{5}{12}{48}
\tileg{5}{11}{48}
\tileg{5}{10}{48}
\tileg{5}{9}{48}
\tileg{6}{9}{48}
\tileg{7}{9}{48}
\tileg{8}{9}{48}
\tileg{9}{9}{48}
\tileg{10}{9}{48}

\tileg{11}{9}{48}
\tileg{12}{9}{48}
\tileg{13}{9}{48}
\tileg{13}{10}{48}
\tileg{13}{11}{48}
\tileg{13}{12}{48}
\tileg{12}{12}{48}
\tileg{11}{12}{48}
\tileg{10}{12}{48}
\tileg{9}{12}{48}
\tileg{8}{12}{48}
\tileg{8}{13}{48}
\tileg{8}{14}{48}
\tileg{8}{15}{48}
\tileg{9}{15}{48}
\tileg{10}{15}{48}

\tiley{12}{17}{85}
\tiley{12}{16}{85}
\tiley{13}{16}{85}
\tiley{14}{16}{85}
\tiley{15}{16}{85}
\tiley{16}{16}{85}
\tiley{17}{16}{85}
\tiley{18}{16}{85}
\tiley{19}{16}{85}
\tiley{20}{16}{85}
\tiley{21}{16}{85}
\tiley{22}{16}{85}
\tiley{22}{15}{85}
\tiley{22}{14}{85}
\tiley{22}{13}{85}
\tiley{21}{13}{85}
\tiley{20}{13}{85}
\tiley{19}{13}{85}
\tiley{18}{13}{85}
\tiley{17}{13}{85}
\tiley{16}{13}{85}
\tiley{16}{12}{85}
\tiley{16}{11}{85}
\tiley{16}{10}{85}
\tiley{16}{9}{85}
\tiley{16}{8}{85}
\tiley{16}{7}{85}
\tiley{16}{6}{85}
\tiley{17}{6}{85}
\tiley{18}{6}{85}
\tiley{19}{6}{85}
\tiley{20}{6}{85}
\tiley{21}{6}{85}
\tiley{22}{6}{85}
\tiley{23}{6}{85}
\tiley{24}{6}{85}
\tiley{25}{6}{85}
\tiley{26}{6}{85}
\tiley{27}{6}{85}
\tiley{28}{6}{85}
\tiley{29}{6}{85}
\tiley{30}{6}{85}

\tileor{31}{6}{85}
\tileor{32}{6}{85}
\tileor{33}{6}{85}
\tileor{34}{6}{85}
\tileor{35}{6}{85}
\tileor{35}{7}{85}
\tileor{35}{8}{85}
\tileor{35}{9}{85}
\tileor{34}{9}{85}
\tileor{33}{9}{85}
\tileor{32}{9}{85}
\tileor{31}{9}{85}
\tileor{30}{9}{85}
\tileor{29}{9}{85}
\tileor{28}{9}{85}
\tileor{27}{9}{85}
\tileor{26}{9}{85}
\tileor{25}{9}{85}
\tileor{24}{9}{85}
\tileor{23}{9}{85}
\tileor{22}{9}{85}
\tileor{21}{9}{85}
\tileor{20}{9}{85}
\tileor{19}{9}{85}
\tileor{19}{10}{85}
\tileor{19}{11}{85}
\tileor{20}{11}{85}
\tileor{21}{11}{85}
\tileor{22}{11}{85}
\tileor{23}{11}{85}
\tileor{24}{11}{85}
\tileor{25}{11}{85}
\tileor{25}{12}{85}
\tileor{25}{13}{85}
\tileor{25}{14}{85}
\tileor{25}{15}{85}
\tileor{25}{16}{85}
\tileor{25}{17}{85}
\tileor{25}{18}{85}
\tileor{25}{19}{85}
\tileor{24}{19}{85}
\tileor{23}{19}{85}
\tileor{22}{19}{85}
\tileor{21}{19}{85}
\tileor{20}{19}{85}

\path [dotted, draw, thin] (0,0) grid[step=0.22cm] (38,24);

\draw[dashed] (30,0) -| (30,24);

\draw[dashed] (21,0) -| (21,6.5);
\fill (20.5,6.5) circle (0.16);
\node (D) at (20.5,7.7) {$s$};
\node (D) at (22.2,5) {$l^{s}$};

\node (D) at (32.7,1) {$\Bfour(c)$};
\node (D) at (16.4,1) {$\Bthree(c)+1.5$};

\draw[dashed] (21,24) -| (21,16.5);
\fill (20.5,16.5) circle (0.16);
\node (D) at (20.5,17.7) {$n$};
\node (D) at (22.2,21) {$l^{n}$};


\fill (12.5,18.5) circle (0.16);
\node (D) at (12.5,20.2) {$i$};

\fill (31.5,6.5) circle (0.16);
\node (D) at (31.5,7.9) {$S_0$};

\fill (10.5,15.5) circle (0.16);
\node (D) at (10.5,16.9) {$\ell$};
\end{tikzpicture}

\caption{Proof of Lemma \ref{Uturn:Conc}. Consider a path $P$ and the index $\lastc$ of its last glue on column $c$. The prefix $P_{0,\ldots,\lastc}$ is drawn in green and is west of column $\Bthree(c)$. A path $Q'$ is in yellow, its first tile binds with $P_i$ and its last tile is on column $\Bfour(c)+0.5$. The path $R=P_{0,\ldots,\lastc}Q'$ is producible by hypothesis and  $(n,s)$ is the downward span of $R$ on glue column $\Bthree(c)+1.5$. The workspace $\mathcal{C}$ of this span of $R$ is in blue. The path $S$ is orange and by hypothesis $Q'S$ is a path. To leave $\mathcal{C}$, $S$ crosses through $l^n$ but then $RS$ is producible which is a contradiction by Corollary \ref{fact:Bthree}. }
\label{fig:Uturn:Conc}
\end{figure}
\end{proof}

\section{Decomposition into dominant arcs}
\label{sec:decompo}

Consider an extremal path $P$ which is canonical for a glue column $\Bone \leq c \leq \Btwo$. A key point of the roadmap was to use a shield in order to protect a glue of $P$ on glue column $c$, see Subsection \ref{road:pseudo-visibility}. In the previous example of Figure \ref{fig:PseudoVisible}, we assumed that this protected glue was pseudo-visible, see Lock \ref{lock:pseudo}. Unfortunately, this is not always the case, see Figure \ref{fig:Protected:NotPseudoVisible}. We need to focus on the other glues of $P$ on glue column $c$. To do so, we consider subpaths of $P$ whose first and last glues are two consecutive glues on glue column $c$. These subpaths will be called \emph{arcs} of $P$.

The notion of arcs is properly defined in Subsection \ref{Sec:hole:def} alongside the more general concept of $\emph{holes}$. Then, we focus on the subpaths of a path which are arcs, see Subsection \ref{Sec:arc:paths}, and we analyze their properties, see Subsection \ref{Sec:arc:dominant}. These results lead to the definition of the decomposition of an extremal path into dominant arcs in Subsection \ref{Sec:arc:deompo}. This decomposition will be the one used in the final Section \ref{sec:analysis}. 
Indeed, the previous decomposition into pseudo-visible glues has been useful to define canonical paths, see Section \ref{sec:canon}. Nevertheless, canonical paths have only three properties which will be relevant in the final Section \ref{sec:analysis}: the first one is Lemma \ref{Uturn:LinearCano} (the bound on $P_{0,\ldots,\lastc}$), the second one is Lemma \ref{Uturn:Conc} (the final result of Section~\ref{sec:U-turn}) and the third one is achieved here in Corollary \ref{cor:decompo:cano}. This result is obtained by studying how the priority of the pseudo-spans affects the decomposition of a canonical path into dominant arcs.

\input{./tikz/Hole/NotPseudoVisible}

\subsection{Holes and arcs}
\label{Sec:hole:def}

We start with some basic definitions. Consider a path $\Hole$ which is a subassembly of the terminal assembly $\uniterm$\footnote{This hypothesis is important and we will always consider paths that can be assembled by the tile assembly system.}. Suppose that $\glueN{0}{1}$ and $\glueN{|\Hole|-2}{|\Hole|-1}$ are both on the same glue column $c$, we define $\cork{\Hole}$ as the finite line between $\glueN{0}{1}$ and $\glueN{|\Hole|-2}{|\Hole|-1}$. We say that $\Hole$ is a \emph{hole} on glue column $c$ if and only if the only intersection between $\cork{\Hole}$ and $\Hole$ are $\glueN{0}{1}$ and $\glueN{|\Hole|-2}{|\Hole|-1}$, see Figure \ref{fig:Hole:def}. The finite line $\cork{\Hole}$ is called the \emph{frontier} of the hole. We say that the hole $\Hole$ is an \emph{upward} (resp. \emph{downward}) hole if and only if $y_{\Hole_0}<y_{\Hole_{|\Hole|-1}}$ (resp. $y_{\Hole_0}>y_{\Hole_{|\Hole|-1}}$).

\begin{figure}
\center
\begin{tikzpicture}[x=0.22cm,y=0.22cm]

\fill[fill=green!30!white, draw opacity=0.8] (15,9.5) -| (16.5,7.5) -| (13.5,5.5) -| (16.5,3.5) -| (14.5,1.5) -| (23.5,7.5) -| (21.5,16.5) -| (13.5,18.5) -| (18.5,20.5) -| (11.5,14.5) -| (17.5,12.5) -| (15,12.5);
\fill[fill=yellow!30!white, draw opacity=0.8] (40,9.5) -| (36.5,5.5) -| (32.5,8.5) -| (29.5,16.5) -| (28.5,19.5) -| (35.5,15.5) -| (33.5,13.5) -| (37.5,15.5) -| (40,15.5);
\fill[fill=orange!30!white, draw opacity=0.8] (46,9.5) -| (47.5,7.5) -| (49.5,5.5)  -| (52.5,13.5) -| (48.5,12.5) -| (46,12.5);

\draw[very thick] (4.5,6.5) -| (7.5,11.5) -| (2.5,15.5) -| (5.5,15.5);
\draw[very thick] (14.5,9.5) -| (16.5,7.5) -| (13.5,5.5) -| (16.5,3.5) -| (14.5,1.5) -| (23.5,7.5) -| (21.5,16.5) -| (13.5,18.5) -| (18.5,20.5) -| (11.5,14.5) -| (17.5,12.5) -| (14.5,12.5);
\draw[very thick] (40.5,9.5) -| (36.5,5.5) -| (32.5,8.5) -| (29.5,16.5) -| (28.5,19.5) -| (35.5,15.5) -| (33.5,13.5) -| (37.5,15.5) -| (40.5,15.5);
\draw[very thick] (45.5,9.5) -| (47.5,7.5) -| (49.5,5.5)  -| (52.5,13.5) -| (48.5,12.5) -| (45.5,12.5);



\tiler{4}{6}{85}
\tiler{5}{6}{85}
\tiler{6}{6}{85}
\tiler{7}{6}{85}
\tiler{7}{7}{85}
\tiler{7}{8}{85}
\tiler{7}{9}{85}
\tiler{7}{10}{85}
\tiler{7}{11}{85}
\tiler{6}{11}{85}
\tiler{5}{11}{85}
\tiler{4}{11}{85}
\tiler{3}{11}{85}
\tiler{2}{11}{85}
\tiler{2}{12}{85}
\tiler{2}{13}{85}
\tiler{2}{14}{85}
\tiler{2}{15}{85}
\tiler{3}{15}{85}
\tiler{4}{15}{85}
\tiler{5}{15}{85}

\tileg{14}{9}{85}
\tileg{15}{9}{85}
\tileg{16}{9}{85}
\tileg{16}{8}{85}
\tileg{16}{7}{85}
\tileg{15}{7}{85}
\tileg{14}{7}{85}
\tileg{13}{7}{85}
\tileg{13}{6}{85}
\tileg{13}{5}{85}
\tileg{14}{5}{85}
\tileg{15}{5}{85}
\tileg{16}{5}{85}
\tileg{16}{4}{85}
\tileg{16}{3}{85}
\tileg{15}{3}{85}
\tileg{14}{3}{85}
\tileg{14}{2}{85}
\tileg{14}{1}{85}
\tileg{15}{1}{85}
\tileg{16}{1}{85}
\tileg{17}{1}{85}
\tileg{18}{1}{85}
\tileg{19}{1}{85}
\tileg{20}{1}{85}
\tileg{21}{1}{85}
\tileg{22}{1}{85}
\tileg{23}{1}{85}
\tileg{23}{2}{85}
\tileg{23}{3}{85}
\tileg{23}{4}{85}
\tileg{23}{5}{85}
\tileg{23}{6}{85}
\tileg{23}{7}{85}
\tileg{22}{7}{85}
\tileg{21}{7}{85}
\tileg{21}{8}{85}
\tileg{21}{9}{85}
\tileg{21}{10}{85}
\tileg{21}{11}{85}
\tileg{21}{12}{85}
\tileg{21}{13}{85}
\tileg{21}{14}{85}
\tileg{21}{15}{85}
\tileg{21}{16}{85}
\tileg{20}{16}{85}
\tileg{19}{16}{85}
\tileg{18}{16}{85}
\tileg{17}{16}{85}
\tileg{16}{16}{85}
\tileg{15}{16}{85}
\tileg{14}{16}{85}
\tileg{13}{16}{85}
\tileg{13}{17}{85}
\tileg{13}{18}{85}
\tileg{14}{18}{85}
\tileg{15}{18}{85}
\tileg{16}{18}{85}
\tileg{17}{18}{85}
\tileg{18}{18}{85}
\tileg{18}{19}{85}
\tileg{18}{20}{85}
\tileg{17}{20}{85}
\tileg{16}{20}{85}
\tileg{15}{20}{85}
\tileg{14}{20}{85}
\tileg{13}{20}{85}
\tileg{12}{20}{85}
\tileg{11}{20}{85}
\tileg{11}{19}{85}
\tileg{11}{18}{85}
\tileg{11}{17}{85}
\tileg{11}{16}{85}
\tileg{11}{15}{85}
\tileg{11}{14}{85}
\tileg{12}{14}{85}
\tileg{13}{14}{85}
\tileg{14}{14}{85}
\tileg{15}{14}{85}
\tileg{16}{14}{85}
\tileg{17}{14}{85}
\tileg{17}{13}{85}
\tileg{17}{12}{85}
\tileg{16}{12}{85}
\tileg{15}{12}{85}
\tileg{14}{12}{85}

\tiley{40}{9}{85}
\tiley{39}{9}{85}
\tiley{38}{9}{85}
\tiley{37}{9}{85}
\tiley{36}{9}{85}
\tiley{36}{8}{85}
\tiley{36}{7}{85}
\tiley{36}{6}{85}
\tiley{36}{5}{85}
\tiley{35}{5}{85}
\tiley{34}{5}{85}
\tiley{33}{5}{85}
\tiley{32}{5}{85}
\tiley{32}{6}{85}
\tiley{32}{7}{85}
\tiley{32}{8}{85}
\tiley{31}{8}{85}
\tiley{30}{8}{85}
\tiley{29}{8}{85}
\tiley{29}{9}{85}
\tiley{29}{10}{85}
\tiley{29}{11}{85}
\tiley{29}{12}{85}
\tiley{29}{13}{85}
\tiley{29}{14}{85}
\tiley{29}{15}{85}
\tiley{29}{16}{85}
\tiley{28}{16}{85}
\tiley{28}{17}{85}
\tiley{28}{18}{85}
\tiley{28}{19}{85}
\tiley{29}{19}{85}
\tiley{30}{19}{85}
\tiley{31}{19}{85}
\tiley{32}{19}{85}
\tiley{33}{19}{85}
\tiley{34}{19}{85}
\tiley{35}{19}{85}
\tiley{35}{18}{85}
\tiley{35}{17}{85}
\tiley{35}{16}{85}
\tiley{35}{15}{85}
\tiley{34}{15}{85}
\tiley{33}{15}{85}
\tiley{33}{14}{85}
\tiley{33}{13}{85}
\tiley{34}{13}{85}
\tiley{35}{13}{85}
\tiley{36}{13}{85}
\tiley{37}{13}{85}
\tiley{37}{14}{85}
\tiley{37}{15}{85}
\tiley{38}{15}{85}
\tiley{39}{15}{85}
\tiley{40}{15}{85}

\tileor{45}{12}{85}
\tileor{46}{12}{85}
\tileor{47}{12}{85}
\tileor{48}{12}{85}
\tileor{48}{13}{85}
\tileor{49}{13}{85}
\tileor{50}{13}{85}
\tileor{51}{13}{85}
\tileor{52}{13}{85}
\tileor{52}{12}{85}
\tileor{52}{11}{85}
\tileor{52}{10}{85}
\tileor{52}{9}{85}
\tileor{52}{8}{85}
\tileor{52}{7}{85}
\tileor{52}{6}{85}
\tileor{52}{5}{85}
\tileor{51}{5}{85}
\tileor{50}{5}{85}
\tileor{49}{5}{85}
\tileor{49}{6}{85}
\tileor{49}{7}{85}
\tileor{48}{7}{85}
\tileor{47}{7}{85}
\tileor{47}{8}{85}
\tileor{47}{9}{85}
\tileor{46}{9}{85}
\tileor{45}{9}{85}

\path [dotted, draw, thin] (0,0) grid[step=0.22cm] (55,22);

\draw [dashed, color=red] (5,0) -| (5,22);
\draw [thick, color=red] (5,6.5) -| (5,15.5);
\draw [dashed, color=green!70!black] (15,0) -| (15,22);
\draw [thick, color=green!70!black] (15,9.5) -| (15,12.5);
\draw [dashed, color=yellow!70!black] (40,0) -| (40,22);
\draw [thick, color=yellow!70!black] (40,9.5) -| (40,15.5);
\draw [dashed, color=orange] (46,0) -| (46,22);
\draw [thick, color=orange!70!black] (46,9.5) -| (46,12.5);

\node (D) at (20,5) {$\inter{H}$};

\fill (14.5,9.5) circle (0.16);
\node (D) at (12.7,9.5) {$H_{0}$};
\fill (14.5,12.5) circle (0.16);

\fill (4.5,6.5) circle (0.16);
\node (D) at (2.7,6.5) {$P_0$};
\fill (5.5,15.5) circle (0.16);

\fill (40.5,15.5) circle (0.16);
\node (D) at (42.2,15.5) {$A_0$};
\node (D) at (33,11) {$\inter{A}$};

\fill (45.5,9.5) circle (0.16);
\node (D) at (43.7,9.5) {$B_0$};
\node (D) at (50.5,10) {$\inter{B}$};

\end{tikzpicture}
\caption{Four examples of path whose first and last glues are on the same glue column: the red path $P$ intersects with its frontier (at a position which is not its extremities) then $P$ is not a hole. The green path $H$ is an upward hole (its last tile is north of its first tile), its interior $\inter{H}$ is in green. The yellow path $A$ is a negative (apart from its first and last tile, $A$ is in the west side of the yellow glue column) downward arc, its interior $\inter{A}$ is in yellow. The orange path $B$ is a positive upward arc, its interior $\inter{B}$ is in orange.}
\label{fig:Hole:def}
\end{figure}

Note that, $\Hole$ and $\cork{\Hole}$ delimit a finite area called the \emph{interior} $\inter{\Hole}$ of the hole. We call $\prodHole{\inter{\Hole}}$ the set of holes which are starting by $\Hole_0\Hole_1$, ending by  $\Hole_{|\Hole|-2}\Hole_{|\Hole|-1}$ and which are inside $\inter{\Hole}$ (apart from $\Hole_0$ and $\Hole_{|\Hole|-1}$). Note that $\prodHole{\inter{\Hole}}$ contains at least $\Hole$ and if $\prodHole{\inter{\Hole}}=\{\Hole\}$, we say that $\Hole$ is \emph{minimum}, see Figure \ref{fig:MinArc:def}. Otherwise, if $\inter{S}$ is the left side (resp. right side) of $\Hole$ we designate by $\min(\inter{\Hole})$ the left-priority (resp. right-priority) path of this set. Note that, $\min(\inter{\Hole})$ is a minimum hole whose interior is included into $\inter{\Hole}$. Thus, we also have:

\begin{figure}
\center
\begin{tikzpicture}[x=0.22cm,y=0.22cm]

\fill[fill=green!30!white, draw opacity=0.8] (15,9.5) -| (16.5,7.5) -| (13.5,5.5) -| (16.5,3.5) -| (14.5,1.5) -| (23.5,7.5) -| (21.5,16.5) -| (13.5,18.5) -| (18.5,20.5) -| (11.5,14.5) -| (17.5,12.5) -| (15,12.5);
\fill[fill=orange!30!white, draw opacity=0.8] (15,9.5) -| (16.5,7.5) -| (18.5,5.5)  -| (21.5,13.5) -| (17.5,12.5) -| (15,12.5);

\draw[very thick] (14.5,9.5) -| (16.5,7.5) -| (13.5,5.5) -| (16.5,3.5) -| (14.5,1.5) -| (23.5,7.5) -| (21.5,16.5) -| (13.5,18.5) -| (18.5,20.5) -| (11.5,14.5) -| (17.5,12.5) -| (14.5,12.5);
\draw[very thick] (14.5,9.5) -| (16.5,7.5) -| (18.5,5.5)  -| (21.5,13.5) -| (17.5,12.5) -| (14.5,12.5);


\tileg{14}{9}{85}
\tileg{15}{9}{85}
\tileg{16}{9}{85}
\tileg{16}{8}{85}
\tileg{16}{7}{85}
\tileg{15}{7}{85}
\tileg{14}{7}{85}
\tileg{13}{7}{85}
\tileg{13}{6}{85}
\tileg{13}{5}{85}
\tileg{14}{5}{85}
\tileg{15}{5}{85}
\tileg{16}{5}{85}
\tileg{16}{4}{85}
\tileg{16}{3}{85}
\tileg{15}{3}{85}
\tileg{14}{3}{85}
\tileg{14}{2}{85}
\tileg{14}{1}{85}
\tileg{15}{1}{85}
\tileg{16}{1}{85}
\tileg{17}{1}{85}
\tileg{18}{1}{85}
\tileg{19}{1}{85}
\tileg{20}{1}{85}
\tileg{21}{1}{85}
\tileg{22}{1}{85}
\tileg{23}{1}{85}
\tileg{23}{2}{85}
\tileg{23}{3}{85}
\tileg{23}{4}{85}
\tileg{23}{5}{85}
\tileg{23}{6}{85}
\tileg{23}{7}{85}
\tileg{22}{7}{85}
\tileg{21}{7}{85}
\tileg{21}{8}{85}
\tileg{21}{9}{85}
\tileg{21}{10}{85}
\tileg{21}{11}{85}
\tileg{21}{12}{85}
\tileg{21}{13}{85}
\tileg{21}{14}{85}
\tileg{21}{15}{85}
\tileg{21}{16}{85}
\tileg{20}{16}{85}
\tileg{19}{16}{85}
\tileg{18}{16}{85}
\tileg{17}{16}{85}
\tileg{16}{16}{85}
\tileg{15}{16}{85}
\tileg{14}{16}{85}
\tileg{13}{16}{85}
\tileg{13}{17}{85}
\tileg{13}{18}{85}
\tileg{14}{18}{85}
\tileg{15}{18}{85}
\tileg{16}{18}{85}
\tileg{17}{18}{85}
\tileg{18}{18}{85}
\tileg{18}{19}{85}
\tileg{18}{20}{85}
\tileg{17}{20}{85}
\tileg{16}{20}{85}
\tileg{15}{20}{85}
\tileg{14}{20}{85}
\tileg{13}{20}{85}
\tileg{12}{20}{85}
\tileg{11}{20}{85}
\tileg{11}{19}{85}
\tileg{11}{18}{85}
\tileg{11}{17}{85}
\tileg{11}{16}{85}
\tileg{11}{15}{85}
\tileg{11}{14}{85}
\tileg{12}{14}{85}
\tileg{13}{14}{85}
\tileg{14}{14}{85}
\tileg{15}{14}{85}
\tileg{16}{14}{85}
\tileg{17}{14}{85}
\tileg{17}{13}{85}
\tileg{17}{12}{85}
\tileg{16}{12}{85}
\tileg{15}{12}{85}
\tileg{14}{12}{85}

\dotor{14}{12}{85}
\dotor{15}{12}{85}
\dotor{16}{12}{85}
\dotor{17}{12}{85}
\dotor{17}{13}{85}
\dotor{18}{13}{85}
\dotor{19}{13}{85}
\dotor{20}{13}{85}
\dotor{21}{13}{85}
\dotor{21}{12}{85}
\dotor{21}{11}{85}
\dotor{21}{10}{85}
\dotor{21}{9}{85}
\dotor{21}{8}{85}
\dotor{21}{7}{85}
\dotor{21}{6}{85}
\dotor{21}{5}{85}
\dotor{20}{5}{85}
\dotor{19}{5}{85}
\dotor{18}{5}{85}
\dotor{18}{6}{85}
\dotor{18}{7}{85}
\dotor{17}{7}{85}
\dotor{16}{7}{85}
\dotor{16}{8}{85}
\dotor{16}{9}{85}
\dotor{15}{9}{85}
\dotor{14}{9}{85}

\path [dotted, draw, thin] (6,0) grid[step=0.22cm] (28,22);

\draw [dashed, color=green!70!black] (15,0) -| (15,22);
\draw [thick, color=green!70!black] (15,9.5) -| (15,12.5);

\fill (14.5,9.5) circle (0.16);
\node (D) at (10.2,9.5) {$H_{0}=B_0$};
\fill (14.5,12.5) circle (0.16);

\end{tikzpicture}
\caption{Consider the hole $H$ (in green) and the arc $B$ (the orange dots) of Figure \ref{fig:Hole:def} and suppose that $H_0=B_0$. Then, the interior of $B$ is included into the interior of $H$. In this case, $H$ is not minimum. The arc $B$ is minimum if and if only it is the only path from $B_0$ to $B_{|B|-1}$ which is inside $\inter{B}$ and which belongs to the terminal assembly $\uniterm$. In this case, it is also the leftmost priority path from $B_0$ to $B_{|B|-1}$.}
\label{fig:MinArc:def}
\end{figure}

\begin{fact}
\label{fact:arc:int}
For a hole $\Hole$, we have $w_{\Hole}\leq w_{\min(\inter{\Hole})} < e_{\min(\inter{\Hole})} \leq e_{\Hole}$.
\end{fact}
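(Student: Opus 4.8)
The plan is to realise $\min(\inter{\Hole})$ as a path living inside the closed region bounded by (the curve traced by) $\Hole$ and the frontier $\cork{\Hole}$, and then to observe that this region never leaves the band of columns between $w_{\Hole}$ and $e_{\Hole}$. First I would unwind the definitions: $\min(\inter{\Hole})$ is, by construction, the left- or right-priority path of $\prodHole{\inter{\Hole}}$, hence one of its elements. In particular it begins with $\Hole_0\Hole_1$, ends with $\Hole_{|\Hole|-2}\Hole_{|\Hole|-1}$, every position it visits other than $\Hole_0,\Hole_{|\Hole|-1}$ lies in the interior $\inter{\Hole}$, and $\Hole_0,\Hole_{|\Hole|-1}$ are themselves tiles of $\Hole$. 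It therefore suffices to prove that $\inter{\Hole}$ only meets columns $c'$ with $w_{\Hole}\le c'\le e_{\Hole}$: then every position of $\min(\inter{\Hole})$ has $x$-coordinate in $[w_{\Hole},e_{\Hole}]$, which gives $w_{\Hole}\le w_{\min(\inter{\Hole})}$ and $e_{\min(\inter{\Hole})}\le e_{\Hole}$.

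For the confinement of $\inter{\Hole}$, note that the topological boundary of the region $\inter{\Hole}$ is contained in the union of the point set swept by $\Hole$ and the segment $\cork{\Hole}$. By definition of $w_{\Hole}$ and $e_{\Hole}$, the path $\Hole$ only visits columns in $[w_{\Hole},e_{\Hole}]$; and $\cork{\Hole}$ joins the positions of the glues $\glueN{0}{1}$ and $\glueN{|\Hole|-2}{|\Hole|-1}$, which are glues of $\Hole$ and hence located on columns of $\Hole$, i.e.\ again within $[w_{\Hole},e_{\Hole}]$. So the whole bounding curve stays in this band. Since $\inter{\Hole}$ is a bounded complementary component of that curve, it cannot contain a point strictly west of column $w_{\Hole}$ (nor strictly east of column $e_{\Hole}$): from such a point a westward (resp.\ eastward) horizontal ray would never meet the boundary, contradicting boundedness of the component. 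This establishes the two non-strict inequalities.

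For the strict inequality $w_{\min(\inter{\Hole})}<e_{\min(\inter{\Hole})}$, I would use that $\min(\inter{\Hole})$ is itself a hole, on the same column $c$ as $\Hole$. If it occupied only column $c$ it would be a vertical path lying on column $c$, and then its own frontier segment (which joins its first and last glues, both on column $c$) would run along column $c$ and would intersect the path in more than its two extremal glues, contradicting the definition of a hole. (A degenerate two-tile ``hole'' has empty interior and is implicitly excluded, as $\inter{\Hole}$ is taken to possess a genuine left/right side.)

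The only delicate point is the confinement step: it relies on the elementary planar fact that a bounded complementary region of a curve lies inside the horizontal extent of that curve, and it requires a little care with the half-unit offsets caused by glue positions sitting on edge midpoints rather than on tile centres. Everything else is routine bookkeeping with the definitions of $\prodHole{\cdot}$, of holes, and of the priority path.
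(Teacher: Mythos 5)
Your proof is correct and fleshes out exactly the justification the paper leaves implicit: the paper states this as a Fact immediately after remarking that $\min(\inter{\Hole})$ is a hole whose interior is included in $\inter{\Hole}$, which is precisely your confinement argument. One small simplification for the middle strict inequality: since $\min(\inter{\Hole})$ begins with $\Hole_0\Hole_1$ and the glue between them is \emph{on column $c$} (hence horizontal by the paper's conventions), the path already occupies both columns $c$ and $c+1$, so $w_{\min(\inter{\Hole})}<e_{\min(\inter{\Hole})}$ without any case analysis on degenerate vertical holes.
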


If $\glueN{0}{1}$ and $\glueN{|\Hole|-2}{|\Hole|-1}$ are the only two glues of $\Hole$ which are on glue column $c$ then $\Hole$ is called an \emph{arc}. In this case, if $x_{\Hole_{1,\ldots,|\Hole|-2}}>c$, we say that $\Hole$ is a \emph{positive} arc and $\glueN{0}{1}$ points east while $\glueN{|\Hole|-2}{|\Hole|-1}$ points west. Otherwise, we have $x_{\Hole_{1,\ldots,|\Hole|-2}}<c$ and $\Hole$ is a \emph{negative} arc and $\glueN{0}{1}$ points west while $\glueN{|\Hole|-2}{|\Hole|-1}$ points east.  Finally, here is a technical lemma which will be useful later.

\begin{lemma} \label{combi:half:neq:Three}
Consider a hole $H$ on glue column $c$ and its interior $\inter{H}$. If $A$ is a positive arc of glue column $c$ which is inside $\inter{H}$ (apart from its first and last tiles) then there exist $0\leq i \leq j \leq |H|-1$ such that $H_{i,\ldots,j}$ is a positive arc on glue column $c$ with $$\min\{y_{H_i}, y_{H_j}\} \leq \min\{y_{A_0}, y_{A_{|A|-1}}\} \leq  \max\{y_{A_0}, y_{A_{|A|-1}}\} \leq \max\{y_{H_i}, y_{H_j}\}.$$
\end{lemma}

\begin{proof}
Consider $0\leq k \leq |A|-1$ such that $x_{A_k}=e_A$ and consider the ray $l^k$ starting at position $\pos{A_k}$ and going east, see Figure \ref{fig:Arc:ArcInHole}. If this ray encounter no tile of $H$ then $A_k$ (and thus $A$) cannot be inside $\inter{H}$ since this area is finite. Now suppose that there exists $0\leq k' \leq |H|-1$ such that $H_{k'}$ is on $l^k$. Since $H$ is hole there exists $i\leq k' \leq j$ such that $H_{i,\ldots, j}$ is a positive arc of glue column $c$. If we have $\min\{y_{H_i}, y_{H_j}\} \leq \min\{y_{A_0}, y_{A_{|A|-1}}\} \leq  \max\{y_{A_0}, y_{A_{|A|-1}}\} \leq \max\{y_{H_i}, y_{H_j}\}$ then the lemma is true\footnote{Note that we are looking for a necessary condition not a sufficient one.}. Otherwise, if $\max\{y_{H_i}, y_{H_j}\} \leq \min\{y_{A_0}, y_{A_{|A|-1}}\}$, it is possible to modify the ray $l^k$ such that it turns around this obstacle by going north without crossing glue column $c$, see Figure \ref{fig:Arc:ArcInHole}. A similar reasoning can be done if  $\min\{y_{H_i}, y_{H_j}\} \geq \max\{y_{A_0}, y_{A_{|A|-1}}\}$. Finally, if $\min\{y_{A_0}, y_{A_{|A|-1}}\} < y_{H_i}  < \max\{y_{A_0}, y_{A_{|A|-1}}\}$ then there are two possibilities. The first one is that $H_{i,\ldots,j}$ crosses through $A$. In this case at least one glue of $A$ is not inside $\inter{H}$ and $A$ cannot be inside $\inter{H}$. The other case is that $H_k$ is a tile of $A$ and the start-point of $l^k$, see Figure \ref{fig:Arc:ArcInHole}, and we also have $\min\{y_{A_0}, y_{A_{|A|-1}}\} < y_{H_j} < \max\{y_{A_0}, y_{A_{|A|-1}}\}$. Nevertheless, the arc $H_{i,\ldots,j}$ is not sufficient to show that $A$ is inside $\inter{H}$. 

\begin{figure}
\center
\begin{tikzpicture}[x=0.22cm,y=0.22cm]

\fill[fill=lightblue, draw opacity=0.8]  (40,30.5) -|  (36.5,24.5) -| (50.5,20.5)-| (44.5,17.5)-| (35.5,11.5) -| (54.5,30.5)-| (59.5,6.5) -| (35.5,2.5)-| (64.5,38.5)-| (40,38.5);

\draw[very thick]  (40.5,30.5) -|  (36.5,24.5) -| (50.5,20.5)-| (44.5,17.5)-| (35.5,11.5) -| (54.5,30.5)-| (59.5,6.5) -| (35.5,2.5)-| (64.5,38.5)-| (39.5,38.5);
\draw[thick] (39.5,27.5) -| (50.5,14.5)-| (39.5,14.5);



\tile{40}{30}{85}
\tile{39}{30}{85}
\tile{38}{30}{85}
\tile{37}{30}{85}
\tile{36}{30}{85}
\tile{36}{29}{85}
\tile{36}{28}{85}
\tile{36}{27}{85}
\tile{36}{26}{85}
\tile{36}{25}{85}
\tile{36}{24}{85}
\tile{37}{24}{85}
\tile{38}{24}{85}

\tilelb{39}{24}{85}
\tilelb{40}{24}{85}
\tilelb{41}{24}{85}
\tilelb{42}{24}{85}
\tilelb{43}{24}{85}
\tilelb{44}{24}{85}
\tilelb{45}{24}{85}
\tilelb{46}{24}{85}
\tilelb{47}{24}{85}
\tilelb{48}{24}{85}
\tilelb{49}{24}{85}
\tilelb{50}{24}{85}
\tilelb{50}{23}{85}
\tilelb{50}{22}{85}
\tilelb{50}{21}{85}
\tilelb{50}{20}{85}
\tilelb{49}{20}{85}
\tilelb{48}{20}{85}
\tilelb{47}{20}{85}
\tilelb{46}{20}{85}
\tilelb{45}{20}{85}
\tilelb{44}{20}{85}
\tilelb{44}{19}{85}
\tilelb{44}{18}{85}
\tilelb{44}{17}{85}
\tilelb{43}{17}{85}
\tilelb{42}{17}{85}
\tilelb{41}{17}{85}
\tilelb{40}{17}{85}
\tilelb{39}{17}{85}
\tile{38}{17}{85}
\tile{37}{17}{85}
\tile{36}{17}{85}
\tile{35}{17}{85}
\tile{35}{16}{85}
\tile{35}{15}{85}
\tile{35}{14}{85}
\tile{35}{13}{85}
\tile{35}{12}{85}
\tile{35}{11}{85}
\tile{36}{11}{85}
\tile{37}{11}{85}
\tile{38}{11}{85}
\tilemb{39}{11}{85}
\tilemb{40}{11}{85}
\tilemb{41}{11}{85}
\tilemb{42}{11}{85}
\tilemb{43}{11}{85}
\tilemb{44}{11}{85}
\tilemb{45}{11}{85}
\tilemb{46}{11}{85}
\tilemb{47}{11}{85}
\tilemb{48}{11}{85}
\tilemb{49}{11}{85}
\tilemb{50}{11}{85}
\tilemb{51}{11}{85}
\tilemb{52}{11}{85}
\tilemb{53}{11}{85}
\tilemb{54}{11}{85}
\tilemb{54}{12}{85}
\tilemb{54}{13}{85}
\tilemb{54}{14}{85}
\tilemb{54}{15}{85}
\tilemb{54}{16}{85}
\tilemb{54}{17}{85}
\tilemb{54}{18}{85}
\tilemb{54}{19}{85}
\tilemb{54}{20}{85}
\tilemb{54}{21}{85}
\tilemb{54}{22}{85}
\tilemb{54}{23}{85}
\tilemb{54}{24}{85}
\tilemb{54}{25}{85}
\tilemb{54}{26}{85}
\tilemb{54}{27}{85}
\tilemb{54}{28}{85}
\tilemb{54}{29}{85}
\tilemb{54}{30}{85}
\tilemb{55}{30}{85}
\tilemb{56}{30}{85}
\tilemb{57}{30}{85}
\tilemb{58}{30}{85}
\tilemb{59}{30}{85}
\tilemb{59}{29}{85}
\tilemb{59}{28}{85}
\tilemb{59}{27}{85}
\tilemb{59}{26}{85}
\tilemb{59}{25}{85}
\tilemb{59}{24}{85}
\tilemb{59}{23}{85}
\tilemb{59}{22}{85}
\tilemb{59}{21}{85}
\tilemb{59}{20}{85}
\tilemb{59}{19}{85}
\tilemb{59}{18}{85}
\tilemb{59}{17}{85}
\tilemb{59}{16}{85}
\tilemb{59}{15}{85}
\tilemb{59}{14}{85}
\tilemb{59}{13}{85}
\tilemb{59}{12}{85}
\tilemb{59}{11}{85}
\tilemb{59}{10}{85}
\tilemb{59}{9}{85}
\tilemb{59}{8}{85}
\tilemb{59}{7}{85}
\tilemb{59}{6}{85}
\tilemb{58}{6}{85}
\tilemb{57}{6}{85}
\tilemb{56}{6}{85}
\tilemb{55}{6}{85}
\tilemb{54}{6}{85}
\tilemb{53}{6}{85}
\tilemb{52}{6}{85}
\tilemb{51}{6}{85}
\tilemb{50}{6}{85}
\tilemb{49}{6}{85}
\tilemb{48}{6}{85}
\tilemb{47}{6}{85}
\tilemb{46}{6}{85}
\tilemb{45}{6}{85}
\tilemb{44}{6}{85}
\tilemb{43}{6}{85}
\tilemb{42}{6}{85}
\tilemb{41}{6}{85}
\tilemb{40}{6}{85}
\tilemb{39}{6}{85}
\tile{38}{6}{85}
\tile{37}{6}{85}
\tile{36}{6}{85}
\tile{35}{6}{85}
\tile{35}{5}{85}
\tile{35}{4}{85}
\tile{35}{3}{85}
\tile{35}{2}{85}
\tile{36}{2}{85}
\tile{37}{2}{85}
\tile{38}{2}{85}
\tileb{39}{2}{85}
\tileb{40}{2}{85}
\tileb{41}{2}{85}
\tileb{42}{2}{85}
\tileb{43}{2}{85}
\tileb{44}{2}{85}
\tileb{45}{2}{85}
\tileb{46}{2}{85}
\tileb{47}{2}{85}
\tileb{48}{2}{85}
\tileb{49}{2}{85}
\tileb{49}{2}{85}
\tileb{50}{2}{85}
\tileb{51}{2}{85}
\tileb{52}{2}{85}
\tileb{53}{2}{85}
\tileb{54}{2}{85}
\tileb{55}{2}{85}
\tileb{56}{2}{85}
\tileb{57}{2}{85}
\tileb{58}{2}{85}
\tileb{59}{2}{85}
\tileb{60}{2}{85}
\tileb{61}{2}{85}
\tileb{62}{2}{85}
\tileb{63}{2}{85}
\tileb{64}{2}{85}
\tileb{64}{3}{85}
\tileb{64}{4}{85}
\tileb{64}{5}{85}
\tileb{64}{6}{85}
\tileb{64}{7}{85}
\tileb{64}{8}{85}
\tileb{64}{9}{85}
\tileb{64}{10}{85}
\tileb{64}{11}{85}
\tileb{64}{12}{85}
\tileb{64}{13}{85}
\tileb{64}{14}{85}
\tileb{64}{15}{85}
\tileb{64}{16}{85}
\tileb{64}{17}{85}
\tileb{64}{18}{85}
\tileb{64}{19}{85}
\tileb{64}{20}{85}
\tileb{64}{21}{85}
\tileb{64}{22}{85}
\tileb{64}{23}{85}
\tileb{64}{24}{85}
\tileb{64}{25}{85}
\tileb{64}{26}{85}
\tileb{64}{27}{85}
\tileb{64}{28}{85}
\tileb{64}{29}{85}
\tileb{64}{30}{85}
\tileb{64}{31}{85}
\tileb{64}{32}{85}
\tileb{64}{33}{85}
\tileb{64}{34}{85}
\tileb{64}{35}{85}
\tileb{64}{36}{85}
\tileb{64}{37}{85}
\tileb{64}{38}{85}
\tileb{63}{38}{85}
\tileb{62}{38}{85}
\tileb{61}{38}{85}
\tileb{60}{38}{85}
\tileb{59}{38}{85}
\tileb{58}{38}{85}
\tileb{57}{38}{85}
\tileb{56}{38}{85}
\tileb{55}{38}{85}
\tileb{54}{38}{85}
\tileb{53}{38}{85}
\tileb{52}{38}{85}
\tileb{51}{38}{85}
\tileb{50}{38}{85}
\tileb{49}{38}{85}
\tileb{48}{38}{85}
\tileb{47}{38}{85}
\tileb{46}{38}{85}
\tileb{45}{38}{85}
\tileb{44}{38}{85}
\tileb{43}{38}{85}
\tileb{42}{38}{85}
\tileb{41}{38}{85}
\tileb{40}{38}{85}
\tileb{39}{38}{85}

\doty{39}{27}{85}
\doty{40}{27}{85}
\doty{41}{27}{85}
\doty{42}{27}{85}
\doty{43}{27}{85}
\doty{44}{27}{85}
\doty{45}{27}{85}
\doty{46}{27}{85}
\doty{47}{27}{85}
\doty{48}{27}{85}
\doty{49}{27}{85}
\doty{50}{27}{85}
\doty{50}{26}{85}
\doty{50}{25}{85}
\doty{50}{24}{85}
\doty{50}{23}{85}
\doty{50}{22}{85}
\doty{50}{21}{85}
\doty{50}{20}{85}
\doty{50}{19}{85}
\doty{50}{18}{85}
\doty{50}{17}{85}
\doty{50}{16}{85}
\doty{50}{15}{85}
\doty{50}{14}{85}
\doty{49}{14}{85}
\doty{48}{14}{85}
\doty{47}{14}{85}
\doty{46}{14}{85}
\doty{45}{14}{85}
\doty{44}{14}{85}
\doty{43}{14}{85}
\doty{42}{14}{85}
\doty{41}{14}{85}
\doty{40}{14}{85}
\doty{39}{14}{85}

\path [dotted, draw, thin] (33,0) grid[step=0.22cm] (70,41);

\draw [dashed] (40,0) -| (40,41);
\draw [thick, color=blue] (40,38.5) -| (40,30.5);

\draw [thick, color=red] (50.5,20.5) -| (54.2,30.8) -| (59.8,20.5) -- (65,20.5);
\draw [thick, dashed, color=red,->] (64,20.5) -- (70,20.5);

\fill (40.5,30.5) circle (0.16);
\node (D) at (42.2,30.5) {$\Hole_0$};

\fill (39.5,24.5) circle (0.16);
\node (D) at (39.5,23) {$\Hole_i$};

\fill (39.5,17.5) circle (0.16);
\node (D) at (39.5,19) {$\Hole_{j}$};

\fill (39.5,11.5) circle (0.16);
\node (D) at (39.5,10) {$\Hole_{i'}$};

\fill (39.5,6.5) circle (0.16);
\node (D) at (39.5,7.7) {$\Hole_{j'}$};

\fill (39.5,2.5) circle (0.16);
\node (D) at (39.5,0.8) {$\Hole_{i''}$};

\fill (50.5,20.5) circle (0.16);
\node (D) at (52.2,19.2) {$A_k$};

\fill (39.5,38.5) circle (0.16);
\node (D) at (36.2,38.5) {$\Hole_{|\Hole|-1}$};

\end{tikzpicture}
\caption{Proof of Lemma \ref{lem:next:order}: consider a hole $H$ (the rectangle tiles) of glue column $c$ and an arc $A$ (the yellow dots) of glue column $c$. In this example, $A$ is inside $\inter{H}$ (the area colored in light blue). The arc $H_{i,\ldots,j}$ (in light blue) intersects with $A$ but it is not sufficient by itself to show that $A$ is inside $\inter{H}$. Consider the red ray starting at $A_k$ and on going west. This ray can turn around the arc $H_{i'\ldots,j'}$ (in blue) but it cannot turn around $H_{i'',\ldots,|H|-1}$ (in dark blue) without crossing glue column $c$.}
\label{fig:Arc:ArcInHole}
\end{figure}

\end{proof}

\subsection{Arcs of a path}
\label{Sec:arc:paths}

Consider a producible path $P$ whose last tile if the easternmost one, a glue column $\Bone\leq c \leq \Btwo$ 
and two indices $0\leq i < j \leq |P|-1$. We say that $P_{i,\ldots,j}$ is an \emph{arc} of $P$ on glue column $c$ if and only if $P_{i,\ldots,j}$ is a positive arc on glue column $c$. Indeed, with the chosen conventions, we focus mainly on the east side of glue column $c$ and thus on the positive arc. Note that since $P_{i,\ldots,j}$ is positive then $\glueP{i}{i+1}$ must point east and $\glueP{j}{j+1}$ must point west.

Now we introduce some notions to compare and to order the different arcs of $P$, see Figure \ref{fig:Arc:Path} for illustrations.
If $P_{i,\ldots,j}$ is an upward (resp. downward) arc, the \emph{next} glue of $P$ according to arc $P_{i,\ldots,j}$ is defined as the southernmost (resp. northernmost) glue of $P$ among the glues which are on glue column $c$ and strictly north (resp. south) of $\glueP{j-1}{j}$\footnote{Note that the next glue is always correctly defined since $\glueP{j-1}{j}$ points west and thus cannot be one of the two glues of $P$ which are visible from the north or south since they both point east by Lemma \ref{lem:glue:prop2} and Lemma \ref{Uturn:glueWest}.},\emph{i.e.} the next glue is on glue column $c$ and its $y$-coordinate is:

\begin{figure}
\center
\begin{tikzpicture}[x=0.22cm,y=0.22cm]

\fill[fill=orange!30!white, draw opacity=0.8] (8,4.5) -| (13.5,13.5) -| (8,13.5);
\fill[fill=yellow!30!white, draw opacity=0.8] (8,7.5) -| (10.5,10.5) -| (8,10.5);
\fill[fill=orange!30!white, draw opacity=0.8] (25,4.5) -| (29.5,9.5) -| (25,9.5);
\fill[fill=orange!30!white, draw opacity=0.8] (25,11.5) -| (29.5,16.5) -| (25,16.5);
\fill[fill=orange!30!white, draw opacity=0.8] (41,4.5) -| (45.5,9.5) -| (41,9.5);
\fill[fill=blue!30!white, draw opacity=0.8] (41,13.5) -| (45.5,18.5) -| (41,13.5);

\draw[very thick] (7.5,4.5) -| (13.5,13.5) -| (5.5,10.5) -| (10.5,7.5) -| (2.5,16.5) -| (15.5,16.5);
\draw[very thick] (24.5,4.5) -| (29.5,9.5) -| (22.5,11.5) -| (29.5,16.5) -| (22.5,18.5) -| (31.5,18.5);
\draw[very thick] (40.5,4.5) -| (45.5,9.5)-| (36.5,18.5) -| (45.5,13.5)-| (38.5,11.5) -| (48.5,11.5);



\tileor{7}{4}{85}
\tileor{8}{4}{85}
\tileor{9}{4}{85}
\tileor{10}{4}{85}
\tileor{11}{4}{85}
\tileor{12}{4}{85}
\tileor{13}{4}{85}
\tileor{13}{5}{85}
\tileor{13}{6}{85}
\tileor{13}{7}{85}
\tileor{13}{8}{85}
\tileor{13}{9}{85}
\tileor{13}{10}{85}
\tileor{13}{11}{85}
\tileor{13}{12}{85}
\tileor{13}{13}{85}
\tileor{12}{13}{85}
\tileor{11}{13}{85}
\tileor{10}{13}{85}
\tileor{9}{13}{85}
\tileor{8}{13}{85}
\tileor{7}{13}{85}
\tile{6}{13}{85}
\tile{5}{13}{85}
\tile{5}{12}{85}
\tile{5}{11}{85}
\tile{5}{10}{85}
\tile{6}{10}{85}
\tiley{7}{10}{40}
\tiley{8}{10}{85}
\tiley{9}{10}{85}
\tiley{10}{10}{85}
\tiley{10}{9}{85}
\tiley{10}{8}{85}
\tiley{10}{7}{85}
\tiley{9}{7}{85}
\tiley{8}{7}{85}
\tiley{7}{7}{85}
\tile{6}{7}{85}
\tile{5}{7}{85}
\tile{4}{7}{85}
\tile{3}{7}{85}
\tile{2}{7}{85}
\tile{2}{8}{85}
\tile{2}{9}{85}
\tile{2}{10}{85}
\tile{2}{11}{85}
\tile{2}{12}{85}
\tile{2}{13}{85}
\tile{2}{14}{85}
\tile{2}{15}{85}
\tile{2}{16}{85}
\tile{3}{16}{85}
\tile{4}{16}{85}
\tile{5}{16}{85}
\tile{6}{16}{85}
\tile{7}{16}{85}
\tile{8}{16}{85}
\tile{9}{16}{85}
\tile{10}{16}{85}
\tile{11}{16}{85}
\tile{12}{16}{85}
\tile{13}{16}{85}
\tile{14}{16}{85}
\tile{15}{16}{85}

\tiley{24}{4}{85}
\tiley{25}{4}{85}
\tiley{26}{4}{85}
\tiley{27}{4}{85}
\tiley{28}{4}{85}
\tiley{29}{4}{85}
\tiley{29}{5}{85}
\tiley{29}{6}{85}
\tiley{29}{7}{85}
\tiley{29}{8}{85}
\tiley{29}{9}{85}
\tiley{28}{9}{85}
\tiley{27}{9}{85}
\tiley{26}{9}{85}
\tiley{25}{9}{85}
\tiley{24}{9}{85}
\tile{23}{9}{85}
\tile{22}{9}{85}
\tile{22}{9}{85}
\tile{22}{10}{85}
\tile{22}{11}{85}
\tile{23}{11}{85}
\tileor{24}{11}{85}
\tileor{25}{11}{85}
\tileor{26}{11}{85}
\tileor{27}{11}{85}
\tileor{28}{11}{85}
\tileor{29}{11}{85}
\tileor{29}{12}{85}
\tileor{29}{13}{85}
\tileor{29}{14}{85}
\tileor{29}{15}{85}
\tileor{29}{16}{85}
\tileor{28}{16}{85}
\tileor{27}{16}{85}
\tileor{26}{16}{85}
\tileor{25}{16}{85}
\tileor{24}{16}{85}
\tile{23}{16}{85}
\tile{22}{16}{85}
\tile{22}{17}{85}
\tile{22}{18}{85}
\tile{23}{18}{85}
\tile{24}{18}{85}
\tile{25}{18}{85}
\tile{26}{18}{85}
\tile{27}{18}{85}
\tile{28}{18}{85}
\tile{29}{18}{85}
\tile{30}{18}{85}
\tile{31}{18}{85}

\tileor{40}{4}{85}
\tileor{41}{4}{85}
\tileor{42}{4}{85}
\tileor{43}{4}{85}
\tileor{44}{4}{85}
\tileor{45}{4}{85}
\tileor{45}{5}{85}
\tileor{45}{6}{85}
\tileor{45}{7}{85}
\tileor{45}{8}{85}
\tileor{45}{9}{85}
\tileor{44}{9}{85}
\tileor{43}{9}{85}
\tileor{42}{9}{85}
\tileor{41}{9}{85}
\tileor{40}{9}{85}
\tile{39}{9}{85}
\tile{38}{9}{85}
\tile{37}{9}{85}
\tile{36}{9}{85}
\tile{36}{10}{85}
\tile{36}{11}{85}
\tile{36}{12}{85}
\tile{36}{13}{85}
\tile{36}{14}{85}
\tile{36}{15}{85}
\tile{36}{16}{85}
\tile{36}{17}{85}
\tile{36}{18}{85}
\tile{37}{18}{85}
\tile{38}{18}{85}
\tile{39}{18}{85}
\tileb{40}{18}{85}
\tileb{41}{18}{85}
\tileb{42}{18}{85}
\tileb{43}{18}{85}
\tileb{44}{18}{85}
\tileb{45}{18}{85}
\tileb{45}{17}{85}
\tileb{45}{16}{85}
\tileb{45}{15}{85}
\tileb{45}{14}{85}
\tileb{45}{13}{85}
\tileb{44}{13}{85}
\tileb{43}{13}{85}
\tileb{42}{13}{85}
\tileb{41}{13}{85}
\tileb{40}{13}{85}
\tile{39}{13}{85}
\tile{38}{13}{85}
\tile{38}{12}{85}
\tile{38}{11}{85}
\tile{39}{11}{85}
\tile{40}{11}{85}
\tile{41}{11}{85}
\tile{42}{11}{85}
\tile{43}{11}{85}
\tile{44}{11}{85}
\tile{45}{11}{85}
\tile{46}{11}{85}
\tile{47}{11}{85}
\tile{48}{11}{85}

\path [dotted, draw, thin] (0,0) grid[step=0.22cm] (55,22);

\draw [dashed] (8,0) -| (8,22);
\draw [thick, color=orange] (8,4.5) -| (8,13.5);
\draw [thick, color=yellow] (8,7.5) -| (8,10.5);

\draw [dashed] (25,0) -| (25,22);
\draw [thick, color=orange] (25,11.5) -| (25,16.5);
\draw [thick, color=orange] (25,4.5) -| (25,9.5);

\draw [dashed] (41,0) -| (41,22);
\draw [thick, color=orange] (41,4.5) -| (41,9.5);
\draw [thick, color=blue] (41,18.5) -| (41,13.5);

\fill (7.5,4.5) circle (0.16);
\node (D) at (5.7,4.5) {$P_0$};

\fill (7.5,16.5) circle (0.16);
\node (D) at (6.7,18.2) {$P_\lastc$};

\fill (24.5,4.5) circle (0.16);
\node (D) at (22.7,4.5) {$Q_0$};

\fill (24.5,18.5) circle (0.16);
\node (D) at (23.7,20.2) {$Q_\lastc$};

\fill (40.5,4.5) circle (0.16);
\node (D) at (38.7,4.5) {$R_0$};

\fill (40.5,11.5) circle (0.16);
\node (D) at (40,12.7) {$R_\lastc$};

\end{tikzpicture}
\caption{Three examples of path and their arcs. For each path we denote by $\lastc$ the index of the last glue on the relevant glue column. In the first one, the path $P$ has two arcs, the upward orange arc is dominating the downward yellow arc. The next glue according to the orange arc is $\glueP{\lastc}{\lastc+1}$.
In the second example, the path $Q$ has two dominant arcs, the upward arc in yellow is south and anterior of the upward arc in orange. The next glue according to the yellow arc is the first glue of the orange arc and the next glue according the orange arc is $Q_{\lastc}$. In the third example, the path $R$ has two dominant arcs, the upward arc in orange is south and anterior of the downward arc in blue. The next glue of both these arcs is $\glueR{\lastc}{\lastc+1}$.}
\label{fig:Arc:Path}
\end{figure}

\begin{align*}
\min\{y:\text{there } & \text{exists $0\leq k\leq |P|-1$ such that $\glueP{k}{k+1}$ is} \\ & \text{on glue column $c$, $y_{P_k}=y$ and $y>y_{P_j}$ (resp. $y<y_{P_j}$)}\}.
\end{align*}
Now, consider another arc $P_{i',\ldots,j'}$ on glue column $c$. We say that $P_{i',\ldots,j'}$ is \emph{north} of $P_{i,\ldots,j}$ (and $P_{i,\ldots,j}$ is \emph{south }of $P_{i',\ldots,j'}$) if and only if $\max\{y_{P_{i}},y_{P_{j}}\}<\min\{y_{P_{i'}},y_{P_{j'}}\}.$ Now, if $$\min\{y_{P_{i}},y_{P_{j}}\}<\min\{y_{P_{i'}},y_{P_{j'}}\}<\max\{y_{P_{i'}},y_{P_{j'}}\}<\max\{y_{P_{i}},y_{P_{j}}\},$$  we say that the arc $P_{i,\ldots,j}$ \emph{dominates} $P_{i',\ldots,j'}$. If an arc $P$ is not dominated by another arc of $P$, we say that this arc is \emph{dominant}. Note that the case $$\min\{y_{P_{i}},y_{P_{j}}\}<\min\{y_{P_{i'}},y_{P_{j'}}\}<\max\{y_{P_{i}},y_{P_{j}}\}<\max\{y_{P_{i'}},y_{P_{j'}}\}$$ is not possible otherwise the arcs $P_{i,\ldots,j}$ and $P_{i',\ldots,j'}$ would intersect, contradicting the fact that $P$ is a path (which is  self-avoiding by definition). Thus, when considering two dominant arcs, one has to be north of the other one.  Finally, if $j< i'$, we say that the arc $P_{i,\ldots,j}$ is \emph{anterior} to the arc $P_{i',\ldots,j'}$ otherwise $P_{i,\ldots,j}$ is \emph{posterior} to the arc $P_{i',\ldots,j'}$. 

To summarize, the notion of anteriority/posteriority induces a total order on the arcs of $P$ (according to their appearances in $P$). The relation of domination induces a partial order on the arcs of $P$ and the notion of south/north induces a total order on the dominant arcs of $P$. The main result of the following section is to show that the order induced by the anteriority/posteriority restricted to the upward dominant arcs (resp. downward dominant arcs) is equivalent to the order induced by the notion of south/north.

\subsection{Properties of dominant arcs}
\label{Sec:arc:dominant}

The next subsection focus on analyzing the relationship between the three orders introduced in Subsection \ref{Sec:arc:paths}. To do so, we consider an arc $P_{i,\ldots,j}$ of $P$ and we focus on its next $\glueP{i'}{i'+1}$, the first step is to show that $i<j<i'$.

\begin{lemma}
\label{lem:next:order}
Consider a producible path $P$ whose last tile is the easternmost one, a glue column $c$ with $\Bone \leq c \leq \Btwo$, two indices $0<i<j<|P|-1$ such that $P_{i,\ldots,j}$ is a dominant arc of $P$ on glue column $c$ and the index $i'$ of the next glue of $P$ according to this arc. Then, we have $i<j<i'$.
\end{lemma}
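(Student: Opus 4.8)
The plan is to use the hypothesis $i<j$ (which is part of the assumptions: $0<i<j<|P|-1$) and reduce the statement to proving $j<k$, i.e.\ to excluding every index $k\le j$. By the vertical symmetry I first assume $P_{i,\ldots,j}$ is an upward positive arc on column $c$, so that $y_{P_i}<y_{P_j}$, the glue $\glueP{i}{i+1}$ is horizontal, on column $c$, pointing east (at height $y_{P_i}$), the glue $\glueP{j-1}{j}$ is horizontal, on column $c$, pointing west (at height $y_{P_j}$), and the next glue $\glueP{k}{k+1}$ is the lowest glue of $P$ on column $c$ lying strictly north of $\glueP{j-1}{j}$.

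The values $i\le k\le j$ are ruled out directly. If $i\le k\le j-2$, then $\glueP{k}{k+1}$ is a glue of the subpath $P_{i,\ldots,j}$ on column $c$; but $P_{i,\ldots,j}$ is an arc on column $c$, so its only glues on column $c$ are $\glueP{i}{i+1}$ and $\glueP{j-1}{j}$, which forces $k=i$ --- yet $\glueP{i}{i+1}$ lies at height $y_{P_i}<y_{P_j}$, contradicting that the next glue is strictly north of $\glueP{j-1}{j}$. The value $k=j-1$ is impossible because a glue is not strictly north of itself, and $k=j$ is impossible because $\glueP{j}{j+1}$ is not on column $c$: indeed $P_j$ is on column $c$ and $(c+1,y_{P_j})=\pos{P_{j-1}}$ is already occupied, so $P_{j+1}$ cannot lie on column $c+1$ at height $y_{P_j}$.

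It remains to rule out $k<i$, which is the heart of the argument. Here I would set up the standard separating curve. Since $c>e_\sigma$, the subpath $P_{i,\ldots,j}$ has no glue of $\sigma$ on column $c$ and exactly two glues of its own there, so $\glueP{i}{i+1}$ is its southernmost and $\glueP{j-1}{j}$ its northernmost glue on column $c$; it follows that $(i,j-1)$ is an upward cut of the producible prefix $P_{0,\ldots,j}$. Let $\mathcal{L}$ be the bi-infinite simple curve made of the south glue-ray of $\glueP{i}{i+1}$, the curve $P_{i,\ldots,j}$ truncated at its two glue positions, and the north glue-ray of $\glueP{j-1}{j}$, and let $\mathcal{C}$ be its workspace side (the one containing every column $d\ge e_{P_{i,\ldots,j}}$). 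I would first record three facts: $P_0$ lies strictly west of column $c$ (because $P$ is producible and $c\ge e_\sigma+|T|+1>e_\sigma+1\ge w_P$), hence off $\mathcal L$ and not in $\mathcal C$; $P_j$ and $P_{j+1}$ lie west of column $c+1$ and off $\mathcal L$, hence also not in $\mathcal C$; and $P_{|P|-1}$ lies in $\mathcal C$, because $e_P$ is the unique easternmost column of $P$, so no tile of $P_{i,\ldots,j}$ (all of index $\le j<|P|-1$) can lie on it, whence $e_{P_{i,\ldots,j}}<e_P$ and column $e_P$ is strictly inside $\mathcal C$. Then the suffix $P_{j+1,\ldots,|P|-1}$ runs from outside $\mathcal C$ into $\mathcal C$ and, being disjoint from $P_{i,\ldots,j}$, it can meet $\mathcal L$ only along its two glue-rays; using Lemma \ref{cor:longeast} (the glues of $P$ visible from the north or the south on column $c$ point east, since $c\ge w_P+|T|$) together with an analysis of how the prefix $P_{0,\ldots,i}$ and the suffix $P_{j,\ldots,|P|-1}$ interleave their ray-crossings, one shows that the lowest glue of $P$ on column $c$ strictly above $\glueP{j-1}{j}$ is one of these suffix crossings, hence has index $>j$. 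This contradicts $k<i$, and with the previous paragraph it gives $j<k$.

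The main obstacle is exactly this last step: the careful bookkeeping of how $P$ may cross the two glue-rays of $\mathcal L$ --- in particular, excluding the possibility that the southernmost glue of $P$ on column $c$ lying above $\glueP{j-1}{j}$ is a crossing performed before the arc (by an excursion of $P$ east of column $c$ that is disjoint from $P_{i,\ldots,j}$ and lies outside the hole's bounded interior). Everything else is routine unwinding of the definitions together with the cut/visibility machinery of Parts I and~II.
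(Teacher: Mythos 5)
Your elimination of the cases $i\le k\le j$ is correct and routine, but the case $k<i$ --- which you yourself flag as ``the main obstacle'' --- is where the entire content of the lemma lies, and you do not prove it: the sentence ``one shows that the lowest glue of $P$ on column $c$ strictly above $\glueP{j-1}{j}$ is one of these suffix crossings'' is a restatement of what must be shown, not an argument. Moreover, the object you chose makes this step genuinely hard. The curve $\mathcal L$ built from the two \emph{infinite} glue rays of $\glueP{i}{i+1}$ and $\glueP{j-1}{j}$ is only guaranteed not to be re-crossed by the arc $P_{i,\ldots,j}$ itself; both the prefix $P_{0,\ldots,i}$ and the suffix $P_{j+1,\ldots,|P|-1}$ may cross the north ray of $\glueP{j-1}{j}$, and nothing in your setup prevents a prefix crossing from being the southernmost glue of $P$ on column $c$ strictly north of $\glueP{j-1}{j}$ --- which is exactly the configuration $k<i$ you must exclude. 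So the gap is not a routine verification you omitted for brevity; it is the theorem.

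The paper takes a different route that bypasses this bookkeeping: assuming $k<i$, it forms the hole $H=P_{k,\ldots,j}$ whose frontier is the \emph{finite} vertical segment of column $c$ joining $\glueP{k}{k+1}$ to $\glueP{j-1}{j}$. By the very definition of the next glue, no glue of $P$ crosses that segment strictly between its endpoints, so $H\cup\cork{H}$ is a closed curve that no later part of $P$ may traverse except through $P_{k,\ldots,j}$ itself (forbidden by simplicity) or through the frontier (forbidden by the choice of $k$). The paper then argues that, because $\glueP{j-1}{j}$ points west, the continuation at $P_j$ is caught in the bounded interior $\inter{H}$ and hence cannot reach the easternmost column where $P_{|P|-1}$ lives. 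Note that the single geometric claim carrying that proof --- that $P_j$ lies on the interior side of $H$ --- is precisely the counterpart of the interleaving analysis you deferred, so you have correctly located the crux; but to close the argument you need the finite frontier determined by the hypothesized next glue, not the infinite visibility rays of the cut machinery, which give a region too large to trap anything.
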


\begin{proof}
For the sake of contradiction suppose that $i' \leq j$. Then $i'<i$ since $\glueP{i'}{i'+1}$ is neither $\glueP{i}{i+1}$ nor $\glueP{j}{j+1}$ and no other glue of the arc $P_{i,\ldots,j}$ is on glue column $c$. Consider the path $\Hole=P_{i',\ldots,j}$. By definition of $i'$ as the index of the next glue of $P$ according to arc $P_{i,\ldots,j}$ then $\Hole$ is a hole, see Figure \ref{fig:Arc:Order}. Let $\cork{\Hole}$ be the frontier of this hole and $\inter{\Hole}$ be its interior. We remind that $\inter{\Hole}$ is the finite area delimited by the cut of the $2D$ plane done by $\cork{\Hole}$ and  $\Hole$. We consider that the curve generated by $\cork{\Hole}$ and  $\Hole$ has the same orientation as $\Hole$. Since $P_{i,\ldots,j}$ is dominant then if $P_{i,\ldots,j}$ is an upward (resp. downward) arc then $\inter{\Hole}$ is the left (resp. right) side of the cut done by $\cork{\Hole}$ and  $P_{i,\ldots,j}$. By definition of an arc of $P$, $P_{i,\ldots,j}$ is a positive arc for glue column $c$. Thus, $\glueP{j-1}{j}$ points west implying that $P_{i,\ldots,j}$ turns left (resp. right) of the curve generated by $P_{i,\ldots,j}$ and $\cork{\Hole}$ at the position of $\glueP{j-1}{j}$. Thus, the tile $P_{j}$ is inside $\inter{\Hole}$. By hypothesis, the last tile of $P$ is the easternmost one implying that $P_{|P|-1}$ cannot be inside the finite area $\inter{\Hole}$ then $P_{j,\ldots,|P|-1}$ must intersect with either $P_{k,\ldots,j-1}$ contradicting the fact that $P$ is simple or it must cross $\cork{\Hole}$ contradicting the definition of $i'$. Then $j<i'$.
\begin{figure}
\center
\begin{tikzpicture}[x=0.22cm,y=0.22cm]

\fill[fill=lightblue, draw opacity=0.8] (41,12.5) -| (48.5,4.5)-| (33.5,20.5)-| (41,20.5);

\draw[very thick] (40,12.5) -| (48.5,4.5)-| (33.5,20.5)-| (42,20.5);



\tile{39}{4}{85}
\tile{38}{4}{85}
\tile{37}{4}{85}
\tile{36}{4}{85}
\tile{35}{4}{85}
\tile{34}{4}{85}
\tile{33}{4}{85}
\tile{33}{5}{85}
\tile{33}{6}{85}
\tile{33}{7}{85}
\tile{33}{8}{85}
\tile{33}{9}{85}
\tile{33}{10}{85}
\tile{33}{11}{85}
\tile{33}{12}{85}
\tile{33}{13}{85}
\tile{33}{14}{85}
\tile{33}{15}{85}
\tile{33}{16}{85}
\tile{33}{17}{85}
\tile{33}{18}{85}
\tile{33}{19}{85}
\tile{33}{20}{85}
\tile{34}{20}{85}
\tile{35}{20}{85}
\tile{36}{20}{85}
\tile{37}{20}{85}
\tile{38}{20}{85}
\tile{39}{20}{85}

\tiley{40}{4}{85}
\tiley{41}{4}{85}
\tiley{42}{4}{85}
\tiley{43}{4}{85}
\tiley{44}{4}{85}
\tiley{45}{4}{85}
\tiley{46}{4}{85}
\tiley{47}{4}{85}
\tiley{48}{4}{85}
\tiley{48}{5}{85}
\tiley{48}{6}{85}
\tiley{48}{7}{85}
\tiley{48}{8}{85}
\tiley{48}{9}{85}
\tiley{48}{10}{85}
\tiley{48}{11}{85}
\tiley{48}{12}{85}
\tiley{47}{12}{85}
\tiley{46}{12}{85}
\tiley{45}{12}{85}
\tiley{44}{12}{85}
\tiley{43}{12}{85}
\tiley{42}{12}{85}
\tiley{41}{12}{85}
\tiley{40}{12}{85}

\tileor{41}{20}{85}
\tileor{40}{20}{85}

\path [dotted, draw, thin] (27,0) grid[step=0.22cm] (55,25);

\draw [dashed] (41,0) -| (41,25);
\draw [thick, color=blue] (41,12.5) -| (41,20.5);

\draw [thick, color=blue, ->] (48.5,9) -- (45,9);
\draw [thick, color=blue, ->] (40.2,12.5) -- (38,12.5);

\fill (40.5,4.5) circle (0.16);
\node (D) at (39.7,2.7) {$P_i$};

\fill (40.5,12.5) circle (0.16);
\node (D) at (39.7,14.2) {$P_j$};

\node (D) at (37.3,8.7) {$\inter{\Hole}$};

\fill (41.5,20.5) circle (0.16);
\node (D) at (45.7,20.5) {$\Hole_0=P_{i'}$};

\end{tikzpicture}
\caption{Proof of Lemma \ref{lem:next:order}: $P_{i,\ldots,j}$ is an arc (in yellow) of a path $P$. The $\glueP{i'}{i'+1}$ (in orange) is the next glue of $P$ according to the arc $P_{i,\ldots,j}$. For the sake of contradiction, we assume $i'<i<j$ and then $\Hole=P_{i',\ldots,j}$ is a hole. Its interior $\inter{\Hole}$ is in light blue. In this example, $P_{i,\ldots,j}$ is an upward arc and then $\inter{H}$ is the left side of the curve generated $P_{i,\ldots,j}$ and $\cork{\Hole}$. Since $P_{i,\ldots,j}$ is a positive arc for $c$ then $\glueP{j-1}{j}$ points west. Then $P_j$ is inside $\inter{\Hole}$. This is a contradiction since the last tile of $P$ is the easternmost one and cannot be inside $\inter{\Hole}$.}
\label{fig:Arc:Order}
\end{figure}
\end{proof}

Now, we distinguished three cases. The first one is when $i'$ is the index of the last glue $\lastc$ of $P$ on glue column $c$. This case is possible, see for example the orange arc of path $Q$ of Figure \ref{fig:Arc:Path}. Note that, $\glueP{i'}{i'+1}$ points east in this case. Otherwise $i'<\lastc$ and if $P_{i,\ldots,j}$ is an upward (resp. downward) arc of $P$, the second case is when $\glueP{i'}{i'+1}$ is pointing east. In this case there exists $j'>i'$ such that $P_{i',\ldots,j'}$ is a dominant arc of $P$ which is north (resp. south) and posterior of $P_{i,\ldots,j}$. This case is possible, see for example the yellow arc of path $Q$ of Figure \ref{fig:Arc:Path} 

\begin{lemma}
\label{decompo:next}
Consider a producible path $P$ whose last tile is the easternmost one, a glue column $\Bone \leq c \leq \Btwo$ and let $0\leq \lastc < |P|-1$ be the index of the last glue of $P$ on glue column $c$. Consider two indices $0<i<j<|P|-1$ such that $P_{i,\ldots,j}$ is an upward (resp. downward) dominant arc on glue column $c$ and the index of the next glue $i'$ of $P$ according to this arc. Then, if  $i'<\lastc$ and $\glueP{i'}{i'+1}$ points east then there exists $i'<j'<\lastc$ such that $P_{i',\ldots,j'}$ is a dominant arc of $P$ which is north (resp. south) and posterior of $P_{i,\ldots,j}$.
\end{lemma}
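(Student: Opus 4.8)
The plan is to dispose of the case $i'=\lastc$ immediately and, when $i'\neq\lastc$, to exhibit the arc $P_{i',\ldots,j'}$ explicitly and then verify, in order, that it is an arc of $P$ on column $c$, that it is posterior to $P_{i,\ldots,j}$, that it is north of $P_{i,\ldots,j}$ and upward, and that it is dominant. Since $\glueP{i'}{i'+1}$ lies on column $c$ and $\lastc$ is by definition the index of the last glue of $P$ on column $c$, the hypothesis $i'\neq\lastc$ forces $i'<\lastc$. Up to the vertical symmetry I treat the upward case, so $y_{P_i}<y_{P_j}$, the glue $\glueP{j-1}{j}$ is the horizontal glue at height $y_{P_j}$ pointing west, and $i'$ is the lowest among the column-$c$ glue heights that are $>y_{P_j}$; by Lemma~\ref{decompo:direction}, $\glueP{i'}{i'+1}$ points east, and by Lemma~\ref{lem:next:order}, $j<i'$. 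I would then set $j'=m+1$, where $m$ is the smallest index larger than $i'$ such that $\glueP{m}{m+1}$ lies on column $c$ (such an $m$ exists because $\lastc$ is such an index).

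The first block is routine. As none of the glues strictly between indices $i'$ and $m$ is on column $c$, the tiles $P_{i'+1},\ldots,P_m$ all lie on columns $\ge c+1$, hence $P_m$ is on column $c+1$, the tile $P_{j'}=P_{m+1}$ is on column $c$, and $\glueP{m}{m+1}$ points west. Therefore $P_{i',\ldots,j'}$ has exactly the two column-$c$ glues $\glueP{i'}{i'+1}$ (east) and $\glueP{j'-1}{j'}$ (west), with all intermediate tiles strictly east of column $c$: it is a positive, hence an, arc of $P$ on column $c$ (the last tile of $P$ is the easternmost one by hypothesis), and $j<i'<j'$ makes it posterior to $P_{i,\ldots,j}$. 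For the bound $j'<\lastc$ I use that $w_{P_{\lastc+1,\ldots,|P|-1}}>c$ (the remark after Definition~\ref{def:lastglue}), which makes $\glueP{\lastc}{\lastc+1}$ point east; since $\glueP{m}{m+1}$ points west we get $m\neq\lastc$, and $m=\lastc-1$ is impossible because it would place the tiles $P_{\lastc-1}$ and $P_{\lastc+1}$ both at the position $(c+1,y_{P_{\lastc}})$, contradicting simplicity of $P$. Hence $m\le\lastc-2$ and $j'<\lastc$.

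For ``north'' I argue by contradiction: if the height $h_{j'}$ of the returning glue $\glueP{j'-1}{j'}$ is $\le y_{P_j}$ (equality being excluded, since two glues of $P$ on column $c$ at the same height coincide and that would give $j'=j$), then $P_{i',\ldots,j'}$ is a \emph{downward} positive arc, and together with its frontier segment it bounds a finite interior containing the tile $P_{j-1}$, which sits at $(c+1,y_{P_j})$; the suffix of $P$ from $P_{j-1}$ on is then trapped in this finite region — it cannot cross $P_{i',\ldots,j'}$ by simplicity, nor the frontier segment, which it could only meet on column $c$ at a point forbidden by simplicity or by $\glueP{j-1}{j}$ pointing west — contradicting that the easternmost tile $P_{|P|-1}$ lies outside it. So $h_{j'}>y_{P_j}=\max\{y_{P_i},y_{P_j}\}$, i.e.\ $P_{i',\ldots,j'}$ is north of $P_{i,\ldots,j}$; and it is upward, since $h_{j'}$ is the height of a column-$c$ glue that is $>y_{P_j}$ while $h_{i'}$ is the \emph{lowest} such height, so $h_{j'}>h_{i'}$. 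Finally, for ``dominant'' I suppose $P_{i',\ldots,j'}$ is dominated by an arc $A''$ of $P$ on column $c$; then $A''$ has a column-$c$ glue at a height $<h_{i'}$, hence (by minimality of $h_{i'}$) at a height $\le y_{P_j}$, and a column-$c$ glue at a height $>h_{j'}>y_{P_j}$. If the low height is $<y_{P_i}$, then $A''$ dominates $P_{i,\ldots,j}$, contradicting its dominance; if it equals $y_{P_i}$ or $y_{P_j}$, a short index bookkeeping forces $A''=P_{i,\ldots,j}$, which cannot dominate itself; and if it lies strictly in $(y_{P_i},y_{P_j})$, then $A''$ starts on the frontier of the interior of $P_{i,\ldots,j}$ yet must reach, staying east of column $c$, a height above $y_{P_j}$ outside that interior, which is impossible without crossing $P_{i,\ldots,j}$ — again contradicting simplicity. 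This yields the claim.

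I expect the main obstacle to be the trapping arguments used for ``north'' and for the last sub-case of ``dominant'': the bookkeeping of exactly which tiles of $P$ lie in the bounded interior delimited by an arc together with its frontier segment, and the proof that the relevant suffix cannot escape that interior — in particular the ``pinching'' situation where a glue of one arc lands on the frontier segment of another — is delicate. However, it is structurally identical to the hole/interior reasoning already carried out in the proofs of Lemmas~\ref{lem:next:order} and~\ref{decompo:direction}, so I would reuse those techniques essentially verbatim.
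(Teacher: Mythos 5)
Your overall strategy coincides with the paper's: define the new arc by taking the first column-$c$ glue after $i'$, get posteriority and the east-pointing of $\glueP{i'}{i'+1}$ from Lemmas~\ref{lem:next:order} and~\ref{decompo:direction}, and then run planar separation arguments for the remaining properties. Your verification that $j'<\lastc$ (which the paper's proof does not spell out) and your treatment of dominance are fine. However, there is a genuine gap in your argument for ``north/upward''.

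In the case you are trying to exclude, $h_{j'}<y_{P_j}<h_{i'}$, so the frontier of $A'=P_{i',\ldots,j'}$ is the vertical segment on column $c$ spanning the heights $[h_{j'},h_{i'}]$, and the glue $\glueP{j-1}{j}$ — a column-$c$ glue at height $y_{P_j}$ — lies \emph{on} that frontier, strictly between its endpoints. Consequently the path is not trapped: $P_{j-1}$ sits at $(c+1,y_{P_j})$ inside $\inter{A'}$, but its very next step to $P_j=(c,y_{P_j})$ crosses the frontier and leaves the region, so no contradiction with $P_{|P|-1}$ being outside arises. (Your clause ``it cannot cross the frontier segment\ldots forbidden by simplicity or by $\glueP{j-1}{j}$ pointing west'' is exactly where this breaks; note also that ``the suffix of $P$ from $P_{j-1}$ cannot cross $P_{i',\ldots,j'}$ by simplicity'' is ill-posed, since that suffix \emph{contains} $P_{i',\ldots,j'}$.) The correct region is the one attached to the \emph{original} arc, as in the paper: since $j'$ is the first column-$c$ glue after $i'$, the subpath $P_{i'+1,\ldots,j'-1}$ has no column-$c$ glue at all, hence can cross neither the frontier of $P_{i,\ldots,j}$ nor the vertical segment $b$ between $\glueP{j-1}{j}$ and $\glueP{i'}{i'+1}$, nor $P_{i,\ldots,j}$ itself; its first tile $(c+1,h_{i'})$ is outside $\inter{P_{i,\ldots,j}}$, so its last tile $(c+1,h_{j'})$ cannot be inside, which rules out $h_{j'}\in(y_{P_i},y_{P_j})$; the remaining possibility $h_{j'}<y_{P_i}$ is excluded not by trapping but because $P_{i',\ldots,j'}$ would then dominate $P_{i,\ldots,j}$, contradicting its dominance. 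You in fact carry out precisely this two-pronged argument in the last sub-case of your dominance step — it just needs to be applied to $A'$ itself in place of $A''$, replacing the broken $\inter{A'}$ trapping.
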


\begin{proof}
We consider that $P_{i,\ldots,j}$ is an upward arc, the other case is symmetric, see Figure \ref{fig:Arc:ProofDom}. Consider $j'=\min\{k>i'+1: \glueP{k-1}{k} \text{ is on glue column }c\}$. By definition of $j'$, $P_{i',\ldots,j'}$ is an arc of glue column $c$. By hypothesis, $\glueP{i'}{i'+1}$ points east implying that $P_{i',\ldots,j'}$ is a positive arc and thus an arc of $P$. By Lemma \ref{lem:next:order}, $P_{i',\ldots,j'}$ is posterior of $P_{i,\ldots,j}$.
\begin{figure}
\center
\begin{tikzpicture}[x=0.22cm,y=0.22cm]

\fill[fill=yellow!30!white, draw opacity=0.8] (41,4.5) -| (45.5,9.5) -| (41,9.5);
\fill[fill=orange!30!white, draw opacity=0.8] (41,14.5) -| (45.5,19.5) -| (41,14.5);

\draw[very thick] (40.5,4.5) -| (45.5,9.5)-| (36.5,14.5) -| (45.5,19.5) -| (40.5,19.5);
\draw[very thick] (40.5,23.5) -| (51.5,12.5);



\tiley{40}{4}{85}
\tiley{41}{4}{85}
\tiley{42}{4}{85}
\tiley{43}{4}{85}
\tiley{44}{4}{85}
\tiley{45}{4}{85}
\tiley{45}{5}{85}
\tiley{45}{6}{85}
\tiley{45}{7}{85}
\tiley{45}{8}{85}
\tiley{45}{9}{85}
\tiley{44}{9}{85}
\tiley{43}{9}{85}
\tiley{42}{9}{85}
\tiley{41}{9}{85}
\tiley{40}{9}{85}
\tile{39}{9}{85}
\tile{38}{9}{85}
\tile{37}{9}{85}
\tile{36}{9}{85}
\tile{36}{10}{85}
\tile{36}{11}{85}
\tile{36}{12}{85}
\tile{36}{13}{85}
\tile{36}{14}{85}
\tile{37}{14}{85}
\tile{38}{14}{85}
\tile{39}{14}{85}

\tileor{40}{19}{85}
\tileor{41}{19}{85}
\tileor{42}{19}{85}
\tileor{43}{19}{85}
\tileor{44}{19}{85}
\tileor{45}{19}{85}
\tileor{45}{18}{85}
\tileor{45}{17}{85}
\tileor{45}{16}{85}
\tileor{45}{15}{85}
\tileor{45}{14}{85}
\tileor{44}{14}{85}
\tileor{43}{14}{85}
\tileor{42}{14}{85}
\tileor{41}{14}{85}
\tileor{40}{14}{85}

\tile{39}{14}{85}

\tiler{40}{23}{85}
\tiler{41}{23}{85}
\tiler{42}{23}{85}
\tiler{43}{23}{85}
\tiler{44}{23}{85}
\tiler{45}{23}{85}
\tiler{46}{23}{85}
\tiler{47}{23}{85}
\tiler{48}{23}{85}
\tiler{49}{23}{85}
\tiler{50}{23}{85}
\tiler{51}{23}{85}
\tiler{51}{22}{85}
\tiler{51}{21}{85}
\tiler{51}{20}{85}
\tiler{51}{19}{85}
\tiler{51}{18}{85}
\tiler{51}{17}{85}
\tiler{51}{16}{85}
\tiler{51}{15}{85}
\tiler{51}{14}{85}
\tiler{51}{13}{85}
\tiler{51}{12}{85}

\path [dotted, draw, thin] (33,-1) grid[step=0.22cm] (55,25);

\draw [dashed, thick, color=red] (41,19.5) -| (41,25);

\draw [dashed, thick, color=green!80!black] (41,9.5) -| (41,14.5);

\draw [dashed, thick, color=blue] (41,-1) -| (41,4.5);

\draw [thick, color=yellow] (41,4.5) -| (41,9.5);
\draw [thick, color=orange] (41,19.5) -| (41,14.5);

\fill (40.5,4.5) circle (0.16);
\node (D) at (38.7,4.5) {$P_i$};

\fill (40.5,9.5) circle (0.16);
\node (D) at (39.7,11.2) {$P_j$};

\fill (40.5,14.5) circle (0.16);
\node (D) at (39.7,16.2) {$P_{i'}$};

\fill (40.5,19.5) circle (0.16);
\node (D) at (38.7,19.5) {$P_{j'}$};

\fill (40.5,23.5) circle (0.16);
\node (D) at (38.7,23.5) {$A_0$};

\node (D) at (51.5,10) {\textcolor{red}{\Huge$?$}};

\end{tikzpicture}
\caption{Proof of Lemma \ref{decompo:next}. Consider an upward dominant arc $P_{i,\ldots,j}$ of a path $P$ on glue column $c$ and the next $\glueP{i'}{i'+1}$ of $P$ according to this arc. If $i'$ is not the index of the last glue of $P$ on glue column $c$ then there exists $j'>i'$ such that $P_{i',\ldots,j'}$ is an arc. This arc is posterior to $P_{i,\ldots,j}$ since $i<j<i'$ by Lemma \ref{lem:next:order}. Also, $P_{i',\ldots,j'}$ must be an upward arc otherwise it would dominate $P_{i,\ldots,j}$. Now if an arc $A$ of $P$ dominates $P_{i',\ldots,j'}$ then one of its extremities is on the red part of glue column $c$. The other extremity must either be on the green line (which correspond to the line $b$ in the proof) contradicting the definition of $i'$ or the yellow line (contradicting that $P$ is simple) or the blue ray (contradicting that $P_{i,\ldots,j}$ is dominant).}
\label{fig:Arc:ProofDom}
\end{figure}

Let $b$ be the vertical line between $\glueP{j-1}{j}$ and $\glueP{i'}{i'+1}$. By definition of $i'$ the only intersections between $P$ and $b$ are $\glueP{j-1}{j}$ and $\glueP{i'}{i'+1}$. Since $P_{i,\ldots,j}$ is an upward arc, we have $y_{P_{i'}}>y_{P_{j}}>y_{P_{i}}$. 
Suppose that an arc $A$ of $P$ dominates $P_{i',\ldots,j'}$ then we have $\min\{y_{A_0},y_{A_{|A|-1}}\}<y_{P_{i'}}<\max\{y_{A_0},y_{A_{|A|-1}}\}$. Since $A$ cannot intersect $b$ and $P_{i,\ldots,j}$ then we have $\max\{y_{A_0},y_{A_{|A|-1}}\}>y_{P_{i'}}>y_{P_{j}}>y_{P_{i}}>\min\{y_{A_0},y_{A_{|A|-1}}\}$. In this case, $A$ would also dominates the arc $P_{i,\ldots,j}$ of $P$ which is a contradiction. Then $P_{i',\ldots,j'}$ is a dominant arc of $P$ on glue column $c$. 
Finally, if $P_{i',\ldots,j'}$ is a downward arc then since $P_{i'+1,\ldots,j'}$ cannot intersect with $b$ and $P_{i,\ldots,j}$ and we would have $y_{P_{i'}}>y_{P_{j}}>y_{P_{i}}>y_{P_{j'}}$. In other words, $P_{i',\ldots,j'}$ would dominates $P_{i,\ldots,j}$ which is a contradiction. Then, $P_{i',\ldots,j'}$ is an upward arc of $P$ and we have $y_{P_{j'}}>y_{P_{i'}}>y_{P_{j}}>y_{P_{i}}$, \emph{i.e. } $P_{i',\ldots,j'}$ is north of $P_{i,\ldots,j}$.
\end{proof}

Finally the last case is when $\glueP{i'}{i'+1}$ points west. This case leads to a contradiction and is impossible.


\begin{lemma}
\label{decompo:direction}
Consider a producible path $P$ whose last tile is the easternmost one, a glue column $c$ with  $\Bone \leq c \leq \Btwo$, two indices $0<i<j<|P|-1$ such that $P_{i,\ldots,j}$ is a dominant arc of $P$ on glue column $c$ and the index $i'$ of the next glue of $P$ according to this arc. Then, $\glueP{i'}{i'+1}$ points east.
\end{lemma}

\begin{proof}
We consider that $P_{i,\ldots,j}$ is an upward arc, the other case is symmetric, see Figure \ref{fig:Arc:Direction}. For the sake of contradiction suppose that $\glueP{i'}{i'+1}$ points west and consider $j'=\max\{k<i': \glueP{k}{k+1} \text{ is on glue column }c\}$. By definition of $j'$, $P_{j',\ldots,i'+1}$ is an arc of $P$ on glue column $c$. By a reasoning similar to the one of the proof of Lemma \ref{decompo:next}, $P_{j',\ldots,i'+1}$ is a dominant downward arc. By Lemma \ref{lem:next:order}, we have $i<j<i'$ and thus $i<j<j'<i'$ since the two arcs share no tile in common. Consider the path $\Hole=P_{j-1,\ldots,i'+1}$. By definition of $i'$ as the index of the next glue of $P$ according to arc $P_{i,\ldots,j}$ then $\Hole$ is a hole. By a reasoning similar to the one of the proof of Lemma \ref{lem:next:order}, the tile $P_j$ is inside the interior $\inter{H}$ of the hole $\Hole$. This is a contradiction since the last tile of $P$ is the easternmost one.
\begin{figure}
\center
\begin{tikzpicture}[x=0.22cm,y=0.22cm]

\fill[fill=yellow!30!white, draw opacity=0.8] (41,4.5) -| (45.5,9.5) -| (41,9.5);
\fill[fill=lightblue, draw opacity=0.8] (41,14.5) -| (45.5,19.5) -| (36.5,9.5) -| (41,14.5);

\draw[very thick] (40.5,4.5) -| (45.5,9.5)-| (36.5,19.5) -| (45.5,14.5) -| (40,14.5);
\draw[very thick] (40.5,23.5) -| (51.5,12.5);



\tiley{40}{4}{85}
\tiley{41}{4}{85}
\tiley{42}{4}{85}
\tiley{43}{4}{85}
\tiley{44}{4}{85}
\tiley{45}{4}{85}
\tiley{45}{5}{85}
\tiley{45}{6}{85}
\tiley{45}{7}{85}
\tiley{45}{8}{85}
\tiley{45}{9}{85}
\tiley{44}{9}{85}
\tiley{43}{9}{85}
\tiley{42}{9}{85}
\tiley{41}{9}{85}
\tiley{40}{9}{85}
\tile{39}{9}{85}
\tile{38}{9}{85}
\tile{37}{9}{85}
\tile{36}{9}{85}
\tile{36}{10}{85}
\tile{36}{11}{85}
\tile{36}{12}{85}
\tile{36}{13}{85}
\tile{36}{14}{85}
\tile{36}{15}{85}
\tile{36}{16}{85}
\tile{36}{17}{85}
\tile{36}{18}{85}
\tile{36}{19}{85}
\tile{37}{19}{85}
\tile{38}{19}{85}
\tile{39}{19}{85}

\tileor{40}{19}{85}
\tileor{41}{19}{85}
\tileor{42}{19}{85}
\tileor{43}{19}{85}
\tileor{44}{19}{85}
\tileor{45}{19}{85}
\tileor{45}{18}{85}
\tileor{45}{17}{85}
\tileor{45}{16}{85}
\tileor{45}{15}{85}
\tileor{45}{14}{85}
\tileor{44}{14}{85}
\tileor{43}{14}{85}
\tileor{42}{14}{85}
\tileor{41}{14}{85}
\tileor{40}{14}{85}

\tiler{40}{23}{85}
\tiler{41}{23}{85}
\tiler{42}{23}{85}
\tiler{43}{23}{85}
\tiler{44}{23}{85}
\tiler{45}{23}{85}
\tiler{46}{23}{85}
\tiler{47}{23}{85}
\tiler{48}{23}{85}
\tiler{49}{23}{85}
\tiler{50}{23}{85}
\tiler{51}{23}{85}
\tiler{51}{22}{85}
\tiler{51}{21}{85}
\tiler{51}{20}{85}
\tiler{51}{19}{85}
\tiler{51}{18}{85}
\tiler{51}{17}{85}
\tiler{51}{16}{85}
\tiler{51}{15}{85}
\tiler{51}{14}{85}
\tiler{51}{13}{85}
\tiler{51}{12}{85}

\path [dotted, draw, thin] (33,-1) grid[step=0.22cm] (55,25);

\draw [dashed, thick, color=red] (41,19.5) -| (41,25);

\draw [thick, color=blue] (41,9.5) -| (41,14.5);
\draw [thick, color=blue,->] (40,14.5) -- (38.5,14.5);
\draw [thick, color=blue,->] (45.5,16.5) -- (44,16.5);

\draw [dashed, thick, color=green!80!black] (41,-1) -| (41,4.5);

\draw [thick, color=yellow] (41,4.5) -| (41,9.5);

\fill (40.5,4.5) circle (0.16);
\node (D) at (38.7,4.5) {$P_i$};

\fill (40.5,9.5) circle (0.16);
\node (D) at (40.5,7.7) {$P_j=\Hole_0$};

\node (D) at (43,17) {$\inter{\Hole}$};

\fill (41.5,14.5) circle (0.16);
\node (D) at (42,13) {$P_{i'}$};

\fill (40.5,19.5) circle (0.16);
\node (D) at (39.7,21) {$P_{j'}$};

\fill (40.5,23.5) circle (0.16);
\node (D) at (38.7,23.5) {$A_0$};

\node (D) at (51.5,10) {\textcolor{red}{\Huge$?$}};

\end{tikzpicture}
\caption{Proof of Lemma \ref{decompo:direction}. Consider an upward dominant arc $P_{i,\ldots,j}$ of a path $P$ on glue column $c$ and the next $\glueP{i'}{i'+1}$ of $P$ according to this arc. If $\glueP{i'}{i'+1}$ point west then there exists $j'<i'$ such that $P_{j',\ldots,i'+1}$ is an arc of $P$. Also, $P_{j',\ldots,i'+1}$ must be a downward dominant arc otherwise $P_{i,\ldots,j}$ would not be dominant (see Figure \ref{fig:Arc:ProofDom} for more details). Since $i<j<i'$ by Lemma \ref{lem:next:order}, then
 $\Hole=P_{j,\ldots,i'+1}$ is a hole. Its interior $\inter{\Hole}$ is in light blue. Remark that $\inter{H}$ is the right side of the curve generated by $P_{j',\ldots,i'+1}$ and $\cork{\Hole}$. Since $\glueP{i'}{i'+1}$ points west, then $P_{i'+1}$ is inside $\inter{\Hole}$. This is a contradiction since the last tile of $P$ is the easternmost one and cannot be inside $\inter{\Hole}$. }
\label{fig:Arc:Direction}
\end{figure}
\end{proof}

See Figure \ref{fig:Decompo:dominant} for the implications of these lemmas. To summarize, the dominant arcs which are below (resp. over) the last glue of $P$ on glue column $c$ must be upward (resp. downward) arcs. Moreover, for upward (resp. downward) arcs the orders implied by the notions of south/north and  anteriority/posteriority are equivalent. Finally, the last remaining lemma of this subsection shows that the smallest element of this order is the arc starting by $P_s$ (resp. $P_n$).

\input{./tikz/Hole/ExempleDecompo}

\begin{lemma}
\label{decompo:init}
Consider a producible path $P$ whose last tile is the easternmost one, a glue column $c$ with  $\Bone \leq c \leq \Btwo$ and let $0\leq \lastc < |P|-1$ be the index of the last glue of $P$ on glue column $c$. Consider $0\leq s \leq |P|-1$ such that $\glueP{s}{s+1}$ is visible from the south (resp. north) in $P$ on glue column $c$. Then, we have either  $s=\lastc$ or there exists $s<j<\lastc$ such that $P_{s,\ldots,j}$ is an upward (resp. downward) dominant arc of $P$ such that no arc of $P$ is south (resp. north) of $P_{s,\ldots,j}$.
\end{lemma}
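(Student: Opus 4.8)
The plan is to construct the arc $P_{s,\ldots,j}$ explicitly. Up to symmetry, assume $\glueP{s}{s+1}$ is visible from the south on column $c$. I would begin with two preliminary remarks. (i) $\glueP{s}{s+1}$ points east: since $P$ is producible, $P_0$ binds to $\sigma$, so $w_P\le e_\sigma+1$ and hence $c\ge e_\sigma+|T|+1\ge w_P+|T|$, so Lemma~\ref{cor:longeast} applies; thus $x_{P_s}=c$ and $x_{P_{s+1}}=c+1$. (ii) Since $c>e_\sigma$ there is no glue of $\sigma$ on column $c$, so visibility of $\glueP{s}{s+1}$ from the south on column $c$ simply means it is the southernmost glue of $P$ on column $c$; this observation does all the work in the last step. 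I would also note that the last glue $\glueP{\lastc}{\lastc+1}$ of $P$ on column $c$ points east: otherwise $x_{P_{\lastc+1}}=c$, contradicting the remark after Definition~\ref{def:lastglue} that $w_{P_{\lastc+1,\ldots,|P|-1}}>c$.

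Now suppose $s\ne\lastc$. If $P_{s+1,\ldots,|P|-1}$ stayed east of column $c$, then $\glueP{s}{s+1}$ would be the only glue of $P$ on column $c$ of index $\ge s$, forcing $s=\lastc$; so I may let $m$ be the least index $>s+1$ with $x_{P_m}\le c$. Since consecutive tiles differ by a unit step and $x_{P_{m-1}}\ge c+1$, we get $x_{P_{m-1}}=c+1$ and $x_{P_m}=c$, so $\glueP{m-1}{m}$ is a horizontal glue on column $c$ pointing west. By minimality of $m$, every tile of $P_{s+1,\ldots,m-1}$ has $x$-coordinate $\ge c+1$; hence no $\glueP{k}{k+1}$ with $s<k<m-1$ is on column $c$, the only glues of $P_{s,\ldots,m}$ on column $c$ are $\glueP{s}{s+1}$ and $\glueP{m-1}{m}$, the interior tiles $P_{s+1},\ldots,P_{m-1}$ lie strictly east of column $c$, and $P_{s,\ldots,m}$ crosses the line $x=c+\frac12$ only at those two glues. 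Therefore $P_{s,\ldots,m}$ is a positive arc of $P$ on column $c$, and I set $j=m$.

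It remains to check three properties of $P_{s,\ldots,j}$. That it is \emph{upward}: $\glueP{m-1}{m}$ is on column $c$, hence at height $\ge y_{P_s}$ by (ii), and equality would give $\pos{P_m}=(c,y_{P_s})=\pos{P_s}$, contradicting simplicity of $P$; so $y_{P_s}<y_{P_m}$. That $j<\lastc$: clearly $\lastc\ge m-1$; $\lastc\ne m-1$ because $\glueP{\lastc}{\lastc+1}$ points east while $\glueP{m-1}{m}$ points west; and $\lastc\ne m$ because $\lastc=m$ would make $\glueP{m}{m+1}$ point east, forcing $\pos{P_{m+1}}=(c+1,y_{P_m})=\pos{P_{m-1}}$, again contradicting simplicity; so $m<\lastc$, and with $s<m$ we obtain $s<j<\lastc$. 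That it is \emph{dominant} and has \emph{no arc of $P$ south of it}: any arc $P_{i',\ldots,j'}$ of $P$ on column $c$ has $\glueP{i'}{i'+1}$ and $\glueP{j'-1}{j'}$ on column $c$, so $y_{P_{i'}}\ge y_{P_s}$ and $y_{P_{j'}}\ge y_{P_s}=\min\{y_{P_s},y_{P_m}\}$; this rules out both $P_{i',\ldots,j'}$ dominating $P_{s,\ldots,m}$ and $P_{i',\ldots,j'}$ being south of $P_{s,\ldots,m}$.

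The whole argument is essentially bookkeeping with the definitions. The only delicate point is pinning the end of the arc to the first return of $P$ to column $c$ — noticing that the glue there necessarily points west, and that this glue is distinct from the last glue $\glueP{\lastc}{\lastc+1}$ on column $c$ — which is exactly what the unit-step and simplicity remarks above settle.
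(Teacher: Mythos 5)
Your proof is correct and follows essentially the same route as the paper's: identify the arc as the portion of $P$ from $\glueP{s}{s+1}$ to the first return of the path to column $c$, then use the fact that a glue visible from the south is the southernmost glue on that column (equivalently, that $P$ meets the glue ray $l^s$ only at its start) to get that the arc is upward, dominant, and has no arc south of it. Your version is in fact a bit more careful than the paper's on two points — pinning the arc's endpoint so that its last glue really lies on column $c$, and explicitly verifying $j<\lastc$ — but the underlying argument is the same.
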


\begin{proof}
Up to symmetries, we can consider that $\glueP{s}{s+1}$ is visible from the south and let $l^s$ be its glue ray. If $s=\lastc$ then the lemma is true, otherwise we can define $j=\min\{k>s: \glueP{k}{k+1} \text{ is on glue column }c\}$. Since $\glueP{j}{j+1}$ cannot be on $l^s$ then $y_{P_j}>y_{P_s}$ and $P_{s,\ldots,j+1}$ is an upward arc. Moreover, by Lemma \ref{Uturn:glueWest} and Lemma \ref{lem:glue:prop2}, $\glueP{s}{s+1}$ is pointing east then $P_{s,\ldots,j+1}$ is a positive arc and then an arc of $P$.  Since the only intersection between $P$ and $l^s$ is $\glueP{s}{s+1}$, then no arc of $P$ can dominate $P_{s,\ldots,j+1}$ and thus $P_{s,\ldots,j+1}$ is a dominant arc of $P$. Moreover, no other arc of $P$ can intersect $l^s$ and be south of $\glueP{s}{s+1}$. 
\end{proof}

\subsection{Decomposition into dominant arcs}
\label{Sec:arc:deompo}

Consider a producible path $P$ whose last tile is the easternmost one, a glue column   $\Bone \leq c \leq \Btwo$ and let $0\leq \lastc < |P|-1$ be the index of the last glue of $P$ on glue column $c$. Consider $0<s<|P|-1$ and $0<n<|P|-1$ such that $\glueP{s}{s+1}$ is visible from the south on glue column $c$ in $P$ and $0<n<|P|-1$ such that $\glueP{n}{n+1}$ is visible from the north on glue column $c$ in $P$. 

We introduce the \emph{upward (resp. downward) decomposition of} $P$ \emph{into dominant arcs} on glue column $c$ which is made of two sequences of indices $(\main{i})_{0\leq i \leq t}$ and $(\second{i})_{0\leq i < t}$ such that $\main{0}=s$ (resp. $\main{0}=n$), $\main{t}=\lastc$ and for all $0\leq i < t$, $P_{\main{i}, \ldots,\second{i}}$ is an upward (resp. downward) dominant arc of $P$ and if $i<t$, $\main{i+1}$ is the index of the next glue of $P$ according to $P_{\main{i}, \ldots,\second{i}}$. Lemmas \ref{decompo:next}, \ref{decompo:direction} and \ref{decompo:init} imply that this decomposition is correctly defined and that if $i<t-1$ then the arc $P_{\main{i+1}, \ldots,\second{i+1}}$ is north (resp. south) and posterior of arc $P_{\main{i}, \ldots,\second{i}}$.

\begin{fact}
\label{decompo:alldominant}
Consider a producible path $P$ whose last tile is the easternmost one and a glue column   $\Bone \leq c \leq \Btwo$. Consider $0\leq j < k \leq |P|-1$ such that $P_{j,\ldots,k}$ is an upward (resp. downward) dominant arc of $P$ then consider $(\main{i})_{0\leq i \leq t}$ and $(\second{i})_{0\leq i < t}$ the upward (resp. downward) decomposition of $P$ into dominants arcs. Then, there exists $0\leq a <t$ such $P_{j,\ldots,k}=P_{\main{a},\ldots,\second{a}}$. 
\end{fact}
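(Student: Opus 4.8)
The plan is to argue by contradiction. Up to the usual north/south symmetry I fix an upward dominant arc $P_{j,\ldots,k}$ of $P$ on column $c$ together with the upward decomposition $(\main{i})_{0\le i\le t}$, $(\second{i})_{0\le i<t}$ of $P$ into dominant arcs on column $c$; I write $\lastc$ for the index of the last glue of $P$ on column $c$, so that $\main{t}=\lastc$, and I assume that $P_{j,\ldots,k}\ne P_{\main{a},\ldots,\second{a}}$ for every $0\le a\le t-1$. I will freely use two facts recalled in the text: that of any two upward dominant arcs of $P$ one is north of the other (so these arcs are totally ordered by the north-of relation), and that, by the way the decomposition is built from Lemmas \ref{decompo:init} and \ref{decompo:next}, each $P_{\main{i},\ldots,\second{i}}$ is an upward dominant arc of $P$ with $y_{P_{\main{i}}}<y_{P_{\second{i}}}$, no arc of $P$ is south of $P_{\main{0},\ldots,\second{0}}$ (Lemma \ref{decompo:init}), and $\glueP{\main{i+1}}{\main{i+1}+1}$ (with $\main{t}=\lastc$) is the next glue of $P$ according to the arc $P_{\main{i},\ldots,\second{i}}$, i.e.\ the southernmost glue of $P$ on column $c$ strictly north of $\glueP{\second{i}-1}{\second{i}}$.

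The heart of the argument will be an auxiliary claim: \emph{if $y_{P_{\lastc}}\le y_{P_j}$ then we reach a contradiction.} To prove it I will iterate Lemma \ref{decompo:next}. Set $A^{(0)}=P_{j,\ldots,k}=P_{j_0,\ldots,k_0}$; by hypothesis $y_{P_{\lastc}}\le y_{P_{j_0}}<y_{P_{k_0}}$. Given an upward dominant arc $A^{(r)}=P_{j_r,\ldots,k_r}$ with $y_{P_{\lastc}}\le y_{P_{j_r}}<y_{P_{k_r}}$, Lemma \ref{lem:next:order} gives that its next glue has index $i'>k_r$ and lies on column $c$ strictly north of $\glueP{k_r-1}{k_r}$, so $y_{P_{i'}}>y_{P_{k_r}}>y_{P_{\lastc}}$; hence $\glueP{i'}{i'+1}\ne\glueP{\lastc}{\lastc+1}$ and $i'\ne\lastc$. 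Lemma \ref{decompo:next} then yields an upward dominant arc $A^{(r+1)}=P_{j_{r+1},\ldots,k_{r+1}}$ that is north of and posterior to $A^{(r)}$, so $y_{P_{j_{r+1}}}>y_{P_{k_r}}\ge y_{P_{\lastc}}$ and (again by Lemma \ref{lem:next:order}) $k_{r+1}>j_{r+1}=i'>k_r$. This produces a strictly increasing infinite sequence $k_0<k_1<k_2<\cdots$ of indices of the finite path $P$, a contradiction; this proves the claim.

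With the claim in hand I will finish by a short case analysis on where $P_{j,\ldots,k}$ lies relative to the decomposition arcs in the total order. If $t=0$ then $\main{0}=\lastc$ and $\glueP{\lastc}{\lastc+1}$ is the glue of $P$ visible from the south on column $c$, so $\glueP{j}{j+1}$, being a glue of $P$ on column $c$, is not strictly south of it, whence $y_{P_{\lastc}}\le y_{P_j}$ and the claim applies. If $t\ge1$, then by Lemma \ref{decompo:init} and $P_{j,\ldots,k}\ne P_{\main{0},\ldots,\second{0}}$ the arc $P_{\main{0},\ldots,\second{0}}$ is south of $P_{j,\ldots,k}$, so the set $I=\{0\le i\le t-1:P_{\main{i},\ldots,\second{i}}\text{ is south of }P_{j,\ldots,k}\}$ is nonempty; let $a=\max I$. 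When $a<t-1$, maximality and the dichotomy force $P_{\main{a+1},\ldots,\second{a+1}}$ to be north of $P_{j,\ldots,k}$, giving $y_{P_k}<y_{P_{\main{a+1}}}$; but $P_{\main{a},\ldots,\second{a}}$ south of $P_{j,\ldots,k}$ makes $\glueP{j}{j+1}$ a glue of $P$ on column $c$ strictly north of $\glueP{\second{a}-1}{\second{a}}$, so by minimality of the next glue $y_{P_{\main{a+1}}}\le y_{P_j}<y_{P_k}$, a contradiction. When $a=t-1$, the same observation shows $\glueP{j}{j+1}$ is a glue of $P$ on column $c$ strictly north of $\glueP{\second{t-1}-1}{\second{t-1}}$, and since $\glueP{\lastc}{\lastc+1}=\glueP{\main{t}}{\main{t}+1}$ is the southernmost such glue, $y_{P_{\lastc}}\le y_{P_j}$ and the claim applies. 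In all cases the assumption is contradicted, so $P_{j,\ldots,k}=P_{\main{a},\ldots,\second{a}}$ for some $a$.

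The main obstacle is the case $a=t-1$ (equivalently $t=0$): here a bare comparison of heights does not close the argument, because a priori $\glueP{\lastc}{\lastc+1}$ could sit far north of $P_{j,\ldots,k}$, so it is not immediate that there is no upward dominant arc north of the last decomposition arc. This is exactly why the auxiliary claim is needed — the point being that, once an arc lies above the last glue on column $c$, its own next glue is even higher and therefore can never be $\lastc$, so Lemma \ref{decompo:next} keeps producing new arcs. The delicate bookkeeping is checking that the resulting chain of arcs has strictly increasing indices (via Lemma \ref{lem:next:order}), so that finiteness of $P$ finally bites; everything else is a routine comparison of $y$-coordinates of glues on column $c$.
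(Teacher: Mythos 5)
Your argument is correct. The paper states this as a bare Fact with no proof, and what you give is exactly the intended justification made rigorous: the total north-ordering of dominant arcs locates $P_{j,\ldots,k}$ relative to the decomposition, the ``southernmost glue strictly north of'' minimality in the definition of the next glue rules out an arc sitting strictly between two consecutive decomposition arcs, and your auxiliary claim (iterating Lemma~\ref{decompo:next} to get an infinite strictly increasing chain of indices) correctly disposes of the only delicate case, namely an alleged dominant arc lying north of $\glueP{\lastc}{\lastc+1}$.
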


The indices $(\main{i})_{0\leq i \leq t}$ are called the indices of the \emph{southern main} (resp. \emph{northern main}) glues of $P$ on glue column $c$ and these glues all point east while the glues whose indices are $(\second{i})_{0\leq i < t}$ all point west and are called the \emph{southern backup}  (resp. \emph{northern backup}) glues of $P$. They will allow a better characterization of the workspace of the span of $P$ on glue column $c$, see Figures \ref{fig:Example:Area}, \ref{fig:Decompo:dominant:south} and \ref{fig:Decompo:dominant:north} for an illustration of the following notions. We define $\bord{0}$ as the glue ray of $\glueP{s}{s+1}$ (resp.  $\glueP{n}{n+1}$) to the south (resp. north). We define $\bord{t+1}$ as the glue ray of  $\glueP{|P|-2}{|P|-1}$ going south (resp. north). For each $0< i \leq t$, we define $\bord{i}$ as the vertical segment between $\glueP{\second{i-1}-1}{\second{i-1}}$ and $\glueP{\main{i}}{\main{i}+1}$. These segments are called the \emph{frontiers} of the decomposition. Remark that, by the definition of the next glue, for all $0< i \leq t$, $P$ intersects $\bord{i}$ only at $\glueP{\second{i-1}-1}{\second{i-1}}$ and $\glueP{\main{i}}{\main{i}+1}$. Thus, $P_{\second{i-1}-1, \ldots, \main{i}+1}$ is a hole of $P$ on glue column $c$ whose interior is delimited by $P_{\second{i-1}-1, \ldots, \main{i}+1}$ and its frontier $\bord{i}$. To simplify the notations, we designate by $\inter{i}$ the area $\inter{P_{\second{i-1}-1, \ldots, \main{i}+1}}$ and call $(\inter{i})_{0< i \leq t}$ the \emph{interiors} of the decomposition.
Let $\mathcal{C}$ be the workspace of the span of $P$ on glue column $c$ (this span is $(s,n)$ if $s\leq n$ and $(n,s)$ otherwise). This workspace can be cut in three parts: the area $\mathcal{C}_3$ which the intersection of $\mathcal{C}$ with the half-plane defined by $x>e_\uniterm$ (no tile of $\uniterm$ is inside this area); the area $\mathcal{C}_2$ is the left side of the area delimited by $l^n$, $P_{n,\ldots,|P|-1}$ and the ray starting at $\pos{P_{|P|-1}}$ and going north; the area $\mathcal{C}_1$ is the right side of the area delimited by $l^s$, $P_{s,\ldots,|P|-1}$ and the ray starting at $\pos{P_{|P|-1}}$ and going south. Note that, for all $0< i \leq t$, $\inter{i}$ is included into $\mathcal{C}_1$ (resp. $\mathcal{C}_2$). Now, we definite $\es$ as the  area delimited by: $$ \bord{0}, P_{s,\ldots,\second{0}}, \bord{1}, P_{\main{1},\ldots,\second{1}}, \bord{2} \ldots \bord{t}, P_{\lastc,\ldots, |P|-1}, \bord{t+1}.$$ This area is also included into $\mathcal{C}_1$ (resp. $\mathcal{C}_2$) since it is the intersection of this area with the half-plane delimited by $x \geq c$. Then $\mathcal{C}_1$ (resp. $\mathcal{C}_2$)  is the union of $\es$ and all the interiors $\inter{i}$ for all $0< i \leq t$. Moreover, $\es$ is connected to all the interiors whereas an interior $\inter{i}$ is connected only to $\es$ via $\bord{i}$. Indeed, consider $0 < i <j \leq t$ and a path $Q$ such that $Q_0$ is in $\inter{i}$ and $Q_{|Q|-1}$ is in $\inter{j}$ then a prefix a $Q$ intersects with $\bord{i}$ to exit $\inter{i}$ and enter inside $\es$ whereas a suffix of $Q$  intersects with $\bord{j}$ to exit $\es$ and enter inside $\inter{j}$.  

\input{./tikz/Hole/ExempleArea}

\input{./tikz/Hole/ExempleDecompoSouth}

\input{./tikz/Hole/ExempleDecompoNorth}

\subsection{Links between the two decompositions}

Indices of the decomposition of $P$ into dominant arcs are not linked with the indices $(u_i)_{0\leq i \leq t'}$ of the decomposition of $P$ into pseudo-visible glues  on glue column $c$. Indeed, a northern or southern main glues may not be pseudo-visible (see Figure \ref{fig:Protected:NotPseudoVisible}) and a pseudo-visible $\glueP{u_i}{u_i+1}$ is not necessary a main glue of $P$, see Figure \ref{fig:Decompo:withSpan}. Nevertheless, some correspondencies can still be done. First, we introduce the notion of weakly dominance which is a restriction of the notion of dominance to a pseudo-span of $P$, see Figure \ref{fig:Decompo:withSpan}.

\input{./tikz/Hole/ExempleWithSpan}

\begin{definition}
\label{def:weak:dom}
Consider a producible path $P$ whose last tile is the easternmost one, a glue column  $\Bone \leq c \leq \Btwo$ and the decomposition $(u_i)_{0\leq i \leq t}$ of $P$ into pseudo-visible glues on glue column $c$. Consider $0\leq i <t$ and $u_i\leq j <k<u_{i+1}$ such that $P_{i,\ldots,j}$ is an arc of $P$, we say that this arc is \emph{weakly dominant} if and only there no $u_i\leq a < b <u_{i+1}$ such that $P_{a,\ldots, b}$ is an arc of $P$ which dominates $P_{j,\ldots,k}$.
\end{definition}

Then, we show that a dominant arc of $P$ on glue column $c$ is weakly dominant (Lemma \ref{lem:link:dom} and Corollary \ref{cor:link:dom}) for some pseudo-span of $P$ on glue column $c$. Afterwards, we show that if a pseudo-span $(u_i,u_{i+1})$ is an upward (resp. downward) pseudo-span then any weakly dominant arc of this pseudo-span is also an upward (resp. downward) arc (see Lemma \ref{lem:link:same}). These two lemmas imply the main result of this section in Corollary \ref{cor:decompo:cano}: the orientation of the spans matches the orientation of the dominant arcs, thus assembling a path which turns right of such an arc and which intersects with the end of $P$ would contradict the definition of a canonical path. 

\begin{lemma}
\label{lem:link:dom}
Consider a producible path $P$ whose last tile is the easternmost one, a glue column  $\Bone \leq c \leq \Btwo$ and the decomposition $(u_i)_{0\leq i \leq t}$ of $P$ into pseudo-visible glues on glue column $c$. Consider $0\leq j < k < |P|-1$ such that $P_{j,\ldots,k}$ is a dominant arc of $P$ on glue column $c$. Then, there exists $0\leq i <t$ such that $u_i\leq j <k < u_{i+1}$.
\end{lemma}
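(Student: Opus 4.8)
\emph{The plan} is to show that the two column-$c$ glues of the arc $P_{j,\ldots,k}$, namely $\glueP{j}{j+1}$ and $\glueP{k-1}{k}$, are confined to the index interval of a single span $(u_i,u_{i+1})$ of the decomposition. First I would record the standing consequences of the hypotheses: since $c\geq e_\sigma+|T|+1$, no tile of $\sigma$ lies on column $c$ and $w_P\leq e_\sigma+1\leq c-|T|$, so by Lemma~\ref{cor:longeast} every glue of $P$ visible from the south or north on column $c$ points east; combined with Fact~\ref{fact:pointeast} (the last tile of $P$ is easternmost, hence lies in the workspace of each span) this gives that every span glue $\glueP{u_m}{u_m+1}$ points east, exactly as already used in the proofs of Lemmas~\ref{lem:exists:initial} and~\ref{theorem:exists:canonical}. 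Also, $\glueP{u_0}{u_0+1}$ and $\glueP{u_1}{u_1+1}$ are, in one order or the other, the southernmost and the northernmost glues of $P$ on column $c$, so the unique glue of $P$ visible from the south on column $c$, and the unique one visible from the north, both have index in $\{u_0,u_1\}$.

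\emph{Step 1: I would show $j\geq u_0$.} Up to the symmetry between the ``upward'' and ``downward'' cases, assume $P_{j,\ldots,k}$ is an upward dominant arc. By Fact~\ref{decompo:alldominant}, $P_{j,\ldots,k}=P_{\main{a},\ldots,\second{a}}$ for some index $a$ of the upward decomposition of $P$ into dominant arcs on column $c$, whose first main glue $\main{0}$ is the unique index $s$ with $\glueP{s}{s+1}$ visible from the south on column $c$ (Lemma~\ref{decompo:init}), and whose successive dominant arcs are posterior to one another (Lemma~\ref{decompo:next}), so $\main{0}<\main{1}<\cdots$. Hence $j=\main{a}\geq\main{0}=s$, and by the preliminary remark $s\in\{u_0,u_1\}$, so $j\geq u_0$.

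\emph{Step 2: I would forbid any span glue of index in $\{j+1,\ldots,k\}$, and show $j<u_t$.} If $j<u_m\leq k-1$ then $\glueP{u_m}{u_m+1}$ is a glue of the subpath $P_{j,\ldots,k}$ lying on column $c$; but $P_{j,\ldots,k}$ is an arc on column $c$, whose only column-$c$ glues are $\glueP{j}{j+1}$ and $\glueP{k-1}{k}$, so $u_m=j$ (excluded) or $u_m=k-1$, and in the latter case the east-pointing span glue $\glueP{u_m}{u_m+1}$ would coincide with the west-pointing last glue $\glueP{k-1}{k}$ of the positive arc — impossible. The case $u_m=k$ is also impossible: $\glueP{k-1}{k}$ points west while $\glueP{k}{k+1}=\glueP{u_m}{u_m+1}$ points east, which would put $P_{k-1}$ and $P_{k+1}$ at the same position $(c+1,y_{P_k})$, contradicting that $P$ is simple. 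Finally $\glueP{j}{j+1}$ is a glue on column $c$, so $j\leq\lastc=u_t$, and $j=u_t$ would force $k-1\leq\lastc=j$, i.e. $k=j+1$, which is impossible since the east- and west-pointing column-$c$ glues of a positive arc are distinct; hence $j<u_t$.

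\emph{Conclusion, and the delicate point.} I would then let $i$ be the largest index with $u_i\leq j$; it exists by Step~1 and satisfies $i<t$ by Step~2, so $u_i\leq j<u_{i+1}$. Since $u_{i+1}>j$ and, by Step~2, $u_{i+1}\notin\{j+1,\ldots,k\}$, we get $u_{i+1}>k$, i.e. $u_i\leq j<k<u_{i+1}$, as required; the downward case is symmetric, using the downward decomposition into dominant arcs and the north-visible glue. The only step that is not routine bookkeeping about glue orientations and self-avoidance of $P$ is Step~1: this is where the dominant-arc machinery (Lemmas~\ref{decompo:init} and~\ref{decompo:next}, Fact~\ref{decompo:alldominant}), a priori unrelated to the span decomposition, must be invoked to anchor $j$ to the right of $u_0$.
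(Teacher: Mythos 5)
Your proof is correct and follows essentially the same route as the paper's: both anchor $j\geq u_0$ by identifying $\glueP{j}{j+1}$ as a southern main glue via Fact~\ref{decompo:alldominant} and the south-visible glue $s\in\{u_0,u_1\}$, then exclude any span glue from the interior of the arc using glue directions to conclude $k<u_{i+1}$. Your Step~2 is in fact somewhat more careful than the paper's version (which glosses over the $u_m=k$ case and has off-by-one slips between $\glueP{k}{k+1}$ and $\glueP{k-1}{k}$).
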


\begin{proof}
Up to symmetries, we can consider that $P_{j,\ldots,k}$ is an upward dominant arc of $P$ on glue column $c$. We remind that $u_t$ is, by definition, the index of the last glue of $P$ on glue column $c$ and since $\glueP{k}{k+1}$ is on glue column $c$ then $j< k \leq u_t$. By Fact \ref{decompo:alldominant}, $\glueP{j}{j+1}$ is also a main southern glue. 
Let $0\leq s \leq |P|-1$ such that $\glueP{s}{s+1}$ is visible from the south in $P$ on glue column $c$. Remind that $\glueP{s}{s+1}$ is either $u_0$ or $u_1$ depending if the span of $P$ is upward or downward and that $\glueP{s}{s+1}$ is also the first main southern glue of $P$. Then, we have $u_0\leq s \leq j$. Thus, there exists $0\leq i < t$ such that $u_i\leq j < u_{i+1}$. Note that, since $k=\min\{a>j:\glueP{a}{a+1} \text{ is on glue column $c$}\}$ and since $\glueP{u_{i+1}}{u_{i+1}+1}$ is also on glue column $c$ then $u_i\leq j < k \leq  u_{i+1}$. Moreover, since $P_{j\ldots,k}$ is a positive arc then $\glueP{k}{k+1}$ points west while $\glueP{u_{i+1}}{u_{i+1}+1}$ points east (as all glues of the decomposition of $P$ into pseudo-visible glues) then $k<u_{i+1}$.  
\end{proof}

\begin{corollary}
\label{cor:link:dom}
Consider a producible path $P$ whose last tile is the easternmost one, a glue column $\Bone \leq c \leq \Btwo$ and the decomposition $(u_i)_{0\leq i \leq t}$ of $P$ into pseudo-visible glues on glue column $c$. Consider $0\leq j < k < |P|-1$ such that $P_{j,\ldots,k}$ is a dominant arc of $P$ on glue column $c$ and $0\leq i \leq t$ such that $u_i\leq j <k < u_{i+1}$. Then, $P_{j,\ldots,k}$ is weakly dominant for the pseudo-span $(u_i,u_{i+1})$ of $P$.
\end{corollary}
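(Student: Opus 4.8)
The plan is to argue directly from the definitions, with Lemma~\ref{lem:link:dom} supplying the index bounds. Since the statement already hands us indices with $u_i\leq j<k<u_{i+1}$, the only content of the corollary is the non-domination clause of Definition~\ref{def:weak:dom}: we must check that no arc of $P$ lying \emph{inside} the span $(u_i,u_{i+1})$ dominates $P_{j,\ldots,k}$.

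I would proceed by contradiction. Suppose $P_{j,\ldots,k}$ is not weakly dominant for the span $(u_i,u_{i+1})$. Unwinding Definition~\ref{def:weak:dom}, there exist indices $u_i\leq a<b<u_{i+1}$ such that $P_{a,\ldots,b}$ is an arc of $P$ on column $c$ and $P_{a,\ldots,b}$ dominates $P_{j,\ldots,k}$ (in the sense of the strict nesting of $y$-intervals from the definition of domination). But $P_{a,\ldots,b}$ being an arc of $P$ on column $c$ makes it one of the arcs ranged over in the definition of ``dominant arc'': it exhibits $P_{j,\ldots,k}$ as an arc of $P$ dominated by another arc of $P$. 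This contradicts the hypothesis that $P_{j,\ldots,k}$ is a dominant arc of $P$ on column $c$. Hence no such pair $(a,b)$ exists, and $P_{j,\ldots,k}$ is weakly dominant for the span $(u_i,u_{i+1})$.

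There is essentially no obstacle here: weak dominance is a \emph{weaker} requirement than dominance, since it only forbids domination by arcs confined to the single span $(u_i,u_{i+1})$, whereas dominance forbids domination by \emph{any} arc of $P$. The corollary is exactly the observation that the stronger property implies the weaker one once the enclosing span is identified, and the role of Lemma~\ref{lem:link:dom} is precisely to guarantee that such an enclosing span index $i$ with $u_i\leq j<k<u_{i+1}$ exists, which is taken as a hypothesis here.
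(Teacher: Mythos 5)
Your proof is correct and matches the paper's intent exactly: the paper states this corollary without a separate proof, treating it as immediate because weak dominance only forbids domination by arcs confined to the span $(u_i,u_{i+1})$, which is a strictly weaker requirement than the global non-domination defining a dominant arc. Your contradiction argument makes that implication explicit and is sound.
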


\begin{lemma}
\label{lem:link:same}
Consider a producible path $P$ whose last tile is the easternmost one, a glue column $\Bone \leq c \leq \Btwo$ and the decomposition $(u_i)_{0\leq i \leq t}$ of $P$ into pseudo-visible glues on glue column $c$. Consider $0\leq j < k < |P|-1$ such that $P_{j,\ldots,k}$ is an upward (resp. downward) dominant arc of $P$ on glue column $c$ and $0\leq i \leq t$ such that $u_i\leq j <k < u_{i+1}$. Then, $(u_i,u_{i+1})$ is an upward (resp. downward) pseudo-span of $P$.
\end{lemma}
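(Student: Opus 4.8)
We may, up to symmetry (reflecting across a horizontal line), assume that $P_{j,\ldots,k}$ is an \emph{upward} dominant arc of $P$ on column $c$, so $y_{P_j}<y_{P_k}$, and argue by contradiction, supposing that the span $(u_i,u_{i+1})$ is \emph{downward}. The first thing I would record are the relevant heights. Every glue of a span of $P$ points east (this follows, along the chain of Lemma~\ref{lem:span:decomp}, from Fact~\ref{fact:pointeast} together with the hypothesis that the last tile of $P$ is the easternmost one); moreover, since the span $(u_i,u_{i+1})$ is downward, $\glueP{u_i}{u_i+1}$ is pseudo-visible from the north and $\glueP{u_{i+1}}{u_{i+1}+1}$ is visible from the south in $P_{u_i,\ldots,|P|-1}$, so these are respectively the northernmost and the southernmost glue of $P_{u_i,\ldots,|P|-1}$ on column $c$, and $y_{P_{u_{i+1}}}<y_{P_{u_i}}$. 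As $\glueP{j}{j+1}$ and $\glueP{k-1}{k}$ are glues of $P_{u_i,\ldots,|P|-1}$ on column $c$, and $P$ is simple, this gives $y_{P_{u_{i+1}}}\le y_{P_j}<y_{P_k}\le y_{P_{u_i}}$; the equality $j=u_i$ is impossible (it would give $y_{P_{u_i}}=y_{P_j}<y_{P_k}\le y_{P_{u_i}}$), so in fact $u_i<j$ and the chain is strict: $y_{P_{u_{i+1}}}<y_{P_j}<y_{P_k}<y_{P_{u_i}}$.

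Next I would place $P_{j,\ldots,k}$ with respect to the first arc $A_1=P_{u_i,\ldots,q}$ of the span, where $\glueP{q-1}{q}$ is the first glue of $P_{u_i+1,\ldots,|P|-1}$ on column $c$ (it points west). Since $y_{P_{u_i}}$ is the maximal column-$c$ glue height of $P_{u_i,\ldots,|P|-1}$, $A_1$ is a \emph{downward} arc, $y_{P_q}<y_{P_{u_i}}$; hence $A_1\ne P_{j,\ldots,k}$, and because $\glueP{j}{j+1}$ is an east-pointing column-$c$ glue which is not one of the two endpoint glues of $A_1$, necessarily $j>q$, so the order along $P$ is $u_i<q<\dots<j<k<\dots<u_{i+1}$. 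Now $P_{j,\ldots,k}$ is \emph{dominant}, so the arc $A_1$ does not dominate it; together with $y_{P_k}<y_{P_{u_i}}$ this forces $y_{P_j}<y_{P_q}$, and then, since $A_1$ and $P_{j,\ldots,k}$ are disjoint positive arcs of $P$ on column $c$, their height intervals cannot be in the ``crossing'' configuration that (by the remark accompanying the definition of domination) would force the two arcs to intersect; hence also $y_{P_k}<y_{P_q}$. Thus the whole interval $[y_{P_j},y_{P_k}]$ of $P_{j,\ldots,k}$ lies strictly below $y_{P_q}<y_{P_{u_i}}$: along $P$, the path $P_{u_i,\ldots,j}$ descends from the top glue $\glueP{u_i}{u_i+1}$ past the height $y_{P_k}$ all the way down to $P_j$, the \emph{low} endpoint of the upward arc $P_{j,\ldots,k}$; then $P_{j,\ldots,k}$ climbs back up to the height $y_{P_k}$; and finally $P_{k,\ldots,u_{i+1}}$ must descend again to the bottom glue $\glueP{u_{i+1}}{u_{i+1}+1}$, which sits below $y_{P_j}$.

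It is this ``overshoot, climb back, descend again'' shape that I would turn into a contradiction, in the style of Lemmas~\ref{lem:next:order} and~\ref{decompo:direction}. The plan is to exhibit a hole $\Hole$ of $P$ on column $c$ — built from (a prefix of $P_{u_i,\ldots,j}$ followed by) $P_{j,\ldots,k}$ and capped off using the glue ray of $\glueP{k-1}{k}$ or the next glue of the arc $P_{j,\ldots,k}$, which by Lemma~\ref{lem:next:order} has index $>k$ and by Lemma~\ref{decompo:direction} points east — whose interior $\inter{\Hole}$ is a finite region strictly west of column $e_P$. Since $k<u_{i+1}\le\lastc<|P|-1$, the easternmost tile $P_{|P|-1}$ is not a tile of $\Hole$, hence $P_{|P|-1}\notin\inter{\Hole}$; but the tail $P_{k+1,\ldots,|P|-1}$ starts inside $\inter{\Hole}$ and, to reach the bottom glue $\glueP{u_{i+1}}{u_{i+1}+1}$ and then the easternmost tile, must leave it — which it cannot do: it can cross neither $P_{u_i,\ldots,k}$ (simplicity of $P$) nor the frontier of $\Hole$ (using pseudo-visibility of $\glueP{u_i}{u_i+1}$ and visibility from the south of $\glueP{u_{i+1}}{u_{i+1}+1}$ in $P_{u_i,\ldots,|P|-1}$, which pin down the column-$c$ glues available to the tail near the band $[y_{P_j},y_{P_k}]$). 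That contradiction proves the claim; the downward-arc/upward-span case follows by reflecting across a horizontal line.

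The main obstacle is precisely this last step: choosing $\Hole$ so that its interior really does capture the tail $P_{k+1,\ldots,|P|-1}$ — the delicacy is that arcs of $P$ dominated by $P_{j,\ldots,k}$ live inside the band $[y_{P_j},y_{P_k}]$ and may bulge into $\inter{\Hole}$, so the bounding curve has to be routed around them — and then verifying, case by case as in the other lemmas of this section, that every possible escape of the tail (crossing $P_{u_i,\ldots,k}$, crossing the frontier at an interior height, or slipping out through an endpoint of the frontier band) is blocked. The essential leverage is that $P_{j,\ldots,k}$ is \emph{upward} (not merely weakly dominant) and that, $(u_i,u_{i+1})$ being a \emph{downward} span, $\glueP{u_i}{u_i+1}$ and $\glueP{u_{i+1}}{u_{i+1}+1}$ are the extreme column-$c$ glues of $P_{u_i,\ldots,|P|-1}$.
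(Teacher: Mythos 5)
Your preliminary analysis is sound: the height chain $y_{P_{u_{i+1}}}<y_{P_j}<y_{P_k}<y_{P_q}<y_{P_{u_i}}$, the strict inequality $u_i<j$, and the placement of $P_{j,\ldots,k}$ strictly below the first arc $A_1=P_{u_i,\ldots,q}$ of the (assumed downward) span all follow correctly from the definitions of pseudo-visibility, of a downward cut, and of domination. But the proof stops exactly where the lemma actually lives. Everything after ``The plan is to exhibit a hole $\Hole$\ldots'' is a programme, not an argument: you never specify the hole, never prove that $P_{k+1,\ldots,|P|-1}$ (or any tail) lies in its interior, and never carry out the case analysis blocking its escape. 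Note in particular that the naive candidate fails at the very first step: since $\glueP{k-1}{k}$ points west, $P_{k+1}$ sits at a column $\leq c$, i.e.\ \emph{outside} the interior of the positive arc $P_{j,\ldots,k}$, so the tail does not ``start inside'' that hole; and the next glue of $P$ after the arc points east (Lemma~\ref{decompo:direction}), so the trap of Lemmas~\ref{lem:next:order}--\ref{decompo:direction} does not fire either. You acknowledge this yourself (``the bounding curve has to be routed around'' the arcs dominated by $P_{j,\ldots,k}$), but routing that curve correctly requires controlling the whole laminar family of excursions of $P_{u_i,\ldots,u_{i+1}+1}$ across column $c$ — which is essentially the content of the lemma. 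So this is a genuine gap, not a routine verification left to the reader.

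For comparison, the paper avoids the global trap entirely. It fixes the span's direction (WLOG upward), shows that the \emph{first} positive excursion $P_{u_i,\ldots,q}$ is upward because $P_{u_i+1,\ldots,|P|-1}$ cannot cross the southward glue ray of the pseudo-visible glue $\glueP{u_i}{u_i+1}$, then iterates: the next weakly dominant arc starts at the southernmost glue of $P_{u_i,\ldots,u_{i+1}+1}$ strictly north of the previous one, is shown (by two small hole arguments of the \ref{lem:next:order}/\ref{decompo:direction} type, each confined between two already-established barriers) to point east and to be upward, and so on. This proves that \emph{every} weakly dominant arc of the span has the span's direction; Corollary~\ref{cor:link:dom} then says your dominant arc $P_{j,\ldots,k}$ is one of them, which forces the span to be upward. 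If you want to salvage your direct approach, the cleanest route is to reconstruct exactly this inductive sweep — at which point you have reproduced the paper's proof rather than shortcut it.
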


\begin{proof}
In fact, this proof follows the steps and reasonings which were done in Subsection \ref{Sec:arc:dominant}, see Figure \ref{fig:Decompo:ExemplePseudoDom}.
Consider a producible path $P$ whose last tile is the easternmost one, a glue column $\Bone \leq c \leq \Btwo$ and the decomposition $(u_i)_{0\leq i \leq t}$ of $P$ into pseudo-visible glues on glue column $c$. Consider $0\leq i < t$ such that $(u_i,u_{i+1})$ is a pseudo-span of $P$ on glue column $c$. Without loss of generality, we suppose that $(u_i,u_{i+1})$ is an upward pseudo-span. Let $s=u_i$ and $n=u_{i+1}$ such that $\glueP{s}{s+1}$ is pseudo-visible from the south in $P$ while $\glueP{n}{n+1}$ is visible from the north in $P_{s,\ldots,|P|-1}$. Let $j=\min\{k>s: \glueP{k}{k+1} \text{ is on glue column }c\}$, thus $P_{s,\ldots,j+1}$ is an arc of $P$ ($\glueP{s}{s+1}$ points east since it is a glue of the decomposition). By definition of $j$, we have $j<n$ since $\glueP{n}{n+1}$ points east and is also on glue column $c$. Since $P_{s,\ldots,|P|-1}$ cannot cross $l^s$ (the glue ray of $\glueP{s}{s+1}$ going south) then $P_{s,\ldots,j+1}$ is an upward arc and no arc of $P_{s,\ldots,|P|-1}$ can dominate or be south of $P_{s,\ldots,j+1}$. Then $P_{s,\ldots,j+1}$ is the southernmost weakly dominant arc of the span $(s,n)$ of $P$. Now let $a$ be the index of the southernmost glue of $P_{s,\ldots,n+1}$ which is strictly to the north of $\glueP{j}{j+1}$. If $a=n$ then $P_{s,\ldots,j+1}$ is the only weakly dominant arc of the span $(s,n)$ of $P$ (since $P_{s,\ldots,n+1}$ can intersect the glue ray of $\glueP{n}{n+1}$ going north only at its start-point). Thus the lemma is true in this case. Then suppose that $a<n$.

\begin{figure}
\center
\begin{tikzpicture}[x=0.2cm,y=0.2cm]


\draw[very thick] (4.5,4.5) -| (30.5,18.5) -| (12.5,16.5) -| (28.5,6.5)  -| (2.5,49.5) -| (19.5,52.5) -| (9.5,60.5) -| (15.5,60.5);


\draws{4}{4}

\tile{5}{4}{85}
\tile{6}{4}{85}
\tile{7}{4}{85}
\tile{8}{4}{85}
\tile{9}{4}{85}
\tile{10}{4}{85}
\tile{11}{4}{85}
\tile{12}{4}{85}
\tile{13}{4}{85}
\tile{14}{4}{85}
\tiler{14}{4}{85}
\tiler{15}{4}{85}
\tiler{16}{4}{85}
\tiler{17}{4}{85}
\tiler{18}{4}{85}
\tiler{19}{4}{85}
\tiler{20}{4}{85}
\tiler{21}{4}{85}
\tiler{22}{4}{85}
\tiler{23}{4}{85}
\tiler{24}{4}{85}
\tiler{25}{4}{85}
\tiler{26}{4}{85}
\tiler{27}{4}{85}
\tiler{28}{4}{85}
\tiler{29}{4}{85}
\tiler{30}{4}{85}
\tiler{30}{5}{85}
\tiler{30}{6}{85}
\tiler{30}{7}{85}
\tiler{30}{8}{85}
\tiler{30}{9}{85}
\tiler{30}{10}{85}
\tiler{30}{11}{85}
\tiler{30}{12}{85}
\tiler{30}{13}{85}
\tiler{30}{14}{85}
\tiler{30}{15}{85}
\tiler{30}{16}{85}
\tiler{30}{17}{85}
\tiler{30}{18}{85}
\tiler{29}{18}{85}
\tiler{28}{18}{85}
\tiler{27}{18}{85}
\tiler{26}{18}{85}
\tiler{25}{18}{85}
\tiler{24}{18}{85}
\tiler{23}{18}{85}
\tiler{22}{18}{85}
\tiler{21}{18}{85}
\tiler{20}{18}{85}
\tiler{19}{18}{85}
\tiler{18}{18}{85}
\tiler{17}{18}{85}
\tiler{16}{18}{85}
\tiler{15}{18}{85}
\tiler{14}{18}{85}
\tile{13}{18}{85}
\tile{12}{18}{85}
\tile{12}{17}{85}
\tile{12}{16}{85}
\tile{13}{16}{85}
\tiley{14}{16}{85}
\tiley{15}{16}{85}
\tiley{16}{16}{85}
\tiley{17}{16}{85}
\tiley{18}{16}{85}
\tiley{19}{16}{85}
\tiley{20}{16}{85}
\tiley{21}{16}{85}
\tiley{22}{16}{85}
\tiley{23}{16}{85}
\tiley{24}{16}{85}
\tiley{25}{16}{85}
\tiley{26}{16}{85}
\tiley{27}{16}{85}
\tiley{28}{16}{85}
\tiley{28}{15}{85}
\tiley{28}{14}{85}
\tiley{28}{13}{85}
\tiley{28}{12}{85}
\tiley{28}{11}{85}
\tiley{28}{10}{85}
\tiley{28}{9}{85}
\tiley{28}{8}{85}
\tiley{28}{7}{85}
\tiley{28}{6}{85}
\tiley{27}{6}{85}
\tiley{26}{6}{85}
\tiley{25}{6}{85}
\tiley{24}{6}{85}
\tiley{23}{6}{85}
\tiley{22}{6}{85}
\tiley{21}{6}{85}
\tiley{20}{6}{85}
\tiley{19}{6}{85}
\tiley{18}{6}{85}
\tiley{17}{6}{85}
\tiley{16}{6}{85}
\tiley{15}{6}{85}
\tiley{14}{6}{85}
\tile{13}{6}{85}
\tile{12}{6}{85}
\tile{11}{6}{85}
\tile{10}{6}{85}
\tile{9}{6}{85}
\tile{8}{6}{85}
\tile{7}{6}{85}
\tile{6}{6}{85}
\tile{5}{6}{85}
\tile{4}{6}{85}
\tile{3}{6}{85}
\tile{2}{6}{85}
\tile{2}{7}{85}
\tile{2}{8}{85}
\tile{2}{9}{85}
\tile{2}{10}{85}
\tile{2}{11}{85}
\tile{2}{12}{85}
\tile{2}{13}{85}
\tile{2}{14}{85}
\tile{2}{15}{85}
\tile{2}{16}{85}
\tile{2}{17}{85}
\tile{2}{18}{85}
\tile{2}{19}{85}
\tile{2}{20}{85}
\tile{2}{21}{85}
\tile{2}{22}{85}
\tile{2}{23}{85}
\tile{2}{24}{85}
\tile{2}{25}{85}
\tile{2}{26}{85}
\tile{2}{27}{85}
\tile{2}{28}{85}
\tile{2}{29}{85}
\tile{2}{30}{85}
\tile{2}{31}{85}
\tile{2}{32}{85}
\tile{2}{33}{85}
\tile{2}{34}{85}
\tile{2}{35}{85}
\tile{2}{36}{85}
\tile{2}{37}{85}
\tile{2}{38}{85}
\tile{2}{39}{85}
\tile{2}{40}{85}
\tile{2}{41}{85}
\tile{2}{42}{85}
\tile{2}{43}{85}
\tile{2}{44}{85}
\tile{2}{45}{85}
\tile{2}{46}{85}
\tile{2}{47}{85}
\tile{2}{48}{85}
\tile{2}{49}{85}
\tile{3}{49}{85}
\tile{4}{49}{85}
\tile{5}{49}{85}
\tile{6}{49}{85}
\tile{7}{49}{85}
\tile{8}{49}{85}
\tile{9}{49}{85}
\tile{10}{49}{85}
\tile{11}{49}{85}
\tile{12}{49}{85}
\tile{13}{49}{85}
\tileor{14}{49}{85}
\tileor{15}{49}{85}
\tileor{16}{49}{85}
\tileor{17}{49}{85}
\tileor{18}{49}{85}
\tileor{19}{49}{85}
\tileor{19}{50}{85}
\tileor{19}{51}{85}
\tileor{19}{52}{85}
\tileor{18}{52}{85}
\tileor{17}{52}{85}
\tileor{16}{52}{85}
\tileor{15}{52}{85}
\tileor{14}{52}{85}
\tile{13}{52}{85}
\tile{12}{52}{85}
\tile{11}{52}{85}
\tile{10}{52}{85}
\tile{9}{52}{85}
\tile{9}{53}{85}
\tile{9}{54}{85}
\tile{9}{55}{85}
\tile{9}{56}{85}
\tile{9}{57}{85}
\tile{9}{58}{85}
\tile{9}{59}{85}
\tile{9}{60}{85}
\tile{10}{60}{85}
\tile{11}{60}{85}
\tile{12}{60}{85}
\tile{13}{60}{85}
\tile{14}{60}{85}
\tile{15}{60}{85}

\path [dotted, draw, thin] (0,0) grid[step=0.2cm] (60,64);

\draw [dashed] (15,0) -| (15,64);

\fill (14.5,4.5) circle (0.16);
\node (D) at (14.5,3) {$P_{s}=P_{u_0}$};

\fill (14.5,60.5) circle (0.16);
\node (D) at (14.5,62.2) {$P_{n}=P_{u_1}$};


\end{tikzpicture}
\caption{The span $(s,n)$ of the path $P$ from Figure \ref{fig:Protected:NotPseudoVisible}. There are three arcs in $P_{0,\ldots,n}$. The red one is dominant (and thus weakly dominant). The orange one is weakly dominant but not dominant (another arc of $P_{n,\ldots,|P|-1}$ dominates this arc, see Figure \ref{fig:Decompo:dominant}). The last one in yellow is dominated by the red arc. All weakly dominant arcs are upwards arcs since $(s,n)$ is an upward span.}
\label{fig:Decompo:ExemplePseudoDom}
\end{figure}

Firstly, for the sake of contradiction, suppose that $a<s<j$ \footnote{One could argue that $a$ belongs to $P_{s,\ldots,n}$ but we will not used this argument here since we present a sketch of the proof of Lemma \ref{lem:next:order} which can be used in the general case.}. Similarly to the proof of Lemma \ref{lem:next:order}, $P_{a,\ldots,j+1}$ is a hole and $P_{j+1}$ is in the interior of this hole (since $\glueP{j}{j+1}$ points west). Since $\glueP{n}{n+1}$ is the northernmost glue of $P_{s,\ldots,n}$, it cannot be inside the interior of this hole. Thus $P_{j+1,\ldots,n}$ must intersect with either $P_{a,\ldots,j}$ (which is not possible since $P$ is simple) or the frontier of the hole (contradicting the definition of $a$). Thus, we have $s<j<a$.

Secondly for the sake of contradiction suppose that $\glueP{a}{a+1}$ points west. Similarly as in the proof of Lemma \ref{decompo:direction}, $P_{j,\ldots,a}$ is a hole and $P_{a+1}$ is inside the interior of this hole (since $\glueP{a}{a+1}$ points west). Since $\glueP{n}{n+1}$ is the northernmost glue of $P_{s,\ldots,n}$, it cannot be inside the interior of this hole. Thus, $P_{a+1,\ldots,n}$ must intersect with either $P_{j,\ldots,a}$ (which is not possible since $P$ is simple) or the frontier of the hole (contradicting the definition of $a$). Thus, $\glueP{a}{a+1}$ points east.

Then let $b=\min\{k>a: \glueP{k}{k+1} \text{ is on glue column }c\}$ and thus $P_{a,\ldots,b+1}$ is an arc of $P$. By definition of $j$, we have $j<n$ since $\glueP{n}{n+1}$ points east and is also on glue column $c$. Also, since $P_{a+1,\ldots,b+1}$ cannot cross $l^s$, $P_{s,\ldots,j+1}$ and the vertical segment between $\glueP{j}{j+1}$ and $\glueP{a}{a+1}$, then $P_{a,\ldots,b+1}$ is an upward arc of $P$. Similarly, no arc of $P_{s,\ldots,n}$ can dominate $P_{a,\ldots,b+1}$ since it would need to cross either $l^s$, $P_{s,\ldots,j+1}$ or the vertical segment between $\glueP{j}{j+1}$ and $\glueP{a}{a+1}$. Then $P_{a,\ldots,b+1}$ is a weakly dominant upward arc of the span $(s,n)$ of $P$ which is north of the arc $P_{s,\ldots,j+1}$ of the same span. By iterating this reasoning, we obtain that any weakly dominant arc of the span $(s,n)$ of $P$ is an upward arc of $P$ and the lemma is true.

\end{proof}

When the path $P$ is a canonically path for glue column $c$, it means that an upward (resp. downward) dominant arc of $P$ on glue column $c$ would be a subpath of an upward (resp. downward) pseudo-span of $P$. The definition of a canonical path leads to the following corollary.

\begin{corollary}
\label{cor:decompo:cano}
Consider a glue column $\Bone \leq c \leq \Btwo$ and let $P$ be a canonical path of glue column $c$. Consider $0\leq i <j <|P|-1$ such that $P_{i,\ldots,j}$ is an upward (resp. downward) dominant arc of $P$ on glue column $c$ then let $0\leq s \leq |P|-1$ such that $\glueP{s}{s+1}$ is visible from the south (resp. north) in $P$. Consider a path $B$ such that $P_{0,\ldots,k}B$ is a path for some $i<k<j$, $B_0 \neq P_{k+1}$ and $B$ is inside the area delimited by $l^s$, $P_{s,\ldots,|P|-1}$ and the ray starting at $\pos{P_{|P|-1}}$ and going south (resp. north). Then $B$ does not intersect with~$P$.
\end{corollary}

\section{Shield and conclusion}
\label{sec:analysis}

This section aims at showing that shields can be found and that whenever two identical protected glues are found on two glue columns on different canonical paths, either shields can be used to translate paths to go beyond the eastern limit of the assembly or a new protected glue can be found. Eventually, we obtain a contradiction and this concludes the proof.

We start by defining a shield in Subsection \ref{sec:def:shield}. Then, we present the three kinds of protected glues and their properties in Subsection~\ref{sec:def:protect}. Afterwards, we explain how to combine two protected main glues of the same type in Subsection \ref{sec:combining:full} in order to find a new protected glue. Similarly, we explain how to patch the previous results to combine two protected backup glues of the same type in Subsection \ref{sec:combining:half} and finally we conclude with the main result in Theorem \ref{the:end}.

\subsection{Shield}\label{sec:def:shield}

%
%
%

Before introducing the formal definition of a shield, we need to define the index of the \emph{first glue} of $P$ on glue column $c$ as $$\min\{i: \glueP{i}{i+1} \text{ is on glue column $c$}\}.$$ The definition of a shield is illustrated in Figure \ref{fig:Final:Mainexemple}.

\begin{definition}
\label{def:shield}
Suppose that $e_\uniterm>\Bfinal$ and consider a path $P$ which is canonical for some glue column $c$ with $\Bone\leq c \leq \Btwo$. Let $0\leq \last < |P|-1$ be the index of the first glue of $P$ on glue column $\Bfour(c)$ and $0\leq \lastc < |P|-1$ the index of the last glue of $P$ on glue column $c$. Consider $0<s\leq |P|-1$ such that $\glueP{s}{s+1}$ is visible from the south (resp. north) in $P_{0,\ldots,\last+1}$ on glue column $c$ and let $\mathcal{C}$ be the right (resp. left)  side of the area delimited by $l^s$, $P_{s,\ldots,\last+1}$ and the ray starting at $\glueP{\last}{\last+1}$ and going south (resp. north). A path $S$ is a southern (resp. northern) \emph{shield} of $P$ on glue column $c$ at position $s \leq \posS \leq \last$ iff:
\begin{itemize}
\item $S$ is a sub-assembly of $\uniterm$
\item $S$ does not intersect with $P_{s,\ldots,\last+1}$ and $S_{1,\ldots,|S|-1}$ is in $\mathcal{C}$;
\item $A=SP_{\posS,\ldots,\last+1}$ is an upward (resp. downward) negative arc of glue column $\Bfour(c)$;
\item $SP_\posS$ is a prefix of $\min(\inter{A})$.
\item $w_{A}\leq c$;
\end{itemize}
\end{definition}

\input{./tikz/Final/MainExemple}

Here are some remarks about this definition. Firstly, we will mainly work on $P_{0,\ldots,\last+1}$ in this section. By lemma \ref{Uturn:LinearCano}, we have $s\leq \lastc<\last$ implying that the indices of all glues of $P$ on glue column $c$ are smaller than $\last$. For example, the glue visible from the south in $P$ and $P_{0,\ldots,\last+1}$ is the same one. Thus, not taking the end of the path $P_{\last+2,\ldots,|P|-1}$ into account is not problematic for our study. This subpath $P_{\last+2,\ldots,|P|-1}$ is only relevant, in some cases, when it is copied and pasted to the east to obtain a contradiction. Secondly, we remind that $A$ is self-avoiding by definition. Thus there is no intersection between $S$ and $P_{a,\ldots,\last+1}$ (but intersections between $S$ and $P_{\last+2,\ldots,|P|-1}$ may exist). Thirdly, since $A$ is a negative arc of glue column $\Bfour(c)$ then $\glueS{0}{1}$ is on glue column $\Bfour(c)$ and points west. Moreover, the tile $S_{|S|-1}$ must bind with $P_a$. We also have $e_{A_{1\ldots,|A|-2}}<\Bfour(c)$ and more importantly, see Figure \ref{fig:Final:Mainexemple}:

\begin{fact}
\label{fact:shield:inter}
Consider a path $P$ which is canonical for some glue column $\Bone\leq c \leq \Btwo$, with $0\leq s \leq |P|-1$ as the index of a glue visible on glue column $c$ in $P$ and $\last$ as the index of its first glue on glue column $\Bfour(c)$. Consider a shield $S$ of $P$ on glue column $c$ at position $s \leq \posS \leq \last$ and let $A=SP_{\posS,\ldots,\last+1}$. Then, the interior $\inter{A}$ of arc $A$ is included into $\mathcal{C}$.
\end{fact}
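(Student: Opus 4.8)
The plan is a Jordan‑curve argument. Write $\beta$ for the bi‑infinite simple curve bounding $\mathcal{C}$: it consists of the glue ray $l^s$ of $\glueP{s}{s+1}$, of the subpath $P_{s,\ldots,\last+1}$, and of the glue ray $l^{\last}$ of $\glueP{\last}{\last+1}$; it splits the plane into the open workspace $\mathcal{C}$ and the opposite region $\mathcal{W}$, which is open, connected and unbounded, with $\beta=\partial\mathcal{C}=\partial\mathcal{W}$. On the other side, $A=SP_{\posS,\ldots,\last+1}$ is a negative arc of column $\Bshield{c}$, hence a self‑avoiding path meeting its frontier $\cork{A}$ only at its two extremal glue positions; thus $A\cup\cork{A}$ is a Jordan curve whose bounded complementary component is, by definition, $\inter{A}$. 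I will show that $A\cup\cork{A}$ does not meet $\mathcal{W}$. Granting this, $\mathcal{W}$ — being connected and unbounded — lies entirely in the \emph{unbounded} complementary component of $A\cup\cork{A}$, so $\inter{A}\cap\mathcal{W}=\emptyset$; since moreover $\inter{A}$ is open while every point of $\beta$ is a boundary point of $\mathcal{W}$, we also get $\inter{A}\cap\beta=\emptyset$, and therefore $\inter{A}\subseteq\mathbb{R}^2\setminus(\mathcal{W}\cup\beta)=\mathcal{C}$, as claimed.

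It remains to check that neither $A$ nor $\cork{A}$ meets $\mathcal{W}$. For $A=SP_{\posS,\ldots,\last+1}$: the shield $S$ is in $\mathcal{C}$ by definition, hence avoids $\mathcal{W}$, and $P_{\posS,\ldots,\last+1}$ (with $\posS\geq s$) is a subpath of $P_{s,\ldots,\last+1}$, hence lies on the curve $\beta$, which is disjoint from $\mathcal{W}$. For the frontier: both extremal glues of $A$ lie on column $\Bshield{c}$ — the first pointing west, the last pointing east, by definition of a negative arc — hence both glue positions have abscissa $\Bshield{c}+\tfrac12$, so $\cork{A}$ is contained in the vertical line $\mathcal{L}=\{x=\Bshield{c}+\tfrac12\}$. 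Now I locate $\mathcal{L}\cap\beta$. The ray $l^s$ lies on column $c$ (the glue $\glueP{s}{s+1}$ points east, $c$ being large, see Lemma~\ref{cor:longeast}), so it sits on abscissa $c+\tfrac12<\Bshield{c}+\tfrac12$ and misses $\mathcal{L}$. The subpath $P_{s,\ldots,\last+1}$ meets $\mathcal{L}$ only along a unit edge crossing abscissa $\Bshield{c}+\tfrac12$ (the tile centres of $P$ have integer abscissa), that is, along a horizontal edge joining columns $\Bshield{c}$ and $\Bshield{c}+1$ — equivalently, a glue of $P$ on column $\Bshield{c}$; but $\last$ is the \emph{first} index of such a glue, so the only one occurring in $P_{s,\ldots,\last+1}$ is $\glueP{\last}{\last+1}$, contributing the single point $(\Bshield{c}+\tfrac12,y_{P_{\last}})$. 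Finally $l^{\last}$ is entirely contained in $\mathcal{L}$. Hence $\mathcal{L}\cap\beta=l^{\last}$, so $\mathcal{L}\setminus\beta$ is a connected open half‑line; its points lying far from $P$ — far south for a southern shield, far north for a northern one — are joined to the far‑east columns of $\mathcal{C}$ by a horizontal eastward ray crossing none of $l^s$, $l^{\last}$, $P_{s,\ldots,\last+1}$, hence lie in $\mathcal{C}$. Therefore $\mathcal{L}\setminus\beta\subseteq\mathcal{C}$, so $\mathcal{L}\subseteq\mathcal{C}\cup\beta$, and in particular $\cork{A}$ misses $\mathcal{W}$. Together with $A\cap\mathcal{W}=\emptyset$ this yields the disjointness of $A\cup\cork{A}$ from $\mathcal{W}$, and the first paragraph concludes.

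The only real subtlety is the topological bookkeeping: that $\mathcal{W}$ is a single connected unbounded region, that a curve avoiding $\mathcal{W}$ forces its bounded complementary side into $\mathcal{C}\cup\beta$, and that openness of $\inter{A}$ then removes $\beta$. The combinatorial inputs — that $\glueP{\last}{\last+1}$ is the unique glue of $P$ on column $\Bshield{c}$ inside $P_{s,\ldots,\last+1}$, and that $l^s$ sits on a strictly smaller column — are immediate from the definitions of $\last$ and of a shield.
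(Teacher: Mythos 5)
Your argument is correct, and it matches the justification the paper leaves implicit: Fact~\ref{fact:shield:inter} is stated without proof, as a consequence of the remarks preceding it ($S$ lies in $\mathcal{C}$, $P_{\posS,\ldots,\last+1}$ lies on the boundary curve of $\mathcal{C}$, and $x_{A_{1,\ldots,|A|-2}}<\Bshield{c}$). The two combinatorial facts you isolate --- that $\glueP{\last}{\last+1}$ is the unique glue of $P_{0,\ldots,\last+1}$ on column $\Bshield{c}$ (by minimality of $\last$), and that $l^s$ lives on the strictly smaller abscissa $c+\tfrac12$ --- are exactly what is needed to place the frontier $\cork{A}$ in $\mathcal{C}\cup l^{\last}$, and the Jordan-curve bookkeeping is sound; the only caveat is that the paper sometimes treats tiles of $A$ itself as lying ``in'' $\inter{A}$ and in $\mathcal{C}$, a closed-versus-open convention that your open-region statement covers after taking closures.
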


This fact is implicitly used every time we assemble a path: we always work into areas where there is no tile of the seed $\sigma$ and the beginning of the path. Finally, the third condition of definition \ref{def:shield} implies that there is no ``shortcut'' in $\inter{A}$ between a tile of $S$ and another tile of $A$, \emph{i.e.} consider $0\leq i \leq j \leq |A| -1$ such that $i \leq |S|-1$ then for any path $Q$ which is sub-assembly of $\uniterm$ and such that $Q$ is in $\inter{A}$, $Q_0=A_i$ and $Q_{|Q|-1}=A_j$ then $S_{i,\ldots,|S|-1}P_\posS$ is a prefix of $Q$ if $j\geq |S|$ and $Q=S_{i,\ldots,j}$ otherwise. Finally, in order to deal with the final Lock \ref{lock:shieldone}, we distinguished two types of shield depending on the fact that $S$ crosses glue column $c$ or not.

\begin{definition}
\label{def:shield:type}
Consider a path $P$ which is canonical for some glue column $\Bone\leq c \leq \Btwo$, with $s$ as the index of a glue visible on glue column $c$ in $P$ and $\last$ as the index of its first glue on glue column $\Bfour(c)$. Consider a shield $S$ of $P$ on glue column $c$ at position $s \leq \posS \leq \last$. Shield $S$ is a \emph{full} shield if and only if $w_{SP_a}\leq c$ otherwise $S$ is called an \emph{half} shield.
\end{definition}

\subsection{Protected glues}\label{sec:def:protect}

Suppose that $e_\uniterm>\Bfinal$ and consider a path $P$ which is canonical for some glue column $c$ with $\Bone\leq c \leq \Btwo$. Also consider $(\main{i})_{0\leq i \leq t}$ and $(\second{i})_{0\leq i < t}$ the upward (resp. downward) decomposition of $P$ into dominants arcs on glue column $c$. 
Now, the aim of this section is to \emph{protect} the glues of this decomposition. A notion which can be seen as a generalization of pseudo-visibility.

To do so, we will have to consider three kinds of \emph{protected} glues. The first one is when the glue is visible in $P$ on glue column $c$, see Subsection \ref{sub:vis:protect}. The second one is when a glue is protected by a full-shield, see Subsection \ref{sub:vis:full}. The last one is when a glue is protected by an half-shield, see Subsection \ref{sub:vis:half}. For each kind of protected glue, we will define a \emph{category} (which is either ``main'' or ``backup''), a \emph{type} (belonging to the set of tile types $T$), a \emph{protected area} (which is included into the area $\mathcal{C}$ of Definition \ref{def:shield}) and a \emph{practical form} (which is a path included into the protected area and which is a subassembly of $\uniterm$) and a \emph{restrained area} (whose definition relies on the practical form and which is included into the protected area). In order to achieve this goal, we introduce here the \emph{practical form} of $P$ which can be defined conjointly for the three types of protected glue:

\begin{definition}
\label{def:rightform}
Suppose that $e_\uniterm>\Bfinal$, consider a path $P$ which is canonical for some glue column $c$ with $\Bone\leq c \leq \Btwo$, and let $\last$ be the index of its first glue on glue column $\Bfour(c)$. Also consider $(\main{i})_{0\leq i \leq t}$ and $(\second{i})_{0\leq i < t}$ to be the upward and downward decomposition of $P$ into dominant arcs on glue column $c$ and take an index $\proG$ such that $\proG=\main{i}$ or $\proG=\second{i}-1$. Suppose that an area called the protected area (defined later and differently in the two sub-cases) of $\proG$ is defined. We define the set of paths $\prodpaths{\proG}$ such that it contains a path $Q$ if and only if: 
\begin{itemize}
\item $Q_0=P_{\proG}$, $Q_1=P_{\proG+1}$, $Q_{|Q|-2}=P_{\last}$ and $Q_{|Q|-1}=P_{\last+1}$;
\item $Q_{1,\ldots,|Q|-2}$ is in the protected area of $\proG$; 
\item $P_{0,\ldots,\proG-1}Q$ is a producible path whose last tile is the easternmost one. 
\item $Q_{|Q|-1}$ is the only tile of $Q$ on column $\Bfour(c)+0.5$. 
\end{itemize}
Let $\rightF$ be the path such that $P_{\proG}R$ is the rightmost (resp. leftmost) priority path of this set, $\rightF$ is called the \emph{practical form} of $P$ at index $\proG$.
\end{definition}

We will give examples and illustrations for these notions for the three different kinds of protected glues. 

\subsubsection{The visible glue is protected}\label{sub:vis:protect}

\begin{definition}[Notations for a visible glue]
\label{notation:visible}
Suppose that $e_\uniterm>\Bfinal$ and consider a path $P$ which is canonical for some glue column $c$ with $\Bone\leq c \leq \Btwo$ and let $\last$ be the index of its first glue on glue column $\Bfour(c)$. Also consider $(\main{i})_{0\leq i \leq t}$ and $(\second{i})_{0\leq i < t}$ the upward (resp. downward) decomposition of $P$ into dominants arcs on glue column $c$. 
\end{definition}

We remind that $\glueP{\main{0}}{\main{0}+1}$ is the glue visible from the south (resp. north) in $P$ and we consider it as protected by default, see Figure \ref{fig:Final:VisIsPro}. It is categorized as a ``main'' glue, its type is the type of $\glueP{\main{0}}{\main{0}+1}$, its protected area $\area{\main{0}}$ is $\mathcal{C}$ the right side (resp. left side) of the area delimited by $l^{\main{0}}$, $P_{\main{0},\ldots,\last+1}$ and the ray starting at $\glueP{\last}{\last+1}$ and going south (resp. north). Let $\rightF$ be the practical form of $P$ at index $\main{0}$. In this case, $\rightF_{0,\ldots,|\rightF|-2}$ is in $\area{\main{0}}$ and we define the restrained area $\areaRes{\main{0}}$ as the right side (resp. left side) of the area delimited by $l^{\main{0}}$, $P_{\main{0}}R$ and the ray starting at $\glueP{\last}{\last+1}$ and going south (resp. north). This restrained area is inside~$\mathcal{C}$, see Figure \ref{fig:Final:VisIsPro}.

\input{./tikz/Final/VisibleProtected}

Variants of the two following lemmas are required for each kind of protected glue. We remind that the final goal, explained in Subsections \ref{sub:vis:full} and \ref{sub:vis:half}, is to combine two protected glues to find a new one. To achieve this goal, Lemma \ref{lem:full:turnRestrainedVis} will allow us to copy and pasted the practical form of one protected glue into the restrained area of the other one (as in Subsection \ref{road:newIdea} of the roadmap). Later in the proof, a third protected glue may be required to conclude (if the two protected glues are southern glues then the third one will be a northern one and vice versa). Then, Lemma \ref{lem:vis:LastTechnical} allow us to use efficiently this third protected glue.

\begin{lemma}
\label{lem:full:turnRestrainedVis}
Consider a path $P$ and the notations \ref{notation:visible}. Let $\area{\main{0}}$ be the protected area of $\glueP{\main{0}}{\main{0}+1}$ and let $\rightF$ the practical form of $P$ at index $\main{0}$. Consider a path $Q$ such that $Q_0=\rightF_0$. If $P_{\main{0}}Q$ is more right-priority (resp. left-priority) than $P_{\main{0}}\rightF$ and if there exist $0\leq u\leq v \leq |Q|-1$ such that $\rightF_{0,\ldots,u}$ is the largest common prefix between $\rightF$ and $Q$ and such that $Q_{0,\ldots,v}$ intersects with $\rightF_{u+1,\ldots,|\rightF|-1}$, then $Q_{0,\ldots,v}$ intersects with the glue ray of $\glueP{\main{0}}{\main{0}+1}$ or $e_{Q_{0,\ldots,v}}> \Bfour(c)$. 
\end{lemma}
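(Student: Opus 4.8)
The plan is to imitate the proof of Lemma~\ref{fact:rightPri}, with the practical form $\rightF$ playing the role of the right-priority path of a cut and with the maximality of $\rightF$ inside $\prodpaths{\main{0}}$ supplying the final contradiction. Up to the usual symmetry I treat the ``south'' case: $\glueP{\main{0}}{\main{0}+1}$ is visible from the south, $(\main{0},\last)$ is an upward cut of $P_{0,\ldots,\last+1}$, and $\mathcal{C}=\area{\main{0}}$ is its workspace. I recall (from the remarks after Definition~\ref{def:shield}, via Lemma~\ref{lem:can:bound}) that $\main{0}\le\lastc<\last$, that all tiles of $P_{0,\ldots,\main{0}}$ sit on columns $\le\Bshield{c}$, that $\glueP{\last}{\last+1}$ points east with $P_{\last}$ on column $\Bshield{c}$, and that $\mathcal{C}$ is bounded by $\bord{0}$ (the glue ray of $\glueP{\main{0}}{\main{0}+1}$ going south), the polyline $P_{\main{0},\ldots,\last+1}$, and the ray $l^{\last}$ of $\glueP{\last}{\last+1}$ going north, the latter running strictly east of column $\Bshield{c}$. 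I may assume $u<|\rightF|-1$ and $u<|Q|-1$ (otherwise $\rightF_{u+1,\ldots,|\rightF|-1}$ is empty or $Q_{0,\ldots,v}$ lies in the common prefix, so the hypothesis is vacuous) and I take $v$ to be the least index with $Q_v\in\rightF_{u+1,\ldots,|\rightF|-1}$; since $\calT$ is directed, $Q_v=\rightF_{u'}$ as tiles for some $u'>u$, and $Q_{u+1,\ldots,v-1}$ avoids $\rightF$.

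First I would show that either the conclusion already holds or $Q_{0,\ldots,v}$ lies entirely in $\mathcal{C}$. Let $k$ be the largest index with $Q_{0,\ldots,k}$ in $\mathcal{C}$; this is well-defined since $Q_0=\rightF_0=P_{\main{0}+1}\in\mathcal{C}$. If $k\ge v$ we are in the second alternative. Otherwise the edge $\glueQ{k}{k+1}$ exits $\mathcal{C}$, and since two unit grid edges meet only at a shared tile, this edge either crosses $\bord{0}$, or crosses $l^{\last}$, or shares a tile with the polyline $P_{\main{0},\ldots,\last+1}$. In the first case $Q_{0,\ldots,v}$ meets $\bord{0}$ and we are done; in the second the edge runs from column $\Bshield{c}$ to column $\Bshield{c}+1$, so $e_{Q_{0,\ldots,v}}\ge\Bshield{c}+1$ and we are done; the third is impossible, because a tile of $P_{\main{0},\ldots,\last+1}$ lying in $\{Q_k,Q_{k+1}\}$ is joined to $Q_k\in\mathcal{C}$ by a unit edge contained in the closed region $\mathcal{C}$, contradicting the maximality of $k$ (the same topological bookkeeping as in the proof of Lemma~\ref{fact:rightPri}). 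Hence I may assume $Q_{0,\ldots,v}\subseteq\mathcal{C}$, and in particular $u'\neq|\rightF|-1$, since otherwise $Q_v=\rightF_{|\rightF|-1}=P_{\last+1}$ would sit on column $\Bshield{c}+1$.

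It remains to splice. Put $\rightF'=Q_{0,\ldots,v}\,\rightF_{u'+1,\ldots,|\rightF|-1}$; by minimality of $v$ and $u<u'$ the two pieces are disjoint and are joined at the interacting pair $Q_v=\rightF_{u'},\rightF_{u'+1}$, so $\rightF'$ is a simple path. I then claim $P_{\main{0}}\rightF'\in\prodpaths{\main{0}}$: it begins $P_{\main{0}}P_{\main{0}+1}$ and ends $P_{\last}P_{\last+1}$ (inherited from $\rightF$); it lies in $\area{\main{0}}=\mathcal{C}$ (the $Q$-part by the previous paragraph, the rest because $\rightF$ does); $P_{0,\ldots,\main{0}-1}\,P_{\main{0}}\rightF'$ is a producible path whose last tile is the easternmost one, because $\glueP{\main{0}}{\main{0}+1}$ is visible in $P$ (so, exactly as for $\rightF$, no tile of $\sigma$ or of $P_{0,\ldots,\main{0}}$ lies in $\mathcal{C}$) and because every tile of $P_{0,\ldots,\main{0}}$, of $Q_{0,\ldots,v}$ and of $\rightF_{u'+1,\ldots,|\rightF|-2}$ sits on a column $\le\Bshield{c}$ while $\rightF_{|\rightF|-1}=P_{\last+1}$ is on column $\Bshield{c}+1$. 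Finally $\rightF'$ and $\rightF$ agree on $\rightF_{0,\ldots,u}$ and, at index $u+1$, since $P_{\main{0}}Q$ is more right-priority than $P_{\main{0}}\rightF$ and (by directedness) $Q_{u+1}$ and $\rightF_{u+1}$ cannot share a position, $Q_{0,\ldots,u+1}$ turns right of $\rightF$; thus $P_{\main{0}}\rightF'$ is strictly more right-priority than $P_{\main{0}}\rightF$, contradicting the fact that $\rightF$ is the practical form of $P$ at index $\main{0}$, i.e.\ the right-priority path of $\prodpaths{\main{0}}$.

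The step I expect to be the real obstacle is the second paragraph: carefully ruling out that $Q_{0,\ldots,v}$ escapes $\mathcal{C}$ across the $P_{\main{0},\ldots,\last+1}$-part of its frontier without first touching $\bord{0}$ or reaching column $\Bshield{c}+1$. Once that geometric point is settled, the splice and the appeal to the maximality of $\rightF$ inside $\prodpaths{\main{0}}$ are routine.
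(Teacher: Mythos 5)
Your proof is the paper's own argument: assume $Q_{0,\ldots,v}$ neither meets the glue ray of $\glueP{\main{0}}{\main{0}+1}$ nor reaches column $\Bshield{c}+1$, conclude it lies in $\area{\main{0}}$, and splice it with the tail of $\rightF$ at an intersection point to obtain an element of $\prodpaths{\main{0}}$ that is more right-priority than $P_{\main{0}}\rightF$, contradicting the definition of the practical form. The only real difference is cosmetic: you splice at the \emph{first} intersection of $Q$ with $\rightF_{u+1,\ldots,|\rightF|-1}$, whereas the paper splices at the \emph{last} position of $\rightF$ occupied by $Q$; both splices are simple and both satisfy Definition~\ref{def:rightform}. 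One caveat on the step you yourself flag as the crux: your dismissal of the third exit case rules out an edge of $Q$ that \emph{ends} on a position of $P_{\main{0}+1,\ldots,\last}$, but not an edge that \emph{starts} from such a shared position and steps to the west side of the cut --- a lattice point lying on the boundary polyline has neighbours on both sides of it, so ``joined to $Q_k\in\mathcal{C}$ by a unit edge contained in the closed region'' is not automatic there. The paper's proof simply asserts the containment ``$Q_{0,\ldots,v'}$ is in $\area{\main{0}}$'' with no argument at all, so this is a gap you share with, rather than introduce relative to, the published proof.
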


\begin{proof}
Consider a path $Q$ such that $Q_0=\rightF_0$ and let $0\leq u \leq |Q|-1$ such that $Q_{0,\ldots,u}$ is the largest common prefix between $\rightF$ and $Q$, see Figure \ref{fig:Final:VisIsPro:LemOne}. If $Q_{u+1,\ldots,|Q|-1}$ intersects with $\rightF$ then we can define $v'=\max\{i: \pos{\rightF_i} \text{ is occupied by a tile of } Q\}$, let $u\leq v\leq |Q|-1$ such that $Q_{v}=\rightF_{v'}$. By definition of $v$ and $v'$, $Q_{0,\ldots,v}\rightF_{v'+1,\ldots, |R|-1}$ is a path. Now, if $Q_{0,\ldots,v}$ does not intersect the glue ray of $\glueP{\main{0}}{\main{0}+1}$, if $x_{Q_{0,\ldots,v}}<\Bfour(c)$ and if $P_{\main{0}}Q$ is more right-priority (resp. left-priority) than $P_{\main{0}}\rightF$ then $Q_{0,\ldots,v}$ is in $\area{\main{0}}$. Thus $P_{0,\ldots,\main{0}}Q_{0,\ldots,v}$ is producible and we have $Q_{v}=\rightF_{v'}$ since the tile assembly system is directed. Then $P_{0,\ldots, \main{0}}Q_{0,\ldots,v}\rightF_{v'+1,\ldots,|\rightF|-1}$ is a producible path and if $u<v$ then this path is more right-priority (resp. left-priority) than $P_{\main{\goB}}\rightF$ and satisfies all properties of \ref{def:rightform}. This is a contradiction of the definition of $\rightF$. 

\input{./tikz/Final/VisibleProtectedLem1}
\end{proof}

\begin{lemma}
\label{lem:vis:LastTechnical}
Consider a path $P$, the notations \ref{notation:visible} and the index $0\leq \lastc \leq |P|-1$ of the last glue of $P$ on glue column $c$. If there exists a path $B$ such that $B$ is in the restrained area $\areaRes{\main{0}}$ of $\glueP{\main{0}}{\main{0}+1}$, $B$ is a subassembly of $\uniterm$, $w_B>c$, $B_0$ is a tile of $P_{\lastc+1,\ldots,\last}$ then $B$ does not intersect with $P_{\main{0}+1,\ldots,\lastc}$. 
\end{lemma}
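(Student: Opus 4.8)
The plan is to argue by contradiction: assuming $B$ meets $\asm{P_{\main{0}+1,\ldots,\lastc}}$, I will build from $B$ a path of $\prodpaths{\main{0}}$ that is strictly more right‑priority than $P_{\main{0}}\rightF$, contradicting Definition~\ref{def:rightform}. By the usual symmetry I treat the upward case, so $\glueP{\main{0}}{\main{0}+1}$ is visible from the south in $P$ on column $c$ and $\area{\main{0}}$ is the workspace $\mathcal{C}$ of the cut $(\main{0},\last)$ of $P_{0,\ldots,\last+1}$. First I would confine $B$. Since $\lastc$ is the last glue of $P$ on column $c$ we have $w_{P_{\lastc+1,\ldots,|P|-1}}>c$, so $B_0$, being a tile of $P_{\lastc+1,\ldots,\last}$, is strictly east of column $c$; with $w_B>c$ this places all of $B$ strictly east of column $c$. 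By Lemma~\ref{lem:can:bound} (applicable since $e_\uniterm>\Bfinal$ gives $e_P>e_\sigma+2|T|+1$) we get $\main{0}\le\lastc<\last$ and that $\glueP{\main{0}}{\main{0}+1}$ is visible in $P$; hence by Fact~\ref{fact:vis} $\mathcal{C}$ meets neither $\sigma$ nor $\asm{P_{0,\ldots,\main{0}}}$, so $B$, lying in $\areaRes{\main{0}}\subseteq\mathcal{C}$ and avoiding column $c$, is disjoint from $\asm{P_{0,\ldots,\main{0}}}$; consequently every tile of $B$ lying on $P$ lies on $P_{\main{0}+1,\ldots,|P|-1}$. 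Finally, since $P_{\main{0},\ldots,\last+1}$ itself lies in $\prodpaths{\main{0}}$, the path $P_{\main{0}}\rightF$ is at least as right‑priority as it, so every tile of $P_{\main{0}+1,\ldots,\last+1}$ not lying on $\rightF$ lies strictly west of $\rightF$, hence outside the east‑side $\areaRes{\main{0}}$ and a fortiori outside $B$.

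Now set $k=\min\{i:B_i\in\asm{P_{\main{0}+1,\ldots,\lastc}}\}$ and write $B_k=P_{k'}$ with $\main{0}+1\le k'\le\lastc$; since $B$ and $P$ both lie in $\inuniterm$, this is an equality of tiles. By the last observation $P_{k'}$ lies on $\rightF$; let $\gamma$ be its index along $P_{\main{0}}\rightF$. Writing $B_0=P_m$ with $\lastc<m\le\last$, set $k_0=\max\{i\le k:B_i\in\asm{P_{\lastc+1,\ldots,\last}}\}$, so $0\le k_0<k$ and $B_{k_0}=P_{m_0}$ with $\lastc<m_0\le\last$; by the choices of $k$ and $k_0$ together with the confinement of $B$, the tiles $B_{k_0+1},\ldots,B_{k-1}$ avoid $P$ — unless $B$ meets $P_{\last+1,\ldots,|P|-1}$ in that range, which I address below. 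Assuming this, I would consider
$$\widetilde Q=(P_{\main{0}}\rightF)_{0,\ldots,\gamma}\,B_{k-1}B_{k-2}\cdots B_{k_0}\,P_{m_0+1,\ldots,\last+1},$$
whose pieces are glued at $B_k=P_{k'}=(P_{\main{0}}\rightF)_\gamma$ and at $B_{k_0}=P_{m_0}$. Using that $B$ avoids column $c$, that $B_{k_0+1,\ldots,k-1}$ avoids $P$, that $(P_{\main{0}}\rightF)_{0,\ldots,\gamma}$ and $P_{m_0+1,\ldots,\last+1}$ have disjoint index ranges and lie in $\mathcal{C}$, and that $\mathcal{C}$ avoids $\sigma\cup\asm{P_{0,\ldots,\main{0}}}$, one checks that $\widetilde Q$ is a simple path and $P_{0,\ldots,\main{0}-1}\widetilde Q$ is producible; its last tile $P_{\last+1}$ is the easternmost one and, since $\last$ is the first glue of $P$ on column $\Bshield c$ and $B$ does not reach column $\Bshield c+1$, it is the only tile of $\widetilde Q$ there, so $\widetilde Q\in\prodpaths{\main{0}}$. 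But at index $\gamma$ the path $\widetilde Q$ leaves $P_{\main{0}}\rightF$ toward $B_{k-1}\in\areaRes{\main{0}}$, which sits strictly to the right of the upward path $P_{\main{0}}\rightF$; hence $\widetilde Q$ turns right of $P_{\main{0}}\rightF$, contradicting that $\rightF$ is the right‑priority path of $\prodpaths{\main{0}}$.

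I expect two delicate points. The first is the sub‑case where $B$ returns to $P$ on the far tail $P_{\last+1,\ldots,|P|-1}$ before reaching the body: there I would observe that $\rightF$ reaches column $\Bshield c+1$ only through its endpoint $P_{\last+1}$, so that $\rightF$, the ray $l^{\main{0}}$ and the northward glue ray of $\glueP{\last}{\last+1}$ (on column $\Bshield c$) bound $\areaRes{\main{0}}$ in such a way that $B$ can meet $P_{\last+1,\ldots,|P|-1}$ only at $P_{\last+1}$ itself — and then the same splice, now taken with $\gamma=|P_{\main{0}}\rightF|-1$ and running $B$ back to the tail, again yields a member of $\prodpaths{\main{0}}$ turning right of $P_{\main{0}}\rightF$. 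The second, and the real work, is the verification that $\widetilde Q$ is genuinely simple: one must rule out collisions between the reversed segment $B_{k-1,\ldots,k_0}$ and the two $P$‑pieces, and this is precisely where the two hypotheses are used — "$B_0$ is a tile of $P_{\lastc+1,\ldots,\last}$" forces $B$ to reconnect to $P$ on the tail rather than inside the body, and "$w_B>c$" keeps $B$ off the column‑$c$ glues $\glueP{\main{i}}{\main{i}+1}$ of the decomposition and off $P_{0,\ldots,\main{0}}$. Once simplicity and membership in $\prodpaths{\main{0}}$ are secured, the right‑turn observation closes the argument.
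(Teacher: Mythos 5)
Your overall strategy is genuinely different from the paper's: you try to contradict the maximal right-priority of the practical form $\rightF$ (Definition \ref{def:rightform}), whereas the paper's proof is a short reduction to Corollary \ref{cor:decompo:cano} --- it takes the shortest prefix $B_{0,\ldots,v}$ meeting $P_{\main{0}+1,\ldots,\lastc}$, then the shortest suffix $B_{u,\ldots,v}$ of that prefix meeting $P_{\lastc+1,\ldots,\last}$, observes that $w_B>c$ forces $B_v=P_j$ to sit inside a dominant arc of $P$ on column $c$, and then $\rev{B_{u,\ldots,v}}$ is a branch that turns off $P$ inside a dominant arc and re-intersects $P$, which Corollary \ref{cor:decompo:cano} forbids precisely because $P$ is canonical. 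This difference matters: the substantive hypothesis behind the lemma is the canonicity of $P$ (the \emph{minimum} property of the spans of its decomposition), and your argument never invokes it.

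Beyond that, there are two concrete gaps. First, your splice needs $B_k=P_{k'}$ to lie on $\rightF$, and you derive this from the claim that every tile of $P_{\main{0}+1,\ldots,\last+1}$ not on $\rightF$ lies outside $\areaRes{\main{0}}$. Right-priority of $P_{\main{0}}\rightF$ over $P_{\main{0},\ldots,\last+1}$ only controls the first point of divergence; after turning left of $\rightF$, the path $P$ can re-enter the east side of the cut along $\rightF$ by passing through a position it shares with $\rightF$ (shared positions carry equal tile types since both paths are in $\inuniterm$, so this is not a conflict). Excluding this requires an argument of essentially the same weight as the lemma itself, and you have not supplied it; if $P_{k'}\notin\rightF$ your construction has no divergence point $\gamma$ to work from. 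Second, membership of $\widetilde Q$ in $\prodpaths{\main{0}}$ requires that its last tile be the unique tile on column $\Bshield{c}+1$ and the easternmost one; you assert that $B$ does not reach column $\Bshield{c}+1$, but this is not a hypothesis of the lemma: $\areaRes{\main{0}}$ is the east side of a cut and is unbounded to the east, so nothing bounds $e_B$. (The simplicity of $\widetilde Q$ --- in particular the disjointness of $(P_{\main{0}}\rightF)_{0,\ldots,\gamma}$ from $P_{m_0+1,\ldots,\last+1}$ when $\rightF\neq P_{\main{0}+1,\ldots,\last+1}$ --- is a further unverified point that you yourself flag as "the real work".) The intended route is much shorter: use $w_B>c$ and the starting point of $B$ on $P_{\lastc+1,\ldots,\last}$ to place the first intersection tile inside a dominant arc, and let Corollary \ref{cor:decompo:cano} do the rest.
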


\begin{proof}
If $B$ intersects with $P_{\main{0}+1,\ldots,\lastc}$ then let $B_{0,\ldots,v}$ be the shortest prefix of $B$ which intersects with $P_{\main{0}+1,\ldots,\lastc}$, see Figure \ref{fig:Final:VisIsPro:LemTwo}. Let $B_{u,\ldots,v}$ be the shortest suffix of  $B_{0,\ldots,v}$ which intersects with $P_{\lastc+1,\ldots,\last}$. Since, $w_{B}> c$ then there exist $i\leq j < k$ such that $P_{i,\ldots,k}$ is a dominant arc of $P$ on glue column $c$ and $P_j=B_{v}$ (since $B$ is a subassembly of $\uniterm$). Since $B_{u,\ldots,v}$ is a subassembly of $\uniterm$ and in $\areaRes{\main{0}}$ then there is a contradiction with Corollary \ref{cor:decompo:cano} since $P$ is canonical for glue column $c$. 

\input{./tikz/Final/VisibleProtectedLem2}
\end{proof}

Note that, for the special case where the width of $P$ on glue column $c$ is $0$, then $\main{0}=\lastc$ and $\glueP{\lastc}{\lastc+1}$ is the only ``main'' glue of the upward decomposition of $P$ into dominant arcs and the only ``main'' glue of the downward decomposition of $P$ into dominant arcs.

\subsubsection{Glues protected by a full-shield}\label{sub:vis:full}

Along this subsection, we will use the following notations:

\begin{definition}[Notations for a glue protected by a full-shield]
\label{notation:full}
Suppose that $e_\uniterm>\Bfinal$ and consider a path $P$ which is canonical for some glue column $c$ with $\Bone\leq c \leq \Btwo$ and let $\last$ be the index of its first glue on glue column $\Bfour(c)$. Let $(\main{i})_{0\leq i \leq t}$ and $(\second{i})_{0\leq i < t}$ the upward (resp. downward) decomposition of $P$ into dominants arcs on glue column $c$ and let $(\bord{i})_{0\leq i \leq t+1}$ be the frontiers of this decomposition. Let $S$ be a southern (resp. northern) full-shield of $P$ on column $c$ at index $\main{0} <\posS \leq \last$. We denote by $\mathcal{C}$ the area delimited by $l^{\main{0}}$, $P_{\main{0},\ldots,\last+1}$ and the ray starting at $\glueP{\last}{{\last+1}}$ and going south (resp. north).
\end{definition}

\begin{lemma}
\label{lem:full:exists}
Consider the notations \ref{notation:full}. If $S$ is a full-shield then, there exist $0 \leq \enSh \leq |S|-1$ and $0< \goB \leq t$ such that the $\glueS{\enSh}{\enSh+1}$\footnote{If $\enSh=|S|-1$,  we consider that $S_{\enSh+1}=P_\posS$ by abuse of notation.} intersects the frontier $\bord{\goB}$ and $x_{S_{0,\ldots,\enSh}}>c$.
\end{lemma}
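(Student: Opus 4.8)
The statement asserts that any full-shield $S$ — which by Definition \ref{def:shield} starts on column $\Bshield{c}$ (far east of $c$) with its first glue pointing west, lives inside the workspace $\mathcal{C}$, and whose last tile $S_{|S|-1}$ binds with $P_\posS$ — must contain a glue $\glueS{\enSh}{\enSh+1}$ crossing one of the borders $\bord{\goB}$ with $\goB \geq 1$, and moreover this crossing can be chosen while $S$ is still east of column $c$. The plan is to track how $S$ enters the ``east-side'' region $\es$ of the workspace and how it must eventually reach $P_\posS$, which lies on the boundary between $\es$ and some interior $\inter{\goB}$.

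First I would locate the tile $P_\posS$ inside the decomposition. Since $\main{0} < \posS \le \last$, the index $\posS$ falls either (i) on some arc $P_{\main{\goB},\ldots,\second{\goB}}$ of the decomposition (possibly $\posS \in \{\main{\goB},\second{\goB}\}$, recalling the abuse of notation $S_{|S|-1+1}=P_\posS$ when $\enSh=|S|-1$), or (ii) on some hole $P_{\second{\goB-1}-1,\ldots,\main{\goB}+1}$ strictly between arcs, hence in the closure of the interior $\inter{\goB}$. In either case, because $A = S P_{\posS,\ldots,\last+1}$ is a negative arc of column $\Bshield{c}$ with $\min\{x_A\} \le c$ (this is exactly what ``full'' gives, Definition \ref{def:shield:type}), the path $S$ must descend from column $\Bshield{c}$ past column $c$: it has a tile west of $c$. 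Now, $S$ starts on column $\Bshield{c}$, which is east of every border (borders live on column $c$) and well inside $\es$ relative to the east-side decomposition of $\mathcal{C}$ described just before subsection ``Links between the two decompositions''. Recall from that paragraph that $\mathcal{C} = \es \cup \bigcup_{0<i\le t}\inter{i}$, that $\es$ touches every $\inter{i}$ only through $\bord{i}$, and that distinct interiors are mutually disconnected except through $\es$. Since $S$ must reach a tile with x-coordinate $\le c$ and every position with x-coordinate $\le c$ in $\mathcal{C}$ lies in some interior $\inter{i}$ (the east-side $\es$ only contains positions with x-coordinate $> c$, because each arc $P_{\main{i},\ldots,\second{i}}$ is positive and the borders sit on column $c$), $S$ must at some point cross from $\es$ into an interior $\inter{\goB}$, and the only way to do that is to cross the border $\bord{\goB}$. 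Let $\enSh$ be the largest index such that $S_{0,\ldots,\enSh}$ is entirely in $\es$ (equivalently the largest index with $x_{S_{0,\ldots,\enSh}} > c$ — here I would check that ``being in $\es$'' and ``having x-coordinate $>c$'' coincide for the relevant portion, using that $S$ cannot intersect $P_{\main{0},\ldots,\last+1}$); then $\glueS{\enSh}{\enSh+1}$ crosses some $\bord{\goB}$. I also need $\goB \geq 1$, i.e. the crossed border is not the ``boundary ray'' $\bord{0}$: this follows because $\bord{0}$ is the glue ray of $\glueP{\main{0}}{\main{0}+1}$ going south (resp. north), and $S \subseteq \mathcal{C}$ cannot intersect that ray — it is part of the bi-infinite curve bounding $\mathcal{C}$. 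Similarly $\goB \le t$ is automatic since those are all the borders.

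The main obstacle I anticipate is the bookkeeping that makes ``$S$ must reach west of column $c$'' rigorous: I need to combine the full-shield condition $\min\{x_A\} \le c$ with the fact that the west-most point of $A$ cannot be contributed by $P_{\posS,\ldots,\last+1}$ (that subpath stays east of... actually it can dip; so I must argue via the structure of $A$ as a negative arc and the position of $P_\posS$ relative to the decomposition). The cleanest route is: $A$ is a negative arc on column $\Bshield{c}$, so its interior $\inter{A}$ lies west of column $\Bshield{c}$ and, since $S P_\posS$ is a prefix of $\min(\inter{A})$, the shield portion $S$ is exactly the ``left-priority descent'' of this arc; the condition $\min\{x_A\} \le c$ forces a tile of $A$ at x-coordinate $\le c$, and since $\posS \le \last$ with $\glueP{\posS}{\posS+1}$ on column $\ge c$ whenever... — here I would instead just say: if every tile of $S$ had x-coordinate $> c$, then the west-most tile of $A$ would be a tile of $P_{\posS,\ldots,\last+1}$; but then, tracing $P$ from $P_\posS$ forward, that west-most excursion of $P$ to the left of column $c$ would have to be ``undone'' — $P$ would need another glue on column $c$ between $\posS$ and $\last$, contradicting either the definition of $\lastc$ (if it happens east of... ) or the fact (Lemma \ref{lem:can:bound}) bounding $e_{P_{0,\ldots,\lastc}} < \Bshield{c}$, together with the observation that all of $P_{\lastc+1,\ldots,|P|-1}$ stays east of column $c$. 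Pinning down exactly this last contradiction, and verifying the ``$x>c \iff$ in $\es$'' dictionary for the prefix of $S$, is where the real work lies; the topological ``must cross a border'' part is then immediate from the structure of $\mathcal{C}$ already established.
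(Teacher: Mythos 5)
There is a genuine gap, and it sits exactly where you yourself flagged the ``main obstacle'': the step asserting that $S$ must reach column $c$. You derive it from ``$\min\{x_A\}\le c$, which is exactly what full gives (Definition \ref{def:shield:type})''. That is a misreading of the definitions. The condition $\min\{x_A\}\leq c$ is the fourth bullet of Definition \ref{def:shield} and holds for \emph{every} shield, full or half; what Definition \ref{def:shield:type} actually says is that $S$ is full iff $x_{SP_\posS}\leq c$, i.e.\ some tile of $S$ itself (or the single attachment tile $P_\posS$) has x-coordinate at most $c$. With only $\min\{x_A\}\leq c$ the conclusion of the lemma is simply false: the westward excursion of $A=SP_{\posS,\ldots,\last+1}$ can be contributed entirely by $P_{\posS,\ldots,\last+1}$, with $S$ and $P_\posS$ staying strictly east of column $c$ — that is precisely the half-shield situation treated in Lemma \ref{lem:half:exists}, where no border is crossed and instead $\main{\goB}<\posS<\second{\goB}$. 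Consequently your attempted patch cannot work: the claim that ``$P$ would need another glue on column $c$ between $\posS$ and $\last$, contradicting the definition of $\lastc$ or Lemma \ref{lem:can:bound}'' is not a contradiction at all, since $P$ does have many glues on column $c$ between $\posS$ and $\last$ (one per span of the decomposition, up to index $\lastc<\last$), and it does dip west of column $c$ inside the holes between consecutive dominant arcs.

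The repair is immediate once the right hypothesis is used, and it is how the paper opens its proof: $x_{SP_\posS}\leq c$ directly yields a first index $\enSh=\min\{i:\glueS{i}{i+1}\text{ is on column }c\}$ (using the footnote's convention $S_{|S|}=P_\posS$ if needed), with $x_{S_\enSh}=c+1$ and $x_{S_{\enSh+1}}=c$ and $x_{S_{0,\ldots,\enSh}}>c$ by minimality. From there your topological argument is the same as the paper's: $S_\enSh$ lies in $\es$, $S_{\enSh+1}$ lies in $\mathcal{C}$ but not in $\es$, hence in some $\inter{\goB}$ with $0<\goB\leq t$, so $\glueS{\enSh}{\enSh+1}$ crosses $\bord{\goB}$ (and it cannot be $\bord{0}$, as you correctly note, since that ray bounds $\mathcal{C}$). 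Also note that your opening case analysis on where $P_\posS$ sits in the decomposition is not needed for this lemma.
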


\begin{proof}
By definition of a full-shield, we have $x_{SP_a}\leq c$, see Figure \ref{fig:Final:Mainexemple:short}. Thus, we can define $\enSh=\min\{i:\glueS{i}{i+1} \text{ is on glue column $c$}\}$. Then, by definition of $e$, we have $x_{S_e}=c+0.5$ and $x_{S_{e+1}}=c-0.5$. Now, the workspace $\mathcal{C}$ can be decomposed into its east side $\es$ and the interiors $\inter{i}$ for all $0< i \leq t$. Since $x_{S_e}>c$ then $\pos{S_e}$ is in $\es$. However, $x_{S_{e+1}}\leq c$ and $\pos{S_{e+1}}\in \mathcal{C}$ thus $\pos{S_{e+1}}$ belongs to $\inter{\goB}$ for some $0< \goB \leq t$. Then, $\glueS{\enSh}{\enSh+1}$ must intersect with $\bord{\goB}$.

\input{./tikz/Final/FullShort}
\end{proof}

Thanks to Lemma \ref{lem:full:exists}, we now assume that there exists $0 \leq \enSh \leq |S|-1$ and $0< \goB \leq t$ such that $\glueS{\enSh}{\enSh+1}$ intersects with the frontier $\bord{\goB}$ and $x_{S_{0,\ldots,\enSh}}>c$. In this case, we say that $\glueP{\main{\goB}}{\main{\goB}+1}$ is \emph{protected} by the beginning of the full-shield $S_{0,\ldots,e}$, see Figure \ref{fig:Final:Mainexemple:protect}. The category of this protected glue is ``main'' and its type is the type of $\glueP{\main{\goB}}{\main{\goB}+1}$. Its \emph{protected area} $\area{\main{\goB}}$ is defined as the finite area delimited by:
\begin{itemize}
\item $P_{\main{\goB},\ldots,\last+1}$, 
\item the vertical segment between $\glueP{\main{\goB}}{\main{\goB}+1}$ and $\glueS{\enSh}{\enSh+1}$,
\item $S_{0,\ldots,\enSh+1}$;
\item the vertical segment between $\glueP{\last}{\last+1}$ and $\glueS{0}{1}$. 
\end{itemize}
Note that $\area{\main{\goB}}$ may be included into the interior of the arc $SP_{\posS,\ldots,\last+1}$ but this is not always the case. 
Remark that $P_{0,\ldots,\main{\goB}}$ is not inside $\area{\main{\goB}}$. Indeed, $\glueP{\main{\goB}}{\main{\goB}+1}$ points east and thus $P_{\main{\goB}}$ is not inside $\area{\main{\goB}}$. Moreover, $P_{0,\ldots,\main{\goB}}$ cannot intersect with $P_{\main{\goB}+1,\ldots,\last+1}$ (since $P$ is self-avoiding), the vertical segment between $\glueP{\main{\goB}}{\main{\goB}+1}$ and $\glueS{\enSh}{\enSh+1}$ (since this segment is part of the frontier $\bord{\goB}$), $S_{0,\ldots,\enSh}$ (since, by definition, $S$ does not intersect $P_{0,\ldots,\last+1}$ and is in $\mathcal{C}$) and the vertical segment between $\glueP{\last}{\last+1}$ and $\glueS{0}{1}$ (since $P_{0,\ldots,\main{\goB}}$ has no tile on column $\Bfour(c)+0.5$ by Lemma \ref{Uturn:LinearCano}). This remark implies the following fact:

\input{./tikz/Final/FullProtected}

\begin{fact}
\label{fact:full:area}
Consider the notations \ref{notation:full} and $0< \goB \leq t$ such that $\glueP{\main{\goB}}{\main{\goB}+1}$ is protected by the full-shield $S$. Let $\area{\main{\goB}}$ be  its protected area.
Consider a path $Q$ such that $Q$ is inside $\area{\main{\goB}}$ and such that $Q_0=P_{\main{\goB}+1}$ then $P_{0,\ldots,\main{\goB}}Q$ is a producible path.
\end{fact}

Moreover, the following lemma shows that the beginning $S_{0,\ldots,e}$ of the full-shield $S$ must be avoided by any path $Q$ growing inside $\area{\main{\goB}}$.

\begin{lemma}
\label{lem:full:visible}
Consider the notations \ref{notation:full}, $0< \goB \leq t$ and $0\leq  e \leq |S|-1$ such that $\glueP{\main{\goB}}{\main{\goB}+1}$ is protected by the beginning $S_{0,\ldots,e}$ of the full-shield $S$. Let $\area{\main{\goB}}$ be  its protected area.
There exists no path $Q$ such that $Q$ is inside the area $\area{\main{\goB}}$, $Q_0=P_{\main{\goB}+1}$ and $Q$ intersects with $S_{0,\ldots,\enSh}$.
\end{lemma}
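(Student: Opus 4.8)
The plan is to argue by contradiction. Suppose such a path $Q$ exists; up to the usual symmetry assume the decomposition is upward. Let $k$ be the least index such that $Q_{0,\ldots,k}$ intersects $S_{0,\ldots,\enSh}$, and pick $0\le d\le\enSh$ with $\pos{Q_k}=\pos{S_d}$; since the tile assembly system is directed this intersection is an agreement, so $Q_k=S_d$ as tiles. By Fact~\ref{fact:full:area}, $P_{0,\ldots,\main{\goB}}Q_{0,\ldots,k}$ is a producible path. I first observe that $\widetilde Q=P_{0,\ldots,\main{\goB}}\,Q_{0,\ldots,k}\,S_{d-1,d-2,\ldots,0}$ is a simple producible path whose last tile $S_0$ lies on column $\Bshield{c}+1$: indeed $S$ avoids $P_{\main{0},\ldots,\last+1}$ and lies in the workspace $\mathcal C$ of the cut $(\main{0},\last)$ of $P_{0,\ldots,\last+1}$, which is visible by Fact~\ref{fact:vis}, so $\mathcal C$ contains no tile of $\sigma$ nor of $P_{0,\ldots,\main{0}}$; and $Q_{0,\ldots,k-1}$ avoids $S_{0,\ldots,\enSh}\supseteq S_{0,\ldots,d-1}$ by minimality of $k$.

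Suppose first that $\area{\main{\goB}}\subseteq\inter{A}$, where $A=SP_{\posS,\ldots,\last+1}$ is the negative arc attached to the shield. Then $Q_{0,\ldots,k}$ is a path of $\inuniterm$ lying inside $\inter{A}$ and joining two tiles of $A$: its endpoint $Q_k=S_d$ is the tile $A_d$ with $d\le\enSh\le|S|-1$, while $Q_0=P_{\main{\goB}+1}$ is a tile of $A$ (one checks $\posS\le\main{\goB}+1$) of index at least $|S|$. By the ``no shortcut'' clause of Definition~\ref{def:shield} ($SP_\posS$ is a prefix of $\min(\inter{A})$), the reversal of $Q_{0,\ldots,k}$ must begin with $S_d\,S_{d+1}\ldots$; in particular $Q_{k-1}=S_{d+1}$. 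This is impossible: if $d<\enSh$ it contradicts the minimality of $k$, and if $d=\enSh$ then $Q_{k-1}=S_{\enSh+1}$ lies in $\inter{\goB}$, west of the border $\bord{\goB}$, so the step $Q_{k-1}Q_k$ crosses $\bord{\goB}$ and $Q$ leaves $\area{\main{\goB}}$.

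There remains the case $\area{\main{\goB}}\not\subseteq\inter{A}$, where the ``no shortcut'' clause does not apply directly and one has to use that $P$ is canonical for column $c$. Here I would view $B=P_{\main{0}+1,\ldots,\main{\goB}}Q$ as a branch peeling off from the dominant arc $P_{\main{\goB},\ldots,\second{\goB}}$, which by Lemmas~\ref{lem:link:dom} and~\ref{lem:link:same} (and Corollary~\ref{cor:decompo:cano}) lies inside a minimum span of $P$; Corollary~\ref{cor:decompo:cano}, together with Lemma~\ref{fact:rightPri} at the borders $\bord{i}$ (needed for the subpaths of $P$ lying between consecutive dominant arcs and past $\lastc$), then shows that after $Q$ first leaves $P$ it never meets $P_{\main{0},\ldots,\last}$ again, so $\widetilde Q$ genuinely circles the last protected arc out to column $\Bshield{c}+1$ without re-entering its span; this, once $\widetilde Q$ is completed to an extremal path, contradicts either Corollary~\ref{cor:decompo:cano} again or the minimum-ness of that span. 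The main obstacle is exactly this case split and this last reduction: translating ``$Q$ touches $S_{0,\ldots,\enSh}$ from inside $\area{\main{\goB}}$'' into a configuration forbidden by Corollary~\ref{cor:decompo:cano}, by the minimum-ness of a span, or by the no-shortcut property of $\min(\inter{A})$, which requires a careful analysis of how $S_{0,\ldots,\enSh}$, $\inter{A}$ and the workspace $\mathcal{C}$ are positioned relative to one another.
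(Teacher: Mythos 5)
There is a genuine gap, in two places. First, your case split is on whether $\area{\main{\goB}}\subseteq\inter{A}$, and inside the first case you assert ``one checks $\posS\le\main{\goB}+1$'' so that $Q_0=P_{\main{\goB}+1}$ is a tile of $A$. That claim is not justified and does not follow from $\area{\main{\goB}}\subseteq\inter{A}$: the shield may attach at any $\main{0}<\posS\le\last$, and when $\posS>\main{\goB}+1$ the tile $P_{\main{\goB}+1}$ is simply not a tile of $A=SP_{\posS,\ldots,\last+1}$, so the no-shortcut property of Definition~\ref{def:shield} cannot be invoked with $Q_0$ as an endpoint. The correct dichotomy, which the paper uses, is on the position of the attachment point itself: $\posS\le\main{\goB}$ (where $P_{\main{\goB}+1}$ lies on $A$ and an argument like your first case goes through, phrased as the construction of a shortcut arc $S_{0,\ldots,j'-1}\rev{Q_{i,\ldots,j}}P_{i'+1,\ldots,\last+1}$ that avoids $P_{\main{\goB}}$) versus $\posS>\main{\goB}$.

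Second, your treatment of the remaining case is an admitted sketch, and it reaches for the wrong machinery. You propose to route the argument through Corollary~\ref{cor:decompo:cano}, Lemma~\ref{fact:rightPri} at the borders, and the minimum-ness of spans; the paper never leaves the shield machinery here. For $\posS>\main{\goB}$ one observes that $P_\posS$ is inside $\area{\main{\goB}}$ while $S_{\enSh+1}$ is not, extracts the index $\enSh<h\le|S|-1$ where $S$ re-enters $\area{\main{\goB}}$ across the vertical segment between $\glueP{\main{\goB}}{\main{\goB}+1}$ and $\glueS{\enSh}{\enSh+1}$, notes that $Q$ must first intersect $S_{h+1,\ldots,|S|-1}P_\posS$ before it can reach $S_{0,\ldots,\enSh}$, and then splices the reversed piece of $Q$ between these two intersections into $S$ to obtain a negative arc of column $\Bshield{c}$ lying in $\inter{A}$ and avoiding the tile $S_h$. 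This contradicts the requirement that $SP_\posS$ be a prefix of $\min(\inter{A})$ --- the same contradiction as in the first case, but reached by a construction your proposal does not supply. As written, neither branch of your argument closes.
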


\begin{proof}
For the sake of contradiction suppose that such a path $Q$ exists (see Figure \ref{fig:Final:Mainexemple:ShieldLemma:PartOne}), then by Fact \ref{fact:full:area}, $P_{0,\ldots,\main{\goB}}Q$ is a producible path. Let $A$ be the arc $SP_{\posS,\ldots,\last+1}$. Firstly, consider the case where $\posS \leq \main{\goB}$, see Figure \ref{fig:Final:Mainexemple:ShieldLemma:PartTwo}. In this case, $P_{\main{\goB}+1}$ is inside the interior $\inter{A}$ of arc $A$ and let $0\leq j \leq |Q|-1$ be such that $Q_{0,\ldots,j}$ in the shortest prefix of $Q$ which intersects with $S$ (by hypothesis, this index is correctly defined) and let $0\leq i < j$ be such that $Q_{i,\ldots,j}$ is the shortest suffix of $Q_{0,\ldots,j}$ which intersects with $P_{\main{\goB}+1,\ldots, \last+1}$ (by hypothesis, this index is correctly defined since $Q_0=P_{{\main{\goB}}+1}$). Let $\main{\goB}+1\leq i' \leq \last$ be such that $P_{i'}=Q_i$ and $0\leq j' \leq |S|-1$ be such that $S_{j'}=Q_j$ then $A'=S_{0,\ldots,j'-1}\rev{Q_{i,\ldots,j}}P_{i'+1,\ldots,\last+1}$ is a simple path\footnote{$\rev{Q}$ designates the path $Q$ in reverse: the first tile becomes the first one and vice versa.}. Since $Q$ is in $\area{\main{\goB}}$ then $e_Q< \Bfour(c)$ and thus $A'_{1,\ldots,|A'|-1}$ is inside $\inter{A}$ and is a negative arc of glue column $\Bfour(c)$. Moreover, $A'$ does not pass by $P_{\main{\goB}}$ since this tile is not inside $\area{\main{\goB}}$. The existence of this arc contradicts the definition of a shield, since $S$ should be the prefix of  $\min(\inter{A})$. 

\input{./tikz/Final/FullLemmaShieldPart1}
\input{./tikz/Final/FullLemmaShieldPart2}

Secondly, consider the case where $\posS> \main{\goB}$, see Figure \ref{fig:Final:Mainexemple:ShieldLemma:PartThree}. In this case, $P_{\posS}$ is inside $\area{\main{\goB}}$ but $S_{\enSh+1}$ is not inside this area. Then, let $\enSh<h\leq |S|-1$ be such that $S_{h,\ldots,|S|-1}$ is the shortest suffix of $S$ such that $S_{h,\ldots,|S|-1}P_\posS$ is not inside $\area{\main{\goB}}$. Without loss of generality, we suppose that $h<|S|-1$ to ease the notation. This definition implies that $\glueS{h}{h+1}$ intersects the vertical segment between $\glueP{\main{\goB}}{\main{\goB}+1}$ and $\glueS{e}{e+1}$. 
Since $\main{\goB}<\posS$, the path $Q$ must intersect with $S_{h+1,\ldots,|S|-1}P_a$ in order to reach $S_{0,\ldots,\enSh}$. Let $0\leq i \leq |Q|-1$ be such that $Q_{i,\ldots,|Q|-1}$ in the shortest suffix of $Q$ which intersects with $S_{h+1,\ldots,|S|-1}P_{a,\ldots,\last+1}$ and let $ i < j \leq |S|-1$ be such that $Q_{i,\ldots,j}$ is the shortest prefix of $Q_{i,\ldots,|Q|-1}$ which intersects with $S_{0,\ldots,h-1}$ (this index is correctly defined since $Q$ intersects with $S_{0,\ldots,\enSh}$). Let $h< i' \leq |S|-1$ be such that $S_{i'}=Q_i$ and $0 \leq j' < h$ be such that $S_{j'}=Q_j$ then $A'=S_{0,\ldots,j'-1}\rev{Q_{i,\ldots,j}}S_{i'+1,\ldots,|S|-1}P_{\posS,\ldots,\last+1}$ is a simple path. Since $Q$ is in $\area{\main{\goB}}$ then $e_Q < \Bfour(c)$ and thus $A'_{1,\ldots,|A'|-2}$ is inside $\inter{A}$ and $A'$ is a negative arc of glue column $\Bfour(c)$. Moreover, this arc does not pass by $S_{h}$ since this tile is not inside $\area{\main{\goB}}$. Such an arc contradicts the definition of a shield, since $S$ should be the prefix of $\min(\inter{A})$.

\input{./tikz/Final/FullLemmaShieldPart3}
\end{proof}

Let $\rightF$ be the practical form of $P$ at index $\main{\goB}$, see Figure \ref{fig:Final:Mainexemple:fullrestrained}. Lemma \ref{lem:full:visible} implies that $\rightF$ does not intersect $S_{0,\ldots,\enSh}$. Then, we can define $\areaRes{\main{\goB}}$ as the \emph{restrained protected area} delimited by:
\begin{itemize}
\item $P_{\main{\goB}}\rightF$, 
\item the vertical segment between $\glueP{\main{\goB}}{\main{\goB}+1}$ and $\glueS{\enSh}{\enSh+1}$,
\item $S_{0,\ldots,\enSh+1}$,
\item the vertical segment between $\glueP{\last}{\last+1}$ and $\glueS{0}{1}$. 
\end{itemize}
This area is included into $\area{\main{\goB}}$.

\input{./tikz/Final/FullRestrained}

Variants of the two following lemmas are required for each kind of protected glue, see Subsection \ref{sub:vis:protect} dedicated to visible glues.

\begin{lemma}
\label{lem:full:turnRestrained}
Consider the notations \ref{notation:full}, $0< \goB \leq t$ and $0\leq  e \leq |S|-1$ such that $\glueP{\main{\goB}}{\main{\goB}+1}$ is protected by the beginning $S_{0,\ldots,e}$ of the full-shield $S$. Let $\rightF$ be  the practical form of $P$ at index $\main{\goB}$ and let $\areaRes{\main{\goB}}$ be its restrained protected area. Consider a path $Q$ such that $Q_0=\rightF_0$. If $P_{\main{\goB}}Q$ is more right-priority (resp. left-priority) than $P_{\main{\goB}}\rightF$ either $Q$ is in $\areaRes{\main{\goB}}$ or $e_Q>\Bfour(c)$ or $Q$ intersects with the vertical segment between $\glueP{\main{\goB}}{\main{\goB}+1}$ and $\glueS{\enSh}{\enSh+1}$. 
\end{lemma}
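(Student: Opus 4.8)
The plan is to follow the template of the proof of Lemma~\ref{lem:full:turnRestrainedVis} (and, at the bookkeeping level, Lemma~\ref{fact:rightPri}), with the single glue ray of that setting replaced by the four pieces that delimit the restrained protected area $\areaRes{\main{\goB}}$: the curve $P_{\main{\goB}}\rightF$, the ``west'' vertical segment between $\glueP{\main{\goB}}{\main{\goB}+1}$ and $\glueS{\enSh}{\enSh+1}$, the curve $S_{0,\ldots,\enSh}$, and the ``east'' vertical segment between $\glueP{\last}{\last+1}$ and $\glueS{0}{1}$ (which lies just east of column $\Bshield{c}$). Up to the usual symmetry I assume the decomposition of $P$ into dominant arcs is the upward one, so ``right-priority'' is the relevant notion. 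Since $Q_0=\rightF_0=P_{\main{\goB}+1}$ lies on the boundary of $\areaRes{\main{\goB}}$, if $Q$ is not contained in $\areaRes{\main{\goB}}$ I would let $k$ be the largest index with $Q_{0,\ldots,k}$ inside $\areaRes{\main{\goB}}$; then $k<|Q|-1$ and $\glueQ{k}{k+1}$ lies on one of the four delimiting pieces, and I would treat these cases separately, showing each one forces one of the three announced conclusions.

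Three of the four cases are short. If $\glueQ{k}{k+1}$ lies on the west segment, then $Q$ intersects it, which is the third conclusion. If it lies on the east segment, then $Q_{0,\ldots,k+1}$ carries a tile on column $\Bshield{c}+1$, so $e_Q>\Bshield{c}$, the second conclusion. If it lies on $S_{0,\ldots,\enSh}$, then $Q_{0,\ldots,k+1}$ is a path inside $\area{\main{\goB}}$ (since $\areaRes{\main{\goB}}\subseteq\area{\main{\goB}}$ and $S_{0,\ldots,\enSh}$ bounds $\area{\main{\goB}}$) with $Q_0=P_{\main{\goB}+1}$ that meets $S_{0,\ldots,\enSh}$, directly contradicting Lemma~\ref{lem:full:visible}; so this case does not occur.

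The remaining case, $\glueQ{k}{k+1}$ on $P_{\main{\goB}}\rightF$, is the heart of the proof and the step I expect to need the most care, precisely because $\areaRes{\main{\goB}}$ and the practical form $\rightF$ sit in the interior of $\area{\main{\goB}}$ rather than on its boundary. Because $P_{\main{\goB}}Q$ is a path it avoids $\pos{P_{\main{\goB}}}$, so $Q_{k+1}$ is a tile of $\rightF$; among the indices of $\rightF$ whose position is occupied by $Q_{0,\ldots,k+1}$ I would pick the largest, say $i'$, and let $v\leq k+1$ be the $Q$-index with $Q_v=\rightF_{i'}$. A short self-avoidance argument gives $v>u$, where $Q_{0,\ldots,u}=\rightF_{0,\ldots,u}$ is the largest common prefix; and from the right-priority hypothesis together with directedness, $P_{\main{\goB}}Q_{0,\ldots,u+1}$ turns right of $P_{\main{\goB}}\rightF$. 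Then $Q_{0,\ldots,v}\rightF_{i'+1,\ldots,|\rightF|-1}$ is a path (self-avoiding by the maximality of $i'$), contained in $\area{\main{\goB}}$ and starting at $P_{\main{\goB}+1}$; Fact~\ref{fact:full:area} makes $N:=P_{\main{\goB}}Q_{0,\ldots,v}\rightF_{i'+1,\ldots,|\rightF|-1}$ producible, directedness identifies $Q_v$ with $\rightF_{i'}$ as tiles, its last tile $P_{\last+1}$ is its easternmost one and its only tile on column $\Bshield{c}+1$ (using $e_Q\leq\Bshield{c}$ and the corresponding property of $\rightF$), and $N$ begins with $P_{\main{\goB}}P_{\main{\goB}+1}$ and ends with $P_{\last}P_{\last+1}$, so $N\in\prodpaths{\main{\goB}}$. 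But $N$ turns right of $P_{\main{\goB}}\rightF$ at index $u+1$ and $N\neq P_{\main{\goB}}\rightF$, contradicting the fact that $P_{\main{\goB}}\rightF$ is the right-priority path of $\prodpaths{\main{\goB}}$. Hence this last case cannot arise, so $Q$ is contained in $\areaRes{\main{\goB}}$ whenever the other two conclusions fail; the downward case is symmetric, with ``left'' in place of ``right'' and ``north'' in place of ``south''.
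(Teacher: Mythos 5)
Your proposal is correct and follows essentially the same route as the paper's proof: take the largest prefix of $Q$ inside $\areaRes{\main{\goB}}$, rule out an exit through $S_{0,\ldots,\enSh}$ via Lemma~\ref{lem:full:visible}, rule out an exit through $\rightF$ by splicing $Q_{0,\ldots,v}$ onto the tail of $\rightF$ to build a more right-priority element of $\prodpaths{\main{\goB}}$ contradicting Definition~\ref{def:rightform}, and read off the two remaining conclusions from the two vertical segments. The only difference is presentational (you enumerate the four boundary pieces explicitly), not mathematical.
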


\begin{proof}
Let $0\leq v \leq |Q|-1$ such that $Q_{0,\ldots,v}$ is the largest prefix of $Q$ which is inside $\areaRes{\main{\goB}}$, see Figure \ref{fig:Final:Mainexemple:fullrestrained:Lem1}. Let $0\leq u \leq |Q|-1$ such that $Q_{0,\ldots,u}$ is the largest common prefix between $Q$ and $\rightF$. Since $P_{\main{\goB}}Q$ is more right-priority (resp. left-priority) than $P_{\main{\goB}}\rightF$ then $u \leq v$. By Lemma \ref{lem:full:visible}, $Q_{0,\dots, v}$ cannot intersect $S_{0,\ldots,e}$. If $Q_{0,\dots, v}$ intersects with $\rightF_{u+1,\ldots,|\rightF|-1}$ then let $z=\max\{i: \pos{R_i} \text{ is occupied by a tile of } Q_{0,\ldots,v}\}$ and $u<z'\leq v$ such that $P_z=Q_{z'}$. Then, $Q_{0,\dots,z'}R_{z+1,\ldots,|R|-1}$ is in $\areaResUn{\main{\goB}}$ and $P_{0,\ldots,\main{\goB}}Q_{0,\dots,z'}R_{z+1,\ldots,|R|-1}$ is producible and $P_{\main{\goB}}Q_{0,\dots,z'}R_{z+1,\ldots,|R|-1}$ is more right-priority (resp. left-priority) than $P_{\main{\goB}}\rightF$ and satisfies all properties of definition \ref{def:rightform}. This is a contradiction of the definition of $\rightF$. Thus, if $v<|Q|-1$, then either $Q_{v+1}$ is on column $\Bfour(c)+0.5$ or $\glueQ{v}{v+1}$ is on the vertical segment between $\glueP{\main{\goB}}{\main{\goB}+1}$ and $\glueS{\enSh}{\enSh+1}$.  

\input{./tikz/Final/FullRestrainedLem1}
\end{proof}

\begin{lemma}
\label{lem:full:LastTechnical}
Consider the notations \ref{notation:full}, $0< \goB \leq t$ and $0\leq  e \leq |S|-1$ such that $\glueP{\main{\goB}}{\main{\goB}+1}$ is protected by the beginning $S_{0,\ldots,e}$ of the full-shield $S$. Let $\area{\main{\goB}}$ be the protected area of $\glueP{\main{\goB}}{\main{\goB}+1}$. Let $0\leq \lastc \leq |P|-1$ be the  index of the last glue of $P$ on glue column $c$. If there exists a path $B$ such that $B$ is in $\area{\main{\goB}}$, $B$ is a subassembly of the terminal assembly $\uniterm$, $w_B>c$, $B_0$ is a tile of $P_{\lastc+1,\ldots,\last}$ then $B$ does not intersect with $P_{\main{\goB}+1,\ldots,\lastc}$ or $S_{0,\ldots,e}$. 
\end{lemma}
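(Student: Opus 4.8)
The plan is to adapt the proof of Lemma~\ref{lem:vis:LastTechnical} to the glue $\glueP{\main{\goB}}{\main{\goB}+1}$ protected by the full-shield $S$, and to split the statement into its two halves: (a) ``$B$ does not intersect $P_{\main{\goB}+1,\ldots,\lastc}$'' and (b) ``$B$ does not intersect $S_{0,\ldots,\enSh}$''. Throughout I will use that $B$ is inside $\area{\main{\goB}}$, hence --- by the remark preceding Fact~\ref{fact:full:area} --- avoids $P_{0,\ldots,\main{\goB}}$ and the seed; that $w_B>c$ forces every tile of $B$ strictly east of column $c$; and that $\glueP{\main{\goB}}{\main{\goB}+1}$ lies on column $c$, so $\main{\goB}\leq\lastc$ (for (a) we may assume $\main{\goB}<\lastc$, the claim being vacuous otherwise).

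For (a), assume $B$ meets $P_{\main{\goB}+1,\ldots,\lastc}$, and let $B_{0,\ldots,v}$ be its shortest prefix doing so, with $B_v=P_j$, $\main{\goB}<j\leq\lastc$. Since $B_v$ is strictly east of column $c$, $P_j$ lies strictly inside some dominant arc $P_{i,\ldots,k}$ of $P$ on column $c$; let $s$ be the visible glue of $P$ on column $c$ of the matching orientation (the southern one if the arc is upward, the northern one otherwise), so that by Fact~\ref{decompo:alldominant} and the ordering of the dominant arcs along the corresponding decomposition one has $s\leq i<j<k$. Let $B_{u,\ldots,v}$ be the shortest suffix of $B_{0,\ldots,v}$ meeting $P_{\lastc+1,\ldots,\last}$, with $B_u=P_{j_0}$, $\lastc<j_0\leq\last$. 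By minimality of $u$ and $v$ and since $B$ avoids $P_{0,\ldots,\main{\goB}}$ and the seed, $B_{u+1,\ldots,v-1}$ avoids $P_{0,\ldots,\last}$, so $\bar B=P_{s+1,\ldots,j}\,B_{v-1}B_{v-2}\cdots B_u$ is a simple path, hence a branch of the cut $(s,|P|-1)$ of $P$. Let $P_{s+1,\ldots,m}$ be the largest common prefix of $\bar B$ and $P_{s+1,\ldots,|P|-1}$; then $m\geq j$, and since beyond $P_j$ the path $\bar B$ coincides with $B$ and so stays strictly east of column $c$ whereas the closing tile $P_k$ of the arc lies on column $c$, we get $i<m<k$. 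By Corollary~\ref{cor:decompo:cano}, $\bar B$ cannot meet $P$ after $P_m$; but $\bar B$ properly extends $P_{s+1,\ldots,m}$ and ends at the tile $P_{j_0}$ of $P$, a contradiction.

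For (b), assume $B$ meets $S_{0,\ldots,\enSh}$; recall $x_{S_{0,\ldots,\enSh}}>c$ by Lemma~\ref{lem:full:exists}. By part~(a), $B$ avoids $P_{\main{\goB}+1,\ldots,\lastc}$, and being in $\area{\main{\goB}}$ it avoids $P_{0,\ldots,\main{\goB}}$. Let $B_{0,\ldots,v}$ be the shortest prefix of $B$ meeting $S_{0,\ldots,\enSh}$ and $B_{u,\ldots,v}$ the shortest suffix of it meeting $P_{\lastc+1,\ldots,\last}$, with $B_u=P_{j_0}$, $\lastc<j_0\leq\last$. Then $B_{u+1,\ldots,v}$ avoids all of $P_{0,\ldots,\last}$, so $\tilde B=P_{\main{\goB}+1,\ldots,j_0}\,B_{u+1}\cdots B_v$ is a simple path, contained in $\area{\main{\goB}}$, starting at $P_{\main{\goB}+1}$ and meeting $S_{0,\ldots,\enSh}$ at its last tile --- exactly the configuration ruled out by Lemma~\ref{lem:full:visible}.

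The routine steps are the verifications of simplicity and producibility, and that the reversed concatenations stay inside the correct workspace and protected area; this is the same bookkeeping already carried out, tersely, in the proofs of Lemmas~\ref{lem:vis:LastTechnical} and~\ref{lem:full:visible}. The only genuine point, in~(a), is the geometric observation that a dominant arc is closed by a tile on column $c$ while $B$ lives strictly east of $c$: this is precisely what forces the end of the largest common prefix of $\bar B$ with $P$ to fall strictly inside the arc, and hence what licenses the application of Corollary~\ref{cor:decompo:cano}.
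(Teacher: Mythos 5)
Your proof follows the paper's own argument: part (a) reduces the intersection with $P_{\main{\goB}+1,\ldots,\lastc}$ to a contradiction with Corollary~\ref{cor:decompo:cano} via the reversed suffix $B_{u,\ldots,v}$ prepended with the relevant prefix of $P$, and part (b) builds the path $P_{\main{\goB}+1,\ldots,j_0}B_{u+1,\ldots,v}$ and contradicts Lemma~\ref{lem:full:visible}, exactly as in the paper. The only difference is that you spell out the branch construction and the ``common prefix ends strictly inside the dominant arc'' step that the paper leaves implicit, which is a faithful elaboration rather than a different route.
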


\begin{proof}
If $B$ intersects with $P_{\main{\goB}+1,\ldots,\lastc}$ then let $B_{0,\ldots,v}$ be the shortest prefix of $B$ which intersects with $P_{\main{\goB}+1,\ldots,\lastc}$, see Figure \ref{fig:Final:Mainexemple:fullrestrained:Lem2}. Let $B_{u,\ldots,v}$ be the shortest suffix of  $B_{0,\ldots,v}$ which intersects with $P_{\lastc+1,\ldots,\last}$. Since, $w_{B}> c$ then there exists $i< j < k$ such that $P_{i,\ldots,k}$ is a dominant arc of $P$ on glue column $c$ and $P_j=B_{v}$ (since $B$ is a subassembly of $\uniterm$). Since $B_{u,\ldots,v}$ is a subassembly of the terminal assembly of $\uniterm$ and in $\area{\main{\goB}}$ then there is a contradiction with Corollary \ref{cor:decompo:cano} since $P$ is canonical for glue column $c$. 

If $B$ intersects with $S_{0,\ldots,e}$ then let $B_{0,\ldots,v}$ be the shortest prefix of $B$ which intersects with $S_{0,\ldots,e}$. Let $B_{u,\ldots,v}$ be the shortest suffix of  $B_{0,\ldots,v}$ which intersects with $P_{\lastc+1,\ldots,\last}$. There exists $\lastc+1\leq u' \leq \last$ such that $P_{u'}=B_u$. By the previous paragraph,  $B_{u,\ldots,v}$ does not intersect with $P_{\main{\goB}+1,\ldots,\lastc}$. Thus $P_{\main{\goB}+1,\ldots,u'-1}B_{u,\ldots,v}$ is path which is in inside $\area{\main{\goB}}$. This path contradicts Lemma \ref{lem:full:visible}. 

\input{./tikz/Final/FullRestrainedLem2}
\end{proof}

\subsubsection{Glues protected by an half-shield}\label{sub:vis:half}

Along this subsection, we will use the following notations:

\begin{definition}[Notations for a glue protected by an half-shield]
\label{notation:half:def}
Suppose that $e_\uniterm>\Bfinal$ and consider a path $P$ which is canonical for some glue column $c$ with $\Bone\leq c \leq \Btwo$ and let $\last$ be the index of its first glue on glue column $\Bfour(c)$. Let $(\main{i})_{0\leq i \leq t}$ and $(\second{i})_{0\leq i < t}$ the upward (resp. downward) decomposition of $P$ into dominants arcs on glue column $c$ and let $(\bord{i})_{0\leq i \leq t+1}$ be the frontiers of this decomposition. Let $S$ be an southern (resp. northen) half-shield of $P$ on glue column $c$ at index $\main{0} <\posS \leq \last$. Let $A=SP_{a,\ldots,\last+1}$.
\end{definition}

\begin{lemma}
\label{lem:half:exists}
Consider the notations \ref{notation:half:def}, there exists $0\leq \goB < t$ such that $\main{\goB}<\posS<\second{\goB}$.
\end{lemma}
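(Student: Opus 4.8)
The plan is to show that $P_\posS$ is an interior tile of one of the dominant arcs $P_{\main{\goB},\ldots,\second{\goB}}$ of the decomposition; this is precisely the assertion $\main{\goB}<\posS<\second{\goB}$ with $0\le\goB<t$. Up to the usual symmetry I treat the southern/upward case, so $\glueP{\main{0}}{\main{0}+1}$ is visible from the south in $P_{0,\ldots,\last}$ and $A=SP_{\posS,\ldots,\last+1}$ is an upward negative arc of column $\Bshield{c}$. Everything rests on one consequence of Definition \ref{def:shield:type}: since $S$ is a \emph{half}-shield, $x_{SP_\posS}>c$, i.e.\ every tile of $S$, as well as the tile $P_\posS$, lies at a column $\ge c+1$.

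I would first pin down the geography of the workspace $\mathcal{C}$ of the cut $(\main{0},\last)$ of $P_{0,\ldots,\last+1}$. Since $A$ is a negative arc of column $\Bshield{c}$, its first glue lies on column $\Bshield{c}$ and points west, so $A_0=S_0$ sits at column $\Bshield{c}+1$, and its last glue $\glueP{\last}{\last+1}$ points east, so $x_{P_\last}=\Bshield{c}$ and $x_{P_{\last+1}}=\Bshield{c}+1$. Because $\last$ is the first glue of $P$ on column $\Bshield{c}$ and $P$ starts west of that column, the prefix $P_{0,\ldots,\last}$ carries no glue on column $\Bshield{c}$ and so stays at columns $\le\Bshield{c}$; hence $e_{P_{\main{0},\ldots,\last+1}}=\Bshield{c}+1$ and $\mathcal{C}$ contains all of column $\Bshield{c}+1$. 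As in the decomposition, $\mathcal{C}$ is the union of its east-side $\es$ — delimited by $\bord{0}$, the dominant arcs $P_{\main{i},\ldots,\second{i}}$, the borders $\bord{1},\ldots,\bord{t}$, the suffix $P_{\lastc,\ldots,\last+1}$ and the glue ray of $\glueP{\last}{\last+1}$ — and the interiors $\inter{1},\ldots,\inter{t}$, each $\inter{i}$ attached to $\es$ only across the vertical border $\bord{i}$ at abscissa $c+\tfrac12$. Every hole $P_{\second{i-1}-1,\ldots,\main{i}+1}$ has all its indices $\le\lastc+1\le\last$ (using $\lastc<\last$ from Lemma \ref{lem:can:bound}), hence lies at columns $\le\Bshield{c}$, so every $\inter{i}$ does too; in particular $S_0\in\es$.

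Next I would confine $S$ to $\es$ and then locate $P_\posS$. Any edge of a path meeting a border $\bord{i}$ (or $\bord{0}$) runs through abscissa $c+\tfrac12$, hence is a horizontal step between columns $c$ and $c+1$, i.e.\ a glue on column $c$; as all tiles of $S$ are at columns $\ge c+1$, $S$ has no glue on column $c$ and meets no such border. Since $S$ lies in $\mathcal{C}$, avoids $P_{\main{0},\ldots,\last+1}$, starts in $\es$ and meets no border, $S$ is contained in $\es$, so $S_{|S|-1}\in\es$. Now $P_\posS$ is a tile of the boundary path $P_{\main{0},\ldots,\last+1}$ grid-adjacent to $S_{|S|-1}\in\es$. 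After $P_{\main{0}}$ that boundary path decomposes into the dominant arcs $P_{\main{i},\ldots,\second{i}}$ (lying on $\partial\es$), the gap subpaths $P_{\second{i-1},\ldots,\main{i}}$ (lying on $\partial\inter{i}$, whose $\mathcal{C}$-side is $\inter{i}$), and the suffix $P_{\lastc,\ldots,\last+1}$ (on $\partial\es$), the pieces meeting only at column $c$. Since $x_{P_\posS}\ge c+1>c$, $P_\posS$ is strictly interior to one piece; a tile strictly inside a gap subpath has none of its grid-neighbours in $\es$ (its two non-path neighbours lying in $\inter{i}$ and outside $\mathcal{C}$ respectively), so $P_\posS$ is not such a tile. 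Hence either $P_\posS$ lies strictly inside a dominant arc $P_{\main{\goB},\ldots,\second{\goB}}$, $0\le\goB<t$, or $\posS\in[\lastc+1,\last]$. The second case is impossible: then $P_{\posS,\ldots,\last}$ is part of the tail $P_{\lastc+1,\ldots,\last}$, which lies at columns $>c$ by the remark following Definition \ref{def:lastglue}, and together with $S$ and $P_{\last+1}$ the whole arc $A$ would lie at columns $\ge c+1$, contradicting its defining condition $\min\{x_A\}\le c$. So $P_\posS$ lies strictly inside $P_{\main{\goB},\ldots,\second{\goB}}$: $\main{\goB}\le\posS\le\second{\goB}$, and since $x_{P_{\main{\goB}}}=x_{P_{\second{\goB}}}=c<x_{P_\posS}$ we get $\main{\goB}<\posS<\second{\goB}$.

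The real work, I expect, is the middle step: turning the way $\mathcal{C}$, its borders and its interiors were constructed into airtight statements that a path pinned to columns $\ge c+1$ cannot leave $\es$, and that a boundary tile adjacent to $\es$ must lie on a dominant arc or on the suffix $P_{\lastc,\ldots,\last+1}$. No new idea is needed — it is planar bookkeeping with the decomposition — but that is where all the case analysis is concentrated.
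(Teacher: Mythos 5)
Your proposal is correct and follows essentially the same route as the paper's proof: use the half-shield condition $x_{SP_\posS}>c$ to place $P_\posS$ in the east-side $\es$ of the workspace, deduce that either $\main{\goB}<\posS<\second{\goB}$ for some $\goB<t$ or $\main{t}<\posS\leq\last$, and rule out the latter because it would force $x_A>c$, contradicting $\min\{x_A\}\leq c$ in the definition of a shield. The paper's version is just terser, taking the dichotomy directly from the structure of the decomposition rather than via your more detailed planar bookkeeping about borders and adjacency.
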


\begin{proof}
In this case, consider $\mathcal{C}$ the area defined by the right (resp. left) side of the curve made of $l^{\main{0}}$, $P_{\main{0}, \ldots,\last+1}$ and the ray starting at $\glueP{\last}{\last+1}$ and going south (resp. north), see Figure \ref{fig:Final:Mainexemple:half}. Let $\es$ be its  east side (the intersection of $\mathcal{C}$ with the half-plane of equation $x \geq c$). By definition of an half-shield, $w_{SP_a}>c$ then $P_a$ is in $\es$. Thus, either $\main{t}<\posS\leq \last$ or there exists $0\leq \goB < t$ such that $\main{\goB}<\posS<\second{\goB}$. If $\goB<t$ then the lemma is true. Otherwise, we have $\main{t}<\posS\leq \last$. In this case, remind that $\main{t}$ is the index of the last glue of $P$ on glue column $c$. Thus, $w_{P_{\posS,\ldots,\last+1}}>c$. Then, we would have $w_A>c$ which contradicts the definition of a shield.
\end{proof}

Thanks to Lemma \ref{lem:half:exists}, we can now assume that there exists $0\leq \goB < t$ such that $\main{\goB}<\posS<\second{\goB}$, see Figure \ref{fig:Final:Mainexemple:half}. In this case, we say that the $\glueP{\second{\goB}-1}{\second{\goB}}$ is \emph{protected} by the half-shield $S$. The category of this protected glue is a ``backup''. The protected area $\area{\second{\goB}}$ of $\second{\goB}$ is $\inter{A}$, the interior of the hole $A$. 

\input{./tikz/Final/HalfRestrained}

\begin{lemma}
\label{lem:half:utile}
Consider the notations \ref{notation:half:def}, let $\rightF$ be the practical form of $P$ at index $\second{\goB}-1$. Then $SP_{a,\ldots,\second{\goB}-1}\rightF$ is a negative upward (resp. downward) arc of glue column $\Bfour(c)$ included into $\area{\second{\goB}}$ (apart from its first and last tiles).
\end{lemma}

\begin{proof}
By definition of a practical form, $P_{0,\ldots,\second{\goB}-1}\rightF$ is a producible path and then $P_{a,\ldots,\second{\goB}-1}$ does not intersect with $\rightF$. Moreover $\rightF$ is inside $\area{\second{\goB}}$. If $\rightF$ intersects with $S$ then there is a contradiction with the hypothesis that $SP_{a}$ is prefix the minimal arc of $\inter{A}$ (by a reasoning similar to the first case of Lemma \ref{lem:full:visible} for example). Thus $\rightF$ and $S$ do not intersect and $SP_{a,\ldots,\second{\goB}-1}\rightF$ is a negative arc upward (resp. downward) arc of glue column $\Bfour(c)$ included into $\area{\second{\goB}}$ (apart from its first and last tiles).  
\end{proof}

Lemma \ref{lem:half:utile} allows us to defined the restrained area $\areaRes{\second{\goB}}$ of $\second{\goB}$ as the interior of the arc $SP_{a,\ldots,\second{\goB}-1}\rightF$. This area is included into $\area{\second{\goB}}$, see Figure~\ref{fig:Final:Mainexemple:half}. 

The following lemma is the equivalent of Lemma \ref{lem:vis:LastTechnical} and Lemma \ref{lem:full:LastTechnical} for a backup glue protected by an half-shield. As explained in Subsection \ref{sub:vis:protect}, this lemma will be useful in Subsection \ref{sec:combining:full}. Indeed, combining two main southern protected glues may require a third northern protected glue. This northern glue may be protected by an half-shield. 

\begin{lemma}
\label{lem:half:LastTechnical}
Consider the notations \ref{notation:half:def}, let $0\leq \lastc \leq |P|-1$ be the  index of the last glue of $P$ on glue column $c$. If there exists a path $B$ such that $B$ is in $\area{\second{\goB}}$, $B$ is a subassembly of  $\uniterm$, $w_B>c$, $B_0$ is a tile of $P_{\lastc+1,\ldots,\last}$ then $B$ does not intersect with $P_{\posS,\ldots,\lastc}$ or $S$. 
\end{lemma}

\begin{proof}
If $B$ intersects with $P_{a,\ldots,\lastc}$ then let $B_{0,\ldots,v}$ be the shortest prefix of $B$ which intersects with $P_{a,\ldots,\lastc}$. Let $B_{u,\ldots,v}$ be the shortest suffix of  $B_{0,\ldots,v}$ which intersects with $P_{\lastc+1,\ldots,\last}$. Since, $w_{B}> c$ then there exists $i\leq j < k$ such that $P_{i,\ldots,k}$ is a dominant arc of $P$ on glue column $c$ and $P_j=B_{v}$ (since $B$ is a subassembly of $\uniterm$). Since $B_{u,\ldots,v}$ is a subassembly of $\uniterm$ and in $\area{\second{\goB}}$ then there is a contradiction with Corollary \ref{cor:decompo:cano} since $P$ is canonical for glue column $c$. 

If $B$ intersects with $S$ then let $B_{0,\ldots,v}$ be the shortest prefix of $B$ which intersects with $S$. Let $B_{u,\ldots,v}$ be the shortest suffix of  $B_{0,\ldots,v}$ which intersects with $P_{\lastc+1,\ldots,\last}$. There exists $\lastc+1\leq u' \leq \last$ such that $P_{u'}=B_u$. By the previous paragraph,  $B_{u,\ldots,v}$ does not intersect with $P_{a,\ldots,\lastc}$. Thus $P_{a,\ldots,u'-1}B_{u,\ldots,v}$ is path which is in inside $\area{\second{\goB}}$. We obtain a contradiction by a reasoning similar to the first case of Lemma \ref{lem:half:utile}. 
\end{proof}

With this result, we may proceed to explain how to combine two ``main'' protected glues to find a new protected one. Concerning the equivalence of Lemma \ref{lem:full:turnRestrainedVis} and Lemma \ref{lem:full:turnRestrained}, the whole Subsection \ref{sec:combining:half} is dedicated to this result. Indeed, achieving this goal is far more complex for a glue protected by an half-shield.

\subsection{Combining two identical ``main'' protected glues}\label{sec:combining:full}

Along this subsection, we will use the following notations:

\begin{definition}[Notations for combining two ``main'' glues]
\label{notation:full:combi}
Suppose that $e_\uniterm>\Bfinal$ and consider two paths $P$ and $Q$ which are canonical for some glue columns $\colUn$ and $\colDeux$ respectively with $\Bone\leq \colUn < \colDeux \leq \Btwo$. 
Let $0\leq \lastUn < |P|-1$ be the index of the first glue of $P$ on glue column $\Bfour(\colUn)$. 
Let $0\leq \lastDeux < |Q|-1$ be the index of the first glue of $Q$ on glue column $\Bfour(\colDeux)$. 
Let $(\main{i})_{0\leq i \leq t}$ and $(\second{i})_{0\leq i < t}$ the upward (resp. downward) decomposition of $P$ into dominants arcs on glue column $\colUn$.
Let $0\leq  \goB \leq t$ such that $\glueP{\main{\goB}}{\main{\goB}+1}$ is a protected ``main'' glue of $P$. Let $\rightF$ be the practical form of $P$ at index $\main{\goB}$ and let $\areaResUn{\main{\goB}}$ be its restrained protected area.
Consider the index $0\leq \proSLast{\colDeux} \leq |Q|-1$ of a protected ``main'' glue of the upward (resp. downward) decomposition of $Q$ into dominants arcs on glue column $\colDeux$. 
Let $\rightFDeux$ be the practical form of $Q$ at index $\proSLast{\colDeux}$ and let $\areaResDeux{\proSLast{\colDeux}}$ be its restrained protected area.
Suppose that $\glueP{\main{\goB}}{\main{\goB+1}}$ and $\glueQ{\proSLast{\colDeux}}{\proSLast{\colDeux}+1}$ have the same type and let $\vu=\vect{\pos{P_{\main{\goB}}}\pos{Q_{\proSLast{\colDeux}}}}$.
\end{definition}

As an example for the path $P$, we will first use the path $P$ of Figure \ref{fig:Final:Combining:Pexample} and for the path $Q$, we will use the example from Figure \ref{fig:Final:Mainexemple:fullrestrained}. The following lemma corresponds to the reasoning of the roadmap where we explain that only the first case of Fact \ref{fact:turn} is possible.

\begin{figure}
\center
\begin{tikzpicture}[x=0.2cm,y=0.2cm]

\fill[fill=orange!30!white, draw opacity=0.8] (15,20.5) -| (23.5,24.5) -| (12.5,30.5) -| (18.5,38.5) -| (51,15.5) -| (22.5,16.5) -| (15,20.5);
\fill[fill=yellow!30!white, draw opacity=0.8] (15,20.5) -| (23.5,24.5) -| (12.5,30.5) -| (18.5,27.5) -| (23.5,29.5) -| (28.5,33.5) -| (47.5,38.5) -| (51,15.5) -| (22.5,16.5) -| (15,20.5);

\draw[thick] (14.5,20.5) -| (23.5,24.5) -| (12.5,30.5) -| (18.5,27.5) -| (23.5,29.5) -| (28.5,33.5) -| (47.5,38.5) -| (51.5,38.5);
\draw[very thick] (14.5,20.5) -| (23.5,24.5) -| (12.5,30.5) -| (18.5,38.5) -| (51.5,38.5);
\draw[thick] (15,16.5) -| (22.5,15.5) -| (51.5,15.5);



\tile{14}{20}

\tileor{15}{20}
\tileor{16}{20}
\tileor{17}{20}
\tileor{18}{20}
\tileor{19}{20}
\tileor{20}{20}
\tileor{21}{20}
\tileor{22}{20}
\tileor{23}{20}
\tileor{23}{21}
\tileor{23}{22}
\tileor{23}{23}
\tileor{23}{24}
\tileor{22}{24}
\tileor{21}{24}
\tileor{20}{24}
\tileor{19}{24}
\tileor{18}{24}
\tileor{17}{24}
\tileor{16}{24}
\tileor{15}{24}
\tileor{14}{24}
\tileor{13}{24}
\tileor{12}{24}
\tileor{12}{25}
\tileor{12}{26}
\tileor{12}{27}
\tileor{12}{28}
\tileor{12}{29}
\tileor{12}{30}
\tileor{13}{30}
\tileor{14}{30}
\tileor{15}{30}
\tileor{16}{30}
\tileor{17}{30}
\tileor{18}{30}
\tileor{18}{31}
\tileor{18}{32}
\tileor{18}{33}
\tileor{18}{34}
\tileor{18}{35}
\tileor{18}{36}
\tileor{18}{37}
\tileor{18}{38}
\tileor{19}{38}
\tileor{20}{38}
\tileor{21}{38}
\tileor{22}{38}
\tileor{23}{38}
\tileor{24}{38}
\tileor{25}{38}
\tileor{26}{38}
\tileor{27}{38}
\tileor{28}{38}
\tileor{29}{38}
\tileor{30}{38}
\tileor{31}{38}
\tileor{32}{38}
\tileor{33}{38}
\tileor{34}{38}
\tileor{35}{38}
\tileor{36}{38}
\tileor{37}{38}
\tileor{38}{38}
\tileor{39}{38}
\tileor{40}{38}
\tileor{41}{38}
\tileor{42}{38}
\tileor{43}{38}
\tileor{44}{38}
\tileor{45}{38}
\tileor{46}{38}
\tileor{47}{38}
\tileor{48}{38}
\tileor{49}{38}
\tileor{50}{38}
\tileor{51}{38}

\doty{15}{20}
\doty{16}{20}
\doty{17}{20}
\doty{18}{20}
\doty{19}{20}
\doty{20}{20}
\doty{21}{20}
\doty{22}{20}
\doty{23}{20}
\doty{23}{21}
\doty{23}{22}
\doty{23}{23}
\doty{23}{24}
\doty{22}{24}
\doty{21}{24}
\doty{20}{24}
\doty{19}{24}
\doty{18}{24}
\doty{17}{24}
\doty{16}{24}
\doty{15}{24}
\doty{14}{24}
\doty{13}{24}
\doty{12}{24}
\doty{12}{25}
\doty{12}{26}
\doty{12}{27}
\doty{12}{28}
\doty{12}{29}
\doty{12}{30}
\doty{13}{30}
\doty{14}{30}
\doty{15}{30}
\doty{16}{30}
\doty{17}{30}
\doty{18}{30}
\doty{18}{29}
\doty{18}{28}
\doty{18}{27}
\doty{19}{27}
\doty{20}{27}
\doty{21}{27}
\doty{22}{27}
\doty{23}{27}
\doty{23}{28}
\doty{23}{29}
\doty{24}{29}
\doty{25}{29}
\doty{26}{29}
\doty{27}{29}
\doty{28}{29}
\doty{28}{30}
\doty{28}{31}
\doty{28}{32}
\doty{28}{33}
\doty{29}{33}
\doty{30}{33}
\doty{31}{33}
\doty{32}{33}
\doty{33}{33}
\doty{34}{33}
\doty{35}{33}
\doty{36}{33}
\doty{37}{33}
\doty{38}{33}
\doty{39}{33}
\doty{40}{33}
\doty{41}{33}
\doty{42}{33}
\doty{43}{33}
\doty{44}{33}
\doty{45}{33}
\doty{46}{33}
\doty{47}{33}
\doty{47}{34}
\doty{47}{35}
\doty{47}{36}
\doty{47}{37}
\doty{47}{38}
\doty{48}{38}
\doty{49}{38}
\doty{50}{38}
\doty{51}{38}

\dotmb{15}{16}
\dotmb{16}{16}
\dotmb{17}{16}
\dotmb{18}{16}
\dotmb{19}{16}
\dotmb{20}{16}
\dotmb{21}{16}
\dotmb{22}{16}
\dotmb{22}{15}
\dotmb{23}{15}
\dotmb{24}{15}
\dotmb{25}{15}
\dotmb{26}{15}
\dotmb{27}{15}
\dotmb{28}{15}
\dotmb{29}{15}
\dotmb{30}{15}
\dotmb{31}{15}
\dotmb{32}{15}
\dotmb{33}{15}
\dotmb{34}{15}
\dotmb{35}{15}
\dotmb{36}{15}
\dotmb{37}{15}
\dotmb{38}{15}
\dotmb{39}{15}
\dotmb{40}{15}
\dotmb{41}{15}
\dotmb{42}{15}
\dotmb{43}{15}
\dotmb{44}{15}
\dotmb{45}{15}
\dotmb{46}{15}
\dotmb{47}{15}
\dotmb{48}{15}
\dotmb{49}{15}
\dotmb{50}{15}
\dotmb{51}{15}

\path [dotted, draw, thin] (0,13) grid[step=0.2cm] (60,43);

\draw [dashed] (15,13) -| (15,43);
\draw [dashed] (51,13) -| (51,43);

\fill (14.5,30.5) circle (0.16);
\node (D) at (14,32.2) {$P_{\lastc}$};

\fill (51.5,15.5) circle (0.16);
\node (D) at (53.2,15.5) {$S_{0}$};

\fill (50.5,38.5) circle (0.16);
\node (D) at (50,40.2) {$P_{\last}$};

\fill (15.5,16.5) circle (0.16);
\node (D) at (15.5,15) {$S_{e}$};

\fill (14.5,20.5) circle (0.16);
\node (D) at (13.5,18.8) {$P_{\main{\goB}}$};

\end{tikzpicture}
\caption{A path $P$ which is canonical for glue column $c$. Only $P_{\main{\goB},\ldots,\last+1}$ is represented (the orange tiles). In this example, the main southern $\glueP{\main{\goB}}{\main{\goB}+1}$ is protected by the beginning $S_{0,\ldots,e}$ of the shield $S$. Only $S_{0,\ldots,e}$ is represented by the blue dots. The colored area (orange and yellow) represents the protected area  $\area{\main{\goB}}$. The practical form $\rightF$ of $P$ at index $\main{\goB}$ corresponds to the yellow dots. The restrained area  $\areaResUn{\main{\goB}}$ of $P$ is in yellow.}
\label{fig:Final:Combining:Pexample}
\end{figure}

\begin{lemma}
\label{lem:full:assWrongArc}
Consider the notations \ref{notation:full:combi}, $\rightFUn_{0,\ldots,|R|-2}+\vu$ is not inside $\areaResDeux{\proSLast{\colDeux}}$. 
\end{lemma}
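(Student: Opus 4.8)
My plan is to argue by contradiction: suppose $\rightFUn_{0,\ldots,|\rightFUn|-2}+\vu$ lies inside $\areaResDeux{\proSLast{\colDeux}}$, where $\rightFUn$ denotes the practical form of $P$ at $\main{\goB}$ and $\rightFDeux$ the practical form of $Q$ at $\proSLast{\colDeux}$.

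The first step is column bookkeeping. Since $\glueP{\main{\goB}}{\main{\goB}+1}$ and $\glueQ{\proSLast{\colDeux}}{\proSLast{\colDeux}+1}$ are ``main'' glues, they lie on columns $\colUn$ and $\colDeux$ and point east, so the horizontal component of $\vu$ is $\colDeux-\colUn>0$, $\pos{\rightFUn_0+\vu}=\pos{Q_{\proSLast{\colDeux}}}$, and $\pos{\rightFUn_1+\vu}=\pos{Q_{\proSLast{\colDeux}+1}}$. By Definition~\ref{def:rightform}, $\rightFUn_{|\rightFUn|-2}=P_{\lastUn}$ sits on column $\Bshield{\colUn}$ and $\rightFUn_{|\rightFUn|-1}=P_{\lastUn+1}$ is the unique tile of $\rightFUn$ on column $\Bshield{\colUn}+1$; hence every tile of $\rightFUn_{0,\ldots,|\rightFUn|-2}$ lies on a column $\leq\Bshield{\colUn}$. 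Using $\Bshield{\colDeux}=\Bshield{\colUn}+(\colDeux-\colUn)$ (Fact~\ref{fact:function}), I would conclude that $\rightFUn_{0,\ldots,|\rightFUn|-2}+\vu$ is a path from $\pos{Q_{\proSLast{\colDeux}}}$ that stays in the columns $\colDeux\leq x\leq\Bshield{\colDeux}$ and ends on column $\Bshield{\colDeux}$ --- the very same silhouette as $Q_{\proSLast{\colDeux}}\rightFDeux_{0,\ldots,|\rightFDeux|-2}$, which starts at $\pos{Q_{\proSLast{\colDeux}}}$, stays in $\colDeux\leq x\leq\Bshield{\colDeux}$, and ends on column $\Bshield{\colDeux}$ at $Q_{\lastDeux}$.

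The heart of the argument is the shape of $\areaResDeux{\proSLast{\colDeux}}$. Whatever the kind of protection of $\glueQ{\proSLast{\colDeux}}{\proSLast{\colDeux}+1}$, this region sits inside the workspace $\mathcal{C}'$ of the cut of $Q$ underlying $\proSLast{\colDeux}$ and is bounded, on one side, by $Q_{\proSLast{\colDeux}}\rightFDeux$ and, on the other, by the starting glue ray $\bord{0}$ at column $\colDeux$, the vertical segment at column $\Bshield{\colDeux}+1$, and --- in the full-shield case --- the beginning of the shield. I would then argue that any path inside $\areaResDeux{\proSLast{\colDeux}}$ that starts at $\pos{Q_{\proSLast{\colDeux}}}$ and reaches column $\Bshield{\colDeux}$ must follow $Q_{\proSLast{\colDeux}}\rightFDeux$ position by position up to that column: it cannot turn right of $Q_{\proSLast{\colDeux}}\rightFDeux$ without leaving $\areaResDeux{\proSLast{\colDeux}}$ through the second wall --- this is exactly Lemma~\ref{lem:full:turnRestrainedVis} when $\proSLast{\colDeux}=\main{0}$ and Lemma~\ref{lem:full:turnRestrained} when $\proSLast{\colDeux}$ is protected by a full-shield, whose hypothesis ``$e_{\cdot}\leq\Bshield{\colDeux}$'' holds precisely because the translated path never leaves columns $\leq\Bshield{\colDeux}$ --- and it cannot turn left of $Q_{\proSLast{\colDeux}}\rightFDeux$ either, since that path is the first wall. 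Applied to $\rightFUn_{0,\ldots,|\rightFUn|-2}+\vu$, and comparing the unique tiles of the two paths on column $\Bshield{\colDeux}+1$, this would force $\vect{P_{\main{\goB}}P_{\lastUn}}=\vect{Q_{\proSLast{\colDeux}}Q_{\lastDeux}}$, so that $\rightFUn_{0,\ldots,|\rightFUn|-2}+\vu$ is exactly the translate of $Q_{\proSLast{\colDeux}}\rightFDeux_{0,\ldots,|\rightFDeux|-2}$.

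Finally I would derive a contradiction from this coincidence: translating $Q_{\proSLast{\colDeux}}\rightFDeux$ back by $-\vu$ produces a path ending with $P_{\lastUn},P_{\lastUn+1}$ that occupies the same positions as $\rightFUn$ but inherits the ``minimal over a canonical path'' width control of $\rightFDeux$; transporting the width estimates of Lemmas~\ref{lem:can:bound} and~\ref{cor:rightside} from $Q$ to $P$ along $\vu$ would then force $w_{\rightFUn}<w_\sigma-2|T|-1$, contradicting the lower bound $w_{P}\geq w_\sigma-2|T|-1$ of Lemma~\ref{cor:rightside} applied to the extremal path $P_{0,\ldots,\main{\goB}}\rightFUn$ (using $e_P>e_\sigma+2|T|+1$). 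I expect the main obstacle to be the third step: one must describe precisely, for each of the three kinds of protected ``main'' glue of $Q$, which curves bound $\areaResDeux{\proSLast{\colDeux}}$ and verify that the ``turn-right'' lemmas of Subsection~\ref{sec:def:protect} apply with $e_{\cdot}\leq\Bshield{\colDeux}$; and one must be careful that the comparison between $\rightFUn_{0,\ldots,|\rightFUn|-2}+\vu$ and $Q_{\proSLast{\colDeux}}\rightFDeux$ is a priori only about positions, not tile types, which is why the final contradiction is drawn quantitatively rather than by a direct appeal to directedness.
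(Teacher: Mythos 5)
Your overall strategy diverges from the paper's at the decisive step, and the route you choose does not close. The paper's contradiction is \emph{eastward}, not westward: assuming $\rightFUn_{0,\ldots,|\rightFUn|-2}+\vu$ sits inside $\areaResDeux{\proSLast{\colDeux}}$, Fact~\ref{fact:full:area} makes $Q_{0,\ldots,\proSLast{\colDeux}}(\rightFUn_{0,\ldots,|\rightFUn|-2}+\vu)$ producible, and then Lemma~\ref{lem:can:lastglue} (whose hypothesis is exactly why $\Bshield{\cdot}$ was calibrated in Fact~\ref{fact:function}) lets one append the translated tail $P_{\lastUn+1,\ldots,|P|-1}+\vu$; since $x_{P_{|P|-1}}=e_\uniterm$ and the horizontal component of $\vu$ is $\colDeux-\colUn>0$, the resulting producible path has a tile on a column strictly east of $e_\uniterm$, which is impossible. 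Your proposal never invokes Lemma~\ref{lem:can:lastglue} nor the suffix $P_{\lastUn+1,\ldots,|P|-1}$, yet the practical form is engineered precisely so that its last two tiles are $P_{\lastUn},P_{\lastUn+1}$ and the tail of $P$ can be regrown after it. Your substitute contradiction --- transporting width estimates to force $w_{\rightFUn}<w_\sigma-2|T|-1$ --- does not follow from anything you establish: if $\rightFDeux-\vu$ coincided positionally with $\rightFUn$ you would only get $w_{\rightFDeux}=w_{\rightFUn}+(\colDeux-\colUn)$, which is compatible with all the bounds of Lemmas~\ref{lem:can:bound} and~\ref{cor:rightside}.

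There is also a gap earlier, in your ``forced coincidence'' step. Lemma~\ref{lem:full:turnRestrained} explicitly allows a path more right-priority than $Q_{\proSLast{\colDeux}}\rightFDeux$ to remain entirely inside $\areaResDeux{\proSLast{\colDeux}}$; re-intersecting $\rightFDeux$ afterwards is what contradicts maximality (Lemma~\ref{lem:full:turnRestrainedVis}), but a right-turning path that never re-intersects $\rightFDeux$ is not excluded, and $\rightFUn+\vu$ need not end at $Q_{\lastDeux},Q_{\lastDeux+1}$, so it is not a member of the set over which $\rightFDeux$ is right-priority-maximal. Hence you cannot conclude that the two practical forms coincide. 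Finally, you do not address the case where $\rightFUn_{0,\ldots,|\rightFUn|-2}$ already intersects $P_{\lastUn+1,\ldots,|P|-1}$, which the paper must (and does) handle separately by rerouting along the latest intersection and showing the rerouted path is more right-priority, before again pushing east past $e_\uniterm$ via Lemma~\ref{lem:can:lastglue}. Your worry about ``positions versus tile types'' is moot: in a directed system all intersections between paths of $\inuniterm$ are agreements, which is exactly how the paper justifies these reroutings.
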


\begin{proof}
Without loss of generality, we suppose that $\glueP{\main{\goB}}{\main{\goB}+1}$ and $\glueQ{\proSLast{\colDeux}}{\proSLast{\colDeux}+1}$ are southern main glues.
For the sake of contradiction, suppose that $\rightFUn_{0,\ldots,|\rightFUn|-2}+\vu$ is inside $\areaResDeux{\proSLast{\colDeux}}$, see Figure \ref{fig:Final:Combining:FirstPartOne}. We remind that by definition of $\rightFUn_0$, we have $\rightFUn_0=P_{\main{\goB}+1}$. Also, by definition of $\vu$, $\pos{\rightFUn_0}+\vu=\pos{Q_\proSLast{\colDeux}+1}$. By hypothesis, $\glueP{\main{\goB}}{\main{\goB+1}}$ and $\glueQ{\proSLast{\colDeux}}{\proSLast{\colDeux}+1}$ have the same type thus $\rightFUn_0+\vu$ and $Q_{\proSLast{\colDeux}}$ binds together. Then by Fact \ref{fact:full:area}, $Q_{0,\ldots,\proSLast{\colDeux}}(\rightFUn_{0,\ldots,|R|-2}+\vu)$ is a producible path. 

\input{./tikz/Final/CombiningMain/FirstPartOne}

Firstly, suppose that there is no intersection between $\rightFUn_{0,\ldots,|\rightFUn|-2}$ and $P_{\last+1,\ldots,|P|-1}$. By definition of $\rightFUn$, we have $\rightFUn_{|\rightFUn|-2}=P_{\last}$ then $\rightFUn_{|\rightFUn|-2}$ and $P_{\last+1}$ binds together (which also true for $\rightFUn_{|\rightFUn|-2}+\vu$ and $P_{\last+1}+\vu$). Let $0 \leq \lastcDeux \leq |Q|-1$ be the index of the last glue of $Q$ on glue column $\colDeux$. Since $\glueQ{\proSLast{\colDeux}}{\proSLast{\colDeux}+1}$ is on glue column $\colDeux$, we have $\proSLast{\colDeux} \leq \lastcDeux$. Then, it is possible to apply Lemma \ref{Uturn:Conc} on $Q_{0,\ldots,\proSLast{\colDeux}}(\rightFUn+\vu)$ and $P_{\last+2,\ldots,|P|-1}+\vu$ (the bound is satisfied since $\Bfour(\colDeux)=\Bfour(\colUn)+(c'-c)$, see Fact \ref{fact:true:Bthree}). We obtain that $Q_{0,\ldots,\proSLast{\colDeux}}(\rightFUn+\vu)(P_{\last+2,\ldots,|P|-1}+\vu)$ is a subassembly of $\uniterm$. Nevertheless, since $P$ is canonical for glue column $c$ then by definition $x_{P_{|P|-1}}=e_\uniterm$ and then $x_{P_{|P|-1}+\vu}>e_\uniterm$. This is a contradiction. 

Secondly, suppose that there are some intersections between $\rightFUn_{0,\ldots,|\rightFUn|-2}$ and $P_{\last+1,\ldots,|P|-1}$, see Figure \ref{fig:Final:Combining:FirstPartTwo}. Since, $P$ and $P_{0,\ldots,\mainUn{\goBUn}}\rightFUn$ are producible path, then all intersections between $\rightFUn_{0,\ldots,|\rightFUn|-2}$ and $P_{\last+1,\ldots,|P|-1}$ are agreements. Let $$u=\max\{i: \pos{P_i} \text{ is occupied by a tile of } \rightFUn \}$$ and let $u'$ such that $\rightFUn_{u'}=P_u$. We are in case where $u \geq \last+1$ and by definition of $u$, $\rightFUn_{0,\ldots,u'}P_{u+1,\ldots,|P|-1}$ is a path. For the sake of contradiction suppose that $\rightFUn_{0,\ldots,u'}P_{u+1,\ldots,|P|-1}$ is more left-priority than $\rightFUn$ then $P_{u+1}$ is inside a finite area delimited by $\rightFUn$ and $P_{\mainUn{\goBUn},\ldots,\last+1}$, see Figure \ref{fig:Final:Combining:FirstPartTwo}. Nevertheless, since $x_{P_{|P|-1}}=e_{\uniterm}>\Bfour(\colUn)+0.5$ and $\Bfour(\colUn)+0.5=\max\{e_\rightFUn,e_{P_{\mainUn{\goBUn},\ldots,\last+1}}\}$ then $P_{u+1,\ldots,|P|-1}$ must intersect with either $P_{\mainUn{\goBUn},\ldots,\last+1}$ (which cannot occur since $P$ is self-avoiding) or $\rightFUn$ (which cannot occur by definition of $u$). This is a contradiction and then $\rightFUn_{0,\ldots,u'}P_{u+1,\ldots,|P|-1}$ is more right-priority than $\rightFUn$. Since $(\rightFUn+\vu)$ is inside $\areaResDeux{\proSLast{\colDeux}}$ then $Q_{\proSLast{\colDeux}}(\rightFUn_{0,\ldots,u'}P_{u+1,\ldots,|P|-1})+\vu$ is more right-priority than $Q_{\proSLast{\colDeux}}\rightFDeux$, see Figure \ref{fig:Final:Combining:FirstPartThree}. Note that $u\geq \last$ and then $w_{P_{u,\ldots,|P|-1}}>\colUn$ (implying that $w_{P_{u,\ldots,|P|-1}}+\vu>c'$). Then, by Lemma \ref{lem:full:turnRestrainedVis} (if $\glueQ{\proSLast{\colDeux}}{\proSLast{\colDeux}+1}$ is visible from the south) or Lemma \ref{lem:full:turnRestrained} (if $\glueQ{\proSLast{\colDeux}}{\proSLast{\colDeux}+1}$ is not visible from the south), $(\rightFUn_{0,\ldots,u'}P_{u+1,\ldots,|P|-1})+\vu$ can leave $\areaResDeux{\proSLast{\colDeux}}$ only by intersecting with glue column $\Bfour(\colDeux)$. Let $u\leq v \leq |P|-1$ such that $\glueP{v}{v+1}$ is the first glue of $(\rightFUn_{0,\ldots,u'}P_{u+1,\ldots,|P|-1})+\vu$ on glue column $\Bfour(\colDeux)$. then $(\rightFUn_{0,\ldots,u'}P_{u+1,\ldots,v})+\vu$ is inside $\areaResDeux{\proSLast{\colDeux}}$ and $Q_{0,\ldots,\proSLast{\colDeux}}(\rightFUn_{0,\ldots,u'}P_{u+1,\ldots,v})+\vu$ is producible by Fact \ref{fact:full:area}. We can now apply Lemma \ref{Uturn:Conc} on  $Q_{0,\ldots,\proSLast{\colDeux}}(\rightFUn_{0,\ldots,u'}P_{u+1,\ldots,v+1})+\vu$ and $(P_{v+2,\ldots,|P|-1})+\vu$  and  we obtain that $Q_{0,\ldots,\proSLast{\colDeux}}(\rightFUn_{0,\ldots,u'}P_{u+1,\ldots,|P|-1})+\vu$ is producible. Nevertheless, since $P$ is canonical for glue column $\colUn$ then by definition $x_{P_{|P|-1}}=e_\uniterm$ and then $x_{P_{|P|-1}+\vu}>e_\uniterm$. This is a contradiction. 

\begin{figure}
\center
\begin{tikzpicture}[x=0.2cm,y=0.2cm]

\fill[fill=orange!30!white, draw opacity=0.8] (15,20.5) -| (23.5,24.5) -| (12.5,30.5) -| (18.5,38.5) -| (51,15.5) -| (22.5,16.5) -| (15,20.5);
\fill[fill=yellow!30!white, draw opacity=0.8] (15,20.5) -| (23.5,24.5) -| (12.5,30.5) -| (18.5,27.5) -| (23.5,29.5) -| (28.5,33.5) -| (47.5,38.5) -| (51,15.5) -| (22.5,16.5) -| (15,20.5);

\draw[thick] (14.5,20.5) -| (23.5,24.5) -| (12.5,30.5) -| (18.5,27.5) -| (23.5,29.5) -| (28.5,33.5) -| (47.5,38.5) -| (51.5,38.5); 
\draw[very thick] (14.5,20.5) -| (23.5,24.5) -| (12.5,30.5) -| (18.5,38.5) -| (53.5,35.5) -| (41.5,30.5) -| (56.5,30.5);
\draw[thick] (15,16.5) -| (22.5,15.5) -| (51.5,15.5);



\tile{14}{20}

\tileor{15}{20}
\tileor{16}{20}
\tileor{17}{20}
\tileor{18}{20}
\tileor{19}{20}
\tileor{20}{20}
\tileor{21}{20}
\tileor{22}{20}
\tileor{23}{20}
\tileor{23}{21}
\tileor{23}{22}
\tileor{23}{23}
\tileor{23}{24}
\tileor{22}{24}
\tileor{21}{24}
\tileor{20}{24}
\tileor{19}{24}
\tileor{18}{24}
\tileor{17}{24}
\tileor{16}{24}
\tileor{15}{24}
\tileor{14}{24}
\tileor{13}{24}
\tileor{12}{24}
\tileor{12}{25}
\tileor{12}{26}
\tileor{12}{27}
\tileor{12}{28}
\tileor{12}{29}
\tileor{12}{30}
\tileor{13}{30}
\tileor{14}{30}
\tileor{15}{30}
\tileor{16}{30}
\tileor{17}{30}
\tileor{18}{30}
\tileor{18}{31}
\tileor{18}{32}
\tileor{18}{33}
\tileor{18}{34}
\tileor{18}{35}
\tileor{18}{36}
\tileor{18}{37}
\tileor{18}{38}
\tileor{19}{38}
\tileor{20}{38}
\tileor{21}{38}
\tileor{22}{38}
\tileor{23}{38}
\tileor{24}{38}
\tileor{25}{38}
\tileor{26}{38}
\tileor{27}{38}
\tileor{28}{38}
\tileor{29}{38}
\tileor{30}{38}
\tileor{31}{38}
\tileor{32}{38}
\tileor{33}{38}
\tileor{34}{38}
\tileor{35}{38}
\tileor{36}{38}
\tileor{37}{38}
\tileor{38}{38}
\tileor{39}{38}
\tileor{40}{38}
\tileor{41}{38}
\tileor{42}{38}
\tileor{43}{38}
\tileor{44}{38}
\tileor{45}{38}
\tileor{46}{38}
\tileor{47}{38}
\tileor{48}{38}
\tileor{49}{38}
\tileor{50}{38}
\tileor{51}{38}

\doty{15}{20}
\doty{16}{20}
\doty{17}{20}
\doty{18}{20}
\doty{19}{20}
\doty{20}{20}
\doty{21}{20}
\doty{22}{20}
\doty{23}{20}
\doty{23}{21}
\doty{23}{22}
\doty{23}{23}
\doty{23}{24}
\doty{22}{24}
\doty{21}{24}
\doty{20}{24}
\doty{19}{24}
\doty{18}{24}
\doty{17}{24}
\doty{16}{24}
\doty{15}{24}
\doty{14}{24}
\doty{13}{24}
\doty{12}{24}
\doty{12}{25}
\doty{12}{26}
\doty{12}{27}
\doty{12}{28}
\doty{12}{29}
\doty{12}{30}
\doty{13}{30}
\doty{14}{30}
\doty{15}{30}
\doty{16}{30}
\doty{17}{30}
\doty{18}{30}
\doty{18}{29}
\doty{18}{28}
\doty{18}{27}
\doty{19}{27}
\doty{20}{27}
\doty{21}{27}
\doty{22}{27}
\doty{23}{27}
\doty{23}{28}
\doty{23}{29}
\doty{24}{29}
\doty{25}{29}
\doty{26}{29}
\doty{27}{29}
\doty{28}{29}
\doty{28}{30}
\doty{28}{31}
\doty{28}{32}
\doty{28}{33}
\doty{29}{33}
\doty{30}{33}
\doty{31}{33}
\doty{32}{33}
\doty{33}{33}
\doty{34}{33}
\doty{35}{33}
\doty{36}{33}
\doty{37}{33}
\doty{38}{33}
\doty{39}{33}
\doty{40}{33}
\doty{41}{33}
\doty{42}{33}
\doty{43}{33}
\doty{44}{33}
\doty{45}{33}
\doty{46}{33}
\doty{47}{33}
\doty{47}{34}
\doty{47}{35}
\doty{47}{36}
\doty{47}{37}
\doty{47}{38}
\doty{48}{38}
\doty{49}{38}
\doty{50}{38}
\doty{51}{38}

\dotmb{15}{16}
\dotmb{16}{16}
\dotmb{17}{16}
\dotmb{18}{16}
\dotmb{19}{16}
\dotmb{20}{16}
\dotmb{21}{16}
\dotmb{22}{16}
\dotmb{22}{15}
\dotmb{23}{15}
\dotmb{24}{15}
\dotmb{25}{15}
\dotmb{26}{15}
\dotmb{27}{15}
\dotmb{28}{15}
\dotmb{29}{15}
\dotmb{30}{15}
\dotmb{31}{15}
\dotmb{32}{15}
\dotmb{33}{15}
\dotmb{34}{15}
\dotmb{35}{15}
\dotmb{36}{15}
\dotmb{37}{15}
\dotmb{38}{15}
\dotmb{39}{15}
\dotmb{40}{15}
\dotmb{41}{15}
\dotmb{42}{15}
\dotmb{43}{15}
\dotmb{44}{15}
\dotmb{45}{15}
\dotmb{46}{15}
\dotmb{47}{15}
\dotmb{48}{15}
\dotmb{49}{15}
\dotmb{50}{15}
\dotmb{51}{15}

\tiler{52}{38}
\tiler{53}{38}
\tiler{53}{37}
\tiler{53}{36}
\tiler{53}{35}
\tiler{52}{35}
\tiler{51}{35}
\tiler{50}{35}
\tiler{49}{35}
\tiler{48}{35}
\tiler{47}{35}
\tiler{46}{35}
\tiler{45}{35}
\tiler{44}{35}
\tiler{43}{35}
\tiler{42}{35}
\tiler{41}{35}
\tiler{41}{34}
\tiler{41}{33}
\tiler{41}{32}
\tiler{41}{31}
\tiler{41}{30}
\tiler{42}{30}
\tiler{43}{30}
\tiler{44}{30}
\tiler{45}{30}
\tiler{46}{30}
\tiler{47}{30}
\tiler{48}{30}
\tiler{49}{30}
\tiler{50}{30}
\tiler{51}{30}
\tiler{52}{30}
\tiler{53}{30}
\tiler{54}{30}
\tiler{55}{30}
\tiler{56}{30}

\path [dotted, draw, thin] (0,13) grid[step=0.2cm] (60,43);

\draw [dashed] (15,13) -| (15,43);
\draw [dashed] (51,13) -| (51,43);
\draw [dashed] (56.5,13) -| (56.5,43);

\fill (14.5,30.5) circle (0.16);
\node (D) at (14,32.2) {$P_{\lastc}$};

\fill (56.5,30.5) circle (0.16);
\node (D) at (56.5,32.2) {$P_{|P|-1}$};

\fill (51.5,15.5) circle (0.16);
\node (D) at (53.2,15.5) {$S_{0}$};

\fill (50.5,38.5) circle (0.16);
\node (D) at (50,40.2) {$P_{\last}$};

\fill (41.5,33.5) circle (0.16);
\node (D) at (45.7,32.2) {$P_{u}=\rightF_{u'}$};

\fill (15.5,16.5) circle (0.16);
\node (D) at (15.5,15) {$S_{e}$};

\fill (14.5,20.5) circle (0.16);
\node (D) at (13.5,18.8) {$P_{\main{\goB}}$};

\end{tikzpicture}
\caption{Proof of Lemma \ref{lem:full:assWrongArc}, part (2/3): The red tiles represent $P_{\last+2,\ldots,|P|-1}$, the end of path $P$ from Figure \ref{fig:Final:Combining:Pexample}. They intersect with $\rightF$ (the yellow dots). We fuse the yellow dots and the red tiles at $P_{u}=\rightF_{u'}$. This new path turns right of $\rightF$. Otherwise, $P_{u,\ldots,|P|-1}$ would be inside the orange area but this area cannot contain $P_{|P|-1}$ which is on column~$e_\uniterm$. }
\label{fig:Final:Combining:FirstPartTwo}
\end{figure}
\input{./tikz/Final/CombiningMain/FirstPartThree}
\end{proof}

The following lemma corresponds to the reasoning of the roadmap where we explore the consequence of the first case of Fact \ref{fact:turn}. Here, this is the first step toward assembling a new shield for path $P$.

\begin{lemma}
\label{lem:full:goodArc}
Consider the notations \ref{notation:full:combi}, then $P_{0,\ldots,\mainUn{\goBUn}}(\rightFDeux-\vu)$ is producible. Moreover, there exists $0\leq a \leq |\rightFUn|-1$ such that $(\rev{\rightFDeux_{a,\ldots,|\rightFDeux|-1}}-\vu)\rightFUn_{a+1,|\rightFUn|-1}$ is a negative upward (resp. downward) arc of glue column $\Bfour(\colUn)$ and its interior is included into $\areaRes{\mainUn{\goBUn}}$. 
\end{lemma}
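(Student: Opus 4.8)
The statement has two parts; I would first establish that $P_{0,\ldots,\mainUn{\goBUn}}(\rightFDeux-\vu)$ is producible, and then read the arc off the way $\rightFDeux-\vu$ sits relative to $\rightFUn$.

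For producibility, the plan is to prove that $\rightFDeux_{0,\ldots,|\rightFDeux|-2}-\vu$ lies inside the protected area $\areaUn{\mainUn{\goBUn}}$. Since $\glueP{\mainUn{\goBUn}}{\mainUn{\goBUn}+1}$ and $\glueQ{\proSLast{\colDeux}}{\proSLast{\colDeux}+1}$ have the same type and both point east, the tile $\rightFDeux_0-\vu$ occupies the position of $P_{\mainUn{\goBUn}+1}$ and binds to $P_{\mainUn{\goBUn}}$; so once the containment is known, Fact \ref{fact:full:area} (and its visible-glue analogue when $\goBUn=0$) gives that $P_{0,\ldots,\mainUn{\goBUn}}(\rightFDeux_{0,\ldots,|\rightFDeux|-2}-\vu)$ is a producible path, and the last tile $\rightFDeux_{|\rightFDeux|-1}-\vu$, which lies on column $\Bshield{\colUn}+1$, is appended for free, because by Lemma \ref{lem:can:bound} and Fact \ref{fact:function} the prefix $P_{0,\ldots,\mainUn{\goBUn}}$, a prefix of $P_{0,\ldots,\lastc}$, stays strictly west of column $\Bshield{\colUn}$ while no other tile of $\rightFDeux-\vu$ reaches column $\Bshield{\colUn}+1$. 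To get the containment, take the shortest prefix of $\rightFDeux-\vu$ leaving $\areaUn{\mainUn{\goBUn}}$ and inspect which piece of the boundary it crosses — the east wall on column $\Bshield{\colUn}+1$, the border $\bord{\goBUn}$, the initial segment $S_{0,\ldots,\enSh}$ of the full-shield protecting $\glueP{\mainUn{\goBUn}}{\mainUn{\goBUn}+1}$ when $\goBUn>0$, or the subpath $P_{\mainUn{\goBUn},\ldots,\lastUn+1}$. The east wall is excluded since $\rightFDeux-\vu$ meets column $\Bshield{\colUn}+1$ only at its last tile; crossing $\bord{\goBUn}$ is excluded because the vertical line carrying $\bord{\goBUn}$ is the $\vu$-translate of the vertical line that bounds $\areaDeux{\proSLast{\colDeux}}$ on the west, which $\rightFDeux$ does not cross. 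In the two remaining cases, the offending prefix minus its last tile is producible from $P_{0,\ldots,\mainUn{\goBUn}}$ by the argument above, so directedness forces that last tile to agree with the tile of $P$, resp.\ of $S$, it lands on; this traps the tail of $\rightFDeux-\vu$ inside $\asm{P}$, inside an interior $\inter{i}$ of the decomposition, or inside the interior of the negative arc formed by $S$ and its suffix of $P$, and to reach column $\Bshield{\colUn}+1$ the tail must re-cross that same boundary, forcing a further agreement with a tile of $P$ or $S$ of width exceeding $\colUn$ and contradicting either Corollary \ref{cor:decompo:cano} (for $P$) or the minimality of $S$ as a prefix of the minimum hole of that arc (for $S$). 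The last scenario, in which $\rightFDeux-\vu$ runs cleanly to column $\Bshield{\colUn}+1$ while lying outside $\areaUn{\mainUn{\goBUn}}$, would certify that $\rightFUn+\vu$ lies inside $\areaResDeux{\proSLast{\colDeux}}$, contradicting Lemma \ref{lem:full:assWrongArc}. Hence $\rightFDeux_{0,\ldots,|\rightFDeux|-2}-\vu$ lies in $\areaUn{\mainUn{\goBUn}}$, and producibility follows.

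With producibility established, $\rightFDeux-\vu$ and $\rightFUn$ are both producible branches from $P_{\mainUn{\goBUn}+1}$, hence agree on their common positions; let $a$ be the length of their longest common prefix. If $\rightFDeux-\vu$ turned left of $\rightFUn$ at index $a$, then $\rightFUn+\vu$ would turn right of $\rightFDeux$ and, by Lemma \ref{lem:full:turnRestrained} applied to $Q$ (using that $\rightFUn+\vu$ stays weakly west of column $\Bshield{\colDeux}$ except at its final tile), $\rightFUn_{0,\ldots,|\rightFUn|-2}+\vu$ would lie inside $\areaResDeux{\proSLast{\colDeux}}$, contradicting Lemma \ref{lem:full:assWrongArc}; so $\rightFDeux-\vu$ turns right of $\rightFUn$ at index $a$, and Lemma \ref{lem:full:turnRestrained} (or Lemma \ref{lem:full:turnRestrainedVis} when $\goBUn=0$) applied to $P$ confines $\rightFDeux-\vu$, up to its last tile, to $\areaResUn{\mainUn{\goBUn}}$. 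Now put $\Hole=(\rev{\rightFDeux_{a,\ldots,|\rightFDeux|-1}}-\vu)\rightFUn_{a+1,\ldots,|\rightFUn|-1}$: its two halves meet only at the junction tile $\rightFDeux_a-\vu=\rightFUn_a$, and since one half is $\rightFUn$ from index $a$ onward while the other stays strictly east of $\rightFUn$, they do not meet again, so $\Hole$ is simple. Its endpoints $Q_{\lastDeux+1}-\vu$ and $P_{\lastUn+1}$ are its only tiles on column $\Bshield{\colUn}+1$, and its only glues on column $\Bshield{\colUn}$ are the first and the last, so $\Hole$ is a negative arc of column $\Bshield{\colUn}$, upward or downward according to the orientation of the decomposition of $P$. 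Finally the region enclosed by $\Hole$ and its frontier is sandwiched between $\rightFUn$, which delimits $\areaResUn{\mainUn{\goBUn}}$, and $\rightFDeux-\vu$, which lies in $\areaResUn{\mainUn{\goBUn}}$, so its interior is contained in $\areaResUn{\mainUn{\goBUn}}=\areaRes{\mainUn{\goBUn}}$.

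The hard part is the containment argument in the first half: its clean cases (the east wall, the west border) are routine, but the interaction of the transplanted path with $P$ and with the shield $S$ is where the work lies, combining directedness with the minimality built into canonical paths (Corollary \ref{cor:decompo:cano}) and into shields, and then invoking Lemma \ref{lem:full:assWrongArc} to rule out the final escape. A secondary technicality, in the arc step, is verifying that $\Hole$ is genuinely an arc — that is, that $\rightFUn$ and $\rightFDeux$ each reach their respective exit columns only once before the end — which one extracts from the definition of the practical form.
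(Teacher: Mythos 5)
Your overall architecture is right and shares the paper's key ingredients (translating $\rightFDeux$ by $-\vu$, Lemma \ref{lem:full:assWrongArc}, Lemmas \ref{lem:full:turnRestrainedVis}/\ref{lem:full:turnRestrained}, and the non-intersection argument that turns the two practical forms into a negative arc), but there are two concrete gaps. The first is in your second half: to rule out ``$\rightFDeux-\vu$ turns left of $\rightFUn$'' you apply Lemma \ref{lem:full:turnRestrained} to $Q$ and conclude that $\rightFUn_{0,\ldots,|\rightFUn|-2}+\vu$ lies in $\areaResDeux{\proSLast{\colDeux}}$, but that lemma has \emph{three} outcomes, and you only discharge the easy one ($e>\Bshield{\colDeux}$). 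The remaining outcome --- $\rightFUn+\vu$ escaping by crossing the vertical segment below $\glueQ{\proSLast{\colDeux}}{\proSLast{\colDeux}+1}$ --- is the technical heart of the paper's proof: one must extract from such a crossing a positive arc $\rightFUn_{u,\ldots,v+1}$ on column $\colUn$ whose westward glue lies strictly south of $P_{\main{\goB}}$, and then contradict either the definition of $\rightFUn$ (when $\goB=0$, via the southward glue ray of $\glueP{\main{0}}{\main{0}+1}$) or Lemma \ref{lem:full:visible} / the vertical segment to $\glueS{\enSh}{\enSh+1}$ (when $\goB>0$). You perform the mirror image of this argument in your first half (your ``border'' case), but it is missing exactly where the paper needs it, so as written the left-turn case is not closed.

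The second gap is in your containment argument. The paper never does a boundary analysis against the protected area $\area{\main{\goB}}$, whose path-boundary is $P_{\main{\goB},\ldots,\last+1}$; it first establishes the right-priority relation ($\rightFDeux-\vu$ more right-priority than $\rightFUn$, after eliminating the other two branches of the trichotomy) and only then applies Lemma \ref{lem:full:turnRestrained} to confine $\rightFDeux_{0,\ldots,|\rightFDeux|-2}-\vu$ to the \emph{restrained} area $\areaRes{\main{\goB}}$, whose path-boundary is $\rightFUn$ itself --- so any escape through the path contradicts the right-priority maximality of $\rightFUn$ directly. Your direct escape-through-$P_{\main{\goB},\ldots,\last+1}$ case is where this shortcut is unavailable: landing on a tile of $P$ by agreement and stepping to its far side is not obviously blocked by Corollary \ref{cor:decompo:cano} (which only controls branches that turn right strictly inside a dominant arc of column $\colUn$, and only forbids them from re-intersecting $P$, not from reaching column $\Bshield{\colUn}+1$), and your ``last scenario'' --- that $\rightFDeux-\vu$ reaching column $\Bshield{\colUn}+1$ outside $\area{\main{\goB}}$ would put $\rightFUn+\vu$ inside $\areaResDeux{\proSLast{\colDeux}}$ --- does not follow from anything stated; Lemma \ref{lem:full:assWrongArc} is used in the paper only to exclude $\rightFUn_{0,\ldots,|\rightFUn|-2}+\vu$ being a prefix of $\rightFDeux$ (and, via the Case-A argument above, being more right-priority). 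Reorganizing your proof to establish the right-priority comparison first, as the paper does, would make the containment follow from Lemma \ref{lem:full:turnRestrained} and remove the need for the ad hoc boundary analysis.
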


\begin{proof}
Without loss of generality, we suppose that $\glueP{\main{\goB}}{\main{\goB}+1}$ and $\glueQ{\proSLast{\colDeux}}{\proSLast{\colDeux}+1}$ are southern main glues. 
We remind that by definition of $\rightFUn_0$, we have $\rightFUn_0=P_{\main{\goB}+1}$. Also, by definition of $\vu$, $\pos{\rightFUn_0}+\vu=\pos{Q_\proSLast{\colDeux}+1}$. By hypothesis, $\glueP{\main{\goB}}{\main{\goB+1}}$ and $\glueQ{\proSLast{\colDeux}}{\proSLast{\colDeux}+1}$ have the same type thus $\rightFUn_0+\vu$ and $Q_{\proSLast{\colDeux}}$ binds together. Since the tile assembly system is directed, we have $\rightFUn_0+\vu=\rightFDeux_0$. By Lemma \ref{lem:full:assWrongArc}, $(\rightFUn_{0,\ldots,|\rightFUn|-2}+\vu)$ is not inside $\areaResDeux{\proSLast{\colDeux}}$. Thus $(\rightFUn_{0,\ldots,|\rightFUn|-2}+\vu)$ cannot be a prefix of $\rightFDeux$. 

For the sake of contradiction, suppose that $Q_{\proSLast{\colDeux}}(\rightFUn+\vu)$ is more right-priority than $Q_{\proSLast{\colDeux}}\rightFDeux$, see Figure \ref{fig:Final:Combining:NewPathP} for a similar reasoning to show that $\rightFDeux-\vu$ does not cross the segment between $\glueP{\main{\goB}}{\main{\goB}+1}$ and $\glueS{e}{e+1}$. Consider $v$ such that $(\rightFUn_{0,\ldots,v}+\vu)$ is the largest prefix of $(\rightFUn+\vu)$ which is inside $\areaResDeux{\proSLast{\colDeux}}$. Since $(\rightFUn_{0,\ldots,|\rightFUn|-2}+\vu)$ is not inside $\areaResDeux{\proSLast{\colDeux}}$ then $v<|\rightFUn|-2$. Moreover, since $(\rightFUn_{0,\ldots,|\rightFUn|-2}+\vu)< \Bfour(\colDeux)$, then by Lemma \ref{lem:full:turnRestrained} or Lemma \ref{lem:full:turnRestrainedVis}, $\glueR{v}{v+1}$ is on the vertical ray starting at $\glueQ{\proSLast{\colDeux}}{\proSLast{\colDeux}+1}$ and going south. Note that this glue should point west to leave $\areaResDeux{\proSLast{\colDeux}}$. Then, consider $0 \leq u\leq v$ such that $(\rightFUn_{u,\ldots,v+1}+\vu)$ is a positive arc on column $\colDeux$. If $\glueQ{\proSLast{\colDeux}}{\proSLast{\colDeux}+1}$ is visible from the south then $\glueR{v}{v+1}$ is on the glue ray of $\glueQ{\proSLast{\colDeux}}{\proSLast{\colDeux}+1}$. The $\glueR{u}{u+1}$ cannot be on this glue ray by minimality of $v$ and $(\rightFUn_{u,\ldots,v+1}+\vu)$ must be a downward arc. If $\glueQ{\proSLast{\colDeux}}{\proSLast{\colDeux}+1}$ is not visible from the south then it is protected by the beginning $S'_{0,\ldots,e'}$ of a shield $S'$. Then, $(\rightFUn_{u,\ldots,v+1}+\vu)$ must be a downward arc to avoid $S'_{0,\ldots,e'}$. In all cases,  $(\rightFUn_{u,\ldots,v+1}+\vu)$ is a downward arc.   Then $\rightFUn_{u,\ldots,v+1}$ is a positive downward arc on column $\colUn$ such that $y_{\rightFUn_v}<y_{P_{\main{\goB}}}\leq y_{\rightFUn_u}$.
If $\goB=0$ then $\rightFUn_{u,\ldots,v+1}$ would intersect the glue ray of $\glueP{\main{0}}{\main{0}+1}$ contradicting the definition of $R$. If $\goB>0$ then there exists a shield $S$ and $0\leq e \leq |S|-1$ such that $\glueP{\main{\goB}}{\main{\goB}+1}$ is protected by the end $S_{0,\ldots,e}$ of shield $S$. In this case, either $\rightFUn_{u,\ldots,v+1}$ intersects $S_{0,\ldots,e}$ (which is a contradiction by Lemma \ref{lem:full:visible}) or $\rightFUn_{u,\ldots,v+1}$ leaves $\areaResUn{\main{\goB}}$ by crossing the vertical segment between $\glueP{\main{\goB}}{\main{\goB}+1}$ and $\glueS{e}{e+1}$ (contradicting the definition of $R$).

The last possible case is that $\rightFDeux$ is more right-priority than $(\rightFUn+\vu)$, see Figure \ref{fig:Final:Combining:NewPathP}. This means that $\rightFDeux-\vu$ is more right-priority than $\rightFUn$. Remark that the paths $\rightFUn$ and $\rightFDeux$ and the protected areas $\areaResUn{\main{\goBUn}}$ and $\areaResDeux{\proSLast{\colDeux}}$ have the same properties. Then with the same kind of reasoning, it is possible to prove that $\rightFDeux_{0,\ldots,|\rightFDeux|-2}-\vu$ is inside $\areaRes{\main{\goBUn}}$. By Fact \ref{fact:full:area}, we obtain that $P_{0,\ldots,\mainUn{\goBUn}}(\rightFDeux_{0,\ldots,|\rightFDeux|-2}-\vu)$ is producible. Since $x_{\rightFDeux_{|\rightFDeux|-1}-\vu}=\Bfour(\colUn)+0.5$ and $e_{P_{0,\ldots,\mainUn{\goBUn}}(\rightFDeux_{0,\ldots,|\rightFDeux|-2}-\vu)} < \Bfour(\colUn)$ then $P_{0,\ldots,\mainUn{\goBUn}}(\rightFDeux-\vu)$ is also producible. Consider $0\leq a \leq |\rightFUn|-1$ such that $\rightFUn_{0,\ldots,a}$ is the largest common prefix between $\rightFUn$ and $\rightFDeux-\vu$. By Lemma \ref{lem:full:turnRestrainedVis} (if $\goB=0$) or Lemma \ref{lem:full:turnRestrained} (if $\goB>0$), $\rightFUn_{a+1,\ldots,|\rightFUn|-2}$ and $\rightFDeux_{a+1,\ldots,|\rightFDeux|-2}-\vu$ do not intersect. Then $\pos{\rightFUn_{|\rightFUn|-2}}\neq \pos{\rightFDeux_{|\rightFDeux|-2}}-\vu$ and $y_{\rightFUn_{|\rightFUn|-2}}> y_{\rightFDeux_{|\rightFDeux|-2}}-\vu$ (since $x_{\rightFUn_{|\rightFUn|-2}}=x_{\rightFDeux_{|\rightFDeux|-2}-\vu}=\Bfour(c)-0.5$), then we have $\pos{\rightFUn_{|\rightFUn|-1}}\neq \pos{\rightFDeux_{|\rightFDeux|-1}}-\vu$. Then, $\rightFUn_{a+1,\ldots,|\rightFUn|-1}$ and $\rightFDeux_{a+1,\ldots,|\rightFDeux|-1}-\vu$ do not intersect. To conclude, the path $\rev{\rightFDeux_{a,\ldots,|\rightFDeux|-1}-\vu}\rightFUn_{a+1,|\rightFUn|-1}$ is a negative upward arc of column $\Bfour(\colUn)$ and its interior is included into $\areaRes{\mainUn{\goBUn}}$.

\input{./tikz/Final/CombiningMain/PathPExampleTwo}
\end{proof}
 
In Lemma \ref{lem:full:goodArc}, a negative upward arc on column $\Bfour(c)$ which is inside $\areaRes{\mainUn{\goBUn}}$ was found. Now, we focus on the minimum arc in the interior of this arc.
 
 \begin{lemma}
\label{lem:full:MinimalArc}
Consider the notations \ref{notation:full:combi}, there exists an upward (resp. downward) minimum arc $\newA$ on glue column $\Bfour(\colUn)$ such that its interior $\inter{\newA}$ is included into $\areaRes{\mainUn{\goBUn}}$, such that $\newA$ is a subassembly of $\uniterm$, such that $\newA_0=Q_{\lastDeux+1}-\vu$ and $\newA_{|\newA|-1}=P_{\lastUn+1}$.
\end{lemma}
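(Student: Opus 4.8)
The plan is to read off $\newA$ from the arc already built in Lemma~\ref{lem:full:goodArc}, after two routine upgrades: placing it in $\inuniterm$, and replacing it by the minimum hole of its interior. Up to the usual symmetry we take $\glueP{\mainUn{\goBUn}}{\mainUn{\goBUn}+1}$ and $\glueQ{\proSLast{\colDeux}}{\proSLast{\colDeux}+1}$ to be southern main glues, so all arcs in play are upward.

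First I would invoke Lemma~\ref{lem:full:goodArc}: it gives that $P_{0,\ldots,\mainUn{\goBUn}}(\rightFDeux-\vu)$ is producible and that, for a suitable $0\le a\le |\rightFUn|-1$, the path $A=(\rev{\rightFDeux_{a,\ldots,|\rightFDeux|-1}}-\vu)\,\rightFUn_{a+1,\ldots,|\rightFUn|-1}$ is a negative upward arc of column $\Bshield{\colUn}$ whose interior $\inter{A}$ is contained in $\areaResUn{\mainUn{\goBUn}}$. The endpoints of $A$ are exactly those required of $\newA$: by Definition~\ref{def:rightform} the practical form $\rightFDeux$ of $Q$ at index $\proSLast{\colDeux}$ ends with the tile $Q_{\lastDeux+1}$, so $A_0=\rightFDeux_{|\rightFDeux|-1}-\vu=Q_{\lastDeux+1}-\vu$; likewise the practical form $\rightFUn$ of $P$ at index $\mainUn{\goBUn}$ ends with $P_{\lastUn+1}$, so $A_{|A|-1}=\rightFUn_{|\rightFUn|-1}=P_{\lastUn+1}$.

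Next I would check $A\in\inuniterm$. By Definition~\ref{def:rightform}, $P_{0,\ldots,\mainUn{\goBUn}}\rightFUn$ is producible, so $\rightFUn$, a suffix of a producible path, lies in $\inuniterm$; and by Lemma~\ref{lem:full:goodArc}, $P_{0,\ldots,\mainUn{\goBUn}}(\rightFDeux-\vu)$ is producible, so $\rightFDeux-\vu\in\inuniterm$ as well. Every tile of $A$ is a tile of $\rightFUn$ or of $\rightFDeux-\vu$, hence occurs at its position in the terminal assembly $\uniterm$ (well defined since the system is directed); as $A$ is already known to be a path, $\asm{A}\sqsubseteq\uniterm$, so $A\in\inuniterm$. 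Then I would upgrade $A$ to a minimum arc. If $\prodHole{\inter{A}}=\{A\}$, set $\newA=A$; otherwise set $\newA=\min(\inter{A})$, the left- or right-priority path of $\prodHole{\inter{A}}$ according to which side $\inter{A}$ lies on, which is a minimum hole of $\inuniterm$ with $\inter{\newA}\subseteq\inter{A}\subseteq\areaResUn{\mainUn{\goBUn}}$. In either case $\newA$ starts with $A_0A_1$ and ends with $A_{|A|-2}A_{|A|-1}$ by definition of $\prodHole{\inter{A}}$, so $\newA_0=Q_{\lastDeux+1}-\vu$, $\newA_{|\newA|-1}=P_{\lastUn+1}$, and the first and last glues of $\newA$ are the two glues of $A$ on column $\Bshield{\colUn}$. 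Since $A$ is an arc of column $\Bshield{\colUn}$, its interior meets that column only along its frontier $\cork{A}$; as $\newA$ is confined to $\inter{A}$, these are its only glues on column $\Bshield{\colUn}$, so $\newA$ is an arc of column $\Bshield{\colUn}$, negative (its body stays on the same side of that column as the body of $A$), and upward since $y_{\newA_0}=y_{A_0}<y_{A_{|A|-1}}=y_{\newA_{|\newA|-1}}$. The downward case is symmetric.

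The substantive combinatorics are already carried out in Lemmas~\ref{lem:full:assWrongArc} and~\ref{lem:full:goodArc}; the present statement is a clean wrap-up. The only point requiring a little care is the last one: verifying that passing from $A$ to $\min(\inter{A})$ keeps it an \emph{arc} of column $\Bshield{\colUn}$ rather than a mere hole. This rests on the fact that the interior of a negative arc of column $\Bshield{\colUn}$ is confined to one side of that column, so that no path inside $\inter{A}$ can acquire a third glue on column $\Bshield{\colUn}$.
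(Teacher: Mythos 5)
Your proposal is correct and follows essentially the same route as the paper: invoke Lemma~\ref{lem:full:goodArc} to obtain the negative arc with endpoints $Q_{\lastDeux+1}-\vu$ and $P_{\lastUn+1}$ and interior in $\areaResUn{\mainUn{\goBUn}}$, then pass to $\min$ of its interior. The extra verifications you supply (membership in $\inuniterm$ via the two producible paths, and preservation of the arc property and endpoints under $\min$) are details the paper leaves implicit, and they are sound.
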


\begin{proof}
By  Lemma \ref{lem:full:goodArc}, there exists a negative upward (resp. downward) arc $B$ of glue column $\Bfour(\colUn)$ such $B_0=\rightFDeux_{|\rightFDeux|-1}-\vu=Q_{\lastDeux+1}-\vu$ and $B_{|B|-1}=\rightFUn_{|\rightFUn|-1}=P_{\lastUn+1}$. Moreover, $B$ is a subassembly of  $\uniterm$ and its interior is included into $\areaRes{\mainUn{\goBUn}}$. The path $\newA$ defined as $\min(\inter{B})$ satisfies the condition of the lemma.
\end{proof}

Under some hypothesis a new shield and a new protected glue can be extracted from the arc $A'$.

\begin{lemma}
\label{lem:full:GoodCase}
Consider the notations \ref{notation:full:combi} and the arc $\newA$ of Lemma \ref{lem:full:MinimalArc}. If $w_{\newA}\leq \colUn$, then either there exists $\goBUn \leq i < \decoUn$ such that $\second{i}$ is protected or there exists $\goBUn < i \leq \decoUn$ such that $\main{i}$ is protected.
 \end{lemma}

\begin{proof}
By definition, we have $\newA_{|\newA|-2}=P_{\last}$, then $\newA$ intersects with $P_{\mainUn{\goBUn},\ldots,\last+1}$. Let $\newS$ be the largest prefix of $\newA$ such that $\newS$ does not intersect $P_{\mainUn{\goBUn},\ldots,\last+1}$. 
Note that, $\newS$ does not intersect with $P_{0,\ldots,\mainUn{\goBUn}-1}$ since the interior of $\newA$ is in $\areaResUn{\main{\goB}}$.
Also, $\newS$ contains at least $\newA_0$ since $x_{\newA_0}=\Bfour(\colUn)+0.5$, $\pos{\newA_0}\neq\pos{P_{\last+1}}$ and $e_{P_{\mainUn{\goBUn},\ldots,\last}}<\Bfour(\colUn)$. 
Since $\newS_{1,\ldots,|\newS|-2}$ is in $\area{\mainUn{\goBUn}}$ and by definition of $\newS$ there exists $\mainUn{\goBUn}<\newProG\leq \last$ such that $\newS P_{\newProG}$ is a prefix of $\newA$.  Then, $\newS P_{\newProG,\ldots,\last+1}$ is also a negative upward arc of glue column $\Bfour(\colUn)$. Moreover, since $\newA$ is minimum then $\newA$ is inside the interior of the arc $\newS P_{\newProG,\ldots,\last+1}$. Then, we have $w_{\newS P_{\newProG,\ldots,\last+1}}\leq w_{\newA} \leq \colUn$. This implies that $\newS$ is a southern shield of $P$ on glue column $\colUn$.  

We remind that $(\main{i})_{0\leq i \leq t}$ and $(\second{i})_{0\leq i < t}$ are the upward decomposition of $P$ into dominants arcs on glue column $c$. Let $(\bord{i})_{0\leq i \leq t+1}$ be the frontiers of this decomposition. We denote by $\mathcal{C}$ the area delimited by $l^{\main{0}}$, $P_{\main{0},\ldots,\last+1}$ and the ray starting at $\glueP{\last}{{\last+1}}$ and going south. Now, either $\newS$ is a full-shield but since $\newS$ is in $\area{\main{\goBUn}}$ then it cannot intersect $\bordUn{i}$ for $i\leq \goBUn$, see Figure \ref{fig:Final:Combining:NewArcFull}. Thus, by Lemma \ref{lem:full:exists} there exists $i>\goBUn$ such that $\main{i}$ is protected and the lemma is true. Now, if $\newS$ is an half-shield then $\newS P_{\newProG}$ is in the east side $\esUn$ of $\mathcal{C}$ and we have $\mainUn{i}<\newProG<\secondUn{i}$ for some  $0\leq i < \decoUn$, see Figure~\ref{fig:Final:Combining:NewArcHalf}. Also, since $\newS_{1,\ldots,|\newS|-1}$ is in $\area{\main{\goBUn}}$, then $i\geq \goBUn$ and the  $\glueP{\secondUn{i}}{\secondUn{i}+1}$ is protected with $\secondUn{i}>\main{\goBUn}$.

\begin{figure}
\center
\begin{tikzpicture}[x=0.2cm,y=0.2cm]


\fill[fill=lightblue, draw opacity=0.8] (51,66.5) -| (9.5,63.5) -| (24.5,60.5) -| (15,39.5) -| (20.5,37.5) -| (30.5,36.5) -| (41.5,41.5) -| (51,41.5);

\draw[very thick] (14.5,20.5) -| (23.5,24.5) -| (6.5,10.5)-| (19.5,6.5) -| (2.5,60.5) -| (24.5,63.5) -| (9.5,66.5) -| (51.5,66.5);
\draw[thick] (51.5,14.5) -| (30.5,12.5) -| (15,12.5);
\draw[thick] (22.5,60.5) -| (22.5,53.5) -|  (10.5,51.5) -| (6.5,39.5) -| (20.5,37.5) -| (30.5,36.5) -| (41.5,41.5) -| (51.5,41.5);




\tile{14}{20}{85}
\tileor{15}{20}{85}
\tileor{16}{20}{85}
\tileor{17}{20}{85}
\tileor{18}{20}{85}
\tileor{19}{20}{85}
\tileor{20}{20}{85}
\tileor{21}{20}{85}
\tileor{22}{20}{85}
\tileor{23}{20}{85}
\tileor{23}{21}{85}
\tileor{23}{22}{85}
\tileor{23}{23}{85}
\tileor{23}{24}{85}
\tileor{22}{24}{85}
\tileor{21}{24}{85}
\tileor{20}{24}{85}
\tileor{19}{24}{85}
\tileor{18}{24}{85}
\tileor{17}{24}{85}
\tileor{16}{24}{85}
\tileor{15}{24}{85}
\tileor{14}{24}{85}
\tileor{13}{24}{85}
\tileor{12}{24}{85}
\tileor{11}{24}{85}
\tileor{10}{24}{85}
\tileor{9}{24}{85}
\tileor{8}{24}{85}
\tileor{7}{24}{85}
\tileor{6}{24}{85}
\tileor{6}{23}{85}
\tileor{6}{22}{85}
\tileor{6}{21}{85}
\tileor{6}{20}{85}
\tileor{6}{19}{85}
\tileor{6}{18}{85}
\tileor{6}{17}{85}
\tileor{6}{16}{85}
\tileor{6}{15}{85}
\tileor{6}{14}{85}
\tileor{6}{13}{85}
\tileor{6}{12}{85}
\tileor{6}{11}{85}
\tileor{6}{10}{85}
\tileor{7}{10}{85}
\tileor{8}{10}{85}
\tileor{9}{10}{85}
\tileor{10}{10}{85}
\tileor{11}{10}{85}
\tileor{12}{10}{85}
\tileor{13}{10}{85}
\tileor{14}{10}{85}
\tileor{15}{10}{85}
\tileor{16}{10}{85}
\tileor{17}{10}{85}
\tileor{18}{10}{85}
\tileor{19}{10}{85}
\tileor{19}{9}{85}
\tileor{19}{8}{85}
\tileor{19}{7}{85}
\tileor{19}{6}{85}
\tileor{18}{6}{85}
\tileor{17}{6}{85}
\tileor{16}{6}{85}
\tileor{15}{6}{85}
\tileor{14}{6}{85}
\tileor{13}{6}{85}
\tileor{12}{6}{85}
\tileor{11}{6}{85}
\tileor{10}{6}{85}
\tileor{9}{6}{85}
\tileor{8}{6}{85}
\tileor{7}{6}{85}
\tileor{6}{6}{85}
\tileor{5}{6}{85}
\tileor{4}{6}{85}
\tileor{3}{6}{85}
\tileor{2}{6}{85}
\tileor{2}{7}{85}
\tileor{2}{8}{85}
\tileor{2}{9}{85}
\tileor{2}{10}{85}
\tileor{2}{11}{85}
\tileor{2}{12}{85}
\tileor{2}{13}{85}
\tileor{2}{14}{85}
\tileor{2}{15}{85}
\tileor{2}{16}{85}
\tileor{2}{17}{85}
\tileor{2}{18}{85}
\tileor{2}{19}{85}
\tileor{2}{20}{85}
\tileor{2}{21}{85}
\tileor{2}{22}{85}
\tileor{2}{23}{85}
\tileor{2}{24}{85}
\tileor{2}{25}{85}
\tileor{2}{26}{85}
\tileor{2}{27}{85}
\tileor{2}{28}{85}
\tileor{2}{29}{85}
\tileor{2}{30}{85}
\tileor{2}{31}{85}
\tileor{2}{32}{85}
\tileor{2}{33}{85}
\tileor{2}{34}{85}
\tileor{2}{35}{85}
\tileor{2}{36}{85}
\tileor{2}{37}{85}
\tileor{2}{38}{85}
\tileor{2}{39}{85}
\tileor{2}{40}{85}
\tileor{2}{41}{85}
\tileor{2}{42}{85}
\tileor{2}{43}{85}
\tileor{2}{44}{85}
\tileor{2}{45}{85}
\tileor{2}{46}{85}
\tileor{2}{47}{85}
\tileor{2}{48}{85}
\tileor{2}{49}{85}
\tileor{2}{50}{85}
\tileor{2}{51}{85}
\tileor{2}{52}{85}
\tileor{2}{53}{85}
\tileor{2}{54}{85}
\tileor{2}{55}{85}
\tileor{2}{56}{85}
\tileor{2}{57}{85}
\tileor{2}{58}{85}
\tileor{2}{59}{85}
\tileor{2}{60}{85}
\tileor{3}{60}{85}
\tileor{4}{60}{85}
\tileor{5}{60}{85}
\tileor{6}{60}{85}
\tileor{7}{60}{85}
\tileor{8}{60}{85}
\tileor{9}{60}{85}
\tileor{10}{60}{85}
\tileor{11}{60}{85}
\tileor{12}{60}{85}
\tileor{13}{60}{85}
\tileor{14}{60}{85}
\tileor{15}{60}{85}
\tileor{16}{60}{85}
\tileor{17}{60}{85}
\tileor{18}{60}{85}
\tileor{19}{60}{85}
\tileor{20}{60}{85}
\tileor{21}{60}{85}
\tileor{22}{60}{85}
\tileor{23}{60}{85}
\tileor{24}{60}{85}
\tileor{24}{61}{85}
\tileor{24}{62}{85}
\tileor{24}{63}{85}
\tileor{23}{63}{85}
\tileor{22}{63}{85}
\tileor{21}{63}{85}
\tileor{20}{63}{85}
\tileor{19}{63}{85}
\tileor{18}{63}{85}
\tileor{17}{63}{85}
\tileor{16}{63}{85}
\tileor{15}{63}{85}
\tileor{14}{63}{85}
\tileor{13}{63}{85}
\tileor{12}{63}{85}
\tileor{11}{63}{85}
\tileor{10}{63}{85}
\tileor{9}{63}{85}
\tileor{9}{64}{85}
\tileor{9}{65}{85}
\tileor{9}{66}{85}
\tileor{10}{66}{85}
\tileor{11}{66}{85}
\tileor{12}{66}{85}
\tileor{13}{66}{85}
\tileor{14}{66}{85}
\tileor{15}{66}{85}
\tileor{16}{66}{85}
\tileor{17}{66}{85}
\tileor{18}{66}{85}
\tileor{19}{66}{85}
\tileor{20}{66}{85}
\tileor{21}{66}{85}
\tileor{22}{66}{85}
\tileor{23}{66}{85}
\tileor{24}{66}{85}
\tileor{25}{66}{85}
\tileor{26}{66}{85}
\tileor{27}{66}{85}
\tileor{28}{66}{85}
\tileor{29}{66}{85}
\tileor{30}{66}{85}
\tileor{31}{66}{85}
\tileor{32}{66}{85}
\tileor{33}{66}{85}
\tileor{34}{66}{85}
\tileor{35}{66}{85}
\tileor{36}{66}{85}
\tileor{37}{66}{85}
\tileor{38}{66}{85}
\tileor{39}{66}{85}
\tileor{40}{66}{85}
\tileor{41}{66}{85}
\tileor{42}{66}{85}
\tileor{43}{66}{85}
\tileor{44}{66}{85}
\tileor{45}{66}{85}
\tileor{46}{66}{85}
\tileor{47}{66}{85}
\tileor{48}{66}{85}
\tileor{49}{66}{85}
\tileor{50}{66}{85}
\tileor{51}{66}{85}

\dotlb{22}{60}{85}
\dotlb{22}{59}{85}
\dotlb{22}{58}{85}
\dotlb{22}{57}{85}
\dotlb{22}{56}{85}
\dotlb{22}{55}{85}
\dotlb{22}{54}{85}
\dotlb{22}{53}{85}
\dotlb{21}{53}{85}
\dotlb{20}{53}{85}
\dotlb{19}{53}{85}
\dotlb{18}{53}{85}
\dotlb{17}{53}{85}
\dotlb{16}{53}{85}
\dotlb{15}{53}{85}
\dotlb{14}{53}{85}
\dotlb{13}{53}{85}
\dotlb{12}{53}{85}
\dotlb{11}{53}{85}
\dotlb{10}{53}{85}
\dotlb{10}{52}{85}
\dotlb{10}{51}{85}
\dotlb{41}{41}{85}
\dotlb{42}{41}{85}
\dotlb{43}{41}{85}
\dotlb{44}{41}{85}
\dotlb{45}{41}{85}
\dotlb{46}{41}{85}
\dotlb{47}{41}{85}
\dotlb{48}{41}{85}
\dotlb{49}{41}{85}
\dotlb{50}{41}{85}
\dotlb{51}{41}{85}

\dotmb{51}{14}
\dotmb{50}{14}
\dotmb{49}{14}
\dotmb{48}{14}
\dotmb{47}{14}
\dotmb{46}{14}
\dotmb{45}{14}
\dotmb{44}{14}
\dotmb{43}{14}
\dotmb{42}{14}
\dotmb{41}{14}
\dotmb{40}{14}
\dotmb{39}{14}
\dotmb{38}{14}
\dotmb{37}{14}
\dotmb{36}{14}
\dotmb{35}{14}
\dotmb{34}{14}
\dotmb{33}{14}
\dotmb{32}{14}
\dotmb{31}{14}
\dotmb{30}{14}
\dotmb{30}{13}
\dotmb{30}{12}
\dotmb{29}{12}
\dotmb{28}{12}
\dotmb{27}{12}
\dotmb{26}{12}
\dotmb{25}{12}
\dotmb{24}{12}
\dotmb{23}{12}
\dotmb{22}{12}
\dotmb{21}{12}
\dotmb{20}{12}
\dotmb{19}{12}
\dotmb{18}{12}
\dotmb{17}{12}
\dotmb{16}{12}
\dotmb{15}{12}

\dotlb{9}{51}
\dotlb{8}{51}
\dotlb{7}{51}
\dotlb{6}{51}
\dotlb{6}{50}
\dotlb{6}{49}
\dotlb{6}{48}
\dotlb{6}{47}
\dotlb{6}{46}
\dotlb{6}{45}
\dotlb{6}{44}
\dotlb{6}{43}
\dotlb{6}{42}
\dotlb{6}{41}
\dotlb{6}{40}
\dotlb{6}{39}
\dotlb{6}{39}
\dotlb{7}{39}
\dotlb{8}{39}
\dotlb{9}{39}
\dotlb{10}{39}
\dotlb{11}{39}
\dotlb{12}{39}
\dotlb{13}{39}
\dotlb{14}{39}
\dotlb{15}{39}
\dotlb{16}{39}
\dotlb{17}{39}
\dotlb{18}{39}
\dotlb{19}{39}
\dotlb{20}{39}
\dotlb{20}{38}
\dotlb{20}{37}
\dotlb{21}{37}
\dotlb{22}{37}
\dotlb{23}{37}
\dotlb{24}{37}
\dotlb{25}{37}
\dotlb{26}{37}
\dotlb{27}{37}
\dotlb{28}{37}
\dotlb{29}{37}
\dotlb{30}{37}
\dotlb{30}{36}
\dotlb{31}{36}
\dotlb{32}{36}
\dotlb{33}{36}
\dotlb{34}{36}
\dotlb{35}{36}
\dotlb{36}{36}
\dotlb{37}{36}
\dotlb{38}{36}
\dotlb{39}{36}
\dotlb{40}{36}
\dotlb{41}{36}
\dotlb{41}{37}
\dotlb{41}{38}
\dotlb{41}{39}
\dotlb{41}{40}

\path [dotted, draw, thin] (0,5) grid[step=0.2cm] (60,69);

\draw [dashed] (15,5) -| (15,69);
\draw [dashed] (51,5) -| (51,69);
\draw [thick, color=medblue] (15,39.5) -| (15,60.5);

\fill (14.5,66.5) circle (0.16);
\node (D) at (14,68.2) {$P_{\lastc}$};

\fill (50.5,66.5) circle (0.16);
\node (D) at (50.5,68.2) {$P_{\last}$};

\fill (51.5,41.5) circle (0.16);
\node (D) at (53.2,41.5) {$A'_0$};

\fill (51.5,14.5) circle (0.16);
\node (D) at (53.2,14.5) {$S_{0}$};

\fill (15.5,12.5) circle (0.16);
\node (D) at (13.5,12.5) {$S_{e}$};

\fill (14.5,60.5) circle (0.16);
\node (D) at (13.5,58.8) {$P_{\main{\goB+1}}$};

\fill (22.5,60.5) circle (0.16);
\node (D) at (22.5,62) {$P_{a'}$};

\fill (14.5,20.5) circle (0.16);
\node (D) at (13.5,18.8) {$P_{\main{\goB}}$};

\end{tikzpicture}
\caption{Proof of Lemma \ref{lem:full:GoodCase}, part (1/2): the light blue dots represent the beginning of $A'=\min(\inter{A})$ until it intersects with $P_{\main{\goB},\ldots,\last+1}$ and where $A$ is the arc of Figure \ref{fig:Final:Combining:NewPathP}. In this case, the blue dots are a full shield. This full shield protects $\glueP{\main{\goB+1}}{\main{\goB+1}+1}$ and the protected area $\area{\main{\goB+1}}$ is in light blue.}
\label{fig:Final:Combining:NewArcFull}
\end{figure}
\begin{figure}
\center
\begin{tikzpicture}[x=0.2cm,y=0.2cm]


\fill[fill=lightblue, draw opacity=0.8] (51,66.5) -| (9.5,63.5) -| (24.5,60.5) -| (22.5,37.5) -| (30.5,36.5) -| (41.5,41.5) -| (51,41.5);

\draw[very thick] (14.5,20.5) -| (23.5,24.5) -| (6.5,10.5)-| (19.5,6.5) -| (2.5,60.5) -| (24.5,63.5) -| (9.5,66.5) -| (51.5,66.5);
\draw[thick] (51.5,14.5) -| (30.5,12.5) -| (15,12.5);
\draw[thick] (22.5,60.5) -| (22.5,37.5) -| (30.5,36.5) -| (41.5,41.5) -| (51.5,41.5);




\tile{14}{20}{85}
\tileor{15}{20}{85}
\tileor{16}{20}{85}
\tileor{17}{20}{85}
\tileor{18}{20}{85}
\tileor{19}{20}{85}
\tileor{20}{20}{85}
\tileor{21}{20}{85}
\tileor{22}{20}{85}
\tileor{23}{20}{85}
\tileor{23}{21}{85}
\tileor{23}{22}{85}
\tileor{23}{23}{85}
\tileor{23}{24}{85}
\tileor{22}{24}{85}
\tileor{21}{24}{85}
\tileor{20}{24}{85}
\tileor{19}{24}{85}
\tileor{18}{24}{85}
\tileor{17}{24}{85}
\tileor{16}{24}{85}
\tileor{15}{24}{85}
\tileor{14}{24}{85}
\tileor{13}{24}{85}
\tileor{12}{24}{85}
\tileor{11}{24}{85}
\tileor{10}{24}{85}
\tileor{9}{24}{85}
\tileor{8}{24}{85}
\tileor{7}{24}{85}
\tileor{6}{24}{85}
\tileor{6}{23}{85}
\tileor{6}{22}{85}
\tileor{6}{21}{85}
\tileor{6}{20}{85}
\tileor{6}{19}{85}
\tileor{6}{18}{85}
\tileor{6}{17}{85}
\tileor{6}{16}{85}
\tileor{6}{15}{85}
\tileor{6}{14}{85}
\tileor{6}{13}{85}
\tileor{6}{12}{85}
\tileor{6}{11}{85}
\tileor{6}{10}{85}
\tileor{7}{10}{85}
\tileor{8}{10}{85}
\tileor{9}{10}{85}
\tileor{10}{10}{85}
\tileor{11}{10}{85}
\tileor{12}{10}{85}
\tileor{13}{10}{85}
\tileor{14}{10}{85}
\tileor{15}{10}{85}
\tileor{16}{10}{85}
\tileor{17}{10}{85}
\tileor{18}{10}{85}
\tileor{19}{10}{85}
\tileor{19}{9}{85}
\tileor{19}{8}{85}
\tileor{19}{7}{85}
\tileor{19}{6}{85}
\tileor{18}{6}{85}
\tileor{17}{6}{85}
\tileor{16}{6}{85}
\tileor{15}{6}{85}
\tileor{14}{6}{85}
\tileor{13}{6}{85}
\tileor{12}{6}{85}
\tileor{11}{6}{85}
\tileor{10}{6}{85}
\tileor{9}{6}{85}
\tileor{8}{6}{85}
\tileor{7}{6}{85}
\tileor{6}{6}{85}
\tileor{5}{6}{85}
\tileor{4}{6}{85}
\tileor{3}{6}{85}
\tileor{2}{6}{85}
\tileor{2}{7}{85}
\tileor{2}{8}{85}
\tileor{2}{9}{85}
\tileor{2}{10}{85}
\tileor{2}{11}{85}
\tileor{2}{12}{85}
\tileor{2}{13}{85}
\tileor{2}{14}{85}
\tileor{2}{15}{85}
\tileor{2}{16}{85}
\tileor{2}{17}{85}
\tileor{2}{18}{85}
\tileor{2}{19}{85}
\tileor{2}{20}{85}
\tileor{2}{21}{85}
\tileor{2}{22}{85}
\tileor{2}{23}{85}
\tileor{2}{24}{85}
\tileor{2}{25}{85}
\tileor{2}{26}{85}
\tileor{2}{27}{85}
\tileor{2}{28}{85}
\tileor{2}{29}{85}
\tileor{2}{30}{85}
\tileor{2}{31}{85}
\tileor{2}{32}{85}
\tileor{2}{33}{85}
\tileor{2}{34}{85}
\tileor{2}{35}{85}
\tileor{2}{36}{85}
\tileor{2}{37}{85}
\tileor{2}{38}{85}
\tileor{2}{39}{85}
\tileor{2}{40}{85}
\tileor{2}{41}{85}
\tileor{2}{42}{85}
\tileor{2}{43}{85}
\tileor{2}{44}{85}
\tileor{2}{45}{85}
\tileor{2}{46}{85}
\tileor{2}{47}{85}
\tileor{2}{48}{85}
\tileor{2}{49}{85}
\tileor{2}{50}{85}
\tileor{2}{51}{85}
\tileor{2}{52}{85}
\tileor{2}{53}{85}
\tileor{2}{54}{85}
\tileor{2}{55}{85}
\tileor{2}{56}{85}
\tileor{2}{57}{85}
\tileor{2}{58}{85}
\tileor{2}{59}{85}
\tileor{2}{60}{85}
\tileor{3}{60}{85}
\tileor{4}{60}{85}
\tileor{5}{60}{85}
\tileor{6}{60}{85}
\tileor{7}{60}{85}
\tileor{8}{60}{85}
\tileor{9}{60}{85}
\tileor{10}{60}{85}
\tileor{11}{60}{85}
\tileor{12}{60}{85}
\tileor{13}{60}{85}
\tileor{14}{60}{85}
\tileor{15}{60}{85}
\tileor{16}{60}{85}
\tileor{17}{60}{85}
\tileor{18}{60}{85}
\tileor{19}{60}{85}
\tileor{20}{60}{85}
\tileor{21}{60}{85}
\tileor{22}{60}{85}
\tileor{23}{60}{85}
\tileor{24}{60}{85}
\tileor{24}{61}{85}
\tileor{24}{62}{85}
\tileor{24}{63}{85}
\tileor{23}{63}{85}
\tileor{22}{63}{85}
\tileor{21}{63}{85}
\tileor{20}{63}{85}
\tileor{19}{63}{85}
\tileor{18}{63}{85}
\tileor{17}{63}{85}
\tileor{16}{63}{85}
\tileor{15}{63}{85}
\tileor{14}{63}{85}
\tileor{13}{63}{85}
\tileor{12}{63}{85}
\tileor{11}{63}{85}
\tileor{10}{63}{85}
\tileor{9}{63}{85}
\tileor{9}{64}{85}
\tileor{9}{65}{85}
\tileor{9}{66}{85}
\tileor{10}{66}{85}
\tileor{11}{66}{85}
\tileor{12}{66}{85}
\tileor{13}{66}{85}
\tileor{14}{66}{85}
\tileor{15}{66}{85}
\tileor{16}{66}{85}
\tileor{17}{66}{85}
\tileor{18}{66}{85}
\tileor{19}{66}{85}
\tileor{20}{66}{85}
\tileor{21}{66}{85}
\tileor{22}{66}{85}
\tileor{23}{66}{85}
\tileor{24}{66}{85}
\tileor{25}{66}{85}
\tileor{26}{66}{85}
\tileor{27}{66}{85}
\tileor{28}{66}{85}
\tileor{29}{66}{85}
\tileor{30}{66}{85}
\tileor{31}{66}{85}
\tileor{32}{66}{85}
\tileor{33}{66}{85}
\tileor{34}{66}{85}
\tileor{35}{66}{85}
\tileor{36}{66}{85}
\tileor{37}{66}{85}
\tileor{38}{66}{85}
\tileor{39}{66}{85}
\tileor{40}{66}{85}
\tileor{41}{66}{85}
\tileor{42}{66}{85}
\tileor{43}{66}{85}
\tileor{44}{66}{85}
\tileor{45}{66}{85}
\tileor{46}{66}{85}
\tileor{47}{66}{85}
\tileor{48}{66}{85}
\tileor{49}{66}{85}
\tileor{50}{66}{85}
\tileor{51}{66}{85}

\dotmb{51}{14}
\dotmb{50}{14}
\dotmb{49}{14}
\dotmb{48}{14}
\dotmb{47}{14}
\dotmb{46}{14}
\dotmb{45}{14}
\dotmb{44}{14}
\dotmb{43}{14}
\dotmb{42}{14}
\dotmb{41}{14}
\dotmb{40}{14}
\dotmb{39}{14}
\dotmb{38}{14}
\dotmb{37}{14}
\dotmb{36}{14}
\dotmb{35}{14}
\dotmb{34}{14}
\dotmb{33}{14}
\dotmb{32}{14}
\dotmb{31}{14}
\dotmb{30}{14}
\dotmb{30}{13}
\dotmb{30}{12}
\dotmb{29}{12}
\dotmb{28}{12}
\dotmb{27}{12}
\dotmb{26}{12}
\dotmb{25}{12}
\dotmb{24}{12}
\dotmb{23}{12}
\dotmb{22}{12}
\dotmb{21}{12}
\dotmb{20}{12}
\dotmb{19}{12}
\dotmb{18}{12}
\dotmb{17}{12}
\dotmb{16}{12}
\dotmb{15}{12}

\dotlb{22}{60}{85}
\dotlb{22}{59}{85}
\dotlb{22}{58}{85}
\dotlb{22}{57}{85}
\dotlb{22}{56}{85}
\dotlb{22}{55}{85}
\dotlb{22}{54}{85}
\dotlb{22}{53}{85}
\dotlb{22}{52}
\dotlb{22}{51}
\dotlb{22}{50}
\dotlb{22}{49}
\dotlb{22}{48}
\dotlb{22}{47}
\dotlb{22}{46}
\dotlb{22}{45}
\dotlb{22}{44}
\dotlb{22}{43}
\dotlb{22}{42}
\dotlb{22}{41}
\dotlb{22}{40}
\dotlb{22}{39}
\dotlb{22}{38}
\dotlb{22}{37}
\dotlb{23}{37}
\dotlb{24}{37}
\dotlb{25}{37}
\dotlb{26}{37}
\dotlb{27}{37}
\dotlb{28}{37}
\dotlb{29}{37}
\dotlb{30}{37}
\dotlb{30}{36}
\dotlb{31}{36}
\dotlb{32}{36}
\dotlb{33}{36}
\dotlb{34}{36}
\dotlb{35}{36}
\dotlb{36}{36}
\dotlb{37}{36}
\dotlb{38}{36}
\dotlb{39}{36}
\dotlb{40}{36}
\dotlb{41}{36}
\dotlb{41}{37}
\dotlb{41}{38}
\dotlb{41}{39}
\dotlb{41}{40}
\dotlb{41}{41}{85}
\dotlb{42}{41}{85}
\dotlb{43}{41}{85}
\dotlb{44}{41}{85}
\dotlb{45}{41}{85}
\dotlb{46}{41}{85}
\dotlb{47}{41}{85}
\dotlb{48}{41}{85}
\dotlb{49}{41}{85}
\dotlb{50}{41}{85}
\dotlb{51}{41}{85}

\path [dotted, draw, thin] (0,5) grid[step=0.2cm] (60,69);

\draw [dashed] (15,5) -| (15,69);
\draw [dashed] (51,5) -| (51,69);

\fill (14.5,66.5) circle (0.16);
\node (D) at (14,68.2) {$P_{\lastc}$};

\fill (50.5,66.5) circle (0.16);
\node (D) at (50.5,68.2) {$P_{\last}$};

\fill (51.5,41.5) circle (0.16);
\node (D) at (53.2,41.5) {$A'_0$};

\fill (51.5,14.5) circle (0.16);
\node (D) at (53.2,14.5) {$S_{0}$};

\fill (15.5,12.5) circle (0.16);
\node (D) at (13.5,12.5) {$S_{e}$};

\fill (14.5,63.5) circle (0.16);
\node (D) at (13.5,61.8) {$P_{\second{\goB+1}}$};

\fill (22.5,60.5) circle (0.16);
\node (D) at (22.5,62) {$P_{a'}$};

\fill (14.5,20.5) circle (0.16);
\node (D) at (13.5,18.8) {$P_{\main{\goB}}$};

\end{tikzpicture}
\caption{Proof of Lemma \ref{lem:full:GoodCase}, part (2/2): 
the light blue dots represent the beginning of $A'=\min(\inter{A})$ until it intersects with $P_{\main{\goB},\ldots,\last+1}$ and where $A$ is the arc of Figure \ref{fig:Final:Combining:NewPathP}. In this case, the blue dots are an half-shield. This half-shield protects $\glueP{\second{\goB+1}}{\second{\goB+1}+1}$ and the protected area $\area{\second{\goB+1}}$ is in light blue.}
\label{fig:Final:Combining:NewArcHalf}
\end{figure}

\end{proof}

If no new shield and no new protected glue can be extracted from the arc $A'$, then a contradiction can be found by introducing a third protected glue in the hypothesis. 

\begin{lemma}
\label{lem:full:DoNotGo}
Consider the notations \ref{notation:full:combi} and the arc $\newA$ of Lemma \ref{lem:full:MinimalArc}. If there exists an index $\proNDeux \geq \proSLast{\colDeux}$ such that the $\glueQ{\proNDeux}{\proNDeux+1}$ is a northern (resp. southern) protected glue (either a ``main'' or ``backup'' one) of the downward (resp. upward) decomposition of $Q$ into dominant arcs on glue column $\colDeux$ then $w_{\newA} \leq \colUn$.
\end{lemma}

\begin{proof}
Without loss of generality, we suppose that $\glueP{\main{\goB}}{\main{\goB}+1}$ and $\glueQ{\proSLast{\colDeux}}{\proSLast{\colDeux}+1}$ are southern main glues. Thus, $\glueQ{\proNDeux}{\proNDeux+1}$ is a northern protected glue. For the sake of contradiction, assume that $w_{\newA} > \colUn$, see Figure \ref{fig:Final:Combining:Contra},  and we denote by $0\leq \lastcDeux\leq |Q|-1$ the index of the last glue of $Q$ of glue column $\colDeux$. 

\begin{figure}
\center
\begin{tikzpicture}[x=0.2cm,y=0.2cm]



\draw[very thick] (14.5,20.5) -| (23.5,24.5) -| (6.5,10.5)-| (19.5,6.5) -| (2.5,60.5) -| (24.5,63.5) -| (9.5,66.5) -| (51.5,66.5);
\draw[thick] (51.5,14.5) -| (30.5,12.5) -| (15,12.5);
\draw[thick] (36.5,66.5) |- (40.5,48.5) |- (51.5,41.5);




\tile{14}{20}{85}
\tileor{15}{20}{85}
\tileor{16}{20}{85}
\tileor{17}{20}{85}
\tileor{18}{20}{85}
\tileor{19}{20}{85}
\tileor{20}{20}{85}
\tileor{21}{20}{85}
\tileor{22}{20}{85}
\tileor{23}{20}{85}
\tileor{23}{21}{85}
\tileor{23}{22}{85}
\tileor{23}{23}{85}
\tileor{23}{24}{85}
\tileor{22}{24}{85}
\tileor{21}{24}{85}
\tileor{20}{24}{85}
\tileor{19}{24}{85}
\tileor{18}{24}{85}
\tileor{17}{24}{85}
\tileor{16}{24}{85}
\tileor{15}{24}{85}
\tileor{14}{24}{85}
\tileor{13}{24}{85}
\tileor{12}{24}{85}
\tileor{11}{24}{85}
\tileor{10}{24}{85}
\tileor{9}{24}{85}
\tileor{8}{24}{85}
\tileor{7}{24}{85}
\tileor{6}{24}{85}
\tileor{6}{23}{85}
\tileor{6}{22}{85}
\tileor{6}{21}{85}
\tileor{6}{20}{85}
\tileor{6}{19}{85}
\tileor{6}{18}{85}
\tileor{6}{17}{85}
\tileor{6}{16}{85}
\tileor{6}{15}{85}
\tileor{6}{14}{85}
\tileor{6}{13}{85}
\tileor{6}{12}{85}
\tileor{6}{11}{85}
\tileor{6}{10}{85}
\tileor{7}{10}{85}
\tileor{8}{10}{85}
\tileor{9}{10}{85}
\tileor{10}{10}{85}
\tileor{11}{10}{85}
\tileor{12}{10}{85}
\tileor{13}{10}{85}
\tileor{14}{10}{85}
\tileor{15}{10}{85}
\tileor{16}{10}{85}
\tileor{17}{10}{85}
\tileor{18}{10}{85}
\tileor{19}{10}{85}
\tileor{19}{9}{85}
\tileor{19}{8}{85}
\tileor{19}{7}{85}
\tileor{19}{6}{85}
\tileor{18}{6}{85}
\tileor{17}{6}{85}
\tileor{16}{6}{85}
\tileor{15}{6}{85}
\tileor{14}{6}{85}
\tileor{13}{6}{85}
\tileor{12}{6}{85}
\tileor{11}{6}{85}
\tileor{10}{6}{85}
\tileor{9}{6}{85}
\tileor{8}{6}{85}
\tileor{7}{6}{85}
\tileor{6}{6}{85}
\tileor{5}{6}{85}
\tileor{4}{6}{85}
\tileor{3}{6}{85}
\tileor{2}{6}{85}
\tileor{2}{7}{85}
\tileor{2}{8}{85}
\tileor{2}{9}{85}
\tileor{2}{10}{85}
\tileor{2}{11}{85}
\tileor{2}{12}{85}
\tileor{2}{13}{85}
\tileor{2}{14}{85}
\tileor{2}{15}{85}
\tileor{2}{16}{85}
\tileor{2}{17}{85}
\tileor{2}{18}{85}
\tileor{2}{19}{85}
\tileor{2}{20}{85}
\tileor{2}{21}{85}
\tileor{2}{22}{85}
\tileor{2}{23}{85}
\tileor{2}{24}{85}
\tileor{2}{25}{85}
\tileor{2}{26}{85}
\tileor{2}{27}{85}
\tileor{2}{28}{85}
\tileor{2}{29}{85}
\tileor{2}{30}{85}
\tileor{2}{31}{85}
\tileor{2}{32}{85}
\tileor{2}{33}{85}
\tileor{2}{34}{85}
\tileor{2}{35}{85}
\tileor{2}{36}{85}
\tileor{2}{37}{85}
\tileor{2}{38}{85}
\tileor{2}{39}{85}
\tileor{2}{40}{85}
\tileor{2}{41}{85}
\tileor{2}{42}{85}
\tileor{2}{43}{85}
\tileor{2}{44}{85}
\tileor{2}{45}{85}
\tileor{2}{46}{85}
\tileor{2}{47}{85}
\tileor{2}{48}{85}
\tileor{2}{49}{85}
\tileor{2}{50}{85}
\tileor{2}{51}{85}
\tileor{2}{52}{85}
\tileor{2}{53}{85}
\tileor{2}{54}{85}
\tileor{2}{55}{85}
\tileor{2}{56}{85}
\tileor{2}{57}{85}
\tileor{2}{58}{85}
\tileor{2}{59}{85}
\tileor{2}{60}{85}
\tileor{3}{60}{85}
\tileor{4}{60}{85}
\tileor{5}{60}{85}
\tileor{6}{60}{85}
\tileor{7}{60}{85}
\tileor{8}{60}{85}
\tileor{9}{60}{85}
\tileor{10}{60}{85}
\tileor{11}{60}{85}
\tileor{12}{60}{85}
\tileor{13}{60}{85}
\tileor{14}{60}{85}
\tileor{15}{60}{85}
\tileor{16}{60}{85}
\tileor{17}{60}{85}
\tileor{18}{60}{85}
\tileor{19}{60}{85}
\tileor{20}{60}{85}
\tileor{21}{60}{85}
\tileor{22}{60}{85}
\tileor{23}{60}{85}
\tileor{24}{60}{85}
\tileor{24}{61}{85}
\tileor{24}{62}{85}
\tileor{24}{63}{85}
\tileor{23}{63}{85}
\tileor{22}{63}{85}
\tileor{21}{63}{85}
\tileor{20}{63}{85}
\tileor{19}{63}{85}
\tileor{18}{63}{85}
\tileor{17}{63}{85}
\tileor{16}{63}{85}
\tileor{15}{63}{85}
\tileor{14}{63}{85}
\tileor{13}{63}{85}
\tileor{12}{63}{85}
\tileor{11}{63}{85}
\tileor{10}{63}{85}
\tileor{9}{63}{85}
\tileor{9}{64}{85}
\tileor{9}{65}{85}
\tileor{9}{66}{85}
\tileor{10}{66}{85}
\tileor{11}{66}{85}
\tileor{12}{66}{85}
\tileor{13}{66}{85}
\tileor{14}{66}{85}
\tileor{15}{66}{85}
\tileor{16}{66}{85}
\tileor{17}{66}{85}
\tileor{18}{66}{85}
\tileor{19}{66}{85}
\tileor{20}{66}{85}
\tileor{21}{66}{85}
\tileor{22}{66}{85}
\tileor{23}{66}{85}
\tileor{24}{66}{85}
\tileor{25}{66}{85}
\tileor{26}{66}{85}
\tileor{27}{66}{85}
\tileor{28}{66}{85}
\tileor{29}{66}{85}
\tileor{30}{66}{85}
\tileor{31}{66}{85}
\tileor{32}{66}{85}
\tileor{33}{66}{85}
\tileor{34}{66}{85}
\tileor{35}{66}{85}
\tileor{36}{66}{85}
\tileor{37}{66}{85}
\tileor{38}{66}{85}
\tileor{39}{66}{85}
\tileor{40}{66}{85}
\tileor{41}{66}{85}
\tileor{42}{66}{85}
\tileor{43}{66}{85}
\tileor{44}{66}{85}
\tileor{45}{66}{85}
\tileor{46}{66}{85}
\tileor{47}{66}{85}
\tileor{48}{66}{85}
\tileor{49}{66}{85}
\tileor{50}{66}{85}
\tileor{51}{66}{85}

\dotmb{51}{14}
\dotmb{50}{14}
\dotmb{49}{14}
\dotmb{48}{14}
\dotmb{47}{14}
\dotmb{46}{14}
\dotmb{45}{14}
\dotmb{44}{14}
\dotmb{43}{14}
\dotmb{42}{14}
\dotmb{41}{14}
\dotmb{40}{14}
\dotmb{39}{14}
\dotmb{38}{14}
\dotmb{37}{14}
\dotmb{36}{14}
\dotmb{35}{14}
\dotmb{34}{14}
\dotmb{33}{14}
\dotmb{32}{14}
\dotmb{31}{14}
\dotmb{30}{14}
\dotmb{30}{13}
\dotmb{30}{12}
\dotmb{29}{12}
\dotmb{28}{12}
\dotmb{27}{12}
\dotmb{26}{12}
\dotmb{25}{12}
\dotmb{24}{12}
\dotmb{23}{12}
\dotmb{22}{12}
\dotmb{21}{12}
\dotmb{20}{12}
\dotmb{19}{12}
\dotmb{18}{12}
\dotmb{17}{12}
\dotmb{16}{12}
\dotmb{15}{12}

\dotlb{51}{66}
\dotlb{50}{66}
\dotlb{49}{66}
\dotlb{48}{66}
\dotlb{47}{66}
\dotlb{46}{66}
\dotlb{45}{66}
\dotlb{44}{66}
\dotlb{43}{66}
\dotlb{42}{66}
\dotlb{41}{66}
\dotlb{40}{66}
\dotlb{39}{66}
\dotlb{38}{66}
\dotlb{37}{66}
\dotlb{36}{66}
\dotlb{36}{65}
\dotlb{36}{64}
\dotlb{36}{63}
\dotlb{36}{62}
\dotlb{36}{61}
\dotlb{36}{60}
\dotlb{36}{59}
\dotlb{36}{58}
\dotlb{36}{57}
\dotlb{36}{56}
\dotlb{36}{55}
\dotlb{36}{54}
\dotlb{36}{53}
\dotlb{36}{52}
\dotlb{36}{51}
\dotlb{36}{50}
\dotlb{36}{49}
\dotlb{36}{48}
\dotlb{37}{48}
\dotlb{38}{48}
\dotlb{39}{48}
\dotlb{40}{48}
\dotlb{40}{47}
\dotlb{40}{46}
\dotlb{40}{45}
\dotlb{40}{44}
\dotlb{40}{43}
\dotlb{40}{42}
\dotlb{40}{41}
\dotlb{41}{41}{85}
\dotlb{42}{41}{85}
\dotlb{43}{41}{85}
\dotlb{44}{41}{85}
\dotlb{45}{41}{85}
\dotlb{46}{41}{85}
\dotlb{47}{41}{85}
\dotlb{48}{41}{85}
\dotlb{49}{41}{85}
\dotlb{50}{41}{85}
\dotlb{51}{41}{85}

\path [dotted, draw, thin] (0,5) grid[step=0.2cm] (60,69);

\draw [dashed] (15,5) -| (15,69);
\draw [dashed] (51,5) -| (51,69);

\fill (14.5,66.5) circle (0.16);
\node (D) at (14,68.2) {$P_{\lastc}$};

\fill (50.5,66.5) circle (0.16);
\node (D) at (50.5,68.2) {$P_{\last}$};

\fill (51.5,41.5) circle (0.16);
\node (D) at (53.2,41.5) {$A'_0$};

\fill (51.5,14.5) circle (0.16);
\node (D) at (53.2,14.5) {$S_{0}$};

\fill (15.5,12.5) circle (0.16);
\node (D) at (13.5,12.5) {$S_{e}$};

\fill (14.5,63.5) circle (0.16);
\node (D) at (13.5,61.8) {$P_{\second{\goB+1}}$};

\fill (36.5,66.5) circle (0.16);
\node (D) at (36.5,68) {$P_{a'}$};

\fill (14.5,20.5) circle (0.16);
\node (D) at (13.5,18.8) {$P_{\main{\goB}}$};

\end{tikzpicture}
\caption{Proof of Lemma \ref{lem:full:DoNotGo}, part (1/4): 
the light blue dots represent the arc $A'=\min(\inter{A})$ where $A$ is the arc of Figure \ref{fig:Final:Combining:NewPathP}. Here, the light blue dots connect directly with $P_{\lastc+1,\ldots,|P|-1}$ without crossing glue column $c$. Then, we have $w_{A'}>c$ and no tile is of $P$ is protected. In this case, we will reach a contradiction.}
\label{fig:Final:Combining:Contra}
\end{figure}

First of all, consider the protected area $\area{\proNDeux}$ (be careful  $\glueQ{\proNDeux}{\proNDeux+1}$ is a northern main glue). Note that $Q_{\lastDeux+1}=\newA_{0}+\vu$ and $Q_{\lastDeux}=\newA_{1}+\vu$. Then we can defined $0<v \leq |\newA|-1$ such that $\newA_{1,\ldots,v}+\vu$ is the largest prefix of $\newA_{1,\ldots,|\newA|-1}+\vu$ which is inside $\area{\proNDeux}$. Now, since $P_{0,\ldots,\main{\goB}}(\rightFDeux-\vu)$ is producible (by Lemma \ref{lem:full:goodArc}) and since $\newA$ is a subassembly of $\uniterm$ (by Lemma~\ref{lem:full:MinimalArc}) then all intersections between $\newA+\vu$ and $\rightFDeux$ are agreements. Since $\proNDeux \geq \proSLast{\colDeux}$ then the only tiles of $Q_{0,\ldots,\proSLast{\colDeux}}\rightFDeux$ which are inside $\area{\proNDeux}$ are tiles of $\rightFDeux$: there is no intersection between $Q_{0,\ldots,\proSLast{\colDeux}}$ and $\newA_{1,\ldots,v}+\vu$. Then, since $Q_{0,\ldots,\proSLast{\colDeux}}\rightFDeux$ is producible (by definition of $\rightFDeux$), then $\asm{Q_{0,\ldots,\proSLast{\colDeux}}\rightFDeux}\cup \asm{\newA_{1,\ldots,v}+\vu}$ is a producible assembly. Thus, $\newA_{1,\ldots,v}+\vu$ is a subassembly of $\uniterm$ and in particular, all intersections between $\newA_{1,\ldots,v}+\vu$ and $Q_{\proSLast{\colDeux}+1,\ldots,\lastDeux}$ are agreements, see Figure \ref{fig:Final:Combining:FullLastOne}.

\input{./tikz/Final/CombiningMain/PathPShrinkProofOne}

Since $w_{\newA}+\vu> \colDeux$ then $\newA_{1,\ldots,v}+\vu$ cannot intersect with the ray starting in $\glueQ{\proNDeux}{\proNDeux+1}$ and going north. By Lemma \ref{lem:vis:LastTechnical} (if $\glueQ{\proNDeux}{\proNDeux+1}$ is visible from the north), Lemma \ref{lem:full:LastTechnical} (if $\glueQ{\proNDeux}{\proNDeux+1}$ is a ``main'' glue protected by a shield) or Lemma \ref{lem:half:LastTechnical} (if $\glueQ{\proNDeux}{\proNDeux+1}$ is a ``backup'' glue protected by a shield) then $\newA_{1,\ldots,v}+\vu$ cannot intersect with $Q_{\proNDeux+1,\ldots,\lastcDeux}$ and the shield of $\glueQ{\proNDeux}{\proNDeux+1}$  (in the relevant cases). Finally, if there exists $\lastcDeux \leq v' \leq \lastDeux$ such that $Q_{v'}=\newA_v+\vu$ and $\rev{Q_{v',\ldots,\lastDeux+1}}(\newA_{v+1}+\vu)$ turns left of $\rev{Q_{v'-1,\ldots,\lastDeux+1}}$ then there exist $\lastcDeux \leq u' \leq \lastDeux$ and $1 \leq u \leq v$ such that $Q_{u'}=\newA_{u}+\vu$ and $(\newA_{1,\ldots,u}+\vu)Q_{u'+1}$ turns right of $(\newA_{1,\ldots,u+1}+\vu)$. Since $x_{Q_{\lastcDeux}}=\colDeux-0.5$ and $w_{\newA+\vu} >\colDeux$, then $Q_{u'+1,\ldots, \lastcDeux+1}$ intersects with $\newA+\vu$ contradicting the fact that $\newA+\vu$ (and $\newA$) is a minimum arc. Then, $v\geq |\newA|-2$ and $\newA_{1,\ldots,|\newA|-2}+\vu$ is in $\area{\proNDeux}$ and is a subassembly of  $\uniterm$. Moreover, $\newA_{|\newA|-1}+\vu$ is also a tile of $\uniterm$ since it binds with $\newA_{|\newA|-2}+\vu$ and is on column $\Bfour(c')+0.5$.

Since $\newA_1+\vu=Q_{\lastDeux}=\rightFDeux_{|\rightFDeux|-2}$ then $\newA_{1,\ldots,|\newA|-1}+\vu$ and $\rightFDeux$ intersect. Let $0\leq b \leq \rightFDeux$ such that $\rightFDeux_{0,\ldots,b}$ is the shortest prefix of $\rightFDeux$ which intersects with $\newA+\vu$. Let $0\leq b' \leq |\newA|-1$ such that $\newA_{b'}+\vu=\rightFDeux_b$ and then $T=\rightFDeux_{0,\ldots,b}(\newA_{b'+1,\ldots,|\newA|-1}+\vu)$ is a path and $Q_{0,\ldots,\proSLast{\colDeux}}T$ is a producible path. Note that $P_{\last+1}+\vu=\newA_{|\newA|-1}$ and $P_{\last+1}+\vu$ is on column $\Bfour(\colDeux)+0.5$. By Lemma~\ref{Uturn:Conc}, if $P_{\last+2,\ldots,|P|-1}+\vu$ does not intersect with $T$,  then $Q_{0,\ldots,\proSLast{\colDeux}}T(P_{\last+2,\ldots,|P|-1}+\vu)$ is a producible path. Nevertheless, $x_{P_{|P|-1}+\vu}>e_\uniterm$ and we have a contradiction.

Finally, suppose that there is an intersection between $T$ and $(P_{\last+2,\ldots,|P|-1}+\vu)$, see Figures \ref{fig:Final:Combining:FullLastTwo} and \ref{fig:Final:Combining:FullLastThree}. By definition $P_{\last+2,\ldots,|P|-1}$ is a subassembly of $\uniterm$, $\newA$ is a subassembly of $\uniterm$ by Lemma \ref{lem:full:MinimalArc} and $\rightFDeux-\vu$ is a subassembly of $\uniterm$ by Lemma \ref{lem:full:goodArc}. Thus all intersections between these paths are agreements. Let $d=\max\{i: \pos{P_i}+\vu \text{ is occupied by a tile of } T\}$, we are in a case where $d>\lastUn+1$. Let $0\leq d' \leq |T|-1$ such that $T_{d'}=P_d+\vu$. Let $d\leq f'' \leq |P|-1$ be the index of the first tile of $P_{d,\ldots,|P|-1}+\vu$ which is on column $\Bfour(\colDeux)+0.5$. If $T_{0,\ldots,d'}(P_{d+1,\dots,f''}+\vu)$ is more left-priority than $T$, then $P_{d+1,\dots,f''}$ is inside a finite area delimited by $T-\vu$ and $P_{\main{\goB},\ldots,\lastUn+1}$. Nevertheless, $P_{d+1,\dots,f''}$ must leave this area but it cannot intersect with $P_{\main{\goB},\ldots,\lastUn+1}$ (since $P$ is simple) or $T-\vu$ (by definition of $d$). 
Then $T_{0,\ldots,d'}(P_{d+1,\dots,f''}+\vu)$ is more right-priority than $T$. If $P_{d+1,\dots,f''}+\vu$ is inside $\area{\proNDeux}\cup \areaResDeux{\proSLast{\colDeux}}$ then $$Q_{0,\ldots,\proSLast{\colDeux}}T_{0,\ldots,d'}(P_{d+1,\dots,f''}+\vu)$$ is producible and by Lemma \ref{Uturn:Conc},  $Q_{0,\ldots,\proSLast{\colDeux}}T_{0,\ldots,d'}(P_{d+1,\dots,|P|-1}+\vu)$ is producible. Again, $x_{P_{|P|-1}+\vu}>e_\uniterm$ and we have a contradiction. 

\input{./tikz/Final/CombiningMain/PathPShrinkProofTwo}
\input{./tikz/Final/CombiningMain/PathPShrinkProofThree}

Suppose that  $P_{d+1,\dots,f''}+\vu$ leaves $\area{\proNDeux}\cup \areaResDeux{\proSLast{\colDeux}}$ by leaving $\area{\proNDeux}$ first. Then since $w_T\leq \colDeux < \Bfour(\colDeux) \leq e_T$ and $\colDeux< w_{P_{d+1,\dots,f''}+\vu} <  e_{P_{d+1,\dots,f''-1}+\vu} \leq  \Bfour(\colDeux)$,  then $P_{d+1,\dots,f''}+\vu$ must intersect with $T$. This is a contradiction by the definition of $d$. If $P_{d+1,\dots,f''}+\vu$ leaves $\area{\proNDeux}\cup \areaResDeux{\proSLast{\colDeux}}$ by leaving $\areaResDeux{\proSLast{\colDeux}}$. Since $w_{P_{d+1,\dots,f''}+\vu}>\colDeux$ then $P_{d+1,\dots,f''}+\vu$ should intersect with the shield of $\glueQ{\proSLast{\colDeux}}{\proSLast{\colDeux}+1}$. In this case, it is possible to find a path in $\asm{T_{0,\ldots,d'}(P_{d+1,\dots,f''}+\vu)}\cup \asm{Q_{\lastcDeux,\ldots,\lastDeux}}$ which intersects with the shield and is in $\areaResDeux{\proSLast{\colDeux}}$ which is a contradiction by Lemma \ref{lem:full:visible}. All cases lead to contradiction and the lemma is true.

\end{proof}

Finally, here is the main result of this section.

\begin{lemma}
\label{combi:full:end}
Consider the notations \ref{notation:full:combi}, if there exists an index $\proNDeux \geq \proSLast{\colDeux}$ such that the $\glueQ{\proNDeux}{\proNDeux+1}$ is a northern (resp. southern) protected glue (either a ``main'' or ``backup'' one) of the downward (resp. upward) decomposition of $Q$ into dominant arcs on glue column $\colDeux$ then either there exists $\goBUn \leq i < \decoUn$ such that $\second{i}$ is protected or there exists $\goBUn < i \leq \decoUn$ such that $\main{i}$ is protected.
\end{lemma}

\begin{proof}
By Lemma \ref{lem:full:MinimalArc}, there exists an upward (resp. downward) minimum arc $\newA$ on glue column $\Bfour(\colUn)$ such that its interior $\inter{\newA}$ is included into $\area{\mainUn{\goBUn}}$, such that $\newA$ is a subassembly of $\uniterm$, such that $\newA_0=Q_{\lastDeux+1}-\vu$ and $\newA_{|\newA|-1}=P_{\lastUn+1}$. Moreover, by Lemma \ref{lem:full:DoNotGo}, we have $w_{\newA} \leq \colUn$. Then by Lemma \ref{lem:full:GoodCase}, the lemma is true.
\end{proof}

\subsection{Combining two identical ``backup'' protected glues}\label{sec:combining:half}

Along this subsection, we will use the following notations:

\begin{definition}[Notations for combining two ``half'' glues]
\label{notation:half}
Suppose that $e_\uniterm>\Bfinal$ and consider two paths $P$ and $Q$ which are canonical for some glue columns $\colUn$ and $\colDeux$ with $\Bone\leq \colUn < \colDeux \leq \Btwo$. 
\begin{itemize}
\item Let $0\leq \lastUn < |P|-1$ be the index of the first glue of $P$ on glue column $\Bfour(\colUn)$. 
\item Let $0\leq \lastDeux < |Q|-1$ be the index of the first glue of $Q$ on glue column $\Bfour(\colDeux)$. 
\item Let $(\main{i})_{0\leq i \leq t}$ and $(\second{i})_{0\leq i < t}$ the upward (resp. downward) decomposition of $P$ into dominants arcs on glue column $\colUn$.
 Let $0\leq \goB < t$, a path $\shieldUn$ and an index $\main{\goB} < \posS < \second{\goB}$ such that $\glueP{\second{\goB}-1}{\second{\goB}}$ is a ``backup'' glue of $P$ protected by the half-shield $S$ at position $\posS$. Let $\rightF$ be the practical form of $P$ at index  $\second{\goB}-1$ and let $\areaResUn{\second{\goB}}$ be its restrained protected area.
\item Let $0\leq \proSLast{\colDeux} < |Q|-1$ such that $\glueQ{\proS-1}{\proS}$ is a ``backup'' glue of the upward (resp. downward) decomposition of $Q$ into dominants arcs on glue column $\colDeux$. This $\glueQ{\proS-1}{\proS}$ is protected by the half-shield $\shieldDeux$ at position $\posSDeux$. Let $\rightFDeux$ be the practical form of $Q$ at index  $\second{\goB}-1$ and let $\areaResDeux{\proS}$ its restrained protected area.
\end{itemize}
\end{definition}

Remark that in the Subsection \ref{sub:vis:half}, we did not define the type of a protected backup glue. At first glance, it seems logical to consider that $\glueP{\second{\goB}}{\second{\goB}+1}$ and $\glueQ{\proSLast{\colDeux}}{\proSLast{\colDeux}+1}$ should have the same type. Then, we define $\vu=\vect{\pos{P_{\second{\goB}}}\pos{Q_{\proSLast{\colDeux}}}}$ to try to assemble $\rightFDeux-\vu$ at the end of $P_{0,\ldots,\second{\goB}-1}$. Unfortunately, Figure \ref{fig:Final:Combining:half:bug} shows a problematic case. Indeed, $\rightFDeux-\vu$ may turn right of $\rightFUn$, stay inside $\areaResUn{\second{\goB}}$ until intersecting with $P_{\posS,\ldots,\second{\goB}-1}$. In this case, a tile of  $\rightFDeux-\vu$ shares the same position as a tile of $P_{\posS,\ldots,\second{\goB}-1}$ but the two tiles may not share the same type: assembling $\rightFDeux-\vu$ requires to assemble $P_{0,\ldots,\second{\goB}-1}$ first. Thus, the growth of $\rightFDeux-\vu$ can be stopped and we cannot apply the same reasoning as in Subsection \ref{sec:combining:full}.

\input{./tikz/Final/CombiningBackup/HalfBug}

Thus, in order to solve the last remaining Lock \ref{lock:shieldone}, we need to introduce a new notion. The type of a backup glue while not be the type of the protected glue but the type of a \emph{hidden} glue, see Figure \ref{fig:Final:Combining:half:hid:def}. The index $\hid$ of the \emph{hidden} glue associated to $\glueP{\second{\goB}-1}{\second{\goB}}$ is defined as the northernmost (resp. southernmost) $\glueP{\hid}{\hid+1}$ among the glues of $P_{0,\ldots,\main{\goB}+1}$ which are on glue column $c$ and strictly to the south (resp. north) of $\glueP{\second{\goB}-1}{\second{\goB}}$, \emph{i.e.} $0\leq \hid \leq \main{\goB}$ and $\glueP{\hid}{\hid+1}$ is on glue column $c$ and its $y$-coordinate is 
\begin{align*}
\max\{y:\text{there } & \text{exists $0\leq i\leq \main{\goB}$ such that $\glueP{i}{i+1}$ is} \\ & \text{on glue column $c$, $y_{P_i}=y$ and $y<y_{P_{\second{\goB}}}$}\}.
\end{align*}
\begin{align*}
\text{(resp. } \min\{y:\text{there } & \text{exists $0\leq i\leq \main{\goB}$ such that $\glueP{i}{i+1}$ is} \\ & \text{on glue column $c$, $y_{P_i}=y$ and $y>y_{P_{\second{\goB}}}$}\}.)
\end{align*}
Note that $\hid$ is correctly defined since $\glueP{\main{\goB}}{\main{\goB}+1}$ is on glue column $c$ and strictly to the south (resp. north) of the protected $\glueP{\second{\goB}-1}{\second{\goB}}$\footnote{In some case, we can have $\hid=\main{\goB}$  or in other cases, it is possible that $\hid<\main{0}$. Nevertheless, these different cases are not relevant for the rest of the proof.}. The \emph{type} of a protected backup glue is the type of $\glueP{\hid}{\hid+1}$. The \emph{width} of the backup glue is $|y_{P_{\second{\goB}}}-y_{P_{\hid}}|$. From this definition, it follows that $H=P_{\hid,\ldots,\second{\goB}}$ is a hole of glue column $c$ and the only tiles of $P_{0,\ldots,\second{\goB}}$ inside its interior $\inter{H}$ are $P_{\hid+1,\ldots,\second{\goB}-1}$ (by definition of $\Bone$, the seed cannot be inside $\inter{H}$). Thus, for any path $B$ such that $B_0=P_{\hid+1}$ and $B$ is in $\inter{H}$, the path $P_{0,\ldots,\hid}B$ is producible. In particular, consider $\leftF=\min(\inter{H})$ then $P_{0,\ldots,\hid}\leftF_{1,\ldots,|\leftF|-1}$ is producible. 

\input{./tikz/Final/CombiningBackup/HalfHiddenDef}

\begin{definition}[Notations for the hidden glues]
\label{notation:hidden}

We add the following notations to the ones of Definition \ref{notation:half}:
\begin{itemize}
\item Let $0\leq \hid \leq \main{\goB}$ be the index of the hidden glue of $P$ associated to $\glueP{\second{\goB}-1}{\second{\goB}}$ and let $w$ be its width. Let $H$ be the hole $P_{\hid,\ldots,\second{\goB}}$ and let $\leftF$ be $\min(\inter{H})$. 
\item Let $0\leq \hidDeux < \proSLast{\colDeux}$ be the index of the hidden glue of $Q$ associated to $\glueQ{\proSLast{\colDeux}-1}{\proSLast{\colDeux}}$ and let $w'$ be its width. Let $H'$ be the hole $Q_{\hidDeux,\ldots,\proSLast{\colDeux}}$ and let $\leftFDeux$ be $\min(\inter{H'})$. 
\end{itemize}
Instead of using equality between visible glues, we will switch in this case to looking for equality between the hidden glue associated with the visible glues, \emph{i.e.} we suppose that $\glueP{\hid}{\hid+1}$ and $\glueQ{\hidDeux}{\hidDeux+1}$ have the same type and we denote by $\vu=\vect{P_{\hid+1}Q_{\hidDeux+1}}$.
\end{definition}

Now, we distinguish two cases, either $w=w'$ or not. For each case, we give a patch to avoid the problem shown in Figure \ref{fig:Final:Combining:half:bug}. Then, the reasoning of Subsection \ref{sec:combining:full} can be followed to conclude in Subsection \ref{sec:combining:half:conc}.

\subsubsection{Case where $w=w'$}\label{sec:combining:half:equality}

Consider the notations \ref{notation:hidden}, when $w=w'$, the key argument is to remark that $\leftF_{1,\ldots,|\leftF|-1}=\leftFDeux_{1,\ldots,|\leftFDeux|-1}-\vu$.

\begin{lemma}
\label{combi:half:equal:One}
Consider the notations \ref{notation:hidden}, If $w=w'$ then $\leftF_{1,\ldots,|\leftF|-1}=\leftFDeux_{1,\ldots,|\leftFDeux|-1}-\vu$. 
\end{lemma}

\begin{proof}
W. l. o. g., we suppose that $\leftF$ is an upward hole\footnote{This means that $\glueP{\second{\goB-1}}{\second{\goB}}$ and $\glueQ{\second{\goBDeux-1}}{\second{\goBDeux}}$ are southern protected glues.}, see Figure \ref{fig:Final:Combining:half:hid:equality}. In this case, the interior $\inter{\leftF}$ corresponds to the left side of $\leftF$. Now, let $1\leq u \leq |\leftF|-1$ such that $\leftF_{1,\ldots,u}$ is the largest common prefix between $\leftF_{1,\ldots,|\leftF|-1}$ and $\leftFDeux_{1,\ldots,|\leftFDeux|-1}-\vu$. If $u= |\leftF|-1=|\leftFDeux|-1$ then the lemma is true. 

For the sake of contradiction suppose that $u< |\leftF|-1$ and w.l.o.g., we suppose that $\leftFDeux_{1,\ldots,|\leftFDeux|-1}-\vu$ turns left of $\leftF_{1,\ldots,|\leftF|-1}$ (the other case is symmetric). Since by hypothesis, $\glueP{\hid}{\hid+1}=\glueQ{\hidDeux}{\hidDeux+1}-\vu$ and $w=w'$ then $\pos{\leftFDeux_{|\leftFDeux|-1}}-\vu=\pos{\leftF_{|\leftF|-1}}$.
This remark implies that there is at  least one intersection between $\leftF_{u+1,\ldots,|\leftF|-1}$ and $\leftFDeux_{u+1,\ldots,|\leftFDeux|-1}-\vu$. Then, we can define $u<v'\leq |\leftFDeux|-1$ such that $\leftFDeux_{u+1,\ldots,v'}-\vu$ is the shortest prefix of $\leftFDeux_{u+1,\ldots,|\leftFDeux|-1}-\vu$ to intersect $\leftF_{u+1,\ldots,|\leftF|-1}$. Let $u<v\leq |\leftF|-1$ such that $\pos{\leftFDeux_{v'}}-\vu=\pos{\leftF_v}$. Remark that the path $\leftFDeux_{1,\ldots,v'}-\vu$ cannot leave $\inter{\leftF}$ by crossing the frontier of the hole $\leftF$ (otherwise $\leftFDeux_{1,\ldots,v'}$ would also cross the frontier of the hole $\leftFDeux$ which contradicts the definition of a hole) or by crossing through $\leftF$ (by definition of $u$ and $v'$). Then $\leftFDeux_{1,\ldots,v'}-\vu$ is inside $\inter{\leftF}$. Thus $P_{0,\ldots,\hid}(\leftFDeux_{1,\ldots,v'}-\vu)$ is producible since no tile of $P_{0,\ldots,\hid}$ or the seed $\sigma$ is inside $\inter{\leftF}$. In particular, $P_{0,\ldots,\hid}(\leftFDeux_{1,\ldots,v'}-\vu)\leftF_{v+1,\ldots,|\leftF|-1}$ is producible, is inside $\inter{\leftF}$ and turns left of $P_{0,\ldots,\hid}\leftF_{1,\ldots,|\leftF|-1}$. This is a contradiction of the definition of $\leftF$ as~$\min(\inter{\leftF})$.
\end{proof}

In particular, this lemma implies: $$P_{\second{\goB}}=\leftF_{|\leftF|-1}=\leftFDeux_{|\leftFDeux|-1}-\vu=Q_{\proS}-\vu.$$ Then we can try to grow $\rightFDeux-\vu$ from $P_{0,\ldots,\hid}\leftF_{1,\ldots,|\leftF|-2}$ and the problem shown in Figure \ref{fig:Final:Combining:half:bug} cannot occurs anymore. Indeed, if $\rightFDeux$ and $\leftFDeux$ does not conflict then $\rightFDeux-\vu$ and $\leftF$ does not conflict since $\leftF=\leftFDeux-\vu$.

\begin{lemma}
\label{combi:half:equal:Two}
Consider the notations \ref{notation:hidden}, a path $B$ such that $Q_{0,\ldots,\second{\goBDeux}-1}B$ is producible and such that $T-\vu$ does not intersect with $P_{0,\ldots,\hid}$ and the seed $\sigma$. If $w=w'$ then $B-\vu$ is a sub-assembly of $\uniterm$. 
\end{lemma}

\begin{proof}
If $Q_{0,\ldots,\second{\goBDeux}-1}B$ is producible, then $B$ does not conflict with $\leftFDeux$ \footnote{Note that there may be several intersections between $T$ and $\leftFDeux$.} (since $Q_{0,\ldots,\hidDeux}\leftFDeux$ is producible). By Lemma \ref{combi:half:equal:One}, we have $\leftF_{1,\ldots,|\leftF|-1}=\leftFDeux_{1,\ldots,|\leftFDeux|-1}-\vu$. Then there is no conflict between $B-\vu$ and $\leftF_{1,\ldots,|\leftF|-1}$ (which implies that $B_0-\vu=\leftF_{|\leftF|-1}=P_{\second{\goB}})$. Thus, $\asm{B-\vu}$ can grow from $\asm{P_{0,\ldots,\hid}\leftF_{1,\ldots,|\leftF|-1}}$ since by hypothesis $B-\vu$ does not intersect with $P_{0,\ldots,\hid}$ and the seed $\sigma$.
\end{proof}

Of course there exists a version of this lemma obtained by switching the role of the two paths.

\begin{lemma}
\label{combi:half:equal:TwoBis}
Consider the notations \ref{notation:hidden}, a path $B$ such that $P_{0,\ldots,\second{\goB}-1}B$ is producible and such that $B+\vu$ does not intersect with $Q_{0,\ldots,\hidDeux}$ and the seed $\sigma$. If $w=w'$ then $B+\vu$ is a sub-assembly of $\uniterm$. 
\end{lemma}

In the setting of notations \ref{notation:hidden}, the yellow diamonds of Figure \ref{fig:Final:Combining:half:bug} cannot be stopped by a tile of $P_{a,\ldots,\second{\goB}-1}$. Indeed, the tile marked by a red diamond does not belong to $\leftF$ according to Figure \ref{fig:Final:Combining:half:hid:def}. Then, we do not need to assemble this part of $P_{a,\ldots,\second{\goB}-1}$ if $\leftF$ is assembled first. Thus, the red diamond of Figure \ref{fig:Final:Combining:half:bug} is in fact an agreement with the tile of $P_{a,\ldots,\second{\goB}-1}$ and there is no conflict. Moreover, such an agreement would create a contradiction with the priority of $P_{\main{\goB},\ldots,\second{\goB}}$ according to Corollary \ref{cor:decompo:cano}. This last remark allow us to conclude this section by stating a lemma which is an equivalent to Lemma \ref{lem:full:turnRestrainedVis} and Lemma \ref{lem:full:turnRestrained}.

\begin{lemma}
\label{combi:half:equal:Three}
Consider the notations \ref{notation:hidden}, a path $B$ such that $e_B<\Bfour(c')$ and such that $Q_{0,\ldots,\proS-1}B$ is producible. If $P_{\second{\goB}-1}(B-\vu)$ is more right-priority (resp. left-priority) than $P_{\second{\goB}-1}\rightF$ then $B-\vu$ is inside $\areaRes{\second{\goB}}$ and does not intersect with $S$ and $P_{a,\ldots,\second{\goB}-1}$.
\end{lemma}

\begin{proof}
Consider that $P_{\second{\goB}}(T-\vu)$ is more right-priority than $P_{\second{\goB}}\rightF$ and that $T-\vu$ intersects with $S$ or $P_{a,\ldots,\second{\goB}}$. Then let $0\leq u \leq |T|-1$ such that $T_{0,\ldots,u}-\vu$ is the largest common prefix between $T-\vu$ and $P_{\second{\goB}+1,\ldots,\last}$ and let $$v=\min\{u< i \leq |T|: T_v\text{ occupies a position of a tile of $S$ or $P_{a,\ldots,\last}$}\}.$$  Since $e_T<\Bfour(c')$ then $e_{T-\vu}<\Bfour(c)$. Also, since $P_{\second{\goB}}(T-\vu)$ is more right-priority (resp. left-priority) than $P_{\second{\goB}}\rightF$ then $T_{0,\ldots,v}$ is inside $\areaRes{\second{\goB}}$. 

\begin{figure}
\center
\begin{tikzpicture}[x=0.22cm,y=0.22cm]

\fill[fill=yellow!40!white, draw opacity=0.8] (15,9.5) -| (16.5,7.5) -| (13.5,5.5) -| (16.5,3.5) -| (14.5,1.5) -| (23.5,7.5) -| (21.5,16.5) -| (13.5,18.5) -| (18.5,20.5) -| (11.5,14.5) -| (17.5,12.5) -| (15,12.5);
\fill[fill=orange!30!white, draw opacity=0.8] (15,9.5) -| (16.5,7.5) -| (13.5,5.5) -| (16.5,3.5) -| (14.5,1.5) -| (21.5,16.5) -| (13.5,18.5) -| (18.5,20.5) -| (11.5,14.5) -| (17.5,12.5) -| (15,12.5);

\draw[very thick] (14.5,9.5) -| (16.5,7.5) -| (13.5,5.5) -| (16.5,3.5) -| (14.5,1.5) -| (23.5,7.5) -| (21.5,16.5) -| (13.5,18.5) -| (18.5,20.5) -| (11.5,14.5) -| (17.5,12.5) -| (14.5,12.5);
\draw[thick] (14.5,9.5) -| (16.5,7.5) -| (13.5,5.5) -| (16.5,3.5) -| (14.5,1.5) -| (21.5,13.5) -| (17.5,12.5)-| (14.5,12.5);


\tiley{14}{9}{85}
\tiley{15}{9}{85}
\tiley{16}{9}{85}
\tiley{16}{8}{85}
\tiley{16}{7}{85}
\tiley{15}{7}{85}
\tiley{14}{7}{85}
\tiley{13}{7}{85}
\tiley{13}{6}{85}
\tiley{13}{5}{85}
\tiley{14}{5}{85}
\tiley{15}{5}{85}
\tiley{16}{5}{85}
\tiley{16}{4}{85}
\tiley{16}{3}{85}
\tiley{15}{3}{85}
\tiley{14}{3}{85}
\tiley{14}{2}{85}
\tiley{14}{1}{85}
\tiley{15}{1}{85}
\tiley{16}{1}{85}
\tiley{17}{1}{85}
\tiley{18}{1}{85}
\tiley{19}{1}{85}
\tiley{20}{1}{85}
\tiley{21}{1}{85}
\tiley{22}{1}{85}
\tiley{23}{1}{85}
\tiley{23}{2}{85}
\tiley{23}{3}{85}
\tiley{23}{4}{85}
\tiley{23}{5}{85}
\tiley{23}{6}{85}
\tiley{23}{7}{85}
\tiley{22}{7}{85}
\tiley{21}{7}{85}
\tiley{21}{8}{85}
\tiley{21}{9}{85}
\tiley{21}{10}{85}
\tiley{21}{11}{85}
\tiley{21}{12}{85}
\tiley{21}{13}{85}
\tiley{21}{14}{85}
\tiley{21}{15}{85}
\tiley{21}{16}{85}
\tiley{20}{16}{85}
\tiley{19}{16}{85}
\tiley{18}{16}{85}
\tiley{17}{16}{85}
\tiley{16}{16}{85}
\tiley{15}{16}{85}
\tiley{14}{16}{85}
\tiley{13}{16}{85}
\tiley{13}{17}{85}
\tiley{13}{18}{85}
\tiley{14}{18}{85}
\tiley{15}{18}{85}
\tiley{16}{18}{85}
\tiley{17}{18}{85}
\tiley{18}{18}{85}
\tiley{18}{19}{85}
\tiley{18}{20}{85}
\tiley{17}{20}{85}
\tiley{16}{20}{85}
\tiley{15}{20}{85}
\tiley{14}{20}{85}
\tiley{13}{20}{85}
\tiley{12}{20}{85}
\tiley{11}{20}{85}
\tiley{11}{19}{85}
\tiley{11}{18}{85}
\tiley{11}{17}{85}
\tiley{11}{16}{85}
\tiley{11}{15}{85}
\tiley{11}{14}{85}
\tiley{12}{14}{85}
\tiley{13}{14}{85}
\tiley{14}{14}{85}
\tiley{15}{14}{85}
\tiley{16}{14}{85}
\tiley{17}{14}{85}
\tiley{17}{13}{85}
\tiley{17}{12}{85}
\tiley{16}{12}{85}
\tiley{15}{12}{85}
\tiley{14}{12}{85}

\diaor{14}{12}{85}
\diaor{15}{12}{85}
\diaor{16}{12}{85}
\diaor{17}{12}{85}
\diaor{17}{13}{85}
\diaor{18}{13}{85}
\diaor{19}{13}{85}
\diaor{20}{13}{85}
\diaor{21}{13}{85}
\diaor{21}{12}{85}
\diaor{21}{11}{85}
\diaor{21}{10}{85}
\diaor{21}{9}{85}
\diaor{21}{8}{85}
\diaor{21}{7}{85}
\diaor{21}{6}{85}
\diaor{21}{5}{85}
\diaor{21}{4}{85}
\diaor{21}{3}{85}
\diaor{21}{2}{85}
\diaor{21}{1}{85}
\diaor{20}{1}{85}
\diaor{19}{1}{85}
\diaor{18}{1}{85}
\diaor{17}{1}{85}
\diaor{16}{1}{85}
\diaor{15}{1}{85}
\diaor{14}{1}{85}
\diaor{14}{2}{85}
\diaor{14}{3}{85}
\diaor{15}{3}{85}
\diaor{16}{3}{85}
\diaor{16}{4}{85}
\diaor{16}{5}{85}
\diaor{15}{5}{85}
\diaor{14}{5}{85}
\diaor{13}{5}{85}
\diaor{13}{6}{85}
\diaor{13}{7}{85}
\diaor{14}{7}{85}
\diaor{15}{7}{85}
\diaor{16}{7}{85}
\diaor{16}{8}{85}
\diaor{16}{9}{85}
\diaor{15}{9}{85}


\path [dotted, draw, thin] (6,0) grid[step=0.22cm] (28,22);

\draw [dashed, color=yellow!70!black] (15,0) -| (15,22);

\fill (21.5,1.5) circle (0.16);
\node (D) at (21.5,0.2) {$L'_{u}-\vu$};

\fill (21.5,7.5) circle (0.16);
\node (D) at (25.2,9) {$L'_{v}-\vu$};

\draw [<->, thick] (15,9.5) -- (15,12.5);
\node (D) at (18.2,11.5) {$w=w'$};


\fill (14.5,9.5) circle (0.16);
\node (D) at (8.2,9.5) {$L_{0}=P_\hid$};
\node (D) at (3,12.5) {$L_{|L|-1}=L'_{|L'|-1}-\vu=P_{\second{\goB}}$};
\fill (14.5,12.5) circle (0.16);

\end{tikzpicture}
\caption{Proof of Lemma \ref{combi:half:equal:One}: consider the upward hole $L$ (the yellow tiles) of the path $P$ on glue column $c$, its interior is the colored area (orange and yellow). The path $L'_{1,\ldots,|L'|-1}-\vu$ (the orange diamonds) starts in $L_1=P_{\hid+1}$ (by definition of $\vu$) and ends in $L_{|L|-1}$ (since $w=w'$). Here $L_0(L'_{1,\ldots,v'}-\vu)L_{v+1,\ldots,|L|-1}$ turns left of $L$, is inside $\inter{L}$ and is a hole of glue column $c$ (its interior is in orange). This contradicts the minimality of $L$.}
\label{fig:Final:Combining:half:hid:equality}
\end{figure}

If $\pos{T_v}-\vu=\pos{\rightF_{v'}}$ for some $v'>\second{\goB}+u$, then $P_{0,\ldots,\second{\goB}}(T_{0,\ldots,v}-\vu)\rightF_{v'+1,\ldots,\last}$ is producible. Nevertheless, $(T_{0,\ldots,v}-\vu)\rightF_{v'+1,\ldots,|\rightF|-1}$ is more right-priority (resp. left-priority) than $\rightF$ which is a contradiction of the definition of $\rightF$. 

Now if $\pos{T_v}-\vu=\pos{S_{v'}}$ for some $0\leq v' \leq |S|-1$, then $P_{0,\ldots,\second{\goB}}(T_{0,\ldots,v}-\vu)$ is producible implying that $T_v-\vu=S_{v'}$. Nevertheless, $S_{0,\ldots,v'-1}\rev{T_{u,\ldots,v}-\vu}\rightF_{u+1,\ldots,|R|-1}$ is more right-priority (resp. left-priority) than $SP_{a,\ldots,\last}$ which is a contradiction of the definition of shield $S$. 

Finally, if $\pos{T_v}-\vu=\pos{P_{v'}}$ for some $a \leq v' \leq \second{\goB}$, then $T_v-\vu=P_{v'}$ by Lemma \ref{combi:half:equal:Two}. Nevertheless, the existence of $T_{u,\ldots,v}-\vu$ is a contradiction by Corollary \ref{cor:decompo:cano}. 
\end{proof}

Of course there exists a version of this lemma obtained by switching the role of the two paths.

\begin{lemma}
\label{combi:half:equal:ThreeBis}
Consider the notations \ref{notation:hidden}, a path $B$ such that $e_B<\Bfour(c)$ and such that $P_{0,\ldots,\second{\goB}-1}B$ is producible. If $Q_{\proS-1}(B+\vu)$ is more right-priority (resp. left-priority) than $Q_{\proS}\rightFDeux$ then $B+\vu$ is inside $\areaRes{\second{\goBDeux}}$ and does not intersect with $S'$ and $Q_{a',\ldots,\proS-1}$.
\end{lemma}

Lemma \ref{combi:half:equal:Three} and Lemma \ref{combi:half:equal:ThreeBis} are the patches required to follow the steps of Subsection \ref{sec:combining:full}. Indeed, now we can try to grow $\rightFDeux-\vu$ from $P_{0,\ldots,\second{\goB}-1}$ without fearing conflict from $S$ or $P_{a,\ldots,\second{\goB}-1}$. We can also try to grow $\rightF+\vu$ from $Q_{0,\ldots,\proS-1}$ without fearing conflict from $S'$ or $Q_{a',\ldots,\proS-1}$. These results are the first half of the final Lemma \ref{combi:half:end} of this section.

\subsubsection{Case where $w \neq w'$}\label{sec:combining:half:not}

First of all, to simplify the notation we suppose that $w < w'$ and that $\glueP{\second{\goB}-1}{\second{\goB}}$ is a southern glue (all other cases are symmetric to this one), see Figure \ref{fig:Final:Combining:back:running}. Now, the result here is harder to prove than the one in the case $w=w'$ but it is also stronger. We will conclude with Lemma \ref{combi:half:neq:Six} which allows to copy a path assembled at the end of $Q_{0,\ldots,\proS-1}$ and translates it by $-\vu$ directly into $\areaRes{\second{\goB}}$. To prove this result, we first consider the path $SP_{a,\ldots,\second{\goB}}$ and show that if a path $B$ turns right of this path then it is inside $\areaRes{\second{\goB}}$ and cannot leave this area.

\input{./tikz/Final/CombiningBackup/RunningExample}

\begin{lemma} \label{combi:half:neq:One}
Consider the notations \ref{notation:hidden}. Consider the path $F=SP_{a,\ldots,\second{\goB}}$ and a path $B$ such that:
\begin{itemize}
\item all intersections between $B$ and $F$ are agreements;
\item there exists $0 \leq u < |F|-1$ such that $B_0=F_u$;
\item $F_{0,\ldots,u}B_1$ turns right of $F$;
\item $e_B\leq \Bfour(c)$.
\end{itemize}
Then $B$ is inside $\areaRes{\second{\goB}}$ and does not intersect with $A=F\rightF_{1,\ldots,|\rightF|-1}$ apart from $B_0$.
\end{lemma}

\begin{proof}
We remind that, with the chosen conventions, $A$ is a negative upward arc of column $\Bfour(c)$ and its interior is $\areaRes{\second{\goB}}$, see Figure \ref{fig:Final:Combining:back:running:LemOne}. Suppose that $B_{1,\ldots,|B|-1}$ does not intersect with $A$. In this case, since $F_{0,\ldots,u}B_1$ turns right of $F$ and since $e_B\leq \Bfour(c)$, then $B$ is inside $\areaRes{\second{\goB}}$. Since $B_0=F_u$, the assembly $\asm{B}$ can grow from $\asm{P_{0,\ldots,\second{\goB}}} \cup \asm{F}$. The lemma is true in this case. 

\begin{figure}
\center
\begin{tikzpicture}[x=0.2cm,y=0.2cm]

\fill[fill=blue!60!white, draw opacity=0.8] (51,15.5) -| (30.5,27.5) -| (11.5,35.5) -| (51,33.5);

\draw [dashed] (15,0) -| (15,49);
\draw [dashed] (51,0) -| (51,49);

\draw[very thick] (4.5,44.5) -| (24.5,37.5) -| (8.5,21.5) -| (21.5,13.5) -| (11.5,15.5) -| (19.5,19.5)-| (6.5,39.5) -| (22.5,42.5) -| (1.5,4.5) -| (30.5,27.5) -| (11.5,35.5) -| (51.5,35.5);
\draw[thick] (51.5,15.5) -| (30.5,15.5);
\draw[thick] (39.5,15.5) -| (39.5,35.5);
\draw[thick] (27.5,27.5) -| (27.5,35.5);

\draws{4}{44}
\tile{5}{44}
\tile{6}{44}
\tile{7}{44}
\tile{8}{44}
\tile{9}{44}
\tile{10}{44}
\tile{11}{44}
\tile{12}{44}
\tile{13}{44}
\tile{14}{44}
\tile{15}{44}
\tile{16}{44}
\tile{17}{44}
\tile{18}{44}
\tile{19}{44}
\tile{20}{44}
\tile{21}{44}
\tile{22}{44}
\tile{23}{44}
\tile{24}{44}
\tile{24}{43}
\tile{24}{42}
\tile{24}{41}
\tile{24}{40}
\tile{24}{39}
\tile{24}{38}
\tile{24}{37}
\tile{23}{37}
\tile{22}{37}
\tile{21}{37}
\tile{20}{37}
\tile{19}{37}
\tile{18}{37}
\tile{17}{37}
\tile{16}{37}
\tile{15}{37}
\tile{14}{37}
\tile{13}{37}
\tile{12}{37}
\tile{11}{37}
\tile{10}{37}
\tile{9}{37}
\tile{8}{37}
\tile{8}{36}
\tile{8}{35}
\tile{8}{34}
\tile{8}{33}
\tile{8}{32}
\tile{8}{31}
\tile{8}{30}
\tile{8}{29}
\tile{8}{28}
\tile{8}{27}
\tile{8}{26}
\tile{8}{25}
\tile{8}{24}
\tile{8}{23}
\tile{8}{22}
\tile{8}{21}
\tile{9}{21}
\tile{10}{21}
\tile{11}{21}
\tile{12}{21}
\tile{13}{21}
\tile{14}{21}
\tile{15}{21}
\tile{16}{21}
\tile{17}{21}
\tile{18}{21}
\tile{19}{21}
\tile{20}{21}
\tile{21}{21}
\tile{21}{20}
\tile{21}{19}
\tile{21}{18}
\tile{21}{17}
\tile{21}{16}
\tile{21}{15}
\tile{21}{14}
\tile{21}{13}
\tile{20}{13}
\tile{19}{13}
\tile{18}{13}
\tile{17}{13}
\tile{16}{13}
\tile{15}{13}
\tile{14}{13}
\tile{13}{13}
\tile{12}{13}
\tile{11}{13}
\tile{11}{14}
\tile{11}{15}
\tile{12}{15}
\tile{13}{15}
\tile{14}{15}
\tile{15}{15}
\tile{16}{15}
\tile{17}{15}
\tile{18}{15}
\tile{19}{15}
\tile{19}{16}
\tile{19}{17}
\tile{19}{18}
\tile{19}{19}
\tile{18}{19}
\tile{17}{19}
\tile{16}{19}
\tile{15}{19}
\tile{14}{19}
\tile{13}{19}
\tile{12}{19}
\tile{11}{19}
\tile{10}{19}
\tile{9}{19}
\tile{8}{19}
\tile{7}{19}
\tile{6}{19}
\tile{6}{20}
\tile{6}{21}
\tile{6}{22}
\tile{6}{23}
\tile{6}{24}
\tile{6}{25}
\tile{6}{26}
\tile{6}{27}
\tile{6}{28}
\tile{6}{29}
\tile{6}{30}
\tile{6}{31}
\tile{6}{32}
\tile{6}{33}
\tile{6}{34}
\tile{6}{35}
\tile{6}{36}
\tile{6}{37}
\tile{6}{38}
\tile{6}{39}
\tile{7}{39}
\tile{8}{39}
\tile{9}{39}
\tile{10}{39}
\tile{11}{39}
\tile{12}{39}
\tile{13}{39}
\tile{14}{39}
\tile{15}{39}
\tile{16}{39}
\tile{17}{39}
\tile{18}{39}
\tile{19}{39}
\tile{20}{39}
\tile{21}{39}
\tile{22}{39}
\tile{22}{40}
\tile{22}{41}
\tile{22}{42}
\tile{21}{42}
\tile{20}{42}
\tile{19}{42}
\tile{18}{42}
\tile{17}{42}
\tile{16}{42}
\tile{15}{42}
\tile{14}{42}
\tile{13}{42}
\tile{12}{42}
\tile{11}{42}
\tile{10}{42}
\tile{9}{42}
\tile{8}{42}
\tile{7}{42}
\tile{6}{42}
\tile{5}{42}
\tile{4}{42}
\tile{3}{42}
\tile{2}{42}
\tile{1}{42}
\tile{1}{41}
\tile{1}{40}
\tile{1}{39}
\tile{1}{38}
\tile{1}{37}
\tile{1}{36}
\tile{1}{35}
\tile{1}{34}
\tile{1}{33}
\tile{1}{32}
\tile{1}{31}
\tile{1}{30}
\tile{1}{29}
\tile{1}{28}
\tile{1}{27}
\tile{1}{26}
\tile{1}{25}
\tile{1}{24}
\tile{1}{23}
\tile{1}{22}
\tile{1}{21}
\tile{1}{20}
\tile{1}{19}
\tile{1}{18}
\tile{1}{17}
\tile{1}{16}
\tile{1}{15}
\tile{1}{14}
\tile{1}{13}
\tile{1}{12}
\tile{1}{11}
\tile{1}{10}
\tile{1}{9}
\tile{1}{8}
\tile{1}{7}
\tile{1}{6}
\tile{1}{5}
\tile{1}{4}
\tile{2}{4}
\tile{3}{4}
\tile{4}{4}
\tile{5}{4}
\tile{6}{4}
\tile{7}{4}
\tile{8}{4}
\tile{9}{4}
\tile{10}{4}
\tile{11}{4}
\tile{12}{4}
\tile{13}{4}
\tile{14}{4}
\tile{15}{4}
\tile{16}{4}
\tile{17}{4}
\tile{18}{4}
\tile{19}{4}
\tile{20}{4}
\tile{21}{4}
\tile{22}{4}
\tile{23}{4}
\tile{24}{4}
\tile{25}{4}
\tile{26}{4}
\tile{27}{4}
\tile{28}{4}
\tile{29}{4}
\tile{30}{4}
\tile{30}{5}
\tile{30}{6}
\tile{30}{7}
\tile{30}{8}
\tile{30}{9}
\tile{30}{10}
\tile{30}{11}
\tile{30}{12}
\tile{30}{13}
\tile{30}{14}
\tilelb{30}{15}
\tilelb{30}{16}
\tilelb{30}{17}
\tilelb{30}{18}
\tilelb{30}{19}
\tilelb{30}{20}
\tilelb{30}{21}
\tilelb{30}{22}
\tilelb{30}{23}
\tilelb{30}{24}
\tilelb{30}{25}
\tilelb{30}{26}
\tilelb{30}{27}
\tilelb{29}{27}
\tilelb{28}{27}
\tilelb{27}{27}
\tilelb{26}{27}
\tilelb{25}{27}
\tilelb{24}{27}
\tilelb{23}{27}
\tilelb{22}{27}
\tilelb{21}{27}
\tilelb{20}{27}
\tilelb{19}{27}
\tilelb{18}{27}
\tilelb{17}{27}
\tilelb{16}{27}
\tilelb{15}{27}
\tilelb{14}{27}
\tilemb{13}{27}
\tilemb{12}{27}
\tilemb{11}{27}
\tilemb{11}{28}
\tilemb{11}{29}
\tilemb{11}{30}
\tilemb{11}{31}
\tilemb{11}{32}
\tilemb{11}{33}
\tilemb{11}{34}
\tilemb{11}{35}
\tilemb{12}{35}
\tilemb{13}{35}
\tilemb{14}{35}
\tilemb{15}{35}
\tilemb{16}{35}
\tilemb{17}{35}
\tilemb{18}{35}
\tilemb{19}{35}
\tilemb{20}{35}
\tilemb{21}{35}
\tilemb{22}{35}
\tilemb{23}{35}
\tilemb{24}{35}
\tilemb{25}{35}
\tilemb{26}{35}
\tilemb{27}{35}
\tilemb{28}{35}
\tilemb{29}{35}
\tilemb{30}{35}
\tilemb{31}{35}
\tilemb{32}{35}
\tilemb{33}{35}
\tilemb{34}{35}
\tilemb{35}{35}
\tilemb{36}{35}
\tilemb{37}{35}
\tilemb{38}{35}
\tilemb{39}{35}
\tilemb{40}{35}
\tilemb{41}{35}
\tilemb{42}{35}
\tilemb{43}{35}
\tilemb{44}{35}
\tilemb{45}{35}
\tilemb{46}{35}
\tilemb{47}{35}
\tilemb{48}{35}
\tilemb{49}{35}
\tilemb{50}{35}
\tilemb{51}{35}

\dotlb{51}{15}
\dotlb{50}{15}
\dotlb{49}{15}
\dotlb{48}{15}
\dotlb{47}{15}
\dotlb{46}{15}
\dotlb{45}{15}
\dotlb{44}{15}
\dotlb{43}{15}
\dotlb{42}{15}
\dotlb{41}{15}
\dotlb{40}{15}
\dotlb{39}{15}
\dotlb{38}{15}
\dotlb{37}{15}
\dotlb{36}{15}
\dotlb{35}{15}
\dotlb{34}{15}
\dotlb{33}{15}
\dotlb{32}{15}
\dotlb{31}{15}

\doty{39}{15}
\doty{39}{16}
\doty{39}{17}
\doty{39}{18}
\doty{39}{19}
\doty{39}{20}
\doty{39}{21}
\doty{39}{22}
\doty{39}{23}
\doty{39}{24}
\doty{39}{25}
\doty{39}{26}
\doty{39}{27}
\doty{39}{28}
\doty{39}{29}
\doty{39}{30}
\doty{39}{31}
\doty{39}{32}
\doty{39}{33}
\doty{39}{34}
\doty{39}{35}

\dotor{27}{35}
\dotor{27}{34}
\dotor{27}{33}
\dotor{27}{32}
\dotor{27}{31}
\dotor{27}{30}
\dotor{27}{29}
\dotor{27}{28}
\dotor{27}{27}
\path [dotted, draw, thin] (0,0) grid[step=0.2cm] (57,49);



\fill (14.5,4.5) circle (0.16);
\node (D) at (14,2.7) {$P_{\main{\goB}}$};

\fill (14.5,27.5) circle (0.16);
\node (D) at (17.5,29) {$P_{\second{\goB}}=F_{|F|-1}$};

\fill (50.5,35.5) circle (0.16);
\node (D) at (50.5,37.2) {$P_{\last}=\rightF_{|\rightF|-2}$};

\fill (39.5,35.5) circle (0.16);
\node (D) at (39.5,37.2) {$B_v$};

\fill (27.5,35.5) circle (0.16);
\node (D) at (27.5,37.2) {$B_v$};

\fill (27.5,27.5) circle (0.16);
\node (D) at (27.5,25.7) {$B_0$};

\fill (51.5,15.5) circle (0.16);
\node (D) at (51.5,13.7) {$S_{0}=F_0$};

\fill (39.5,15.5) circle (0.16);
\node (D) at (39.5,13.7) {$B_{0}=F_u$};

\fill (30.5,15.5) circle (0.16);
\node (D) at (28.5,15.5) {$P_{a}$};

\end{tikzpicture}
\caption{Proof of Lemma \ref{combi:half:neq:One}: consider the path $P$ from Figure \ref{fig:Final:Combining:back:running}. The path $F=SP_{a,\ldots,\second{\goB}}$ is represented in light blue (the dots for the half-shield $S$ and the squares for $P_{a,\ldots,\second{\goB}}$). The area $\areaRes{\second{\goB}}$ (in dark blue) is the interior of the arc $A=F\rightF_{1,\ldots,|\rightF|-1}$ (the tiles of $\rightF_{1,\ldots,|\rightF|-1}$ are blue). Any path which turns rights of $F$ stays inside $\areaRes{\second{\goB}}$ and does not intersect with $A$ after turning right. In the first example, the path $B$ (the yellow dots) links a tile of the half-shield $S$ with a tile of $\rightF$. It is a contradiction, since $S$ should be a prefix of $\min(\inter{A})$. In the second example, the path $B$ (the orange dots) links a tile of $P_{a,\ldots, \second{\goB}-1}$ to a tile of $\rightF$. Since $P$ is canonical for glue column $c$, there is a contradiction with the priority of the dominant arc  $P_{\main{\goB},\ldots,\second{\goB}}$ according to Corollary \ref{cor:decompo:cano}.}
\label{fig:Final:Combining:back:running:LemOne}
\end{figure}

For the sake of contradiction suppose that $B_{1,\ldots,|B|-1}$  intersects $A$ then we can define $0\leq v \leq |B|-1$ such that $B_{1,\ldots,v}$ is the shortest prefix of $B_{1,\ldots,|B|-1}$ to intersect with $A$. Again by hypothesis, $F_{0,\ldots,u}B_1$ turns right of $F$ and  $e_B\leq \Bfour(c)$ then $B_{0,\ldots,v}$ is inside $\areaRes{\second{\goB}}$. Moreover, by hypothesis, all intersections between $B$ and $F$ are agreements  then  $B_{0,\ldots,v}$ can grow from $\asm{P_{0,\ldots,\second{\goB}}} \cup \asm{F}$. In this case, if $B_0$ or $B_v$ is a tile of $S$, then we have a contradiction since $S$ must be the prefix of $\min(\inter{A})$ by definition of a shield. Otherwise, $B_0$ is a tile of $P_{a,\ldots,\second{\goB}-1}$ and $B_v$ is a tile of $P_{a,\ldots,\second{\goB}-1}$ or a tile of $\rightF$ (we remind that $\rightF_0=P_{\second{\goB}}$), this is contradiction by Corollary \ref{cor:decompo:cano} due to the priority of $P_{\main{\goB},\ldots,\second{\goB}}$.
\end{proof}

Now, to achieve our goal, we just need to show that the hypothesis of Lemma \ref{combi:half:equal:One} are met for some $Q_u-\vu$ with $0 \leq u < \proS$. To do so we will focus at the end of $\leftF$ and  $\leftFDeux$, especially at the indices $0\leq i \leq |\leftF|-1$ and $0\leq i' \leq |\leftFDeux|-1$ such that $\leftF_{i,\ldots,|\leftF|-1}$ and $\leftFDeux_{i',\ldots,|\leftFDeux|-1}$ are arcs on glue columns $c$ and $c'$ respectively.

\begin{lemma} \label{combi:half:neq:Two}
Consider the notations \ref{notation:hidden}, there exist $0\leq i \leq |\leftF|$ such that $\leftF_{i,\ldots,|\leftF|-1}$ is a positive upward arc of glue column $c$ and we have $y_{\leftF_{|\leftF|-1}}>y_{\leftF_0}\geq y_{\leftF_i}\geq y_{P_{\main{\goB}}}$.
\end{lemma}

\begin{proof}
With the chosen conventions, $P_{\main{\goB},\ldots,\second{\goB}}$ is a positive upward arc. By definition of $\leftF$, its last glue is $\glueP{\second{\goB}-1}{\second{\goB}}$ and this glue points west on glue column $c$. Since the first glue of $\leftF$ is $\glueP{\hid}{\hid+1}$, it is also on glue column $c$. Then, we can define $\leftF_{i,\ldots,|\leftF|-1}$ such that this path is an arc of glue column $c$, see Figure \ref{fig:Final:Combining:back:running}. This arc is positive since its last glue points west. Now, the interior of the hole $H=P_{\hid,\ldots,\second{\goB}}$ corresponds to the left side of $P_{\hid,\ldots,\second{\goB}}$. Thus, if there exists $i< j <|\leftF| -1$ and $\main{\goB}<j'<\second{\goB}$ such that $\rev{P_{\main{\goB},\ldots,j'}\leftF_{j+1}}$ turns left of $\rev{P_{\main{\goB},\ldots,j'+1}}$ then the path $\leftF$ is not inside the interior of the hole $H$. This is a contradiction, since $\leftF$ is defined as $\min(\inter{H})$. Then, $\leftF_{i,\ldots,|\leftF|-1}$ is inside the interior of arc $P_{\main{\goB},\ldots,\second{\goB}}$.  Of course, $\leftF_{i,\ldots,|\leftF|-1}$ cannot cross the frontier of $\leftF$ apart from the extremities. Thus we have $y_{\leftF_{|\leftF|-1}}>y_{\leftF_0}\geq y_{\leftF_i}\geq y_{P_{\main{\goB}}}$ and $\leftF_{i,\ldots,|\leftF|-1}$ is an upward arc.
\end{proof}

The next step is to compare the two holes $\leftF$ and $\leftFDeux-\vu$.

\begin{lemma} \label{combi:half:neq:Four}
Consider the notations \ref{notation:hidden}, consider $0\leq u \leq |\leftF|$ such that $\leftF_{1,\ldots,u}$ is the largest common prefix between $\leftF_{1,\ldots,|\leftF|-1}$ and $\leftFDeux_{1,\ldots,|\leftFDeux|-1}-\vu$ then: 
\begin{itemize}
\item $\leftF_0(\leftFDeux_{1,\ldots,u+1}-\vu)$ turns right of $\leftF$;
\item $\leftFDeux_{u+1,\ldots,|\leftFDeux|-1}-\vu$ and $\leftF_{u+1,\ldots,|\leftF|-1}$ do not intersect;
\item $\leftF_{1,\ldots,|\leftF|-2}$ is inside the interior of $\leftFDeux-\vu$.
\end{itemize}
\end{lemma}

\begin{proof}
Since the tile assembly system is directed and since $P_{0,\ldots,\hid}(\leftFDeux_{1,\ldots,u}-\vu)$ is producible, we have two cases to study: either $\leftF_0(\leftFDeux_{1,\ldots,u+1}-\vu)$ turns left of $\leftF$ or $\leftF_0(\leftFDeux_{1,\ldots,u+1}-\vu)$ turns right of $\leftF$.

If $\leftF_0(\leftFDeux_{1,\ldots,u+1}-\vu)$ turns left of $\leftF$, see Figure \ref{fig:Final:Combining:back:running:LemFour}. Consider $u\leq v \leq |\leftFDeux|-1$ such that $(\leftFDeux_{1,\ldots,v}-\vu)$ is the largest prefix of $(\leftFDeux_{1,\ldots,|\leftFDeux|-1}-\vu)$ to be inside $\inter{\leftF}$. Then, $P_{0,\ldots,\hid}(\leftFDeux_{1,\ldots,v}-\vu)$ is producible. If $\leftFDeux_{u+1,\ldots,v}-\vu$ intersects with $\leftF_{u+1,\ldots,|\leftF|-1}$, then it is possible to contradict the fact $\leftF$ is a minimum hole. If $\leftFDeux_{0,\ldots,v}-\vu$ crosses the frontier of the hole $\leftF$, since $w<w'$ then $\leftFDeux_{0,\ldots,v}$ crosses the frontier of the hole $\leftFDeux$ contradicting the definition of a hole. 
Then $v=|\leftFDeux|-1$. Nevertheless, by Lemma \ref{combi:half:neq:Two} there exists $0\leq i' \leq |\leftFDeux|-1$ such that $\leftFDeux_{i',\ldots,|\leftFDeux|-1}-\vu$ is a positive upward arc of glue column $c$ with $y_{\leftFDeux_{|\leftFDeux|-1}}>y_{\leftFDeux_0}\geq y_{\leftFDeux_{i'}}$. Since this arc is inside the interior of the hole $\leftF$ and thus in the interior of the hole $P_{\hid,\ldots,\second{\goB}}$
then by Lemma \ref{combi:half:neq:Three}, there exists $0 \leq i'' \leq  j'' \leq |\leftF|-1$ such that $P_{i'',\ldots,j''}$ is a positive arc of glue column $c$ such that  $\min\{y_{P_{i''}},y_{P_{j'''}}\} \leq y_{\leftFDeux_{i'}-\vu} \leq y_{\leftFDeux_{|\leftFDeux|-1}-\vu} \leq \max\{y_{P_{i''}},y_{P_{j''}}\}$. Since $\pos{\leftFDeux_{0}-\vu}=\pos{P_{\hid}}$ and $w<w'$, then $\min\{y_{P_{i''}},y_{P_{j''}}\} \leq y_{P_{\hid}} < y_{P_{\second{\goB}}}< y_{\leftFDeux_{|\leftFDeux|-1}-\vu} \leq \max\{y_{P_{i''}},y_{P_{j''}}\}$. 
Then, the arc $P_{i'',\ldots,j''}$ either crosses or dominates the arc $P_{\main{\goB},\ldots,\second{\goB}}$ contradicting the fact that $P$ is simple or the definition of a main glue.

\begin{figure}
\center
\begin{tikzpicture}[x=0.2cm,y=0.2cm]

\fill[fill=lightblue, draw opacity=0.8] (15,21.5) -| (21.5,13.5) -| (11.5,15.5) -| (19.5,19.5)-| (6.5,39.5) -| (18.5,42.5) -| (1.5,7.5) -| (25.5,21.5) -| (30.5,23.5) -| (23.5,27.5) -| (20.5,24.5) -| (17.5,27.5) -| (15,27.5);

\draw [dashed] (15,0) -| (15,49);
\draw [thick, color=lightblue] (15,21.5) -| (15,27.5);

\draw[very thick] (4.5,44.5) -| (24.5,37.5) -| (8.5,21.5) -| (14.5,21.5);
\draw[very thick] (14.5,21.5) -| (21.5,13.5) -| (11.5,15.5) -| (19.5,19.5)-| (6.5,39.5) -| (18.5,42.5) -| (1.5,7.5) -| (25.5,21.5) -| (30.5,23.5) -| (23.5,27.5) -| (20.5,24.5) -| (17.5,27.5) -| (14.5,27.5);
\draw[thick] (6.5,19.5) |- (18,11.5);
\draw[thick] (14.5,32.5) |- (18,32.5);

\draws{4}{44}
\tile{5}{44}
\tile{6}{44}
\tile{7}{44}
\tile{8}{44}
\tile{9}{44}
\tile{10}{44}
\tile{11}{44}
\tile{12}{44}
\tile{13}{44}
\tile{14}{44}
\tile{15}{44}
\tile{16}{44}
\tile{17}{44}
\tile{18}{44}
\tile{19}{44}
\tile{20}{44}
\tile{21}{44}
\tile{22}{44}
\tile{23}{44}
\tile{24}{44}
\tile{24}{43}
\tile{24}{42}
\tile{24}{41}
\tile{24}{40}
\tile{24}{39}
\tile{24}{38}
\tile{24}{37}
\tile{23}{37}
\tile{22}{37}
\tile{21}{37}
\tile{20}{37}
\tile{19}{37}
\tile{18}{37}
\tile{17}{37}
\tile{16}{37}
\tile{15}{37}
\tile{14}{37}
\tile{13}{37}
\tile{12}{37}
\tile{11}{37}
\tile{10}{37}
\tile{9}{37}
\tile{8}{37}
\tile{8}{36}
\tile{8}{35}
\tile{8}{34}
\tile{8}{33}
\tile{8}{32}
\tile{8}{31}
\tile{8}{30}
\tile{8}{29}
\tile{8}{28}
\tile{8}{27}
\tile{8}{26}
\tile{8}{25}
\tile{8}{24}
\tile{8}{23}
\tile{8}{22}
\tile{8}{21}
\tile{9}{21}
\tile{10}{21}
\tile{11}{21}
\tile{12}{21}
\tile{13}{21}

\tilemb{14}{21}
\tilemb{15}{21}
\tilemb{16}{21}
\tilemb{17}{21}
\tilemb{18}{21}
\tilemb{19}{21}
\tilemb{20}{21}
\tilemb{21}{21}
\tilemb{21}{20}
\tilemb{21}{19}
\tilemb{21}{18}
\tilemb{21}{17}
\tilemb{21}{16}
\tilemb{21}{15}
\tilemb{21}{14}
\tilemb{21}{13}
\tilemb{20}{13}
\tilemb{19}{13}
\tilemb{18}{13}
\tilemb{17}{13}
\tilemb{16}{13}
\tilemb{15}{13}
\tilemb{14}{13}
\tilemb{13}{13}
\tilemb{12}{13}
\tilemb{11}{13}
\tilemb{11}{14}
\tilemb{11}{15}
\tilemb{12}{15}
\tilemb{13}{15}
\tilemb{14}{15}
\tilemb{15}{15}
\tilemb{16}{15}
\tilemb{17}{15}
\tilemb{18}{15}
\tilemb{19}{15}
\tilemb{19}{16}
\tilemb{19}{17}
\tilemb{19}{18}
\tilemb{19}{19}
\tilemb{18}{19}
\tilemb{17}{19}
\tilemb{16}{19}
\tilemb{15}{19}
\tilemb{14}{19}
\tilemb{13}{19}
\tilemb{12}{19}
\tilemb{11}{19}
\tilemb{10}{19}
\tilemb{9}{19}
\tilemb{8}{19}
\tilemb{7}{19}
\tilemb{6}{19}
\tilemb{6}{20}
\tilemb{6}{21}
\tilemb{6}{22}
\tilemb{6}{23}
\tilemb{6}{24}
\tilemb{6}{25}
\tilemb{6}{26}
\tilemb{6}{27}
\tilemb{6}{28}
\tilemb{6}{29}
\tilemb{6}{30}
\tilemb{6}{31}
\tilemb{6}{32}
\tilemb{6}{33}
\tilemb{6}{34}
\tilemb{6}{35}
\tilemb{6}{36}
\tilemb{6}{37}
\tilemb{6}{38}
\tilemb{6}{39}
\tilemb{7}{39}
\tilemb{8}{39}
\tilemb{9}{39}
\tilemb{10}{39}
\tilemb{11}{39}
\tilemb{12}{39}
\tilemb{13}{39}
\tilemb{14}{39}
\tilemb{15}{39}
\tilemb{16}{39}
\tilemb{17}{39}
\tilemb{18}{39}
\tilemb{18}{40}
\tilemb{18}{41}
\tilemb{18}{42}
\tilemb{17}{42}
\tilemb{16}{42}
\tilemb{15}{42}
\tilemb{14}{42}
\tilemb{13}{42}
\tilemb{12}{42}
\tilemb{11}{42}
\tilemb{10}{42}
\tilemb{9}{42}
\tilemb{8}{42}
\tilemb{7}{42}
\tilemb{6}{42}
\tilemb{5}{42}
\tilemb{4}{42}
\tilemb{3}{42}
\tilemb{2}{42}
\tilemb{1}{42}
\tilemb{1}{41}
\tilemb{1}{40}
\tilemb{1}{39}
\tilemb{1}{38}
\tilemb{1}{37}
\tilemb{1}{36}
\tilemb{1}{35}
\tilemb{1}{34}
\tilemb{1}{33}
\tilemb{1}{32}
\tilemb{1}{31}
\tilemb{1}{30}
\tilemb{1}{29}
\tilemb{1}{28}
\tilemb{1}{27}
\tilemb{1}{26}
\tilemb{1}{25}
\tilemb{1}{24}
\tilemb{1}{23}
\tilemb{1}{22}
\tilemb{1}{21}
\tilemb{1}{20}
\tilemb{1}{19}
\tilemb{1}{18}
\tilemb{1}{17}
\tilemb{1}{16}
\tilemb{1}{15}
\tilemb{1}{14}
\tilemb{1}{13}
\tilemb{1}{12}
\tilemb{1}{11}
\tilemb{1}{10}
\tilemb{1}{9}
\tilemb{1}{8}
\tilemb{1}{7}
\tilemb{2}{7}
\tilemb{3}{7}
\tilemb{4}{7}
\tilemb{5}{7}
\tilemb{6}{7}
\tilemb{7}{7}
\tilemb{8}{7}
\tilemb{9}{7}
\tilemb{10}{7}
\tilemb{11}{7}
\tilemb{12}{7}
\tilemb{13}{7}
\tilemb{14}{7}
\tilemb{15}{7}
\tilemb{16}{7}
\tilemb{17}{7}
\tilemb{18}{7}
\tilemb{19}{7}
\tilemb{20}{7}
\tilemb{21}{7}
\tilemb{22}{7}
\tilemb{23}{7}
\tilemb{24}{7}
\tilemb{25}{7}
\tilemb{25}{8}
\tilemb{25}{9}
\tilemb{25}{10}
\tilemb{25}{11}
\tilemb{25}{12}
\tilemb{25}{13}
\tilemb{25}{14}
\tilemb{25}{15}
\tilemb{25}{16}
\tilemb{25}{17}
\tilemb{25}{18}
\tilemb{25}{19}
\tilemb{25}{20}
\tilemb{25}{21}
\tilemb{26}{21}
\tilemb{27}{21}
\tilemb{28}{21}
\tilemb{29}{21}
\tilemb{30}{21}
\tilemb{30}{22}
\tilemb{30}{23}
\tilemb{29}{23}
\tilemb{28}{23}
\tilemb{27}{23}
\tilemb{26}{23}
\tilemb{25}{23}
\tilemb{24}{23}
\tilemb{23}{23}
\tilemb{23}{24}
\tilemb{23}{25}
\tilemb{23}{26}
\tilemb{23}{27}
\tilemb{22}{27}
\tilemb{21}{27}
\tilemb{20}{27}
\tilemb{20}{26}
\tilemb{20}{25}
\tilemb{20}{24}
\tilemb{19}{24}
\tilemb{18}{24}
\tilemb{17}{24}
\tilemb{17}{25}
\tilemb{17}{26}
\tilemb{17}{27}
\tilemb{16}{27}
\tilemb{15}{27}
\tilemb{14}{27}

\diay{15}{21}
\diay{16}{21}
\diay{17}{21}
\diay{18}{21}
\diay{19}{21}
\diay{20}{21}
\diay{21}{21}
\diay{21}{20}
\diay{21}{19}
\diay{21}{18}
\diay{21}{17}
\diay{21}{16}
\diay{21}{15}
\diay{21}{14}
\diay{21}{13}
\diay{20}{13}
\diay{19}{13}
\diay{18}{13}
\diay{17}{13}
\diay{16}{13}
\diay{15}{13}
\diay{14}{13}
\diay{13}{13}
\diay{12}{13}
\diay{11}{13}
\diay{11}{14}
\diay{11}{15}
\diay{12}{15}
\diay{13}{15}
\diay{14}{15}
\diay{15}{15}
\diay{16}{15}
\diay{17}{15}
\diay{18}{15}
\diay{19}{15}
\diay{19}{16}
\diay{19}{17}
\diay{19}{18}
\diay{19}{19}
\diay{19}{19}
\diay{18}{19}
\diay{17}{19}
\diay{16}{19}
\diay{15}{19}
\diay{14}{19}
\diay{13}{19}
\diay{12}{19}
\diay{11}{19}
\diay{10}{19}
\diay{9}{19}
\diay{8}{19}
\diay{7}{19}
\diay{6}{19}
\diay{6}{18}
\diay{6}{17}
\diay{6}{16}
\diay{6}{15}
\diay{6}{14}
\diay{6}{13}
\diay{6}{12}
\diay{6}{11}
\diay{7}{11}
\diay{8}{11}
\diay{9}{11}
\diay{10}{11}
\diay{11}{11}
\diay{12}{11}
\diay{13}{11}
\diay{14}{11}
\diay{15}{11}
\diay{16}{11}
\diay{17}{11}

\diay{14}{32}
\diay{15}{32}
\diay{16}{32}
\diay{17}{32}

\path [dotted, draw, thin] (0,0) grid[step=0.2cm] (57,49);

\fill (14.5,21.5) circle (0.16);
\node (D) at (14,23) {$P_{\hid}$};

\fill (14.5,21.5) circle (0.16);
\node (D) at (14,23) {$P_{\hid}$};

\fill (14.5,32.5) circle (0.16);
\node (D) at (14,34) {$\leftFDeux_{|\leftFDeux|-1}$};

\fill (14.5,7.5) circle (0.16);
\node (D) at (14,5.7) {$\leftF_{i}$};

\fill (14.5,11.5) circle (0.16);
\node (D) at (14,9.7) {$\leftFDeux_{i'}-\vu$};

\fill (6.5,19.5) circle (0.16);
\node (D) at (4.5,19.5) {$L_{u}$};

\node (D) at (19,11.5) {$?$};

\node (D) at (19,32.5) {$?$};


\fill (14.5,27.5) circle (0.16);
\node (D) at (17.8,29) {$P_{\second{\goB}}=\leftF_{|\leftF|-1}$};


%

\end{tikzpicture}
\caption{Proof of Lemma \ref{combi:half:neq:Four} Part (1/2):  the path $P_{0,\ldots,\hid-1}\leftF$ from Figure \ref{fig:Final:Combining:back:running}. The path $\leftF$ is in blue and its interior is in light blue.
The yellow diamonds represent $\leftFDeux_{1,\ldots,|\leftFDeux|-1}-\vu$ from the notations \ref{notation:hidden}. The two paths $L_{1,\ldots,|L|-1}$ and $L'_{1,\ldots,|L'|-1}-\vu$ are identical until they split at $L_u=L'_u-\vu$. Here, we study the case where $\leftF$ turn right of $\leftF_0(\leftFDeux_{1,\ldots,|\leftFDeux|-1}-\vu)$. Then, $\leftFDeux_{u+1,\ldots,|\leftFDeux|-1}-\vu$ is inside $\inter{\leftF}$: it cannot leave by crossing through the border since $w<w'$ and it cannot intersect with $\leftF_{u+1,\ldots,|\leftF|-1}$ otherwise there is a contradiction with the minimality of $\leftF$. Nevertheless, there exists $0\leq i' \leq |\leftFDeux|-1$ such that $\leftFDeux_{i',\ldots,|\leftFDeux|-1}$ is a positive upward arc with $y_{\leftFDeux_{i'}-\vu}\leq y_{P_{\hid}}<y_{P_{\second{\goB}}}<y_{\leftFDeux_{|\leftFDeux|-1}-\vu}$. 
In this example, one can remark that $\leftFDeux_{i',\ldots,|\leftFDeux|-1}$ must intersect with $\leftF_{i,\ldots,|\leftF|-1}$. The other possibility is that $\leftFDeux_{i',\ldots,|\leftFDeux|-1}$ dominates $P_{\main{\goB},\ldots,\second{\goB}}$ contradicting that this arc is dominant by applying Lemma \ref{combi:half:equal:Three}.}
\label{fig:Final:Combining:back:running:LemFour}
\end{figure}

If $\leftFDeux_{1,\ldots,u+1}-\vu$ turns right of $\leftF$ see Figure \ref{fig:Final:Combining:back:running:LemFourBis}. Consider $u\leq v \leq |\leftF|-1$ such that $(\leftF_{1,\ldots,v}+\vu)$ is the largest prefix of $(\leftFDeux+\vu)$ to be inside $\inter{\leftFDeux}$. Then, $Q_{0,\ldots,\hidDeux}(\leftF_{1,\ldots,v}+\vu)$ is producible. If $\leftF_{u+1,\ldots,v}+\vu$ intersects with $\leftFDeux_{u+1,\ldots,|\leftFDeux|-1}$, then it is possible to contradict the fact $\leftFDeux$ is a minimal hole. If $v<|\leftF|-1$ and if $\leftF_{1,\ldots,v}+\vu$ crosses the frontier of the hole $\leftFDeux$ then there exist $0\leq i \leq j \leq \leftF$ such that $\leftF_{i,\ldots,j}+\vu$ is a positive arc of glue column $\Bfour(c')$ with $\min\{y_{\leftF_{i'}+\vu},y_{\leftF_{j'}+\vu}\} \leq y_{\leftFDeux_{0}} <  y_{\leftFDeux_{0}}+w < \max\{y_{\leftF_{i'}},y_{\leftF_{j'}}\}$. Using Lemma \ref{combi:half:equal:Three} and by proceeding as in the previous paragraph, we obtain a contradiction that either $P$ is not simple or $P_{\main{\goB}, \ldots, \second{\goB}}$ is not a dominant arc. Then $v=|\leftF|-1$ and the lemma is true. 

\input{./tikz/Final/CombiningBackup/DifLemma4Bis}

\end{proof}

Now, we study how the paths $\leftFDeux-\vu$ and $F=SP_{a,\ldots,\second{\goB}}$ interact.

\begin{lemma} \label{combi:half:neq:Five}
Consider the notations \ref{notation:hidden}. Consider the path $F=SP_{a,\ldots,\second{\goB}}$ there exists $0\leq v' \leq |\leftFDeux|-1$ such that:
\begin{itemize}
\item $P_{0,\ldots,\hid}(\leftFDeux_{0,\ldots,v'+1}-\vu)$ is producible;
\item there exists $0 \leq v \leq |F|-1$ such that $\leftFDeux_{v'}-\vu=F_v$;
\item $F_{0,\ldots,v}(\leftFDeux_{v'+1}-\vu)$ turns right of $F$;
\end{itemize}
\end{lemma}

\begin{proof}
First of all, since $\leftF_{|\leftF|-1}=P_{\second{\goB}}=F_{|F|-1}$ then there exists such $0\leq b \leq |F|-1$ such that $F_{0,\ldots,b}$ is the shortest prefix of $F$ to intersect with $\leftF$. Note that $b \geq |S|-1$ since $S$ is not inside the interior of the hole $P_{\main{\goB},\ldots,\second{\goB}}$. Let $0\leq b' \leq |\leftF|$ such that $\leftF_{b'}=F_b$.

Consider $0\leq u \leq |\leftF|$ such that $\leftF_{0,\ldots,u}$ is the largest common prefix between $\leftF$ and $\leftFDeux-\vu$. For the sake of contradiction suppose that $u<b$, see Figure \ref{fig:Final:Combining:back:running:LemFive}. By Lemma \ref{combi:half:neq:Four}, $\leftFDeux_0(\leftF_{1,\ldots,u+1}+\vu)$ turns left of $\leftFDeux$ and $\leftF_{1,\ldots,|\leftF|-2}+\vu$ is inside the interior of the hole $\leftFDeux$. Then, $Q_{0,\ldots,\hidDeux}(\leftF_{1,\ldots,|\leftF|-2}+\vu)$ is producible. Moreover, the tile $\leftF_{b'}+\vu=F_b+\vu$ belongs to $\leftF_{u+1,\ldots,|\leftF|-2}+\vu$ and is inside $\inter{\leftFDeux}$. From this tile, $\rev{F_{0,\ldots,b}+\vu}$ can grow as long as it stays inside $\inter{\leftFDeux}$. Since $x_{F_0+\vu}=\Bfour(c'')+0.5$ then $\rev{F_{0,\ldots,b}+\vu}$ must leave $\inter{\leftFDeux}$. Since $w_{\rev{F_{0,\ldots,b}+\vu}}>c'$ then $\rev{F_{0,\ldots,b}+\vu}$ can leave $\inter{\leftFDeux}$ only by intersecting with $\leftFDeux_{u+1,\ldots,|\leftFDeux|-1}$. This is a contradiction of the minimality of $\leftFDeux$. 


\input{./tikz/Final/CombiningBackup/DifLemma5}

Then $u\geq b'$ and  $\leftF_{b'}=\leftFDeux_{b'}-\vu=F_b$. Now, we have four cases to deal with, see Figure \ref{fig:Final:Combining:back:running:LemFive}. The first one is when $F_{0,\ldots,b}(\leftFDeux_{b'+1}-\vu)$ turns right of $F$. In this case, the lemma is true.

\input{./tikz/Final/CombiningBackup/DifLemma5Bis}

The second case is when $\leftFDeux_{b'+1}-\vu=F_{b+1}$. In this case, let $b'<v'\leq |\leftFDeux|-1$ such that $\leftFDeux_{b',\ldots,v'}$ is the largest common prefix between $\leftFDeux_{b',\ldots,|\leftFDeux|-1}$ and $F_{b,\ldots,|F|-1}$. Let $0\leq v \leq b$ such that $F_{v}=\leftFDeux_{v'}$. If $F_{0,\ldots,v}(\leftFDeux_{v'+1}-\vu)$ turns right of $F$ then the lemma is true. Otherwise, by Lemma \ref{combi:half:neq:Four} $F_{0,\ldots,v}(\leftFDeux_{v'+1}-\vu)$ cannot turn left of $\leftF$. Then $\leftFDeux_{v'+1}-\vu$ is inside an area delimited by $F_{b,\ldots,|F|-1}$ (which is a subpath of $P_{a,\ldots,b}$) and $\leftF_{b',\ldots,|\leftF|-1}$, see Figure \ref{fig:Final:Combining:back:running:LemFive}. to leave this area, $\leftFDeux_{v'+1,\ldots,|\leftFDeux|-1}-\vu$ must cross trough $F_{b,\ldots,|F|-1}$ and the Lemma is true.

The third case is when $\leftFDeux_{b'+1}=F_{b-1}$. In this case let $b'<v'\leq |\leftFDeux|-1$ such that $\leftFDeux_{b',\ldots,v'}$ is the largest common prefix between $\leftFDeux_{b',\ldots,|\leftFDeux|-1}$ and $\rev{F_{b,\ldots,0}}$. Let $0\leq v \leq b$ such that $F_{v}=\leftFDeux_{v'}$. If $F_{0,\ldots,v}(\leftFDeux_{v'+1}-\vu)$ turns right of $F$ then the lemma is true. Otherwise, $\leftFDeux_{0,\ldots,v'}(F_{v-1}+\vu)$ turns left of $\leftFDeux$. Then, as in the proof of Lemma \ref{combi:half:neq:Four},  it is possible to grow $\rev{F_{v-1,\ldots,0}}+\vu$ from $Q_{0,\ldots,\hidDeux}\leftFDeux_{0,\ldots,v'}$ inside the interior of $\leftFDeux$. Since $F_0+\vu$ is on column $\Bfour(c')+0.5$ then $\rev{F_{v-1,\ldots,0}}+\vu$ will intersect with $\leftFDeux_{v'+1,\ldots,|\leftFDeux|-1}$ which contradicts the minimality of the hole $\leftFDeux$.

The fourth case is when $F_{0,\ldots,b}(\leftFDeux_{b'+1}-\vu)$ turns left of $F$. We remind that $\leftF_{0,\ldots,b}(\leftFDeux_{b'+1}-\vu)$ turns right of $\leftF$ by Lemma \ref{combi:half:neq:Four}. In this case, we can contradict the minimality of the hole $\leftFDeux$ as in the previous paragraph.

%
%
%
\end{proof}

We can combine Lemma \ref{combi:half:neq:One} and Lemma \ref{combi:half:neq:Five} to obtain the desired result.

\begin{lemma}
\label{combi:half:neq:Six}
Consider the notations \ref{notation:hidden}, the path $F=SP_{a,\ldots,\second{\goB}}$ and a path $B$ with $e_B< \Bfour(\colDeux)$ and such that $Q_{0,\ldots,\proS-1}B$ is producible. There exists $0 \leq v' < \proS$ and $0 < v < |F|-1$ such that $(Q_{v',\ldots,\proS-1}-\vu)(B-\vu)$ is a subassembly of $\uniterm$ and $Q_{v'}=F_v$. Moreover, the path $(B-\vu)$ is inside $\areaRes{\second{\goB}}$ and does not intersect with $FR_{1,\ldots,|R|-1}$.
\end{lemma}

\begin{proof}
From Lemma \ref{combi:half:neq:One} and Lemma \ref{combi:half:neq:Five}, we can deduce that there exist $0\leq v'' \leq |\leftFDeux|-1$ and $0\leq v \leq |F|-1$ such that $P_{0,\ldots,\hid}(\leftFDeux_{1,\ldots,|\leftFDeux|-1}-\vu)$ is producible and $F_{0,\ldots,v}\leftFDeux_{v''+1}$ turns right of $F$, see Figure \ref{fig:Final:Combining:back:running:LemSix}. Then, all intersections between $\leftFDeux$ and $F$ are agreements. From $\leftFDeux_{|\leftFDeux|-1}-\vu=Q_{\proS}-\vu$, we can try to assemble $\rev{Q_{0,\ldots,\proS}}-\vu$ until it intersect with $F$. 

Since $\glueQ{\proS-1}{\proS}$ is a ``backup'' protected glue, there exists $0\leq  m < \proS$ such that $Q_{m,\ldots,\proS}$ is a positive dominant upward arc of $Q$ on glue column $c'$.  By Lemma \ref{combi:half:neq:One}, $\glueQ{\proS-1}{\proS}-\vu$ is inside $\areaRes{\goB}$ and on glue column $c$ and we have $y_{Q_{\proS}-\vu}>y_{P_{\second{\goB}}}$. By definition of $\vu$, we have $y_{Q_{m}-\vu} \leq y_{Q_{\hidDeux}-\vu} <  y_{P_{\second{\goB}}}$. We also have $w_{Q_{m,\ldots,\proS}}<\Bfour(c)$ and then  the intersection between $Q_{m,\ldots,\proS}-\vu$ and $F$ exists. Let $m\leq v' \leq p$ such that $Q_{v',\ldots,\proS}$ is the shortest suffix of $Q_{m,\ldots,\proS}$ to intersect with $F$.

If $Q_{v'} \neq F_v$ then we can extract form $\asm{Q_{v',\ldots,\proS}}\cup \asm{F_{v,\ldots,|F|-1}}$ a path which contradicts Lemma \ref{combi:half:neq:One}. Then, $P_{0,\ldots,\hid}(\leftFDeux_{1,\ldots,v''-1}-\vu)(Q_{v,\ldots,\proS}-\vu)$ is producible. If a path $B$ is such that $Q_{0,\ldots,\proS-1}B$ is producible then all intersections between $B$ and $\leftFDeux$ are agreements. Then by Lemma \ref{combi:half:neq:One}, $P_{0,\ldots,\hid}(\leftFDeux_{1,\ldots,v''-1}-\vu)(Q_{v,\ldots,\proS-1}-\vu)(B-\vu)$ is a producible path, $B-\vu$ is inside $\areaRes{\second{\goB}}$ and does not intersect with $FR_{1,\ldots,|R|-1}$ .

\input{./tikz/Final/CombiningBackup/DifLemma6}

\end{proof}

Now, we have the second half of the final Lemma \ref{combi:half:end} of this section.

\subsubsection{Conclusion for combining two protected ``backup'' glues}\label{sec:combining:half:conc}

Now, that we have dealt with the two cases, we can conclude this section and solve Lock \ref{lock:shieldone}.

\begin{lemma}
\label{combi:half:end}
Consider the notations \ref{notation:hidden}. If there exists an index $\proNDeux \geq \proSLast{\colDeux}$ such that the $\glueQ{\proNDeux}{\proNDeux+1}$ is a protected glue (either a ``main'' or ``backup'' one) of the northern (resp. southern) decomposition of $Q$ into dominant arcs then either there exists $\goBUn < i < \decoUn$ such that $\second{i}$ is protected or there exists $\goBUn < i \leq \decoUn$ such that $\main{i}$ is protected.
\end{lemma}

\begin{proof}
Using Lemma \ref{combi:half:equal:Three} and Lemma \ref{combi:half:equal:ThreeBis} if $w=w'$ or Lemma \ref{combi:half:neq:Six} if $w\neq w'$, it is possible to follow the same steps as in Subsection \ref{sec:combining:full} to find a new protected tile with the required properties.
\end{proof}

\subsection{Conclusion}\label{sec:conc}

\begin{theorem}
\label{the:end}
Consider a directed tile assembly system $(T,\sigma,1)$, if its terminal assembly $\uniterm$ is finite then its horizontal width and vertical height are bounded by $\BfinalTheorem$.
\end{theorem}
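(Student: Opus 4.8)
The plan is to reduce the theorem to the single inequality $e_\uniterm\le\Bfinal$, which is then proved by contradiction with the machinery of Part IV, and to recover the three remaining directions by applying the same argument to the images of $\calT$ under the reflections and rotations of $\mathbb{Z}^2$: every notion of Parts I--IV is set up ``up to symmetry'', and a rotation or reflection is a bijection on producible assemblies that preserves directedness and all the relevant parameters. Writing $\delta=3|\sigma|+29|T|+15$, so that $\Bfinal=e_\sigma+\delta$, the eastern bound reads $e_\uniterm\le e_\sigma+\delta$, its western mirror $w_\uniterm\ge w_\sigma-\delta$, and the two vertical versions $n_\uniterm\le n_\sigma+\delta$ and $s_\uniterm\ge s_\sigma-\delta$. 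Since $\domain\sigma$ is connected, $e_\sigma-w_\sigma\le|\sigma|-1$ and $n_\sigma-s_\sigma\le|\sigma|-1$, whence $e_\uniterm-w_\uniterm\le(|\sigma|-1)+2\delta=7|\sigma|+58|T|+29\le\BfinalTheorem$, and similarly $n_\uniterm-s_\uniterm\le\BfinalTheorem$, which is the theorem.

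So suppose for contradiction that $e_\uniterm>\Bfinal$. By Fact \ref{fact:function}, for every integer column $c$ in the range $\BCanoStart\le c\le\BCanoEnd$ we have $\Bshield{c}\le\Bshield{\BCanoEnd}=\Bfinal<e_\uniterm$, and moreover $\Bfinal\ge\BCanoEnd+2$ so that $c<e_\uniterm-1$; thus for each such $c$ all the constructions of Part IV (canonical path of column $c$, its two decompositions into dominant arcs, shields, practical forms, the three kinds of protected glue) are available. For every $c$ in this range fix, by Theorem \ref{theorem:exists:canonical}, an extremal path $P^{(c)}$ that is canonical for column $c$; being extremal, $e_{P^{(c)}}=e_\uniterm>e_\sigma+2|T|+1$, so Lemmas \ref{lem:can:bound} and \ref{lem:can:lastglue} apply to it. (If $P^{(c)}$ has width $0$ on column $c$, its decomposition degenerates to the single glue of $P^{(c)}$ on column $c$, which is its last glue, is visible, and is protected; this is treated as a boundary case below.)

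The core of the proof is a descent on protected glues. Fix $c$. In each of the two (upward and downward) decompositions of $P^{(c)}$ into dominant arcs on column $c$, the first glue $\glueP{\main{0}}{\main{0}+1}$ is visible, hence protected by default, so each decomposition has at least one protected glue. Lemmas \ref{combi:full:end} and \ref{combi:half:end} state, in substance: if $P=P^{(c)}$ carries a protected ``main'' glue (resp.\ a protected ``backup'' glue) in one of its decompositions whose type is matched by a protected glue, in the same decomposition, of a canonical path $Q=P^{(c')}$ for a strictly larger column $c'>c$, and if $Q$ additionally carries a protected glue of its \emph{other} decomposition at path-index at least that of the matching glue, then $P$ carries a protected glue in that decomposition of strictly larger path-index. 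The geometric content of these lemmas --- assembling from $P$ and $Q$ a minimal negative arc that plays the role of a new shield of $P$, pushed past column $\Bshield{c}$ by Lemma \ref{lem:can:lastglue} and kept from colliding with $P$, with the earlier shield, or with the seed by Corollary \ref{cor:decompo:cano} and Lemmas \ref{lem:full:visible}, \ref{lem:half:nointer}, \ref{lem:full:MinimalArc} and \ref{lem:half:MinimalArc} --- is entirely carried out in Part IV. Since $P^{(c)}$ is finite, this process cannot continue forever, so $P^{(c)}$ possesses a \emph{terminal} protected glue, of some type $\theta_c\in T$ and lying in one of its two decompositions, beyond which the hypotheses of Lemmas \ref{combi:full:end} and \ref{combi:half:end} fail against every $P^{(c')}$ with $c'>c$.

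It then remains to exhibit a pair $c<c'$ of columns in the range $\BCanoStart\le c\le\BCanoEnd$ for which, on the contrary, those hypotheses \emph{do} hold --- between the terminal protected glue of $P^{(c)}$ and a suitable protected glue of $P^{(c')}$ --- contradicting the terminality just established. This is where the choice of interval matters: there are $4|T|+1$ columns in the range but only $|T|$ glue types, leaving room for a two-level pigeonhole --- first on the type $\theta_c$ of the terminal glue, then on which of the two decompositions carries it --- and one exploits the structural links between the two decompositions from Part III (they share their last glue $\glueP{\lastc}{\lastc+1}$ on column $c'$, and the protected glues of one decomposition interleave appropriately with those of the other) to ensure that, for a well-chosen pair, a matched type in one decomposition of $P^{(c')}$ automatically supplies the required ``late'' protected glue in the other decomposition of $P^{(c')}$. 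Then Lemma \ref{combi:full:end} applies if the terminal glue is ``main'', and Lemma \ref{combi:half:end} applies if it is ``backup'' (recalling from Section \ref{sec:combining:half} that half-shields are combined pairwise into full ones). Making this counting fully precise --- choosing the pair and the matching glue so that the northern and southern hypotheses of the combining lemma hold at once, and folding in the width-$0$ case --- is the one genuinely delicate step; everything geometric has been done in Parts I--IV, so once we are inside the hypotheses of these lemmas the contradiction, and hence the theorem, is immediate.
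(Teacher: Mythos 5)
Your overall strategy is the paper's: assume $e_\uniterm>\Bfinal$, fix a canonical path for each of the $4|T|+1$ columns in $[\BCanoStart,\BCanoEnd]$, look at the last protected glue of each, find two columns whose last protected glues can be combined via Lemma \ref{combi:full:end} or \ref{combi:half:end}, and contradict lastness; the symmetry reduction and the arithmetic in your first paragraph are also fine. But you explicitly defer ``the one genuinely delicate step,'' and that step is essentially the entire content of this proof, since everything geometric is indeed in Parts I--IV. Moreover, the sketch you give of it does not work as stated. Your pigeonhole is two-level (type of the terminal glue, then which decomposition carries it), which only accounts for $2|T|+1$ columns and, more importantly, ignores the \emph{category} of the protected glue: Lemma \ref{combi:full:end} requires both matched glues to be ``main'' and Lemma \ref{combi:half:end} requires both to be ``backup,'' so matching on type alone is not enough. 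The paper's pigeonhole is three-level --- (i) whether the last southern protected glue comes no later than the last northern one or vice versa, (ii) main vs.\ backup, (iii) glue type --- which is exactly why the column range has size $2\cdot 2\cdot|T|+1=4|T|+1$.

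Level (i) is also where the hypothesis ``$\proNDeux\geq\proSLast{\colDeux}$'' of both combining lemmas comes from: by keeping only the majority class of paths that are, say, \emph{protected from the south} (last southern protected index $\leq$ last northern protected index), the path $\cano{c'}$ automatically carries a protected glue of its northern decomposition at index at least that of the matched southern glue. You instead appeal to ``structural links between the two decompositions from Part III'' (shared last glue, interleaving) to supply this late protected glue in the other decomposition; Part III contains no such statement about \emph{protected} glues, and without the north/south classification the hypothesis of the combining lemmas is simply not guaranteed for an arbitrary matched pair. So the descent/contradiction you describe cannot be launched from the pair your counting produces. (A minor further slip: Lemma \ref{combi:half:end} combines two protected backup glues directly; the ``two half-shields make a full one'' phrasing is only the informal sketch at the head of Section \ref{sec:final}, not the mechanism you get to invoke.)
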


\begin{proof}
For the sake of contradiction, suppose that $e_\uniterm>\Bfinal$. For each glue column $\Bone\leq c \leq \Btwo$, we consider a canonical path $\cano{c}$. Note that there are $4|T|+1$ of these paths. For each path $\cano{c}$, we consider the index of the last protected glue of its northern decomposition $\proNLast{c}$ and the index of the last protected glue of its southern decomposition $s_{c}$. The path $\cano{c}$ is said to be be protected from the south if $s_{c} \leq \proNLast{c}$ and protected from the north if $s_{c} \geq \proNLast{c}$. In the special case where $s_{c}=\proNLast{c}$, we consider that this path is protected from the north and from the south.

Among these paths, either at least $2|T|+1$ of them are protected from the north or at least $2|T|+1$ of them are protected from the south. Without loss of generality, we consider the later case. Consider the category of the last protected glue of the southern decomposition of these paths: at least $|T|+1$ of these glues are in the category ``main''  or at least $|T|+1$ of these glues are in the category ``back-up''. Then, we can conclude that there exist $\Bone\leq c<c'\leq \Btwo$ such that $\cano{c}$ and $\cano{c'}$ are protected from the south, and their last protected glue of their southern decomposition have the same category and type. Thus, by Lemma \ref{combi:full:end} (if the category of these two glues is ``main'') or \ref{combi:half:end} (if the category of these two glues is ``back-up''), we obtain the contradiction that $s_{c}$ is not the last protected glue of the southern decomposition of $\cano{c}$. 

The value $w_\uniterm$ can be bounded similarly and then the horizontal width of $\uniterm$ can be bounded by $\BfinalTheorem$. The same reasoning can be done for the vertical height. 
\end{proof}

\subsection*{Acknowledgments}
We would like to thank Pierre-Étienne Meunier and Damien Woods. Some of their unpublished results are presented here. In particular, some key points of the proof of Lemma \ref{Uturn:LinearBound} were discussed with them but were never finished and written before. 

\bibliographystyle{plain}
\bibliography{dir-lin-bound}

\end{document}